\definecolor{Plum}{rgb}{.5,0,1}
\newcommand{\ran}{\operatorname{ran}}
\newcommand{\dist}{\operatorname{dist}}
\newcommand{\gap}{\operatorname{gap}}
\newcommand{\supp}{\operatorname{supp}}
\newcommand{\spa}{\operatorname{span}}
 \def\1{{\mathchoice {\mathrm{1\mskip-4mu l}} {\mathrm{1\mskip-4mu l}} %
		{\mathrm{1\mskip-4.5mu l}} {\mathrm{1\mskip-5mu l}}}}
\newcommand{\bC}{{\mathbb C}}
\newcommand{\bN}{{\mathbb N}}
\newcommand{\bZ}{{\mathbb Z}}
\newcommand{\cA}{{\mathcal A}}
\newcommand{\cH}{{\mathcal H}}
\newcommand{\braket}[2]{\langle {#1} \ | \ {#2}\rangle}
\newcommand{\ket}[1]{|{#1}\rangle}
\newcommand{\ketbra}[1]{\vert #1\rangle\langle #1\vert}
\newcommand{\vp}{\varphi}
\newcommand{\caG}{\mathcal{G}}
\newcommand{\caC}{\mathcal{C}}
\newcommand{\cP}{\mathcal{P}}
\newcommand{\cQ}{\mathcal{Q}}
\newcommand{\cG}{\mathcal{G}}
\newcommand{\cB}{\mathcal{B}}
\newcommand{\caB}{\mathcal{B}}
\newcommand{\cD}{\mathcal{D}}
\newcommand{\cS}{\mathcal{S}}
\newcommand{\cR}{\mathcal{R}}
\newcommand{\per}{\mathrm{per}}
\renewcommand{\mod}{\,\mathrm{mod}}
\renewcommand{\subset}{\subseteq}
\newcommand{\be}{\begin{equation}}
\newcommand{\ee}{\end{equation}}
\newcommand{\eq}[1]{(\ref{#1})}
\newtheorem{theorem}{Theorem}
\newtheorem{lem}[theorem]{Lemma}
\newtheorem{cor}[theorem]{Corollary}
\newtheorem{dfn}[theorem]{Definition}
\newtheorem{proposition}[theorem]{Proposition}
\newtheorem{assumption}[theorem]{Assumption}
\numberwithin{equation}{section}
\numberwithin{theorem}{section}
\title{Spectral Gaps and Incompressibility in a $\pmb{ \nu = 1/3} $\\ Fractional Quantum Hall System}
\author{Bruno Nachtergaele, Simone Warzel, and Amanda Young}
\date{\small\today}							
\begin{document}
\maketitle

\minisec{Abstract} 
We study an effective Hamiltonian for the standard $\nu=1/3$ fractional quantum Hall system
in the thin cylinder regime. We give a complete description of its ground state space in terms
of what we call Fragmented Matrix Product States, which are labeled by a certain family of tilings of the
one-dimensional lattice. We then prove that the model has a spectral gap above the ground states for a 
range of coupling constants that includes physical values.  As a consequence of the gap we establish the 
incompressibility of the fractional quantum Hall states. We also show that all the ground states labeled by
a tiling have a finite correlation length, for which we give an upper bound. We demonstrate by example, however,
that not all superpositions of tiling states have exponential decay of correlations.  \\[2ex]
%
\bigskip

\tableofcontents
\bigskip
\section{Introduction}\label{sec:intro}
	
The fractional quantum Hall effect (FQHE) is a result of the collective behavior of interacting charge degrees of freedom in a two-dimensional geometry with 
perpendicular magnetic field \cite{PhysRevLett.48.1559}. Two hallmarks characterize the remarkable properties of this quantum state of matter: the 
incompressibility of the liquid into which the charge carriers condense and the existence of an energy gap to excitations with fractional charge. 

Theoretical models start from Laughlin's famous ansatz for the many-body correlated ground-state  wave function~\cite{PhysRevLett.50.1395}. 
An effective description of the observed features of excitations above the ground-state is based on Haldane pseudo-potentials, i.e.~short-range 
repulsive interactions projected onto the lowest Landau level \cite{PhysRevLett.51.605}. The same approach can also be used to study partially-filled 
higher Landau levels. These model Hamiltonians are tailored to a maximal filling fraction of the ground-state  and are expected to provide 
a faithful effective description of the gap in the excitation spectrum, the incompressibility of the state, as well as the hierarchical nature of related 
filling factors~\cite{,PhysRevLett.54.237,Pokrovsky_1985,Trugman:1985lv}  (see also~\cite{QHEOxford2003} and references therein). 
Despite confirmation of these properties  in many numerical works~\cite{Duncan1990}, which naturally deal with finite systems, it is still a major 
open problem to provide a proof for the existence of such a gap in the thermodynamic limit. The importance of this issue is carefully argued in a 
recent overview of mathematical results and challenges by Rougerie~\cite{Roug19}.

One of the beautiful mathematical aspects of Haldane's approach is the fact that the features of the Hamiltonians are expected to be robust with 
respect to the particular choice of two-dimensional geometry~\cite{Haldane:2018pi}. One may therefore start by studying the pseudo-potential 
corresponding to $ 1/3 $-filling of the Landau levels in a cylinder geometry. In this case, an orthonormal basis $ (\psi_n) $ of one-particle eigenstates 
of the lowest Landau level is indexed by integers $ n \in \mathbb{Z} $, and determined by the magnetic length $ \ell = \sqrt{\hbar/(eB)} $ and the 
dimensionless parameter $ \alpha := \ell/R $ where $R$ is the
radius of the cylinder; namely, 
\begin{equation}
\psi_n(x,y) =\sqrt{\frac{\alpha}{4\pi^{3/2}  \ell^2}} \exp\left( i  n \frac{\alpha y }{\ell}\right)  \exp\left( - \frac{1}{2} \left[ \frac{x}{\ell}  -  n  \alpha \right]^2  \right)
\end{equation}
where $ x \in \mathbb{R} $ and $ y \in [0,2\pi R)  $ corresponds to the angular direction, cf.~Figure~\ref{fig:cylinder}. 
\begin{figure}[ht]
	\begin{center}
		\scalebox{1}
		{\begin{tikzpicture}
			\newcommand \w {.5};
			\newcommand \h {.75};
			\newcommand \n {.5};
			\newcommand \m {10};
			
			\draw (\n,0) ellipse ({\w} and {\h});
			\draw (\n,-\h) -- (\m,-\h);
			\draw (\n,\h) -- (\m,\h);
			\draw (\m,-\h) arc (270:450:{\w} and {\h});
			\draw[dashed] (\m,-\h) arc (270:90:{\w} and {\h});
			\draw[thick, ->] (\n,0)--(\n,\h) node[midway, right, xshift=-1mm] {$R$};
			\draw[thick, ->] (5.3,.5)--(5.5,1.5) node[above, right] {$B$};
			\draw(5.33,.65)--(5.48,.65)--(5.45,.5);
			\draw[thick, ->] (4.5,-.25) -- (6,-.25) node[below, yshift=-1mm] {$x$};
			\draw[thick, ->] (7,0) arc (0:45:0.5 and .75) node[right, xshift=1mm] {$y$}; 
			\draw[thick] (7,0) arc (0:-45:0.5 and .75);

			
			\draw[yshift=-2.5cm, domain=-.25:1.75, smooth, variable=\x] plot ({\x},{1/(.5*sqrt(2*pi))*exp(-8*(\x-1)^2)});
			
			\foreach \y in {2,3,4,9,10}
			{
				\draw[yshift=-2.5cm, domain=\y-.75:\y.75, dashed, smooth, variable=\x] plot ({\x},{1/(.5*sqrt(2*pi))*exp(-8*(\x-\y)^2)});
			}
			\draw[yshift=-2.5cm, ->] (0,-.5) -- (0,1.5) node[left, yshift=-2.5mm] {$|\psi_n(x,y)|^2$};
			\draw[yshift=-2.5cm, ->](-.5,0) -- (11.5,0) node[below, yshift=-1mm] {$x$}; 
			\foreach \x in {1,2,3,4,9,10,11}
			{
				\draw[yshift=-2.5cm] (\x,.25)--(\x,-.25);
			}
			\draw[yshift=-2.6cm, <->] (3,-.25)--(4,-.25) node[midway, below, yshift=-1mm] {$\alpha\ell$};

			\end{tikzpicture}
		}
\caption{Landau orbitals in a cylinder geometry. The magnetic field is perpendicular to the cylinder.  The one-particle eigenstates have a Gaussian shape and are lined up along the cylinder at a spacing given by the magnetic length $ \ell $. If an additional magnetic flux $ 2\pi \beta $ is applied along the axis of the cylinder, the centers of the orbitals would be shifted by an amount $\alpha\beta\ell$ (not shown). }\label{fig:cylinder}
	\end{center}
\end{figure}
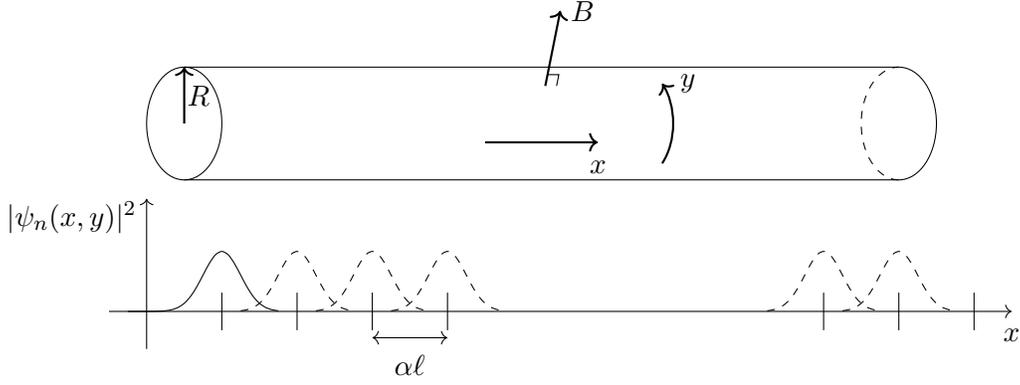
Denoting the corresponding fermionic creation and annihilation operators by $ c_n^* $ and $ c_n $, Haldane's Hamiltonian takes the form
\begin{equation}\label{eq:Haldane}
\textrm{constant}\times  \sum_{s\in \mathbb{Z}/2  } B_s^* B_s , \qquad \mbox{with} \quad B_s = \sum_{k}{\vphantom{\sum}}' F\left(2k \alpha \right) \, c_{s-k} c_{s+k}  \, ,
\end{equation} 
where the primed sum is over integers $ k \in \mathbb{Z} $ when $ s $ is an integer, and over half-integers $ k \in \mathbb{Z} + \tfrac{1}{2}$ in the case $ s $ is a half-integer. 
In order to model a filling fraction $ \nu = 1/3 $, one picks $ F(t) = t e^{-t^2/4} $.  Other filling fractions correspond to different choices of $ F $, cf.~\cite{,PhysRevLett.54.237,Pokrovsky_1985,Trugman:1985lv}.  

When $ \alpha = \ell/R$ is large, i.e.~in  the case that the cylinder radius $ R $ is small relative to the magnetic length $\ell$, it is reasonable to take into account 
only the first few terms in the sum $ B_s $. In particular, restricting to the case $ | 2 k | \leq 3 $, we arrive at a Hamiltonian for 1D lattice fermions of the form:
\begin{equation}\label{eq:Ham}
H =  \sum_{x\in\mathbb{Z}}  \left( n_x n_{x+2} + \kappa \ q_x^* q_x  \right) , 
\end{equation}
where 
\begin{equation}
n_x := c_x^* c_x    \, ,  \quad q_x := c_{x+1}  c_{x+2} -  \lambda \ c_{x}  c_{x+3}, \notag 
\end{equation}
$ \kappa = |F(\alpha)|^2/ |F(2\alpha)|^2 = e^{3\alpha^2/2} / 4 $ and $ \lambda = F(3\alpha) /F(\alpha) = 3 e^{-2\alpha^2} $. 
In what follows, we take model parameters $ \kappa > 0 $ and $ \lambda \in \mathbb{C}$ which are arbitrary unless otherwise stated. 
In Figures \ref{fig:numerics_openbc} and \ref{fig:gapsper}, we show the spectrum for chains of $9$ sites and with the physical choice of parameters
 $\lambda \in (0,3]$ and $ \kappa(\lambda) = (3^{3/4}/4) \lambda^{-3/4}$.

Truncating the sum in~\eqref{eq:Haldane} even further at $ |2k| \leq 2 $ corresponds to setting $ \lambda = 0 $. In this case, $ H $ coincides with the Tao-Thouless limit~\cite{Tao:1983cd}, in which only electrostatic interactions survive and the ground state is described by a classical particle configuration.

Model Hamiltonians of the above form as well as the original model~\eqref{eq:Haldane} conserve the  total particle number $ N = \sum_x n_x $ as 
well as the center of mass  $ \sum_x x\,  n_{x} $. On the periodic system with $L$ sites, we can define the following two unitary operators that each commute with the Hamiltonian,
\begin{equation}\label{symmetries}
U = e^{2\pi i L^{-1}N}, \quad V= e^{2\pi i L^{-1} \sum_{x=1}^L x n_x},
\end{equation}
as well as the translation operator $T$ for which $T^* n_x T = n_{(x-1)\mod L}$. These operators satisfy the relations
$$
VT = UTV, \quad UT = TU.
$$
Since $V$ commutes with $H$, there exists a $ \psi $ that is both a ground state of $H$ and eigenstate of $V$, i.e. $V\psi = \lambda \psi$. Moreover, the above relations imply that $\{T\psi, \, T^2\psi\}$ are also ground states of $H$ and eigenvectors of $V$. In particular, 
\[VT\psi = UTV\psi = \lambda T U \psi =  \lambda e^{2\pi i (N/L) } T\psi,\] 
and a similar calculation shows that $VT^2\psi=\lambda e^{4\pi i (N/L) }T^2\psi$. If $\psi$ is a ground state with $(N/L) = \nu = 1/3$ filling, the factors $e^{2\pi i \nu}$ and $e^{4\pi i \nu}$ are not equal to one, and hence there are at least 3 distinct three-periodic ground states. There are many more ground states with particle number less than $L/3$, which will be evident when we provide a description of the full ground state space in Section~\ref{sec:VMD}.

In addition to having been proposed as a ``solvable" model for the FQHE~\cite{Bergholtz:2005pl,Seidel:2005ld}, center-of-mass preserving operators such as~\eqref{eq:Ham}, whose interaction terms are restricted to 4 (or more) lattice sites,  have become an object of independent interest. In particular, 
this type of Hamiltonian appears in the context of spin liquids~\cite{Lee:2004bs} and more recently in models of scarring in many-body quantum systems, see e.g.~\cite{Pollmann,Bernevig10.19,ChoiTurneretal19,Pollmann19b}.

\subsection{Main results}
The Hamiltonian~\eqref{eq:Ham} is frustration free and its ground state at zero energy is highly degenerate. Special states in the ground state space, which arise from squeezing the Tao-Thouless state  $| \Psi_\textrm{TT} \rangle = | 100 \, 100 \, 100 \, \dots \rangle $ of strictly $1/3$-occupancy, such as
\begin{equation}\label{eq:TTstate}
\prod_{k\in \mathbb{N}_0 } \left( 1 + \lambda c_{3k+2}^* c_{3k+3}^* \, c_{3k+4}  c_{3k+1}  \right)  \big| \Psi_\textrm{TT} \rangle ,
\end{equation}
have been previously identified. Since this state can be mapped to a matrix-product state (MPS) of a spin-$ 1 $ system, its correlations can be analyzed fairly explicitly, which was accomplished in~\cite{Jansen:2009gv, Jansen:2012da,Nakamura:2012bu}. 
Aside from obvious translates, the squeezed states are by far not the only zero-energy eigenstates of $ H $. As we will discuss in Section~\ref{sec:VMD} below, there is a zoo of other states corresponding to lower fillings. One of the results of this paper is a complete description of this zero-energy subspace, which turns out to be exponentially large in the system size. 
Drawing inspiration from~\cite{Jansen:2009gv,Jansen:2012da}, we show that this space can be identified with void-monomer tilings on the line, which give rise to fragmented versions of~\eqref{eq:TTstate}. Properties of the associated novel class of fragmented matrix product states, introduced in this paper, are worked out in~Section~\ref{sec:VMD}. 

The complete classification of the ground-state space is vital for the main goal in this paper: a lower bound on the spectral gap of any finite-volume version of~\eqref{eq:Ham}, which is uniform in both volume and filling fraction. With open boundary conditions (OBC) on 
an interval $ \Lambda = [a,b] $, the operator reads
\begin{equation}\label{eq:fvHam}
H_\Lambda = \sum_{x=a}^{b-2}  n_x n_{x+2} +  \kappa  \sum_{x=a}^{b-3}  q_x^* q_x \, .
\end{equation}
It acts on the fermionic Fock space $ \mathcal{F}_\Lambda $ over the one-particle space  $ \spa \{ \psi_n  | n \in  \Lambda \} $. 
The spectral gap above its zero-energy  eigenspace $ \ker H_\Lambda = \{ \psi \in   \mathcal{F}_\Lambda \, | \, H_\Lambda \psi = 0 \} $ 
is denoted by
\begin{equation}\label{eq:unifgap}
\gap H_\Lambda := \inf \left\{ \langle \psi , H_\Lambda \psi \rangle \, | \, \psi \in  \mathcal{F}_\Lambda \cap ( \ker H_\Lambda)^\perp \, \wedge \, \| \psi \| = 1\right\} 
\end{equation}
where $(\cdot)^\perp$ denotes the orthogonal complement. Our main result is the following lower bound.
\begin{theorem}[Uniform spectral gap]\label{thm:gap}
	There is a monotone increasing function $ f : [0,\infty) \to  [0,\infty) $ such that for all $ \lambda \in \mathbb{C} $ with the property $  f\left(|\lambda|^2\right)  < 1/3  $ and all $ \kappa > 0 $:
	\begin{equation}\label{eq:gaplower}
	\inf_{\Lambda : | \Lambda | \geq 9}  \gap H_\Lambda \geq \left( \min_{L\in \{8,9,10\}} \gap H_{[1,L]} \right)   \frac{\left(1-  \sqrt{3 f\left(|\lambda|^2\right)}\right)^2}{3} \, . 
	\end{equation}
\end{theorem}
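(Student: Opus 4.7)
The plan is to prove Theorem~\ref{thm:gap} via a finite-size gap criterion, in the spirit of Knabe's bound combined with the martingale method of Nachtergaele. The Hamiltonian $H_\Lambda$ is frustration-free (every term $n_xn_{x+2}$ and $q_x^*q_x$ is positive and annihilates the FMPS ground states), so the problem reduces to a purely geometric estimate on how orthogonal projectors onto the kernels of local blocks fit together. The complete classification of $\ker H_\Lambda$ in terms of FMPS from Section~\ref{sec:VMD} is what makes the overlap computations tractable: it provides an explicit basis whose transfer-matrix structure can be analyzed.

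The setup is as follows. Let $G_\Lambda$ denote the orthogonal projection onto $\ker H_\Lambda$; by frustration-freeness, $G_\Lambda G_{\Lambda'} = G_{\Lambda'} G_\Lambda = G_{\Lambda'}$ whenever $\Lambda \subseteq \Lambda'$. Cover $\Lambda = [1,L]$ by overlapping windows $B_x$ of sizes $\ell_x \in \{8,9,10\}$ and aim for the local lower bound
\[
H_\Lambda \ \geq\ \Bigl(\min_{L\in\{8,9,10\}} \gap H_{[1,L]}\Bigr) \sum_x (\mathbf{1}-G_{B_x}).
\]
Three window sizes are needed because the $q$-term couples four consecutive sites and every residue of $\ell_x$ modulo $3$ must be available so that no tiling boundary condition is excluded when restricting an FMPS to a window. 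The sum over $x$ of $(\mathbf{1}-G_{B_x})$ is then compared with $(\mathbf{1}-G_\Lambda)$ via a telescoping argument where the key quantity is the operator norm
\[
\varepsilon \ :=\ \bigl\| (\mathbf{1}-G_{B_{x+1}})\, G_{B_x}\, (\mathbf{1}-G_{B_{x+1}}) \bigr\|.
\]
Using the FMPS basis, this is reduced to a finite linear-algebraic problem in which the parameters depend only on $|\lambda|^2$; the resulting upper bound defines the monotone function $f(|\lambda|^2)$. Once $\varepsilon \leq f(|\lambda|^2)$ is in place, Nachtergaele's martingale estimate converts it into the claimed prefactor $(1-\sqrt{3f(|\lambda|^2)})^2/3$, with the condition $f(|\lambda|^2) < 1/3$ ensuring convergence of the telescoping series.

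The main obstacle is the overlap bound $\varepsilon \leq f(|\lambda|^2) < 1/3$, uniformly in $\Lambda$ and across \emph{all} tiling sectors. The threshold $1/3$ is sharp because there are essentially three inequivalent ways to extend a ground state across the boundary between two blocks (one for each shift of the underlying Tao--Thouless pattern), which is the combinatorial manifestation of the filling fraction $\nu = 1/3$ and the three-fold ground-state degeneracy on the torus discussed after \eqref{symmetries}. Concretely, one has to show that the restriction of the FMPS transfer operator to the subspace orthogonal to the leading (tiling-preserving) mode has spectral radius at most $f(|\lambda|^2)$, with $f$ monotone increasing in $|\lambda|^2$ and $f(0)=0$ (the Tao--Thouless limit, where the ground states are classical and the overlap vanishes). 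The bulk of the technical work will go into identifying this transfer operator explicitly, decomposing it according to the void/monomer tiling type at the boundary between the two blocks, and verifying the monotonicity and the quantitative threshold $f(|\lambda|^2)<1/3$ in the physical range of $\lambda$.
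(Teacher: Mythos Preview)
Your overall plan---martingale method plus the FMPS/tiling classification of $\ker H_\Lambda$ to compute the relevant overlaps---is precisely the paper's route, and the final prefactor $(1-\sqrt{3f})^2/3$ is correct. However, two of your intermediate identifications are off in ways that would derail the argument if taken at face value.

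First, the key quantity in the martingale method is \emph{not} the block-to-block overlap $\varepsilon = \|(\mathbf{1}-G_{B_{x+1}})\,G_{B_x}\,(\mathbf{1}-G_{B_{x+1}})\|$ you wrote. What must be bounded is $\|g_{n+1}E_n\|$, where $g_{n+1}$ projects onto the kernel of the \emph{local} block $h_{n+1}$ but $E_n = G_{\Lambda_n}-G_{\Lambda_{n+1}}$ is the difference of ground-state projections on the \emph{growing chains} $\Lambda_n \subset \Lambda_{n+1}$. Concretely this becomes $\|G_{\Lambda_2}(\mathbf{1}-G_{\Lambda})G_{\Lambda_1}\|$ with $\Lambda_1$ the first $|\Lambda|-3$ sites and $\Lambda_2$ the last nine sites of $\Lambda$. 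Your $\varepsilon$ involves only two fixed-size blocks and loses the third projector; that quantity does not feed into the martingale estimate and would in any case be too crude, since the full-chain projection $G_{\Lambda_1}$ is what localizes the calculation to a manageable set of FMPS excitations (via a reduction to a subspace $\mathcal{P}_{\Lambda_1}$ of tiling-supported states). The explicit function $f_n(|\lambda|^2)$ then emerges from computing $G_{\Lambda_2}$ on an explicit orthogonal basis of $\mathcal{P}_{\Lambda_1}\cap\mathcal{G}_\Lambda^\perp$, not from a transfer-operator spectral radius.

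Second, the reasons you give for the three window sizes and for the threshold $1/3$ are both incorrect. The blocks $h_n$ for $n\geq 3$ all have fixed size nine; only the \emph{initial} block $h_2$ has size $|\Lambda_2|\in\{8,9,10\}$, and this is simply to absorb the residue of $|\Lambda|$ modulo $3$. The factor $3$ under the square root comes from $\ell=3$ in Assumption~2 of the martingale method: each size-nine block $g_n$ fails to commute with exactly three of the $E_m$, because the chain grows by three sites per step. This is a geometric artifact of the step size, not a manifestation of the three Tao--Thouless translates or of $\nu=1/3$.
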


The proof of this theorem is based on an adaptation of the martingale method and can be found in Section~\ref{sec:MM}. 
Its proof yields the explicit expression~\eqref{def:f}  for $ f $, which is plotted in Figure~\ref{fig:fplot}.
From the discussion  in Section~\ref{sec:MM}  and Appendix~\ref{app:f_estimates}, we conclude that any $ |\lambda | \leq 5.3 $ satisfies the condition 
$   f\left(|\lambda|^2\right)  < 1/3 $. 
In particular, this applies to the choice $ \lambda = 3 e^{-2\alpha^2}  $ for any $ \alpha > 0 $, which corresponds to the parameters in the truncation of the original Haldane pseudo-potential. 
\begin{figure}[ht]
	\begin{center}
		\includegraphics [width=.6\textwidth]{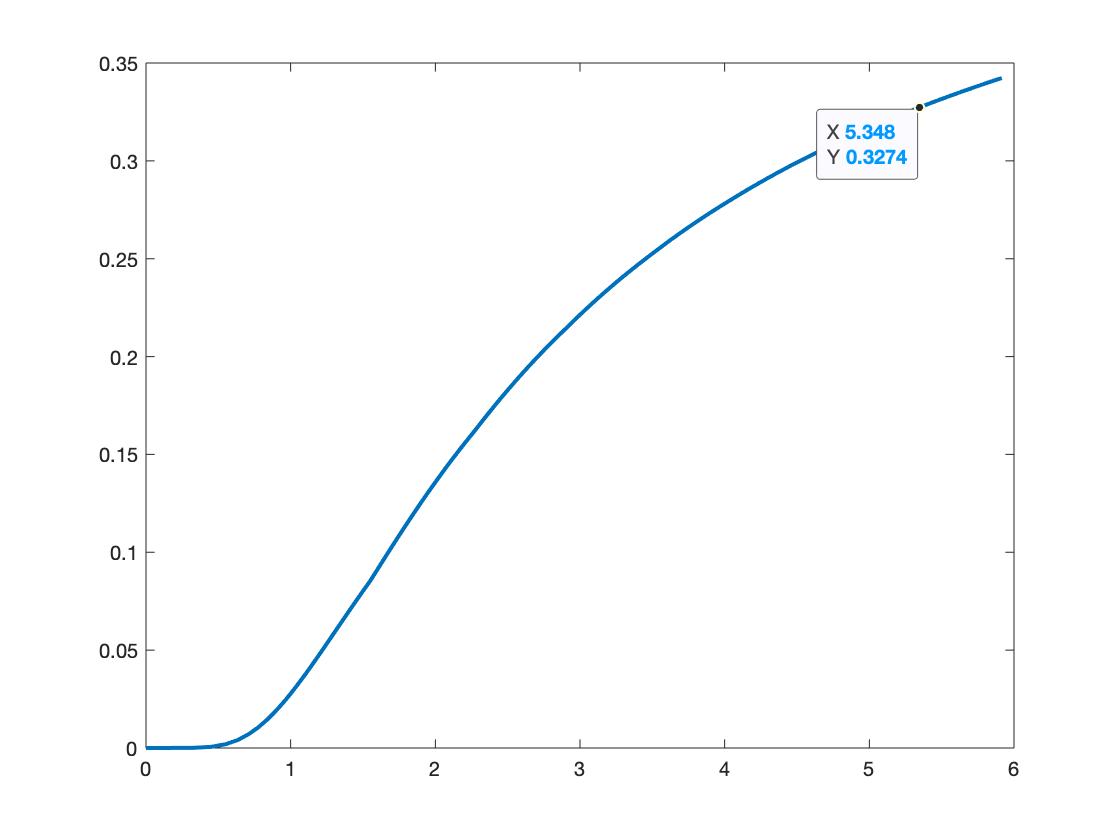}
		\caption{A plot of the function $|\lambda|\mapsto f^{(73)}(|\lambda|^2)$, defined in \eq{f70} and which approximates $f$ up to an error less than $.0052$.} 
		\label{fig:fplot}
	\end{center}
\end{figure}

The gap of an arbitrarily large system  is estimated in terms of the minimum of the gap of  $ H_{[1,L]}  $ for $ L \in \{8,9,10\}  $ lattice sites. This gap depends on $ \lambda $ and $ \kappa $ in a non-trivial way.  Figure~\ref{fig:numerics_openbc} shows a plot of the energy spectrum as a function of  $\lambda\in [0,3]$ with $\kappa(\lambda) = (3^{3/4}/4) \lambda^{-3/4}$ as in the physical case.  In the Tao-Thouless limit $ \lambda = 0 $, the interaction terms are mutually commuting and the spectral gap trivially reduces to $ \gap H_\Lambda = \min\{ 1, \kappa \} $ for any interval $ \Lambda $ larger than four sites. As is seen from the numerical data in~Figure~\ref{fig:numerics_openbc}, for open boundary conditions the gap at $ \lambda = 0 $ is unstable for small $ \lambda $. The low-energy eigenvalues are in fact caused by boundary modes which can be written down fairly explicitly. 
For example, using $\ket{\emptyset}$ to denote the vacuum state, the cyclic (or Krylov) subspace corresponding to $ H_{[a,b]} $ (with $ b \geq a+4 $) and the vector 
\begin{equation} 
| \Psi_1 \rangle \equiv c_a^* c_{a+1}^* c_{a+4}^* | \emptyset \rangle
\end{equation}
is two-dimensional and spanned by the orthogonal basis $ \Psi_1 $ and $ \Psi_2 \equiv c_a^* c_{a+2}^* c_{a+3}^* | \emptyset \rangle $. In this subspace, the action of $ H_{[a,b]} $ is given by the $ 2\times 2$-matrix
$$
\left( \begin{matrix} \kappa |\lambda|^2 & - \kappa \overline{\lambda} \\  - \kappa \lambda & 1+ \kappa (1+|\lambda|^2)   \end{matrix} \right) .
$$
Its eigenvalues are $ \kappa |\lambda|^2 + \left[ (1+\kappa) \pm \sqrt{ (1+\kappa)^2 + 4 \kappa^2 |\lambda|^2} \right] /2 $. For small $ | \lambda | $, the smaller eigenvalue is of the order $ \frac{\kappa}{1+\kappa} |\lambda|^2 + \mathcal{O}(|\lambda|^4)  $. More generally, a (left) boundary mode of $  H_{[a,b]}  $ with energy $ \mathcal{O}(|\lambda|^2) $ can be constructed starting from any vector of the form $ c_a^* c_{a+1}^* \prod_{k=1}^n c_{a+3k+1}^* | \emptyset \rangle $, and likewise for the right boundary. 

\begin{figure}[ht]
	\begin{center}
		\includegraphics [width=.32\textwidth]{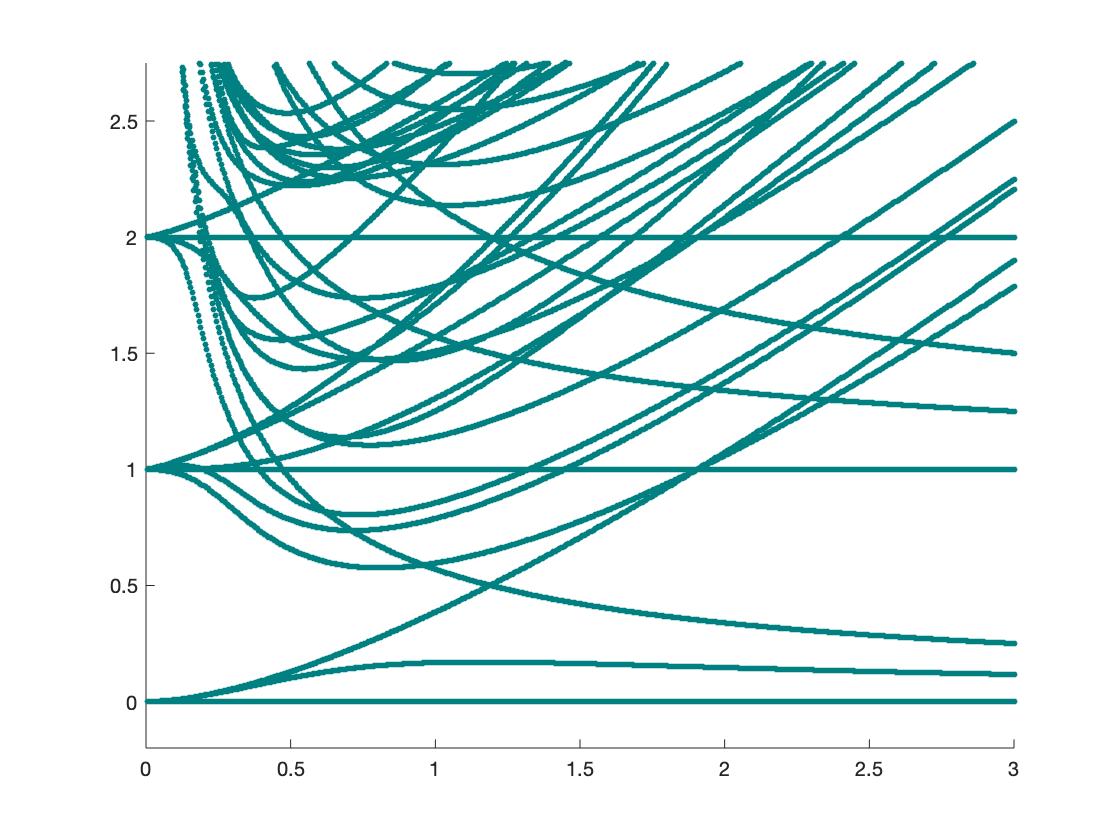}\hfill 
		\includegraphics [width=.32\textwidth]{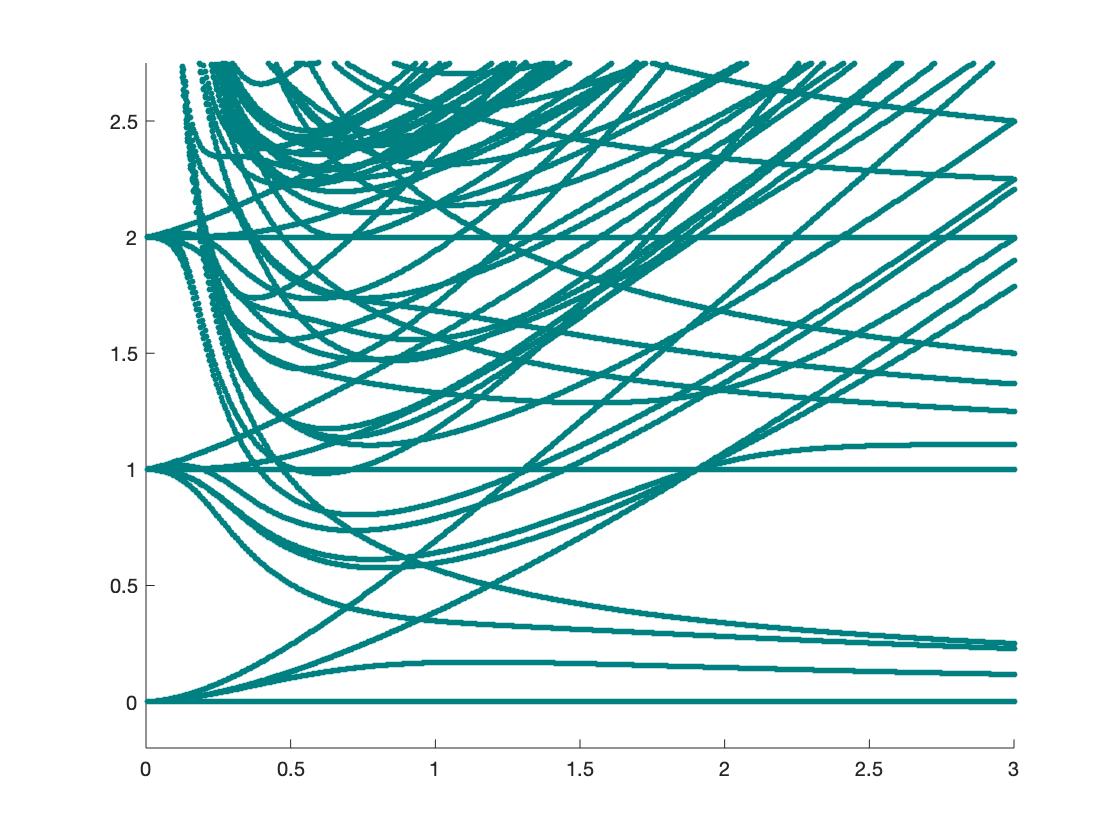}\hfill 
		\includegraphics [width=.32\textwidth]{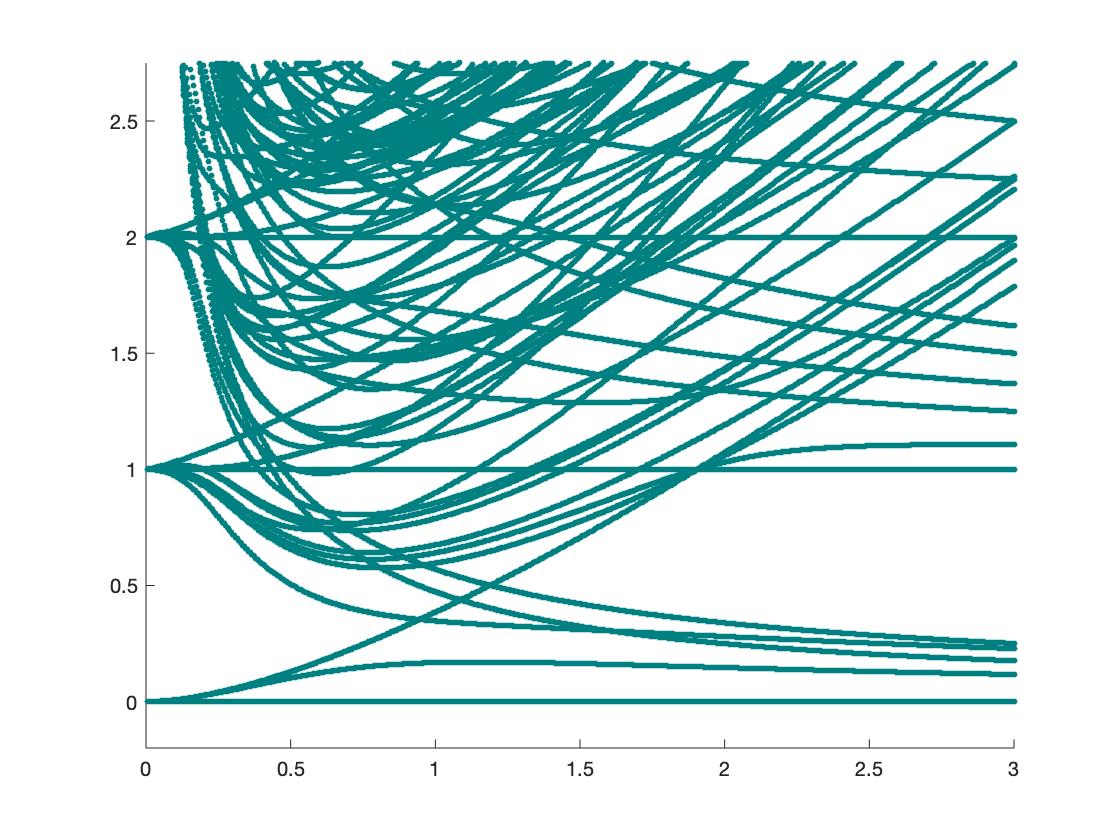}
		\caption{A plot of the spectrum of $ H_{[1,L]} $, with $L=8,9,10$, as a function of $ \lambda \in [0,3]$ for $ \kappa(\lambda) = (3^{3/4}/4) \lambda^{-3/4}$. }
		\label{fig:numerics_openbc}
	\end{center}
\end{figure}

The existence of boundary modes for open boundary conditions, which are responsible for a vanishing gap in the limit $ |\lambda |\downarrow 0 $, 
prompts the question about the existence of such modes in the bulk. For this, it is natural to look at the Hamiltonian with different boundary conditions, e.g.\ the periodic Hamiltonian
\begin{equation}\label{eq:perHam}
H^{\textrm{per}}_{[a,b]} := \sum_{x=a}^{b}  \left( n_x n_{x+2} + \kappa \ q_x^* q_x  \right) 
\end{equation}
where we identify $ b+k \equiv a + k-1$ for $ k \in \mathbb{N} $. Alternatively, and motivated by the above explicit form of the boundary modes above, one may look at (soft) Dirichlet-type boundary conditions
\begin{equation}\label{eq:DHam}
H^{\textrm{D}}_{[a,b]} := 	H_{[a,b]} + \left( n_a n_{a+1} +  n_{b-1} n_{b}\right) . 
\end{equation}
In comparison to $ H_{[a,b]} $, the kernels of $ H^{\#}_{[a,b]} $ are slightly depleted for $ \# \in \{ \textrm{per}, \textrm{D} \} $ and, in particular, the boundary states constructed above are not ground states for either system at $\lambda=0$.  Moreover, numerical data for small system sizes suggest that the gap of $ H^{\#}_{[a,b]} $ for both $ \# \in \{ \textrm{per}, \textrm{D} \} $ is non-vanishing uniformly
for all $ |\lambda | \geq 0$ in compact intervals, cf.~Figure~\ref{fig:gapsper}. This suggests that the instability of the gap for finite open systems at $|\lambda|=$0 is in fact due to boundary states, and is not a feature of the system in the thermodynamic limit $\Lambda \to \bZ$.

The instability of the gap on finite intervals with open boundary conditions in the limit $|\lambda|\to 0$, suggests that although the model with 
$\lambda=0$ is a function of the particle numbers $n_x$ and easily seen to be gapped, perturbative arguments such as given in 
\cite{datta:1996,datta:1996a,de-roeck:2019,frohlich:2018} to prove 
a gap for $\lambda\neq 0$ will not work. This is also consistent with the observation that commuting Hamiltonians such as the model under 
consideration with $\lambda =0$ cannot describe quantum Hall effects \cite{kapustin:2020,bachmann:2020}. If the model had a stable gap at 
$\lambda=0$, the anyons describing the excitations and the basis for the FQHE, would also be stable \cite{hastings:2015,haah:2016,cha:2020}.

In our approach here, we take advantage of the frustration free property of the model. Starting with the game-changing work of Affleck, Kennedy, Lieb, and Tasaki
\cite{affleck:1988}, frustration-freeness has been exploited to obtain lower bounds on the ground state gap for an increasing variety of models.
These techniques often yield mathematical proofs (e.g.~in \cite{fannes:1992,nachtergaele:1996,bravyi:2015,bishop:2016a,gosset:2016,young:2016,lemm:2018,abdul-rahman:2020}). In other instances, they are used in combination with the results of numerical simulations (e.g.\ in \cite{knabe:1988, lemm:2019a,pomata:2019a}). 
Unfortunately, in their present form all these approaches produce lower bounds that contain as a factor the gap of the Hamiltonian with open boundary conditions $H_{[a,b]}$ for some finite interval.
In the model at hand these gaps vanish as $\lambda\to 0$ and these methods give unsatisfactory results if, as we expect to be the case here, the gap for finite systems
with periodic boundary conditions (and hence the bulk gap in the thermodynamic limit) does not vanish in that limit. Aside from this deficiency, useful lower bounds 
can also be obtained for the system with periodic boundary conditions.

In Section~\ref{sec:MM} we provide the following gap bound for the periodic system using a version of Knabe's finite-size criterion \cite{knabe:1988} adapted for longer range interactions. In contrast to the typical situation, here we do not need to rely on numerical calculations to verify this finite size criterion because we already obtained sufficiently good rigorous estimates for open boundary conditions.

\begin{theorem}[Spectral gap - periodic case] \label{thm:periodic_gap}
	Let $n\geq 2$ be an integer such that 
	\[
	g_n := \inf_{0 \leq r \leq 5} \gap\left(H_{[1,3(n+1)+r]}\right) > \frac{\Gamma}{ n}
	\]
	where $ \gamma := \min_{m \in \{ 6,7 \} } \gap(H_{[1,m]}) $ and $ \Gamma := \max_{m \in \{ 6,7 \} }  \|H_{[1,m]} \|$. Then for all $N > n$ and $ r \in \{0,1,2, \ldots, 5 \} $,
	\begin{equation}\label{eq:gapper}
	\gap\left(H_{[1,6N+r]}^\per\right) \geq \frac{\gamma\cdot n}{2\Gamma (n-1)}\left[g_n-\frac{\Gamma}{n}\right].
	\end{equation}
	In particular, for any $\lambda $ such that  $  f\left(|\lambda|^2\right)  < 1/3  $,
	\begin{equation}\label{eq:gapperexp}
\liminf_{L\to \infty} \gap\left(H_{[1,L]}^\per\right) \geq \frac{\gamma}{6\Gamma }\left( \min_{L\in \{8,9,10\}} \gap H_{[1,L]} \right)\left(1-  \sqrt{3 f\left(|\lambda|^2\right)}\right)^2  . 
	\end{equation}
\end{theorem}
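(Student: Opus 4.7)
The plan is to prove (\ref{eq:gapper}) by adapting Knabe's finite-volume criterion to the range-$4$ interaction setting and to derive (\ref{eq:gapperexp}) by combining that bound with Theorem~\ref{thm:gap}. First I would write $H^{\per}_{[1,6N+r]} = \sum_x h_x$ with local terms $h_x = n_x n_{x+2} + \kappa\, q_x^* q_x \geq 0$, and cover the ring $\mathbb{Z}/(6N+r)\mathbb{Z}$ by translates of open-boundary block Hamiltonians $H_{I_j}$ on intervals $I_j$ of length $3(n+1)+r_j$ with $r_j\in\{0,1,\dots,5\}$. The offsets would be chosen so that consecutive blocks overlap by a fixed amount and each local term $h_x$ lies in exactly a fixed multiplicity $c$ of blocks. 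By hypothesis each block obeys the squared inequality $H_{I_j}^2 \geq g_n\, H_{I_j}$.

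The standard Knabe trick then produces an operator inequality of the form $(H^{\per})^2 \geq \alpha\, H^{\per}$ with $\alpha$ matching the right-hand side bracket in (\ref{eq:gapper}). Expanding $\big(\sum_j H_{I_j}\big)^{\!2} = \sum_j H_{I_j}^2 + \sum_{j\neq k} H_{I_j}H_{I_k}$, one uses $\sum_j H_{I_j} = c\, H^{\per}$, the bound $\sum_j H_{I_j}^2 \geq g_n \sum_j H_{I_j}$ from the hypothesis, and then controls the cross terms. Because $[h_x,h_y]=0$ whenever $|x-y|\geq 4$, only overlapping block pairs produce noncommuting contributions, and these are localized in windows of $6$ or $7$ consecutive sites. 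Bounding each such window above by $\Gamma$ and below (through its finite-volume gap) by $\gamma$ times a local projector allows one to absorb the cross terms back into a multiple of $H^{\per}$, yielding (\ref{eq:gapper}).

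For the asymptotic statement (\ref{eq:gapperexp}), when $f(|\lambda|^2)<1/3$ Theorem~\ref{thm:gap} provides a lower bound on $g_n$ that is independent of $n$, namely $\big(\min_{L\in\{8,9,10\}}\gap H_{[1,L]}\big)(1-\sqrt{3f(|\lambda|^2)})^2/3$. Since $\Gamma/n \to 0$, the hypothesis $g_n > \Gamma/n$ holds for all sufficiently large $n$; passing to the limit $n\to\infty$ in (\ref{eq:gapper}), the prefactor $\gamma n/(2\Gamma(n-1))$ converges to $\gamma/(2\Gamma)$ while the bracket converges to the uniform lower bound from Theorem~\ref{thm:gap}. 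The product produces the factor $\gamma/(2\Gamma)\cdot 1/3 = \gamma/(6\Gamma)$ in (\ref{eq:gapperexp}).

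The main obstacle is the bookkeeping forced by the longer range of $h_x$ combined with the mixed residues $L=6N+r$. In Knabe's original nearest-neighbor setting cross terms live in size-$3$ windows and the combinatorics is transparent; here the noncommuting windows have size $6$ or $7$, which is precisely why $\gamma$ and $\Gamma$ are defined as extrema over $m\in\{6,7\}$, and the block sizes $3(n+1)+r_j$ must be chosen adaptively so that the ring of length $6N+r$ is tiled exactly for each $r\in\{0,\dots,5\}$. Assembling the multiplicity $c$ and the cross-term estimates into the precise constants $\gamma n/(2\Gamma(n-1))$ and the threshold $\Gamma/n$ of (\ref{eq:gapper}) is where the real work sits, rather than in any deep operator-theoretic input.
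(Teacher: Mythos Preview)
Your derivation of \eqref{eq:gapperexp} from \eqref{eq:gapper} and Theorem~\ref{thm:gap} is correct and matches the paper.

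For \eqref{eq:gapper}, you have the right ingredients (a Knabe-type squaring argument, and the constants $\gamma,\Gamma$ coming from $6$--$7$ site windows), but your organization differs from the paper's in a way that creates a real difficulty. You propose to take the \emph{large} blocks $H_{I_j}$ of size $3(n+1)+r_j$ as the atoms of the Knabe argument, use $H_{I_j}^2\geq g_n H_{I_j}$, and then control cross terms $H_{I_j}H_{I_k}$ by appealing to $6$--$7$ site windows. The problem is that overlapping large blocks share far more than $6$ or $7$ sites (they must, if each $h_x$ is to appear with a fixed multiplicity $c\sim n$), so the anticommutators $\{H_{I_j},H_{I_k}\}$ are not localized in small windows and there is no clean positivity to exploit. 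Knabe's combinatorics relies on atoms that commute except for nearest neighbors; your large blocks do not have this property.

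The paper resolves this by introducing an intermediate layer: it first covers the ring by $2N$ \emph{small} blocks $\Lambda_i$ of $6$ or $7$ sites (shifted by $3$, with the last $r$ blocks absorbing the residue), and replaces each $H_{\Lambda_i}$ by its ground-state projection $P_i$. These $P_i$ are genuine projections with $[P_i,P_j]=0$ for $|i-j|\geq 2$, so Knabe's criterion (Theorem~\ref{thm:knabe}) applies verbatim to $H_L=\sum_i P_i$ and $H_{n,k}=\sum_{i=k}^{k+n-1}P_i$. Only \emph{after} obtaining the Knabe bound does one convert back via $\gamma P_i\leq H_{\Lambda_i}\leq \Gamma P_i$ and the double-counting inequalities $H_{\Lambda_{n,k}}\leq \sum_i H_{\Lambda_i}\leq 2H_{\Lambda_{n,k}}$; this is where the factor $\gamma/(2\Gamma)$ and the threshold $\Gamma/n$ arise. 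In short, the $6$--$7$ site windows are not a cross-term patch but the atoms of the argument; the role of $\gamma,\Gamma$ is to pass between $H_{\Lambda_i}$ and $P_i$, not to bound commutators of large blocks. Reorganizing your sketch along these lines would make the constants fall out cleanly.
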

The proof of~\eqref{eq:gapper} is spelled out in Subsection~\ref{subsec:periodic_gap}. The explicit lower bound~\eqref{eq:gapperexp} results from inserting~\eqref{eq:gaplower} into~\eqref{eq:gapper}.

%
An important question not addressed in this paper is the persistence of the spectral gap under a class of perturbations that 
include the non-truncated pseudo-potential Hamiltonian (\ref{eq:Haldane}). The large degeneracy of the ground state makes this a subtle question 
for which the available stability theorems (see, e.g.~\cite{Michalakis:2013}) do not apply.
\vskip12pt

\begin{figure}[ht]
	\begin{center}
		\includegraphics [width=.49\textwidth]{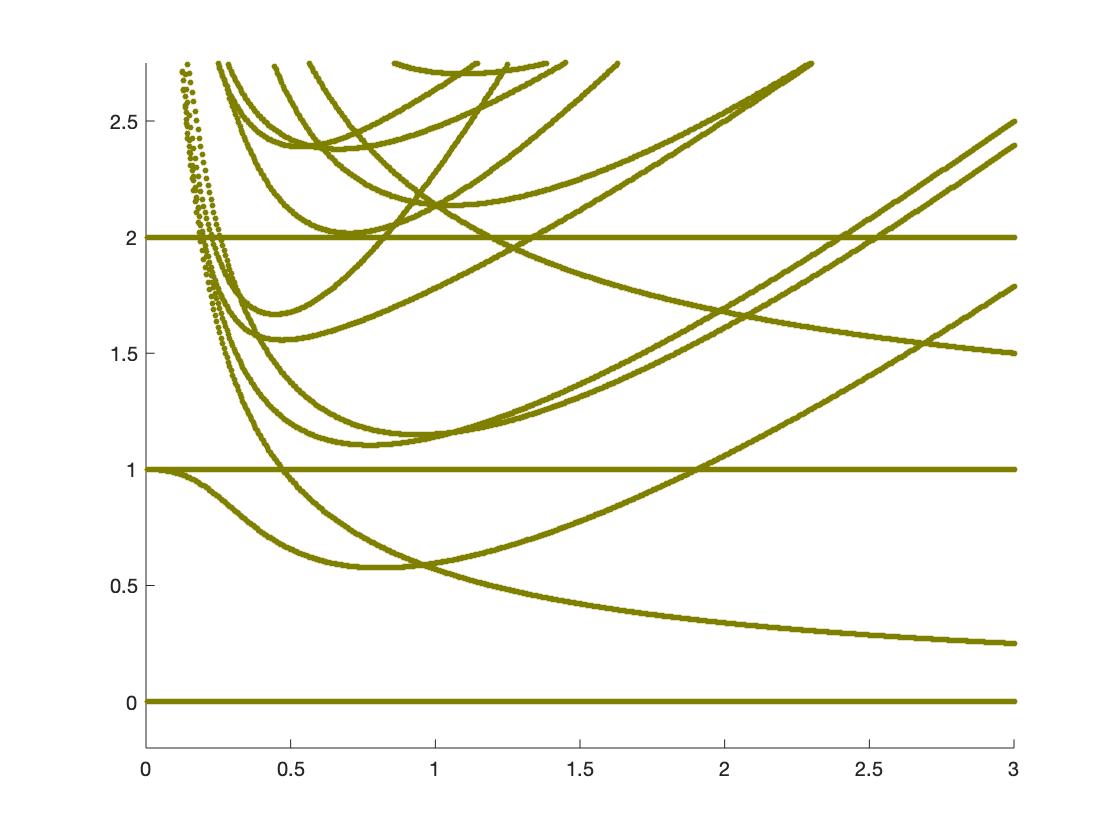}\hfill \includegraphics [width=.49\textwidth]{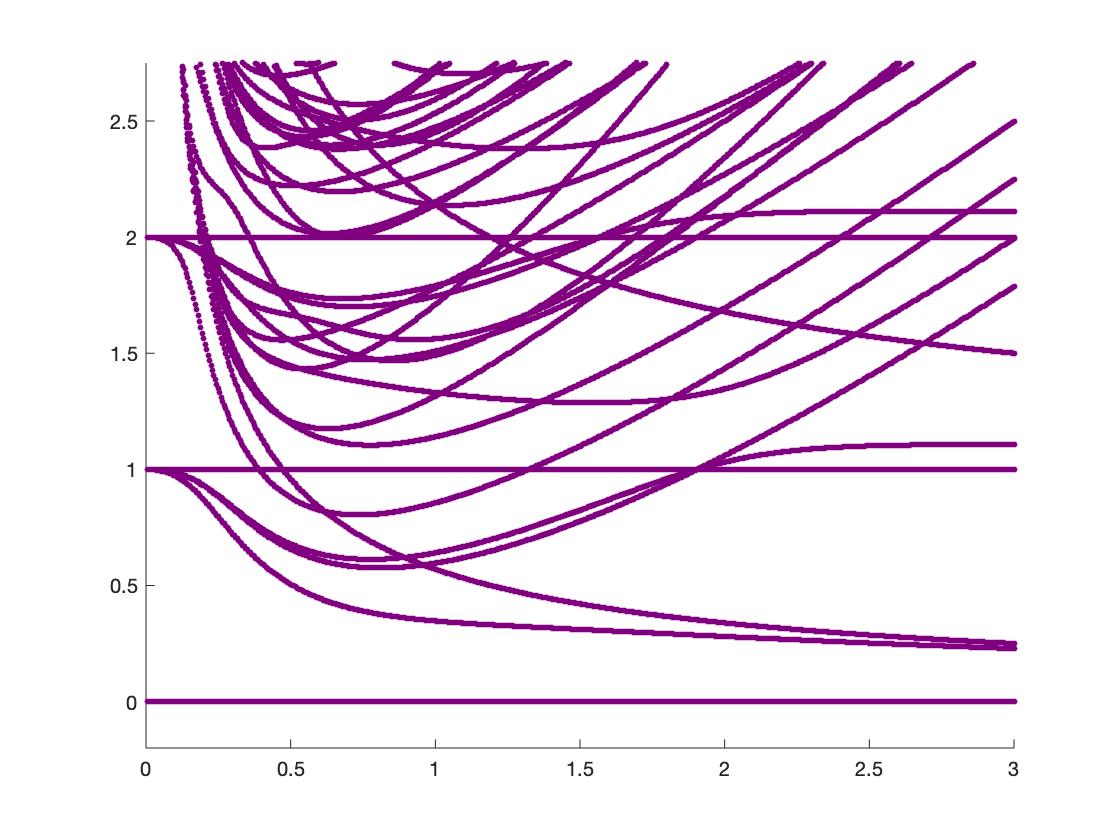}
		\caption{A plot of the spectrum of $ H_{[1,9]}^\textrm{per} $ and $H^D_{[0,9]}$, as a function of $ \lambda \in [0,3]$ for $ \kappa(\lambda) = (3^{3/4}/4) \lambda^{-3/4}$.} \label{fig:gapsper}
	\end{center}
\end{figure}

A consequence of our ground state description and spectral gap result is the incompressibility of the FQH system in the thermodynamic limit. While the ground state of $ H_\Lambda $ on the whole Fock space $ \mathcal{F}_\Lambda $ has zero energy, this may cease to be the case when restricting to a subspace with fixed particle number $ N $, i.e.
\begin{equation}\label{eq:gap2}
E_\Lambda(N) := \inf \left\{ \langle \psi , H_\Lambda \psi \rangle \, | \, \psi \in  \mathcal{F}_\Lambda \, \wedge  \| \psi \| = 1 \, \wedge \,  N_\Lambda \psi = N \psi  \right\} \, ,
\end{equation}
where $ N_\Lambda  = \sum_{x\in \Lambda} n_x $ is the number operator corresponding to $ \Lambda $ which commutes with $ H_\Lambda $.
The structure of $\ker(H_\Lambda)$ described in Section~\ref{sec:VMD} will imply the following behavior of the $N$-particle ground-state energy. 
\begin{theorem}[Maximal filling]\label{thm:gsenergy}
	For any interval $\Lambda$ of length $ |\Lambda | \geq 8 $, there is a maximal particle number $ N_\Lambda^m \in \mathbb{N} $ with
	\begin{equation}\label{eq:maxNumber}
	 \frac{1}{3} \leq \frac{ N_\Lambda^m}{|\Lambda|} \leq  \frac{1}{3} + \frac{4}{3|\Lambda|} 
	 \end{equation}
	such that the ground-state energy obeys: 
	\begin{equation}\label{eq:gsenergy}
	E_\Lambda(N)  \begin{cases}     = 0 & \mbox{if $ N  \leq N_\Lambda^m $,} \\
	> 0 &  \mbox{else.} \end{cases}
	\end{equation} 
	In the thermodynamic limit, the maximal filling fraction equals $ \lim_{|\Lambda|\to \infty }  N_\Lambda^m\big/|\Lambda|  = 1/3 $.
\end{theorem}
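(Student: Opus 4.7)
The backbone of the argument is the description of $\ker H_\Lambda$ worked out in Section~\ref{sec:VMD}: every zero-energy state is a linear combination of Fragmented Matrix Product States, each labeled by a void-monomer tiling $\tau$ of $\Lambda$ and each a joint eigenvector of $N_\Lambda$ with eigenvalue $n(\tau)$ equal to the number of monomer tiles in $\tau$. Since $[H_\Lambda, N_\Lambda]=0$, the Fock space splits as $\mathcal{F}_\Lambda = \bigoplus_N \mathcal{F}_\Lambda^{(N)}$ into orthogonal, $H_\Lambda$-invariant particle-number sectors, so $E_\Lambda(N)=0$ if and only if there exists a tiling $\tau$ with $n(\tau) = N$. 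Setting $\mathcal{N}_\Lambda := \{n(\tau) : \tau \text{ a void-monomer tiling of } \Lambda\}$ and $N_\Lambda^m := \max \mathcal{N}_\Lambda$, the theorem reduces to identifying $\mathcal{N}_\Lambda$ and bounding $N_\Lambda^m$.

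The next step is to verify $\mathcal{N}_\Lambda = \{0, 1, \ldots, N_\Lambda^m\}$, which together with the previous paragraph gives $E_\Lambda(N) = 0$ throughout this range. The mechanism is a one-monomer reduction: any tiling containing at least one monomer can be modified by replacing one monomer tile with void tiles of equal total length, decreasing $n$ by exactly one while leaving $|\Lambda|$ fixed. Iterating from a maximal tiling, one reaches every intermediate integer and eventually $n=0$ (the all-voids tiling). The two-sided bound \eqref{eq:maxNumber} is then a purely combinatorial length inequality: the lower bound $N_\Lambda^m \geq |\Lambda|/3$ is witnessed by a Tao-Thouless-like tiling that packs $\Lambda$ with monomers up to a small boundary remainder, while the upper bound $N_\Lambda^m \leq |\Lambda|/3 + 4/3$ follows because each monomer tile covers a minimum number of sites and only a bounded overshoot from the two endpoints of $\Lambda$ is possible. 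Extracting the precise constant $4/3$ requires the detailed tile definitions from Section~\ref{sec:VMD}.

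For $N > N_\Lambda^m$, strict positivity $E_\Lambda(N) > 0$ is automatic: $H_\Lambda\big|_{\mathcal{F}_\Lambda^{(N)}}$ is a positive semidefinite operator on the finite-dimensional space $\mathcal{F}_\Lambda^{(N)}$ whose kernel is trivial by the definition of $N_\Lambda^m$, so its smallest eigenvalue is strictly positive. The thermodynamic-limit identity $\lim_{|\Lambda| \to \infty} N_\Lambda^m / |\Lambda| = 1/3$ is immediate from \eqref{eq:maxNumber} via the squeeze theorem. The only delicate piece of the argument is the combinatorial length accounting needed to obtain the sharp boundary constant $4/3$ in \eqref{eq:maxNumber}; once the tiling dictionary of Section~\ref{sec:VMD} is in hand, every other step is essentially bookkeeping.
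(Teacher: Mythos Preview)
Your proposal is correct and follows essentially the same route as the paper: set $N_\Lambda^m$ equal to the maximal particle number over all root tilings, use the identification $\ker H_\Lambda=\mathcal G_\Lambda$ (Theorem~\ref{thm:gss}) together with the fact that each VMD state is an $N_\Lambda$-eigenvector to get the dichotomy \eqref{eq:gsenergy}, and invoke the combinatorial Lemma~\ref{lem:maxfill} for both the one-particle-reduction argument and the bounds \eqref{eq:maxNumber}. One small terminological slip: the root tilings are not purely ``void--monomer''---they also allow the boundary dimers $B^l_d$ and $B^r_d$, and it is precisely the choice $B=(B^l_d,B^r_d)$ that saturates the upper bound and produces the constant $4/3$; your deferral to Section~\ref{sec:VMD} for the tile dictionary covers this, but the phrase ``$n(\tau)$ equal to the number of monomer tiles'' is not literally correct for such tilings.
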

A proof of this theorem is found at the end of Section~\ref{sec:VMD}. 
Note that the gap proven in Theorem~\ref{thm:gap} immediately implies a uniform lower bound on the $N$-particle ground-state energy for the second case in~\eqref{eq:gsenergy}.
From this we conclude that the compressibility vanishes in the thermodynamic limit. More precisely, consider $ N $ particles on an interval $ \Lambda $ and let $ \Lambda_+ $ and $\Lambda_- $ stand for the intervals which arise from $ \Lambda $ by adding or subtracting a lattice site.  The inverse compressibility  (at zero temperature) is defined in terms of the second derivative of the ground-state energy: 
\begin{equation}
\frac{1}{\kappa_{\Lambda}(N)}  :=   |\Lambda|  \,  \frac{E_{\Lambda_{+}}(N) + E_{\Lambda_{-}}(N)- 2 E_{\Lambda}(N)}{(2\pi \ell^2 )^2} \, . 
\end{equation}
In the situation where $ 0 = E_{\Lambda_+}(N) = E_{\Lambda}(N) < E_{\Lambda_{-}}(N)  $, an immediate implication of the uniform lower bound on $ E_{\Lambda_{-}}(N)  $ from Theorem~\ref{thm:gap} is as follows.

\begin{cor}[Incompressibility]\label{cor:incompress}
	At zero temperature and critical filling factor, i.e.~$ N =  N_\Lambda^m $, the compressibility $ \kappa_{\Lambda}(N) $ vanishes in the thermodynamic limit $  |\Lambda| \to \infty $.
\end{cor}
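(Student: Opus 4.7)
The plan is to reduce the claim to the uniform spectral gap bound already established in Theorem~\ref{thm:gap}. The key observation is that under the stated hypothesis $0 = E_{\Lambda_+}(N) = E_\Lambda(N) < E_{\Lambda_-}(N)$, the definition of inverse compressibility collapses to
\[
\frac{1}{\kappa_\Lambda(N)} \;=\; \frac{|\Lambda|\, E_{\Lambda_-}(N)}{(2\pi\ell^2)^2},
\]
so it suffices to bound $E_{\Lambda_-}(N)$ from below by a positive constant independent of $|\Lambda|$.

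First, I would verify that the stated scenario actually occurs at critical filling $N = N_\Lambda^m$. By Theorem~\ref{thm:gsenergy}, $E_\Lambda(N) = 0$ since $N \leq N_\Lambda^m$. Using the void-monomer tiling description of $\ker H_\Lambda$ from Section~\ref{sec:VMD}, extending the interval by one site cannot decrease the maximal particle number (one simply extends an admissible tiling by an empty cell), so $N \leq N_{\Lambda_+}^m$ and hence $E_{\Lambda_+}(N) = 0$. The combinatorial point is the opposite inequality on $\Lambda_-$: along a suitable residue class of $|\Lambda|$ modulo $3$, removing a site forces $N_{\Lambda_-}^m < N_\Lambda^m$, so that $N > N_{\Lambda_-}^m$ and Theorem~\ref{thm:gsenergy} gives $E_{\Lambda_-}(N) > 0$.

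Next, any normalized $\psi \in \mathcal{F}_{\Lambda_-}$ with $N_{\Lambda_-}\psi = N\psi$ lies in $(\ker H_{\Lambda_-})^\perp$, because $N > N_{\Lambda_-}^m$ means no zero-energy vector carries particle number $N$. The variational characterization of the spectral gap therefore yields
\[
E_{\Lambda_-}(N) \;\geq\; \gap H_{\Lambda_-} \;\geq\; \gamma_0,
\]
where $\gamma_0 > 0$ is the volume-independent lower bound furnished by Theorem~\ref{thm:gap} (valid in the regime $f(|\lambda|^2) < 1/3$, in particular at the physical parameter values). Substituting into the displayed expression above gives $1/\kappa_\Lambda(N) \geq \gamma_0 |\Lambda| / (2\pi\ell^2)^2 \to \infty$ along the chosen subsequence of volumes, which is exactly $\kappa_\Lambda(N) \to 0$.

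The main obstacle is the combinatorial step of exhibiting a subsequence of volumes along which $N_{\Lambda_-}^m < N_\Lambda^m$: the bounds $|\Lambda|/3 \leq N_\Lambda^m \leq (|\Lambda|+4)/3$ from Theorem~\ref{thm:gsenergy} are on their own too loose to compare $N_\Lambda^m$ with $N_{\Lambda_-}^m$ directly, so one must invoke the explicit description from Section~\ref{sec:VMD} to pin down the exact value of $N_\Lambda^m$ in each residue class. Once this is in hand, the remainder of the argument reduces to the already-proven uniform gap bound.
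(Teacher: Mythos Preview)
Your proposal is correct and follows essentially the same route as the paper: the paper does not give a separate proof of the corollary but simply observes, in the paragraph preceding it, that under the hypothesis $0 = E_{\Lambda_+}(N) = E_\Lambda(N) < E_{\Lambda_-}(N)$ the inverse compressibility reduces to $|\Lambda|\,E_{\Lambda_-}(N)/(2\pi\ell^2)^2$, and then invokes the uniform gap bound from Theorem~\ref{thm:gap} to force this to diverge. Your write-up is in fact more careful than the paper's, which leaves the verification of the hypothesis (in particular the step $N_{\Lambda_-}^m < N$) implicit; your identification of this as the one combinatorial point requiring the tiling description of Section~\ref{sec:VMD} is accurate.
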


On the level of related classical 2D Coulomb problems, incompressibility in the sense of an upper bound on the charge density corresponding to perturbations of Laughlin's wave function was established in~\cite{Lieb:2019vl}. In contrast, our Corollary~\ref{cor:incompress} starts from a microscopic (i.e.~Hamiltonian) description of the system. \\

Spectral gaps are usually associated with exponential clustering in the thermodynamic limit of the ground state~\cite{hastings:2006,nachtergaele:2006a}. The caveat here is that the ground state space $\ker(H_\Lambda)$ grows exponentially in the system size $|\Lambda|$, for which the previous results do not apply. In Section~\ref{sec:correlation_decay}, we show that while the fragmented matrix product states exhibit exponential clustering, there are other pure ground states with arbitrarily slow decay. 

We denote by $\omega_R^\Lambda:\cA_\Lambda\to \bC$ the state associated with the fragmented matrix product state
\begin{equation}\label{eq:tilstate}
\psi_\Lambda(R) = \sum_{\pmb{D}  \in  \mathcal{D}_\Lambda(R) } C_{D_1}^* \dots C_{D_N}^* |\emptyset \rangle  .
\end{equation}
Here, $R\in\cR_\Lambda$ is a tiling of the interval $\Lambda$ by voids and monomers, and $ \pmb{D} = (D_1,\dots, D_N) \in  \mathcal{D}_\Lambda(R) $ is any void-monomer-dimer (VMD) tiling of $ \Lambda $ obtained from $R$ by replacing neighboring monomers by a dimer. The operator associated with an individual domino $ D $ in the tiling is 
\begin{equation} 
C_D^* = \begin{cases} 1 &  \mbox{if $ D $ is a void,} \\  c_x^* & \mbox{if $ D $ is a monomer starting at $ x $,} \\
\lambda \, c_{x+1}^* c_{x+2}^* & \mbox{if $ D $ is a dimer starting at $ x $.}  \end{cases} 
\end{equation} 
For intervals $|\Lambda|\geq 8$ the set $\{\psi_\Lambda(R) \, \big| \, R\in\cR_\Lambda\}$ forms an orthogonal basis for the ground states space. For more details on these states and precise definitions, see Section~\ref{sec:VMD} and Section~\ref{susec:Ref} below.

For the exponential clustering result, recall that the algebra of observables $ \mathcal{A}_\Lambda $ on the fermionic Fock space  $ \mathcal{F}_\Lambda $  decomposes into even and odd parts under parity, denoted by $\mathcal{A}_\Lambda^e $ and $\mathcal{A}_\Lambda^o $, respectively. The even observables are generated by even monomials of the creation and annihilation operators associated with $ \Lambda $. An analogous construction holds for the odd observables. We prove exponential decay for the even and odd observables. To describe the correlation length for $ \lambda \in \mathbb{C} \backslash \{ 0\} $, set
\begin{equation}\label{decay_rate}
c(\lambda ) := \frac{1}{3} \ln \frac{ \sqrt{4|\lambda|^2 +1}+1}{\sqrt{4|\lambda|^2 +1}-1} . 
\end{equation}
While $c(\lambda)$ is undefined for $\lambda=0$, it is easy to check that $\lim_{\lambda \to 0} \exp(-c(\lambda) d ) \to 0$ for any $ d >0 $. 

\begin{theorem}[Exponential clustering] \label{thm:expcluster}
	Consider an interval $ \Lambda \subset \mathbb{Z} $ and subintervals $ X , Y \subset \Lambda $ of distance $ d(X,Y) \geq 20 $ as well as a state $ \omega_{R}^\Lambda $ characterized by $ R \in \mathcal{R}_\Lambda  $. For any $ \lambda \in \mathbb{C} \backslash \{ 0\} $, $ \kappa > 0 $ and any pair of even observables $ A_1 \in \mathcal{A}_X^e $ and $ A_2 \in \mathcal{A}_Y^e $ supported on $ X $ respectively $ Y $,  
	\begin{equation}\label{eq:ABexpdecay}
	\left| \omega_{R}^\Lambda(A_1A_2)- \omega_{R}^\Lambda(A_1)\, \omega_{R}^\Lambda(A_2) \right| \leq 8  \|A_1\|\|A_2\|e^{-c(\lambda)( \dist(X,Y) -20) /2} . 
	\end{equation}
	For any odd observable $ A_1 \in \mathcal{A}_X^o $ and any other  (even or odd) observable $ A_2 \in \mathcal{A}_Y $, we have 
	\begin{equation}\label{eq:zeroodd}
	\omega_{R}^\Lambda(A_1A_2) =  \omega_{R}^\Lambda(A_1) = 0 . 
	\end{equation}
\end{theorem}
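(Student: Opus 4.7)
The plan is to exploit the fragmented matrix-product representation of $\psi_\Lambda(R)$ from Section~\ref{sec:VMD}, together with the observation that every VMD tiling in $\mathcal{D}_\Lambda(R)$ carries the same particle number: a void contributes $0$, a monomer $1$, and a dimer $2 = 1+1$, so the monomer$\leftrightarrow$dimer swap underlying $\mathcal{D}_\Lambda(R)$ preserves $N$. In particular $\psi_\Lambda(R)$ is an eigenvector of $N_\Lambda$, and the expectation of any odd observable vanishes, giving $\omega_R^\Lambda(A_1) = 0$ in \eqref{eq:zeroodd}.

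For the product identity $\omega_R^\Lambda(A_1 A_2) = 0$, I will upgrade this to a \emph{localized} particle-number conservation. Writing $\psi_\Lambda(R) = \sum_{\pmb{D}} \lambda^{k(\pmb{D})}|\pmb{D}\rangle$ as a sum of Slater-determinant basis states, the matrix element $\langle \pmb{D}' | A_1 A_2 |\pmb{D}\rangle$ can only be nonzero if $\pmb{D}$ and $\pmb{D}'$ agree on $\Lambda \setminus (X \cup Y)$. The assumption $d(X,Y) \geq 20$ ensures that no single monomer$\leftrightarrow$dimer swap region (of width $4$) can straddle both $X$ and $Y$, so the swaps that differ between $\pmb{D}$ and $\pmb{D}'$ must sit entirely within $X$ or entirely within $Y$. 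Each such swap preserves particle number locally, hence $N_X(\pmb{D}) = N_X(\pmb{D}')$. Since $A_2$ is supported on $Y$ and does not change $N_X$, while odd $A_1$ shifts it by an odd amount, the overlap must vanish term by term.

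For the exponential bound \eqref{eq:ABexpdecay} on even observables, I plan to reduce the truncated two-point function to powers of a transfer matrix $T$ in the spirit of the squeezed-state analyses of~\cite{Jansen:2009gv,Jansen:2012da,Nakamura:2012bu}. The natural block size is three sites (one monomer of length $1$ plus one void of length $2$), and $T$ will act effectively on a two-dimensional bond space with Fibonacci-type entries $1$ and $\lambda$ recording the monomer versus dimer choice in each block. A direct computation should yield eigenvalues $\mu_\pm = (1 \pm \sqrt{1+4|\lambda|^2})/2$, so the per-site rate $\frac{1}{3}\ln(\mu_+/|\mu_-|)$ will coincide precisely with $c(\lambda)$ in \eqref{decay_rate}. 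The combinatorial prefactor $8$ and the offset $20$ in the exponent should come from summing over the admissible cut positions near $X$ and $Y$ that are compatible with a given $R$, together with a Cauchy--Schwarz step that splits the intermediate buffer symmetrically and accounts for the factor $1/2$. The main technical obstacle will be performing the transfer-matrix analysis uniformly in the base tiling $R \in \mathcal{R}_\Lambda$: unlike the single-sector squeezed state of~\cite{Jansen:2009gv,Jansen:2012da}, the fragmented MPS decomposes into independent segments between voids whose lengths depend on $R$, and one must arrange the argument so that the decay rate $c(\lambda)$ survives across the fragmentation boundaries and does not degrade as $|\lambda| \to 0$.
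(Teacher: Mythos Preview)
Your argument for the odd-observable identity \eqref{eq:zeroodd} is correct and in fact cleaner than the paper's. The paper proves $\omega_R^\Lambda(A_1 A_2) = 0$ by passing to the spin picture via Jordan--Wigner, picking up a string of $\sigma^3$ operators between $X$ and $Y$, and then reusing the fragmentation/recursion decomposition from the even case to reduce to expectations of odd observables on subintervals. Your localized particle-number argument bypasses all of this: the observation that any two tilings $\pmb{D}, \pmb{D}' \in \mathcal{D}_\Lambda(R)$ agreeing on $\Lambda \setminus (X \cup Y)$ must satisfy $N_X(\pmb{D}) = N_X(\pmb{D}')$ (since each monomer--dimer swap flips all four sites $1001 \leftrightarrow 0110$ and hence cannot straddle the boundary of $X$ undetected) is both correct and self-contained. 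One small correction: a monomer has length $3$ and a void has length $1$, not the reverse.

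For the even case your plan diverges from the paper and, more importantly, misidentifies where the difficulty lies. You anticipate that the hard part is making the decay rate ``survive across the fragmentation boundaries,'' i.e.\ handling voids between $X$ and $Y$. In fact that case is trivial: if the root tiling $R$ has a void anywhere in the gap, then $\psi_\Lambda(R)$ factorizes exactly across it (Lemma~\ref{thm:product}) and the truncated correlator vanishes identically. The only nontrivial case is when there is \emph{no} void between $X$ and $Y$, so that a single squeezed Tao--Thouless segment $\varphi_N^{(j)}$ spans the entire gap. The paper handles this not with a transfer matrix but with a trimming lemma (Lemma~\ref{lem:vp_shrink}): iterating the recursion $\varphi_{n+1} = \varphi_n \otimes \varphi_1 + \lambda\, \varphi_{n-1} \otimes |\pmb{\sigma}_d\rangle$ shows that $|\langle \widehat\varphi_{m}, A\, \widehat\varphi_{m}\rangle - \langle \widehat\varphi_{n}, A\, \widehat\varphi_{n}\rangle|$ is controlled by $\prod_k |1 - \alpha_k| \lesssim \beta^n$ with $\beta = |\mu_-/\mu_+| = e^{-3c(\lambda)}$. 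One then splits $\varphi_N^{(j)}$ at the midpoint between $X$ and $Y$ via \eqref{eq:recursion} and trims independently from both sides; the factor $1/2$ in the exponent comes from each half contributing decay over roughly $d(X,Y)/2$, not from Cauchy--Schwarz. Your transfer-matrix route would recover the same rate (the eigenvalues $\mu_\pm$ you quote are correct), but as written the plan is aimed at the wrong case and does not yet contain the key dichotomy.
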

The exponential decay of the correlations of $\omega_{R}^\Lambda$ is uniform with respect to the root tiling $R$ and interval $\Lambda$. 
As a consequence, any infinite-volume ground state obtained as a weak-$*$ limit of such finite-volume states will also have exponential clustering.

When $\lambda=0$, every state~\eqref{eq:tilstate}  is a product state, which trivially satisfies
$
\omega_{R}^\Lambda(A_1A_2)- \omega_{R}^\Lambda(A_1)\, \omega_{R}^\Lambda(A_2)=0
$
for all observables $A_1$ and $A_2$ as in~Theorem~\ref{thm:expcluster}.  The result~\eqref{eq:ABexpdecay}, which is initially only formulated for $ \lambda \neq 0 $, thus extends continuously to $\lambda=0$.

As was mentioned above, due to the fact that the VMD states form a subspace $ \mathcal{G}_\Lambda $ which increases exponentially in~$|\Lambda | $, one cannot expect~\eqref{eq:ABexpdecay} to carry over to arbitrary linear combinations of tiling states. In fact, we show in Subsection~\ref{subsec:noexpdecay} that there are  pure ground states of the infinite chain, for which the correlations do not decay exponentially.

In case the interval has length $ |\Lambda| \in 3 \mathbb{N} $, the a fragmented matrix product state corresponding to the choice $ R = R_M $ of a pure monomer tiling is the squeezed Tao-Thouless state~\eqref{eq:TTstate}. Exponential decay was already established for this state in~\cite{Jansen:2009gv,Jansen:2012da} through a slightly different analysis than the one employed here. In~\cite{Nakamura:2012bu} the explicit dependence on~\eqref{decay_rate} 
was worked out for the density correlations in the infinite-volume limit of the squeezed state, i.e. $ \omega_{R_M} := \lim_{|\Lambda| \to \infty} \omega_{R_M}^\Lambda $. Specifically, it was shown that
\begin{equation}
\lim_{|x-y|\to \infty} \,\frac{\ln \left|  \omega_{R_{M}}( n_{x} n_{y})  - \omega_{R_M}( n_{x} ) \omega_{R_M}(n_{y})  \right| }{  |x-y|}= - c(\lambda) .
\end{equation}
We believe that this decay rate should be a bound for the other states $ \omega_{R}^{\Lambda} $, and the factor of two in \eqref{eq:ABexpdecay} is 
a result of our proof method. In general, \eqref{eq:ABexpdecay} is not sharp since the class of fragmented matrix product states includes product states for any
$ \lambda\in\bC$. 

The limiting case $ \lambda = \infty $ corresponds to a situation in which the kernel of $ H_\Lambda $ at maximal filling is five-periodic and not three-periodic. Therefore, one might wonder whether the phases at $ \lambda = 0 $ and $ \lambda = \infty $  are separated by a closing of the gap. It is quite possible, however, that the change of the 
period only occurs in the limit $ \lambda = \infty$, where the correlation length diverges, $ \lim_{\lambda \to \infty} c(\lambda) = 0 $. 


As was proven in \cite{Koma2004}, the non-uniqueness of the ground state and the existence of a gapped excitations is generally tied to rational fillings and the breaking of the translation symmetry. For the squeezed state~\eqref{eq:TTstate}, which corresponds to $ \omega_{R_M}$, translation symmetry breaking was proved in~\cite{Jansen:2009gv}. It also follows from the argument given below (\ref{symmetries}).
For any $ |\lambda| \neq \sqrt{2} $ the one-particle density of the squeezed Tao-Thouless state explicitly shows this translation symmetry breaking as can be seen from the computation  in  \cite{Nakamura:2012bu}, i.e.~for any $ k \in \mathbb{N}$:
\begin{equation}
\omega_{R_M}( n_{3k+1} ) = \frac{1}{\sqrt{4|\lambda|^2 +1}} , \qquad 
\omega_{R_M}( n_{3k + 1 \pm 1} ) =  \frac{1}{2}  \left( 1 -  \frac{1}{\sqrt{4|\lambda|^2 +1}} \right) .
\end{equation}
The squeezed Tao-Thouless state $  \omega_{R_M} $ also exhibits string-order of two kinds \cite{Nakamura:2012bu}. The string order parameters are calculated using the observables $O^z_{3k,3\ell} $ and $\overline{O}^z_{3k,3\ell} $ defined by
$$
O^z_{3k,3\ell} 
=  -(n_{3k+2} - n_{3k}) e^{i\pi \sum_{j=k+1}^{\ell -1} (n_{3j+2} - n_{3j})}(n_{3\ell+2} - n_{3\ell}),
\quad \overline{O}^z _{3k,3\ell} = e^{i\pi \sum_{j=k+1}^{\ell -1} (n_{3j+2} - n_{3j})}.
$$
In the squeezed Tao-Thouless state the expectations converge as $\ell-k\to \infty$ to non-zero values:
$$
\lim_{\ell- k\to \infty} \omega_{R_M}(O^z_{3k,3\ell} ) = \frac{(\sqrt{4|\lambda|^2 +1} -1)^2}{4|\lambda|^2 +1},
\quad \lim_{\ell- k\to \infty} \omega_{R_M}(\overline{O}^z_{3k,3\ell} ) =\frac{1}{4|\lambda|^2 +1}.
$$
It is worth noting that the string observables stemming from the Jordan-Wigner transformation of a fermionic correlation function of the 
type $c^*_k c_\ell$ do {\em not} show string order. Such correlations decay exponentially as shown in Section \ref{subsec:Proofexp}.



\subsection{Reformulation}\label{susec:Ref}
While the operators $ H_\Lambda^\#  $ with $ \# \in \{ \cdot , \textrm{per}, \textrm{D} \} $ act on the fermionic Fock space $ \mathcal{F}_\Lambda $, for the analysis in this work we find it convenient to rewrite the fermionic system as a spin-$ \tfrac{1}{2} $ chain via the Jordan-Wigner transformation on $ \Lambda = [a,b]  $. For the spin-$\tfrac{1}{2}$ chain, the canonical orthonormal basis for $ \mathbb{C}^2 \equiv \spa\{ | 1 \rangle , | 0 \rangle \} $ represents if the site $x\in\Lambda$ is occupied or vacant. The algebra of observables, $\cA_\Lambda$, is the set of all bounded operators acting on the total Hilbert space, $\cH_\Lambda$, defined by the tensor product of the on-site spaces, i.e.
\begin{equation}\label{eq:algebra}
\mathcal{H}_\Lambda := \bigotimes_{x\in \Lambda } \mathbb{C}^2, \quad
\cA_{\Lambda} = \caB(\cH_\Lambda) .
\end{equation}
For any pair of finite volumes $\Lambda_1\subset \Lambda_2$ we can identify an observable $A\in\cA_{\Lambda_1}$ as acting on $\cH_{\Lambda_2}$ via the identification $A\mapsto A\otimes \1_{\Lambda_2\setminus \Lambda_1}\in \cA_{\Lambda_2}$. For simplicity, we will typically suppress the identity in our notation.

Given the three Pauli matrices $ \sigma_x^1 , \sigma_x^2 , \sigma_x^3 $ and the corresponding lowering and raising operators 
$$ \sigma_x^+ = \tfrac{1}{2} ( \sigma_x^1 + i \sigma_x^2 )  \equiv \left( \begin{matrix} 0 & 1 \\ 0 & 0 \end{matrix} \right)   ,  \quad \sigma_x^- = \tfrac{1}{2} ( \sigma_x^1 - i \sigma_x^2 )  \equiv \left( \begin{matrix} 0 & 0 \\ 1 & 0 \end{matrix} \right)   $$ 
the canonical anticommutation relations are implemented on $\cH_\Lambda$ by the operators
\begin{align}\label{eq:JW}
c_x =  \left( \prod_{k=a}^{x-1} \sigma_k^3 \, \right) 	\sigma_x^-  , \qquad 
c_x^*  =\left( \prod_{k=a}^{x-1} \sigma_k^3 \, \right) 	 	\sigma_x^+.
\end{align} 
for all $x\in\Lambda$,~\cite{Jordan:1928sd}. Under the Jordan-Wigner transformation, the Hamiltonian~\eqref{eq:fvHam} with open boundary conditions is unitarily equivalent to the following 
spin-$\tfrac{1}{2} $ system:
\begin{align}\label{def:Hspin}
H_\Lambda & =  \sum_{x=a}^{b-2}  n_x n_{x+2} +  \kappa  \sum_{x=a}^{b-3}  q_x^* q_x  \notag \\
\mbox{with} \quad  n_x = \tfrac{1}{2} &(\sigma_x^3 +1 ) , \qquad q_x = \sigma_{x+1}^- \sigma_{x+2}^- - \lambda \  \sigma_{x}^- \sigma_{x+3}^- .  
\end{align}
The expression for $q_x$ above is obtained by applying the Jordan-Wigner transformation to $q_x^*q_x$ from \eqref{eq:Ham} and factoring the resulting quantity. The case of periodic or (soft) Dirichlet boundary conditions~\eqref{eq:perHam} and~\eqref{eq:DHam} can be rewritten similarly.
More generally, even fermionic observables $ \mathcal{A}^e_X $ transform under Jordan-Wigner to observables in $  \caB(\cH_X) $. This is not the case for odd fermionic observables $ \mathcal{A}^o_X $, which are multiplied by a string of  $ \sigma^3 $-operators extending to the left.\\

Section~\ref{sec:VMD} contains the definition and properties of the VMD states in the spin language, see~Definition~\ref{def:VMDstate}.  For the reformulation of the results in from this introduction, which are stated in the fermionic language, it is 
useful to recall that under the Jordan-Wigner transformation, the vacuum state on the fermionic Fock space is identified with the state  of all spins down,  i.e.
$$
\mathcal{F}_\Lambda \ni  |\emptyset \rangle  \; \equiv \; | \pmb{0} \rangle \in  \mathcal{H}_\Lambda  \, . 
$$
More generally, the occupation basis vectors $ c_{x_1}^* c_{x_2}^* \cdots c_{x_N}^* |\emptyset \rangle  $ indicating a fixed particle configuration at $ x_1 < x_2 < \dots < x_N $ are identified -- up to a sign -- with the spin basis vectors $ | \pmb{\sigma} \rangle $ in which the spins are up at the particle locations, $ \sigma_{x_j} = 1 $ for all $ j \in \{ 1, \dots , N \} $, and all other spins are down,  $ \sigma_{y} = 0 $ for any $ y\neq x_j $.  For the corresponding states, this sign is of course irrelevant. However,  the relative phases matter when considering linear combinations as in the VMD states.
To prove that the fragmented matrix product states defined in~\eqref{eq:tilstate} and \eqref{eq:VMDstate} below indeed result from each other through the Jordan Wigner transformation, it is useful to express them in the occupation basis in Fock space (respectively, spin space). In either representation, for any two particle configurations that differ only by the replacement of two adjacent monomers with a dimer, the phase \emph{and} weight change is given by~\eqref{eq:weightchar} below. As a consequence, the fermionic and spin VMD state defined by the same root tiling $R\in\cR_\Lambda$ are related by the Jordan-Wigner transformation up to a sign, i.e 
\begin{equation}\label{eq:JW_VMD}
\psi_\Lambda^{\mathrm{s}}(R) = \pm U_{\mathrm{JW}}\psi_\Lambda^{\mathrm{f}}(R).
\end{equation}
For the definition of the spin VMD states, see Section~\ref{subsec:VMD_definition}.

\section{Fragmented MPS}\label{sec:VMD}

We introduce a novel class of fragmented matrix product states which are composed of concatenated products of matrix product states (MPS) of arbitrary length. In this paper, we focus on a special subclass, namely those whose fragments are the squeezed Tao-Thouless state~\eqref{eq:TTstate}.
The collection of these states turn out to form an orthogonal basis for the ground state of the Hamiltonian introduced in \eqref{eq:fvHam}. The states are formulated in spin language and associated with tilings of voids (V), monomers (M) and dimers (D) on a finite interval $ \Lambda \subset \mathbb{Z} $. The dimension of the space spanned by these states will turn out to be exponential in $ |\Lambda | $. Therefore, as a whole, this subspace is not amenable to a MPS representation with small matrix size. However, much of the MPS technology can be transplanted to our fragmented MPS.

Given a finite interval $\Lambda$, our fragmented MPS belong to the Hilbert space 
$
\mathcal{H}_\Lambda = \bigotimes_{x\in \Lambda } \mathbb{C}^2 $.
The canonical orthonormal basis of $ \mathbb{C}^2 $ is denoted by $ | 1 \rangle $ (= spin up),  $ | 0 \rangle $ (= spin down). Accordingly, configurations $ \pmb{\sigma} \in \{0,1\}^\Lambda $ label 
the standard orthonormal basis vectors, $ | \pmb{\sigma}  \rangle  $,  of the tensor product $ \mathcal{H}_\Lambda $. Connecting to the motivation presented in the introduction, we interpret spin-up states as being occupied by particles and spin-down states as being vacant.

\subsection{Definition of VMD tilings}\label{subsec:tilings}

We consider states that are supported on configurations described by domino tilings of the lattice. For tilings of $\bZ$ there are three kinds of dominoes, which are characterized by their length and particle content, see~Figure~\ref{fig:dominos}:
\begin{enumerate}
	\itemsep0pt
	\item A \emph{void} covers one lattice site. Its particle content is empty.
	\item A \emph{monomer} covers three lattice sites. It has a particle at its first site.
	\item A \emph{dimer} covers six lattice sites. It has particles at its second and third sites.
\end{enumerate}
\begin{figure}[ht]
	\begin{center}
		\includegraphics [width=.5\textwidth]{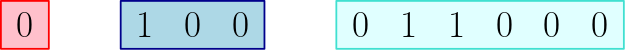}
		\caption{The three basic dominoes: voids, monomers and dimers. The location of particles are denoted by $1$'s and vacancies are marked by $0$'s. }\label{fig:dominos}
	\end{center}
\end{figure}

A \emph{VMD tiling} of $ \mathbb{Z}$ is any tiling of the lattice with voids, monomers and dimers. A \emph{root tiling} of $\bZ$ is any tiling of the lattice with just voids and monomers. Using the replacement rule that any two neighboring monomers can be substituted for a dimer, the entire set of VMD tilings of $ \mathbb{Z}$ is obtained from the set of root tilings, $\cR_\bZ$. Using the replacement rule in reverse, we see that every VMD tiling is associated with a unique root tiling. 


Cutting  a \emph{VMD tiling} of $ \mathbb{Z} $ at the two ends of some finite interval $ \Lambda \subset \mathbb{Z} $ gives rise to a set of \emph{boundary dominoes}~$B=(B^l,B^r)$. The particle content of some of these boundary dominoes can be obtained by a tiling of voids and monomers, and thus do not require the introduction of new dominoes. For intervals $|\Lambda|\geq 5$, the cases that cannot be obtained by voids and monomers gives rise to six boundary tiles.\footnote{For a complete description of the segments of $ \mathbb{Z} $-tilings on smaller intervals, i.e. $|\Lambda|\leq 4$, one would need different boundary dominoes. E.g.~for $ |\Lambda| = 4 $ one would need to remove the last site of the left-boundary domino $ B_d^l $.} Four of these, which we introduce now, are used to define root tilings of $\Lambda$:
\begin{enumerate}
	\itemsep0pt
	\item At the left boundary $B^l$, we introduce a \emph{left dimer} $ B^l_d $ which has length five. It carries two particles at its first and second site, cf.~Figure~\ref{fig:dominos2}.
	\begin{figure}[ht]
		\begin{center}
			\includegraphics [width=.2\textwidth]{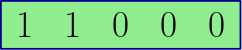}
			\caption{The dimer domino $ B^l_d $ which may be placed at the left boundary.}\label{fig:dominos2}
		\end{center}
	\end{figure}
	\item At the right boundary $ B^r $ we introduce, cf.~Figure~\ref{fig:rightdominos}:
	\begin{enumerate}
		\itemsep0pt
		\item A \emph{right dimer} $ B^r_d $, which has length three and carries two particles at its second and third site.
		\item A \emph{right $ 1 $-monomer} $ B^r_{1m} $, which has length one and carries a particle.
		\item A \emph{right $ 2 $-monomer}  $ B^r_{2m} $, which has length two and carries a particle at its first site. 
	\end{enumerate} 
	\begin{figure}[ht]
		\begin{center}
			\includegraphics [width=.35\textwidth]{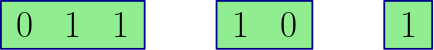}
			\caption{The three right boundary dominoes: the right dimer $ B^r_d $, the right $ 2 $-monomer $ B^r_{2m} $ and the right $ 1 $-monomer $ B^r_{1m} $}\label{fig:rightdominos}
		\end{center}
	\end{figure}
\end{enumerate} 

Basic (i.e. untruncated) dominoes are also allowed at either boundary of a finite interval $\Lambda$. In this case we say the associated boundary condition is empty. Thus, the sets of possible boundary conditions $B=(B^l,B^r)$ for a root tiling on $\Lambda$ are, respectively,
\begin{equation}\label{eq:bdy_cond}
B^l \in \{\emptyset, \, B_d^l\}, \quad B^r\in \{\emptyset, \, B_d^r, \, B_{1m}^r, \, B_{2m}^r\}.
\end{equation}
We define a \emph{root tiling} $ R=(B,V,M)$ of a finite interval $\Lambda$ as a tiling defined by a set of boundary conditions $B=(B^l,B^r)$, a set of voids $V$, and a set of basic monomers $M$. Since each void has length one we often identify $V$ as a subset of $\Lambda$. The set of all root tilings of $\Lambda$ is denoted by $ \mathcal{R}_\Lambda $. It will sometimes be useful to consider the \emph{ordered root domino tiling} (or for short, in a slight abuse of language: root tiling) defined by $R\in\cR_\Lambda$ which we denote by $\pmb{D}_R = (D_1, \, \ldots, \, D_n)$. Here, each $D_i$ is either a void, monomer or boundary tile.

The set of all VMD tilings will once again be defined using a replacement rule. As such, we introduce two additional \emph{right truncated dimers}, which arise from replacing a basic monomer and neighboring right-boundary monomer with a dimer. These complete the list of nontrivial boundary tiles and are defined as follows, cf.~Figure~\ref{fig:Leftdimers}:
\begin{enumerate}
	\itemsep0pt
	\item  $B_{1d}^r$ is the \emph{truncated $ 1 $-dimer}. It has length four and two particles on its second and third site.
	\item $B_{2d}^r$ is the \emph{truncated $ 2 $-dimer}. It has length five and two particles on its second and third site.
\end{enumerate}
\begin{figure}[ht]
	\begin{center}
		\includegraphics [width=.41\textwidth]{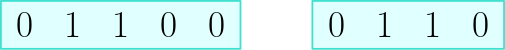}
		\caption{The two truncated dimers used to substitute a basic monomer and neighboring right boundary monomer, see Figure~\ref{fig:rightdominos}. These belong to the set of right boundary dimers, but are not used to define root tilings.}\label{fig:Leftdimers}
	\end{center}
\end{figure}

A \emph{VMD tiling} $\pmb{D} = (D_1, \, \ldots, \, D_n)$ of $ \Lambda $, where  $D_i$ is either a void, monomer, dimer or boundary tile, is any tiling obtained from a root tiling $R\in\cR_\Lambda$ using the following two substitution rules, cf.~Figure~\ref{fig:subst}:
\begin{enumerate}
	\itemsep0pt
	\item Two adjacent basic monomers can be replaced by a basic dimer.
	\item A basic monomer and neighboring 1- or 2-monomer can be replaced by the appropriate truncated dimer.
\end{enumerate}
We note that the replacement rules do not apply to the boundary dimers $B_d^l$ and $B_d^r$. The collection of all VMD tilings derived from a root tiling $R\in\cR_{\Lambda}$ is denoted by  $\mathcal{D}_\Lambda(R) $. Every VMD tiling $\pmb{D}$ is derived from a unique root tiling $ R $. As such, the collection of all \emph{VMD tilings}
\begin{equation}\label{eq:VMD_tilings}
\mathcal{D}_\Lambda := \left\{  \pmb{D}  \, \big| \, \pmb{D} \in  \mathcal{D}_\Lambda(R) \; \mbox{for some $R\in \mathcal{R}_\Lambda $}   \right\} ,
\end{equation}
is equal to the disjoint union of all $\cD_\Lambda(R)$.\\

\begin{figure}[ht]
	\begin{center}
		\includegraphics [width=.65\textwidth]{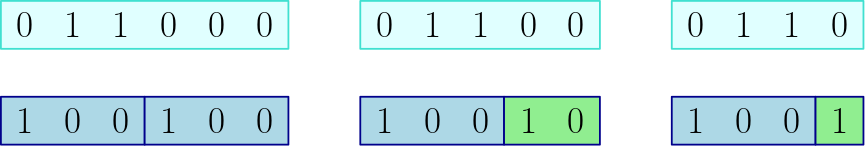}
		\caption{In a root tiling two subsequent monomers may be replaced by a dimer. The three possibilities, including the two boundary cases, are shown in this picture.}\label{fig:subst}
	\end{center}
\end{figure}

\subsection{Definition of VMD states} \label{subsec:VMD_definition}

The specific class of fragmented matrix product states we consider in this work, called \emph{VMD states}, are defined using the VMD tilings introduced in Section~\ref{subsec:tilings}. We now define these states and identify the specific VMD state associated with the squeezed Tao-Thouless state from the fermionic representation. We leave the proof that the VMD states are fragmented matrix product states for Section~\ref{subsec:fragmentation}.

\begin{dfn}\label{def:VMDstate}
	To any   root tiling  $ R \in \mathcal{R}_\Lambda$ of a finite interval $ \Lambda \subset \mathbb{Z}  $, we associate the \emph{VMD state} 
	\begin{equation}\label{eq:VMDstate}
	\psi_\Lambda(R) := \sum_{\pmb{D}  \in  \mathcal{D}_\Lambda(R) } \sigma_{D_1}^+ \dots \sigma_{D_N}^+ | \pmb{0} \rangle  .
	\end{equation}
	The normalized vector $  | \pmb{0} \rangle  \in \mathcal{H}_\Lambda $ is the tensor-product vector of all spins down.
	The sum extends over the collection of VMD tilings  $ \pmb{D} = (D_1,\dots, D_N)  \in\cD_\Lambda(R)$. Each domino tile $ D $ is associated with an operator, which depends on the particle content on that domino
	\begin{equation}\label{eq:diomincrea}
	\sigma_D^+ := \begin{cases} \1 &  \mbox{if $ D $ is a void,} \\  \sigma_x^+ & \mbox{if $ D $ is a monomer starting at $ x $,} \\
	\lambda \, \sigma_{x+1}^+ \sigma_{x+2}^+ & \mbox{if $ D $ is a dimer starting at $ x $,}.  \end{cases} 
	\end{equation} 
	For the operators defined above, we use the convention that the left-boundary dimer $B_d^l$ ``starts'' at $x=\min(\Lambda)-1$. The \emph{VMD subspace} on $ \Lambda $ is
	\begin{equation}\label{eq:vmd_subspace}
	\mathcal{G}_\Lambda := \spa\left\{ \psi_\Lambda(R) \big| \, \mbox{$ R \in \mathcal{R}_\Lambda $} \right\} .  
	\end{equation}
\end{dfn}
We will often consider VMD states on a sub-interval $\Lambda'\subset \Lambda$ for which we use the convention that 
\[\psi_{\Lambda'}(R)=1 \quad \text{if} \quad \Lambda'=\emptyset.\]

A special VMD state on $ \Lambda = [1, 3L ]$ is defined from the root configuration $R_M=(B,V,M)$ where $ B = V = \emptyset $, i.e. the root tiling which only consists of monomers:
\begin{equation}\label{eq:md_state}
\varphi_L := \psi_{ [1, 3L ] }( R_M ) = \sum_{\pmb{D}  \in  \mathcal{D}_{ [1, 3L ] }( R_M ) }  \sigma_{D_1}^+ \dots \sigma_{D_N}^+ | \pmb{0} \rangle . 
\end{equation}
Here the sum extends over all monomer-dimer tilings of $  [1, 3L ] $. In the fermionic language this state coincides with the squeezed Tao-Thouless state~\eqref{eq:TTstate}. As we will see below (cf.~Theorem~\ref{thm:product1}), every VMD state fragments into a product of Tao-Thouless type states and void states. To describe all fragmented MPS on any finite interval, we will also be concerned with cropped versions of \eqref{eq:md_state}.  Appending to $ L $ monomers a right $1$- or $ 2 $-monomer gives rise to the vectors
\begin{align}
\varphi_{L+1}^{(1)}  & := \psi_{ [1, 3L +1] }\left( R_M^1\right)  \notag \\
\varphi_{L+1} ^{(2)}  & := \psi_{ [1, 3L +2] }\left( R_M^2 \right), \label{eq:md_12}
\end{align}
where $R_M^j = \big((\emptyset, B_{jm}^r), \, \emptyset, \, M\big)$. We can extend this definition to $j=3$ by appending a regular length-3 monomer, i.e.
\begin{equation}\label{eq:md_3}
\varphi_{L+1}^{(3)} := \varphi_{L+1}.
\end{equation}

Thus, for $j\in\{1,2,3\}$ and $L\in\bN$ we let 
$ \varphi_{L}^{(j)} \in \mathcal{H}_{[1,3(L-1)+j]} $
denote the VMD state generated by $L$ monomers where the last monomer has length $j$. Notice that any monomer-dimer tiling associated with the state $\vp_L^{(j)}$ always ends in $j-1$ zeros. As a result, the following factorization property holds for $1\leq j\leq k \leq 3$:
\begin{equation}\label{eq:factorize}
\vp_L^{(k)}  = \vp_{L}^{(j)}\otimes\ket{0}^{\otimes k-j}.
\end{equation}

%
%

\subsection{Basic properties of VMD tilings}\label{subsec:tiling_props}

In the obvious way, any VMD tiling is associated with a particle configuration  on $ \Lambda $,
\begin{equation}\label{eq:mapconf}
\pmb{\sigma}_\Lambda : \mathcal{D}_\Lambda  \ \to   \{ 0, 1 \}^\Lambda, \quad 
\pmb{D} \  \mapsto  \pmb{\sigma}_\Lambda(\pmb{D}) .
\end{equation}
The VMD space is a particular subspace contained in the span of all VMD tilings, $\spa \{\ket{\pmb{\sigma}_\Lambda(\pmb{D})} \, \big| \,\pmb{D}\in\cD_\Lambda\}$. The latter subspace also plays an important role throughout this work. In this section we establish several key properties related to VMD tilings. We first show that any two different domino tilings $\pmb{D},\, \pmb{D}'\in\cD_\Lambda$, cannot produce the same particle configuration, i.e. $ \pmb{\sigma}_\Lambda(\pmb{D})\neq\pmb{\sigma}_\Lambda(\pmb{D}')$.
\begin{lem}[Injectivity]\label{lem:injective}
	The map $  \pmb{\sigma}_\Lambda:   \mathcal{D}_\Lambda  \ \to   \{ 0, 1 \}^\Lambda $ is injective for any interval $\Lambda$.
\end{lem}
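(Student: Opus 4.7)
The plan is to prove injectivity by induction on $|\Lambda|$, with the inductive step showing that the leftmost tile $D_1$ of any VMD tiling $\pmb{D}=(D_1,\dots,D_n)\in\cD_\Lambda$ is uniquely determined by the configuration $\pmb{\sigma}_\Lambda(\pmb{D})$. Once $D_1$ is identified, removing it yields a VMD tiling of $\Lambda\setminus D_1$ with empty left-boundary condition, whose particle configuration is the obvious restriction; the inductive hypothesis then supplies uniqueness for the remainder. The base cases, corresponding to very small $|\Lambda|$ on which $D_1$ may itself be a right-boundary tile, would be handled by direct enumeration of the finitely many VMD tilings available.

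For the inductive step, write $a=\min\Lambda$ and inspect $(\sigma_a,\sigma_{a+1},\sigma_{a+2})$. The key input is that the tile types have essentially disjoint head-patterns: a void starts with $0$, a basic monomer with $100$, a basic dimer with $011000$, and $B_d^l$ is the unique tile beginning with $11$; moreover, every non-void tile whose first site is $0$ must have $\sigma_{a+1}=1$. From this one reads off a decoding rule: $\sigma_a\sigma_{a+1}=11$ forces $D_1=B_d^l$; $\sigma_a\sigma_{a+1}=10$ forces $D_1$ to be a monomer; $\sigma_a\sigma_{a+1}=00$ forces $D_1$ to be a void; the pattern $010$ again forces $D_1$ to be a void (whose successor is then itself forced to be a monomer); and the pattern $011$ forces $D_1$ to be a dimer-type tile, namely a basic dimer when $|\Lambda|\geq 6$ and otherwise one of the truncated right-boundary dimers $B_d^r,B_{1d}^r,B_{2d}^r$, which is determined by $|\Lambda|$.

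The main obstacle I anticipate is the careful bookkeeping for the five right-boundary tiles $B_d^r,B_{1m}^r,B_{2m}^r,B_{1d}^r,B_{2d}^r$: when $\Lambda$ is short, the basic options do not fit and one of these must serve as $D_1$ itself. Fortunately, each such tile has a unique combination of length and particle pattern (for example, $B_{2m}^r$ is the only length-two tile with content $10$, and $B_{1d}^r$ is the only length-four tile with content $0110$), so these cases are still resolved deterministically by the local configuration together with the value of $|\Lambda|$. Folding the resulting short-interval verifications into the base of the induction then closes the argument, yielding injectivity of $\pmb{\sigma}_\Lambda$ on all of $\cD_\Lambda$.
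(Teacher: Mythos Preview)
Your proposal is correct and takes essentially the same approach as the paper: both arguments rest on the observation that the leftmost tile (equivalently, the first tile at which two tilings could differ) is uniquely determined by the particle content at its first two or three sites. The paper packages this as a direct case analysis at the first index where two tilings disagree, while you phrase it as an inductive decoding rule; the underlying case distinctions are identical.
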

\begin{proof}
	Suppose $ \pmb{D} \neq \pmb{D}' $ are two distinct VMD tilings, and define $j$ to be the smallest index so that $D_j \neq D_j'$. Both $D_j$ and $D_j'$ begin at the same site $x\in \Lambda$. We go through the possible cases.
%
	\begin{enumerate}[(i)]
		\itemsep0pt
		\item If $ D_j $ is a monomer, then $ \pmb{\sigma}_\Lambda(\pmb{D})_x = 1 \neq 0  = \pmb{\sigma}_\Lambda(\pmb{D}')_x $ unless $  D_j' $ is a left-boundary dimer for which $ \pmb{\sigma}_\Lambda(\pmb{D}')_{x+1} = 1 \neq 0 =  \pmb{\sigma}_\Lambda(\pmb{D})_{x+1}  $.
		\item If $ D_j $ is a left-boundary dimer, then  $ (\pmb{\sigma}_\Lambda(\pmb{D})_{x} , \pmb{\sigma}_\Lambda(\pmb{D})_{x+1} )= (1,1) \neq  (\pmb{\sigma}_\Lambda(\pmb{D}')_{x} , \pmb{\sigma}_\Lambda(\pmb{D}')_{x+1} )$.
		\item If $ D_j $ is a dimer, then $  \pmb{\sigma}_\Lambda(\pmb{D})_x = 0 \neq   \pmb{\sigma}_\Lambda(\pmb{D}')_x $ unless $ D_j' $ is a void, in which case $  (\pmb{\sigma}_\Lambda(\pmb{D}')_{x+1} , \pmb{\sigma}_\Lambda(\pmb{D}')_{x+2} ) \neq (1,1) =  (\pmb{\sigma}_\Lambda(\pmb{D})_{x+1} , \pmb{\sigma}_\Lambda(\pmb{D})_{x+2} ) $. This follows from the fact, that the only domino starting with two particles is a left-boundary dimer.
		\item If $D_j$ is a void, then $ \pmb{\sigma}_\Lambda(\pmb{D})_x = 0 $ and hence $ D_j' $ has to be a dimer. This is excluded by the previous argument, since the domino $ D_{j+1} $ cannot start with two particles.
	\end{enumerate}
	In all of these cases $\sigma_{\Lambda}(\pmb{D})\neq \sigma_{\Lambda}(\pmb{D}')$. Thus, $\sigma_\Lambda$ is injective.
\end{proof}

Thus, for any interval $ \Lambda $ there is a bijection between the set of domino tilings~$ \cD_\Lambda $ and the set of particle configurations~$  \ran\pmb{\sigma}_\Lambda $. The next question to consider is how tilings on nested intervals $\Lambda'\subseteq \Lambda$ are related. It is easy to see that there can be many tilings $\pmb{D}\in\cD_{\Lambda}$ whose particle content agrees with a given $\pmb{D}'\in\cD_{\Lambda'}$. For example, any monomer or dimer on $\Lambda\setminus \Lambda'$ can always be replaced with the appropriate number of voids. However, uniqueness does hold in the opposite direction. Assume $|\Lambda'|\geq 5$, and recall that the boundary tiles were defined so that given any tiling $\pmb{D}=(D_1,\ldots, D_n)\in\cD_{\Lambda}$ there is a domino tiling $\pmb{D}'\in\cD_{\Lambda'}$ for which
\begin{equation}\label{eq:induced_tiling}
\pmb{\sigma}_{\Lambda}(\pmb{D})\restriction_{\Lambda'} = \pmb{\sigma}_{\Lambda'}(\pmb{D}').
\end{equation}
The previous result guarantees that $\pmb{D}'$ is unique and so we call $\pmb{D}'$ the tiling \emph{induced} by $\pmb{D}$ on $\Lambda'$. In case $ \pmb{D}_R $ is a root tiling corresponding to $ R $, the induced tiling on $ \Lambda' $ is also a root tiling and we call the corresponding root $ R' \in \cR_{\Lambda'} $ the \emph{induced root}. 

While the particle content of $\pmb{D}$ and the induced tiling $\pmb{D}'$ agree, it is not guaranteed the tilings are compatible, meaning it is not necessarily the case that 
\begin{equation}\label{eq:restriction}
\pmb{D}'=(D_i, \ldots, D_j)
\end{equation}
for some $1 \leq i \leq j \leq n$. This only occurs if the truncation of $\pmb{D}$ to $\Lambda'$ cuts between the boundaries of tiles and not through the interior of a domino. If such $i,j$ do exist, we call $\pmb{D}'$ the \emph{restriction} of $\pmb{D}$, and say that $\pmb{D}$ can be \emph{restricted} to $\Lambda'$. 
If $\pmb{D}_R = (D_1, \ldots, D_n)$ is a root tiling that can be restricted to $\Lambda'$, then the resulting tiling $\pmb{D}_{R'} = (D_i,\ldots, D_j)$ is itself a root tiling for $\Lambda'$, and we say the associated root configuration $R' \in \cR_{\Lambda'}$ is the \emph{restriction of $R$ onto $\Lambda'$}.\\

There is one final situation of interest. Assume $\Lambda=\Lambda_1\cup\Lambda_2$ for two subintervals $\Lambda_1$, $\Lambda_2$ and suppose that $\pmb{D}_1\in\cD_{\Lambda_1}$, $\pmb{D}_2\in\cD_{\Lambda_2}$ are two VMD tilings whose particle content agrees on $\Lambda_1\cap \Lambda_2$. As we show in the next result, if $|\Lambda_1\cap \Lambda_2|\geq 6$ then there is a unique tiling $\pmb{D}\in\cD_{\Lambda}$ whose particle content agrees with both $\pmb{D}_1$ and $\pmb{D}_2$. A counterexample when $|\Lambda_1\cap\Lambda_2|= 5$ is given by $\pmb{\sigma}_{\Lambda_1}(\pmb{D}_1)=110001$ and $\pmb{\sigma}_{\Lambda_2}(\pmb{D}_2)=100011$. There is a unique configuration $\pmb{\sigma}\in\{0,1\}^\Lambda$ that agrees with the particle content of each tiling, namely $\pmb{\sigma}=1100011$, however this configuration does not correspond to a VMD tiling. 

\begin{lem}[Intersection]\label{lem:intersecD}
	Consider an interval $ \Lambda \subset \mathbb{Z} $ composed of three consecutive intervals $ \Lambda_l $, $ \Lambda_m $ and $ \Lambda_r $  and
	let $ \pmb{D}_{1} \in \mathcal{D}_{\Lambda_{1}} $ and $ \pmb{D}_{2} \in \mathcal{D}_{\Lambda_{2}} $ 
	be VMD tilings on $\Lambda_{1} :=  \Lambda_l \cup \Lambda_m $ and $\Lambda_{2} :=  \Lambda_m \cup \Lambda_r $, respectively. Assume $ | \Lambda_m| \geq 6 $ and that the particle content of both tilings agree on $ \Lambda_m $.
	Then there is a unique VMD tiling $ \pmb{D} \in \mathcal{D}_{\Lambda} $ whose particle content agrees with both $\pmb{D}_{1}$ and $\pmb{D}_{2}$.
\end{lem}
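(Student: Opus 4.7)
\emph{Uniqueness.} Any VMD tiling $\pmb{D}\in\mathcal{D}_\Lambda$ satisfying the hypotheses must have $\pmb{\sigma}_\Lambda(\pmb{D})$ equal to the common extension $\pmb{\sigma}$ of $\pmb{\sigma}_{\Lambda_1}(\pmb{D}_1)$ and $\pmb{\sigma}_{\Lambda_2}(\pmb{D}_2)$ to $\Lambda$, which is well-defined because the two configurations agree on $\Lambda_m$. Uniqueness is then immediate from the injectivity of $\pmb{\sigma}_\Lambda$ proved in Lemma~\ref{lem:injective}.

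\emph{Existence.} I would construct $\pmb{D}$ by concatenation. The aim is to locate a splitting position $p\in\Lambda_m$ so that (a) the tiles of $\pmb{D}_1$ contained in $[\min\Lambda,\,p]$ consist only of basic tiles (voids, monomers, dimers) and, at $\min\Lambda$, possibly $B_d^l$, and (b) the tiles of $\pmb{D}_2$ contained in $[p+1,\,\max\Lambda]$ consist only of basic tiles and, at $\max\Lambda$, possibly one of the right-boundary tiles of $\pmb{D}_2$. Once such a $p$ is found, concatenating these two lists of tiles yields a legal tiling of $\Lambda$ whose particle content is $\pmb{\sigma}$. To find $p$, I would look at where $\pmb{D}_2$'s first tile (starting at $\min\Lambda_m$) ends: since this tile has length at most $6\leq|\Lambda_m|$, it ends at some $p^*\in\Lambda_m$. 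The matching of $\pmb{\sigma}$ on $\Lambda_m$ together with the local case analysis that drives the proof of Lemma~\ref{lem:injective} shows that $\pmb{D}_1$ must have a tile boundary at the same position: once a tile boundary is fixed, the tile starting there is uniquely determined by the next few symbols of $\pmb{\sigma}$, so $\pmb{D}_1$'s domino structure, parsed forward from any common boundary within $\Lambda_m$, agrees with $\pmb{D}_2$'s.

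\emph{Main obstacle.} The delicate point is the interplay between the interior-forbidden boundary tiles that $\pmb{D}_1$ and $\pmb{D}_2$ may carry inside $\Lambda_m$: namely a right-boundary tile of $\pmb{D}_1$ of length $L\leq 5$ at $\max\Lambda_1$, and the left-boundary tile $B_d^l$ of length $5$ at $\min\Lambda_m$ in $\pmb{D}_2$. For the concatenated tiling to be legal, $p$ must lie past the end of $B_d^l$ and strictly before the start of $\pmb{D}_1$'s right-boundary tile. The hypothesis $|\Lambda_m|\geq 6$ is exactly what guarantees these two ``boundary zones'' cannot overlap: if they tried to do so, the forced particle patterns in the overlap (cf.\ Figures~\ref{fig:rightdominos} and \ref{fig:Leftdimers}, together with the pattern $11000$ of $B_d^l$) would clash on at least one site of $\Lambda_m$, contradicting the matching assumption. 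A short case analysis over the six types $\{\emptyset,B_d^r,B_{1m}^r,B_{2m}^r,B_{1d}^r,B_{2d}^r\}$ of $\pmb{D}_1$'s right boundary, combined with the corresponding forced start of the first basic tile of $\pmb{D}_2$ (either immediately after $B_d^l$ or from $\min\Lambda_m$), pins down the admissible values of $p$ and verifies conditions (a)--(b). This case analysis is finite and mechanical, and is the only non-trivial step of the argument.
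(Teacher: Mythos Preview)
Your uniqueness argument is correct and matches the paper's. Your overall existence strategy---locate a common tile boundary inside $\Lambda_m$ and concatenate the two tilings there---is also the paper's. However, there is a real gap in your execution.

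You assert that if $p^*$ is the end of $\pmb{D}_2$'s first tile then $\pmb{D}_1$ must also have a tile boundary at $p^*$, invoking the case analysis behind Lemma~\ref{lem:injective}. This is false as stated. Take $\pmb{D}_1$ containing a monomer on $[\min\Lambda_m-1,\min\Lambda_m+1]$ (particle at $\min\Lambda_m-1$): the first two sites of $\Lambda_m$ carry $0$'s, so $\pmb{D}_2$'s first tile must be a void ending at $p^*=\min\Lambda_m$, while $\pmb{D}_1$'s tile ends at $\min\Lambda_m+1$. Forward parsing from a common boundary is sound, but you have not established that any common boundary exists; the injectivity lemma compares two tilings of the \emph{same} interval, whereas $\pmb{D}_1$ and $\pmb{D}_2$ tile different intervals.

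The device you are missing is the \emph{induced tiling} on $\Lambda_m$. Since $|\Lambda_m|\geq 5$, each $\pmb{D}_i$ induces a unique $\pmb{D}_m^i\in\mathcal{D}_{\Lambda_m}$ with the same particle content, and injectivity on $\Lambda_m$ forces $\pmb{D}_m^1=\pmb{D}_m^2=:\pmb{D}_m$. The paper then reads off from $\pmb{D}_m$ how far $\pmb{D}_1$'s straddling tile can protrude into $\Lambda_m$ from the left ($n_1\leq 4$ when $\pmb{D}_m$ has no left boundary dimer) and how far $\pmb{D}_2$'s straddles in from the right ($n_2\leq 2$ when $\pmb{D}_m$ has no right boundary dimer); the boundary-dimer cases are handled separately. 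Since $n_1+n_2\leq 6\leq|\Lambda_m|$, the two overhangs are disjoint and one obtains the common cut. Your ``Main obstacle'' paragraph is tracking the wrong objects: the boundary tiles that matter are those of the induced tiling $\pmb{D}_m$, which encode the straddling basic tiles of $\pmb{D}_1$ and $\pmb{D}_2$, not merely the literal boundary tiles of $\pmb{D}_1$ at $\max\Lambda_m$ and of $\pmb{D}_2$ at $\min\Lambda_m$.
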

%

\begin{proof}
	The assumptions guarantee the existence of a unique configuration $\pmb{\sigma}\in\{0,1\}^\Lambda$, determined by the particle content of $\pmb{D}_1$ and $\pmb{D}_2$. We construct a VMD tiling $\pmb{D}\in\cD_{\Lambda}$ that agrees with this particle configuration, i.e.~such that $\pmb{\sigma} = \pmb{\sigma}_\Lambda(\pmb{D})$. This completes the proof since the injectivity of $\pmb{\sigma}_\Lambda$ implies uniqueness.
	
	First consider the case that $\pmb{D}_1$ can be restricted to $\Lambda_l$. Then there are tilings $\pmb{D}_\# \in \cD_{\Lambda_\#}$ for $\# \in \{ l , m \}$ such that $\pmb{D}_1=(\pmb{D}_l,\pmb{D}_m)$. The tiling $\pmb{D}_2$ cannot start with a left boundary dimer since the particle content of $\pmb{D}_1$ and $\pmb{D}_2$ agree on $\Lambda_m$, and so the concatenated tiling $\pmb{D} = (\pmb{D}_l, \pmb{D}_2)\in\cD_\Lambda$ is the desired VMD tiling on $\Lambda$. A similar construction will produce the desired tiling $\pmb{D}=(\pmb{D}_1, \pmb{D}_r)\in\cD_\Lambda$ if $\pmb{D}_2$ can be restricted to $\Lambda_r$.
	
	Assume that $\pmb{D}_1$ (respectively, $\pmb{D}_2$) cannot be restricted to $\Lambda_l$ (respectively, $\Lambda_r$). Fix $i=1,2$ and let $\pmb{D}_m^i\in\cD_{\Lambda_m}$ be the tiling induced by $\pmb{D}_i$ on $\Lambda_m$. Since the particle content of $\pmb{D}_1$ and $\pmb{D}_2$ agree on the intersection,
	\[
	\sigma_{\Lambda_m}(\pmb{D}_m^1)=\sigma_{\Lambda_m}(\pmb{D}_m^2),
	\]
	and so by injectivity $\pmb{D}_m^1=\pmb{D}_m^2:=\pmb{D}_m$. We proceed by considering all possible cases for $\pmb{D}_m$.
	
	If $\pmb{D}_m$ begins with a left boundary dimer, the truncation of $\pmb{D}_1$ onto $\Lambda_m$ must have resulted from a cut through the first and second sites of a dimer. Moreover, since the particle content of $\pmb{D}_1$ and $\pmb{D}_2$ agree on $\Lambda_m$, $\pmb{D}_2$ must also start with a left-boundary dimer. Let $\Lambda_l^+$ be the interval consisting of $\Lambda_l$ and the first five sites of $\Lambda_m$, and set $\Lambda_2^-= \Lambda_2\setminus \Lambda_l^+$. Then $\pmb{D}_1$ (respectively, $\pmb{D}_2$) can be restricted to $\Lambda_l^+$ (respectively, $\Lambda_2^-$) and the concatenation of the two resulting restrictions $\pmb{D}=(\pmb{D}_l^+,\pmb{D}_2^-)$ produces the desired VMD tiling on $\Lambda$. A similar procedure will construct the desired VMD tiling $\pmb{D}=(\pmb{D}_1^-,\pmb{D}_r^+)\in\cD_\Lambda$ if $\pmb{D}_m$ ends in a right boundary dimer.
	
	We are left to consider the situation that $\pmb{D}_m$ does not begin or end with a boundary dimer. Let $i=1,2$ be arbitrary, and suppose that the truncation of $\pmb{D}_i$ to $\Lambda_m$ cuts through a domino whose support overlaps with $\Lambda_m$ by $n_i$ sites. Since $\pmb{D}_m$ does not contain any boundary dimers,
	\[
	n_1 \leq 4, \quad n_2\leq 2.
	\]
	Let $\Lambda_l^+$ be the union of $\Lambda_l$ and the first $n_1$ sites of $\Lambda_m$, and $\Lambda_r^+$ be the union of $\Lambda_r$ and the last $n_2$ sites of $\Lambda_m$. Note that $\Lambda_l^+\cap\Lambda_r^+ = \emptyset$ since $|\Lambda_m|\geq 6$. By construction, $\pmb{D}_1$ (respectively, $\pmb{D}_2$) can be restricted to $\Lambda_l^+$ (respectively, $\Lambda_r^+$). If $\Lambda_m^- := \Lambda \setminus (\Lambda_l^+\cup \Lambda_r^+) = \emptyset$, then the concatenation of these two restrictions $\pmb{D}=(\pmb{D}_l^+, \pmb{D}_r^+)\in\cD_\Lambda$ is the desired VMD tiling. Otherwise, both $\pmb{D}_1$ and $\pmb{D}_2$ can be restricted to $\Lambda_m^-$. Since the particle content of these two restrictions agree, these are the same tiling, which we denote by $\pmb{D}_m^-$. In this case, $\pmb{D}=(\pmb{D}_l^+, \pmb{D}_m^-, \pmb{D}_r^+)\in\cD_{\Lambda}$ is the desired VMD tiling.
\end{proof}

Notice that the bijection between tilings~$\cD_{\Lambda}$ and particle configurations~$\ran(\pmb{\sigma}_\Lambda)$ naturally lifts to a bijection between tilings and the canonical orthonormal basis for the VMD tiling space
\begin{equation}\label{eq:tiling_subspace}
\caC_{\Lambda} := \spa\left\{ \ket{\pmb{\sigma}_\Lambda(\pmb{D})} \, \big| \, \pmb{D}\in \cD_{\Lambda} \right\}\subset \cH_{\Lambda}.
\end{equation}
This subspace plays an important role in the proof of Theorem~\ref{thm:gap} in Section~\ref{sec:MM}. A corollary of Lemma~\ref{lem:intersecD} is the following result regarding the factorization of the orthogonal projection onto the VMD tiling space.

\begin{cor}[Factorization of tiling projections]\label{cor:config_projections} 
	Suppose that $\Lambda$ is the union of three consecutive, finite intervals $ \Lambda_l $, $ \Lambda_m $ and $ \Lambda_r $ with $|\Lambda_m|\geq 6$. Define $\Lambda_{1} :=  \Lambda_l \cup \Lambda_m $ and $\Lambda_{2} :=  \Lambda_m \cup \Lambda_r $. Then,
	\begin{equation}
	C_\Lambda = C_{\Lambda_1}C_{\Lambda_2} = C_{\Lambda_2}C_{\Lambda_1},
	\end{equation}
	where $C_{\Lambda'}$ is the orthogonal projection onto $\caC_{\Lambda'}\otimes\cH_{\Lambda\setminus \Lambda'}$ for any $\Lambda'\subset \Lambda$. 
\end{cor}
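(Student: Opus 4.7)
The plan is first to observe that each projector $C_{\Lambda'}$ is diagonal in the tensor-product configuration basis $\{\ket{\pmb{\sigma}} : \pmb{\sigma}\in\{0,1\}^\Lambda\}$, since by definition $\caC_{\Lambda'}$ is the span of configuration basis vectors. Explicitly,
\[
C_{\Lambda'}\,\ket{\pmb{\sigma}} \;=\; \1\!\left[\pmb{\sigma}\restriction_{\Lambda'}\in\ran(\pmb{\sigma}_{\Lambda'})\right]\,\ket{\pmb{\sigma}} .
\]
In particular $C_{\Lambda_1}$ and $C_{\Lambda_2}$ are simultaneously diagonal, so they commute, and their product is the diagonal projector whose $\pmb{\sigma}$-eigenvalue is the product of indicators $\1[\pmb{\sigma}\restriction_{\Lambda_1}\in\ran(\pmb{\sigma}_{\Lambda_1})]\cdot \1[\pmb{\sigma}\restriction_{\Lambda_2}\in\ran(\pmb{\sigma}_{\Lambda_2})]$. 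Consequently the identity $C_\Lambda=C_{\Lambda_1}C_{\Lambda_2}$ reduces to the set-theoretic equivalence
\[
\pmb{\sigma}\in\ran(\pmb{\sigma}_\Lambda) \quad\Longleftrightarrow\quad \pmb{\sigma}\restriction_{\Lambda_i}\in\ran(\pmb{\sigma}_{\Lambda_i}) \ \text{ for both } i=1,2.
\]

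The forward direction is immediate from the notion of induced tiling recorded just after Lemma~\ref{lem:injective}: if $\pmb{\sigma}=\pmb{\sigma}_\Lambda(\pmb{D})$ with $\pmb{D}\in\cD_\Lambda$, then the tilings $\pmb{D}_i\in\cD_{\Lambda_i}$ induced by $\pmb{D}$ on $\Lambda_1$ and $\Lambda_2$ realize the restrictions. For the reverse direction I would invoke Lemma~\ref{lem:intersecD} directly: given $\pmb{D}_i\in\cD_{\Lambda_i}$ with $\pmb{\sigma}_{\Lambda_i}(\pmb{D}_i)=\pmb{\sigma}\restriction_{\Lambda_i}$ for $i=1,2$, the particle contents of $\pmb{D}_1$ and $\pmb{D}_2$ automatically agree on the overlap $\Lambda_m$, and the hypothesis $|\Lambda_m|\geq 6$ is exactly what the lemma requires to supply a unique $\pmb{D}\in\cD_\Lambda$ with $\pmb{\sigma}_\Lambda(\pmb{D})=\pmb{\sigma}$. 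Hence $\pmb{\sigma}\in\ran(\pmb{\sigma}_\Lambda)$, which completes the equivalence and yields $C_\Lambda=C_{\Lambda_1}C_{\Lambda_2}=C_{\Lambda_2}C_{\Lambda_1}$.

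There is no real obstacle here: the corollary is essentially the operator-level repackaging of the gluing Lemma~\ref{lem:intersecD}, and commutativity of the two projectors is free from their simultaneous diagonalizability. The only point worth stating carefully is that, by the injectivity of $\pmb{\sigma}_{\Lambda'}$ established in Lemma~\ref{lem:injective}, the conditions ``$\pmb{\sigma}\restriction_{\Lambda'}$ lies in $\ran(\pmb{\sigma}_{\Lambda'})$'' and ``$\pmb{\sigma}\restriction_{\Lambda'}$ comes from a unique VMD tiling on $\Lambda'$'' are interchangeable, so one is entitled to the tilings $\pmb{D}_i$ used in the reverse step.
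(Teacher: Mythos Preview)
Your proof is correct and follows essentially the same approach as the paper: both express the projectors as diagonal operators in the configuration basis and reduce the identity $C_\Lambda=C_{\Lambda_1}C_{\Lambda_2}$ to the set-theoretic equivalence, using induced tilings for the forward direction and the Intersection Lemma~\ref{lem:intersecD} for the reverse. The only cosmetic difference is that you obtain commutativity up front from simultaneous diagonalizability, whereas the paper derives $C_{\Lambda_2}C_{\Lambda_1}=C_\Lambda$ at the end by taking adjoints.
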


\begin{proof}
	For any interval $\Lambda'\subseteq \Lambda$ an explicit expression for the projection is  $C_{\Lambda'} =\sum_{\pmb{\sigma}\in \cS_{\Lambda'}} \ketbra{\pmb{\sigma}}$, where \[\cS_{\Lambda'}=\left\{\pmb{\sigma}\in\{0,1\}^\Lambda \, \big| \, \pmb{\sigma}\restriction_{\Lambda'} = \pmb{\sigma}_{\Lambda'}(\pmb{D}') \; \mbox{for some} \; \pmb{D}'\in\cD_{\Lambda'}\right\}.\] 
	Moreover, since the tiling $\pmb{D}'\in\cD_{\Lambda'}$ induced by any $\pmb{D}\in\cD_\Lambda$ satisfies $\pmb{\sigma}_\Lambda(\pmb{D})\restriction_{\Lambda'}=\pmb{\sigma}_{\Lambda'}(\pmb{D}')$ it follows that $\ran \pmb{\sigma}_\Lambda \subset \cS_{\Lambda'}$ if $ |\Lambda'| \geq 5 $.
	
	We now consider the intervals $\Lambda_i$ for $i=1,2$ and let $\pmb{\sigma}^i\in \{0,1\}^\Lambda$ be any configuration so that $\pmb{\sigma}^i\restriction_{\Lambda_i} = \pmb{\sigma}_{\Lambda_i}(\pmb{D}_i)$ for some $\pmb{D}_i\in\cD_{\Lambda_i}$. Trivially, $\braket{\pmb{\sigma}^1}{\pmb{\sigma}^2}=\delta_{\pmb{\sigma}^1,\pmb{\sigma}^2}$, and the scalar product is nonzero only if the particle content of $\pmb{D}_1$ and $\pmb{D}_2$ agree on $\Lambda_m$. In this case, Lemma~\ref{lem:intersecD} implies there is a unique $\pmb{D}\in \cD_{\Lambda}$ so that $\pmb{\sigma}^1=\pmb{\sigma}^2=\pmb{\sigma}_\Lambda(\pmb{D})$. Combining these observations,
	\[
	C_{\Lambda_1}C_{\Lambda_2} = \sum_{\substack{\pmb{\sigma}^1\in \cS_{\Lambda_1} }}\sum_{\pmb{\sigma}^2\in\cS_{\Lambda_2}}\delta_{\pmb{\sigma}^1,\pmb{\sigma}^2}| \pmb{\sigma}^1 \rangle \langle \pmb{\sigma}^2| = \sum_{\pmb{D}\in \cD_{\Lambda}} \ketbra{\pmb{\sigma}_\Lambda(\pmb{D})} = C_\Lambda, 
	\]
	as claimed. The other equality follows from the self-adjointness of $ C_{\Lambda_i} $,  $i=1,2$.
\end{proof}

As we will show in Lemma~\ref{lem:char}, the VMD states form an orthogonal basis for the VMD-space $\cG_{\Lambda}$. Moreover, $\cG_\Lambda$ is the ground-state space of the particle preserving Hamiltonian $H_\Lambda$ for $|\Lambda|\geq 8$, cf.~Theorem~\ref{thm:gss}. Two natural questions are the maximal filling and number of VMD states, which can be answered by considering the root tilings $R=(B,V,M)\in\cR_{\Lambda}$. Let $\pmb{D}_R$ be the domino tiling defined by $R$ and define by
\begin{equation}\label{eq:particlenumber}
N_\Lambda(R) := \sum_{x \in \Lambda}  \pmb{\sigma}_\Lambda(\pmb{D}_R)_x 
\end{equation}
the total number of particles in $R$. For finite $\Lambda$, this number attains a maximum,
\begin{equation}\label{eq:maxN}
N_\Lambda^\textrm{max} := \max\left\{  N_\Lambda(R) \, \big| \, R\in \mathcal{R}_\Lambda \right\} \, . 
\end{equation}
The ratio $ N_\Lambda^\textrm{max} /|\Lambda| $ is the maximal filling factor, and converges to $1/3$ in the thermodynamic limit.
\begin{lem}[Maximal filling]\label{lem:maxfill}
	For any finite interval $ \Lambda \subset \mathbb{Z} $ and any $ N \in \{0, 1, \dots, N_\Lambda^\textrm{max}  \} $ there is a root tiling $ R \in \mathcal{R}_\Lambda $ such that $ N =  N_\Lambda(R) $. 
	The maximal number of particles satisfies
	\begin{equation}\label{eq:maxfill}
	\frac{1}{3} \leq  \frac{N_\Lambda^\textrm{max}}{|\Lambda| } \leq  \frac{1}{3} + \frac{4}{3|\Lambda|},
	\end{equation}
	such that in the thermodynamic limit $ \lim_{|\Lambda| \to \infty }  N_\Lambda^\textrm{max}\big/|\Lambda| = 1/3 $.  
\end{lem}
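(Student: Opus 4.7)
The plan is to handle three things: the upper bound on $N_\Lambda^{\max}$, the matching lower bound (from which the thermodynamic limit follows), and the attainability of every intermediate value. Parts one and two are driven by a linear bookkeeping identity, while part three is a local modification argument.

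For the bounds, I would label a root tiling $R \in \mathcal{R}_\Lambda$ by the nonnegative integers $n_V$, $n_M$, $n_{B_d^l}$, $n_{B_d^r}$, $n_{B_{1m}^r}$, $n_{B_{2m}^r}$ counting its constituent tiles, which by \eqref{eq:bdy_cond} satisfy $n_{B_d^l} \in \{0,1\}$ and $n_{B_d^r} + n_{B_{1m}^r} + n_{B_{2m}^r} \in \{0,1\}$. Summing lengths and particles gives
\[
|\Lambda| = n_V + 3 n_M + 5 n_{B_d^l} + 3 n_{B_d^r} + n_{B_{1m}^r} + 2 n_{B_{2m}^r},
\]
\[
N_\Lambda(R) = n_M + 2 n_{B_d^l} + 2 n_{B_d^r} + n_{B_{1m}^r} + n_{B_{2m}^r},
\]
and eliminating $n_M$ between the two yields the key identity
\[
3 N_\Lambda(R) = |\Lambda| - n_V + n_{B_d^l} + 3 n_{B_d^r} + 2 n_{B_{1m}^r} + n_{B_{2m}^r}.
\]
Maximising the right-hand side subject to the boundary constraints (take $n_V = 0$, $n_{B_d^l} = 1$, $n_{B_d^r} = 1$) produces the upper bound $3 N_\Lambda^{\max} \leq |\Lambda| + 4$.

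For the matching lower bound I would exhibit a concrete tiling: writing $|\Lambda| = 3q + r$ with $r \in \{0,1,2\}$, use $q$ basic monomers at the left followed by an empty right boundary if $r = 0$, by $B_{1m}^r$ if $r = 1$, or by $B_{2m}^r$ if $r = 2$. This fills $\Lambda$ exactly and carries $q$ or $q+1$ particles depending on $r$; in either case $N_\Lambda^{\max} \geq |\Lambda|/3$, and dividing the sandwich $|\Lambda|/3 \leq N_\Lambda^{\max} \leq (|\Lambda|+4)/3$ by $|\Lambda|$ yields $N_\Lambda^{\max}/|\Lambda| \to 1/3$.

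For attainability, I would establish the single-step reduction: whenever $N_\Lambda(R) \geq 1$ there exists $R' \in \mathcal{R}_\Lambda$ with $N_\Lambda(R') = N_\Lambda(R) - 1$. The proof is case analysis on a particle-carrying tile appearing in $R$: substitute $M \mapsto VVV$ inside any basic monomer, or at the boundary $B_d^l \mapsto M + VV$, $B_d^r \mapsto V + B_{2m}^r$, $B_{1m}^r \mapsto V$, or $B_{2m}^r \mapsto VV$. Each swap preserves the total length, leaves all other tiles untouched, produces a legitimate root tiling, and drops the particle count by exactly one. Iterating from any tiling achieving $N_\Lambda^{\max}$ realises every value in $\{0, 1, \dots, N_\Lambda^{\max}\}$. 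I do not anticipate a serious obstacle; the only checkpoint is that the five listed substitutions cover every case in which $R$ carries at least one particle, which is immediate from the list of tile types.
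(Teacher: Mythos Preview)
Your proof is correct and follows essentially the same approach as the paper: construct an explicit monomer-only tiling for the lower bound, bound each tile's particle-to-length surplus for the upper bound, and reduce the particle count one at a time by local tile substitutions. Your linear bookkeeping identity $3N_\Lambda(R) = |\Lambda| - n_V + n_{B_d^l} + 3n_{B_d^r} + 2n_{B_{1m}^r} + n_{B_{2m}^r}$ is a slightly cleaner packaging of the paper's inequality $N_\Lambda(R) \leq \tfrac{1}{3}(|\Lambda| - |B^l| - |B^r|) + N(B^l) + N(B^r)$, but the content is the same.
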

\begin{proof}
	The first assertion is immediate in the case $ N = N_\Lambda^\textrm{max}  $ for which there is a maximizing root tiling $R$. Starting from this maximizing root tiling  we may successively substitute (i) a monomer by voids or (ii) a boundary dimer by a monomer and (possibly) voids to produce a new root-configuration that decreases $ N $ by one until we reach the empty configuration.
	
	Every finite volume supports a root tiling $R_j$ associated with one of the states $\vp_{L}^{(j)}$ from \eqref{eq:md_12}-\eqref{eq:md_3}. These root tilings consist exclusively of basic and right-monomers, implying that $ N_\Lambda^\textrm{max} \geq |\Lambda|/3$. Alternatively, the particle content of a root tiling associated with a choice of boundary conditions $B=(B^l,B^r)$ is maximized by filling the interior with as many monomers as possible. Therefore, for any $R=(B,V,M)\in\cR_\Lambda$,
	\[
	N_\Lambda(R) \leq \frac{1}{3}\left(|\Lambda|-|B^l|-|B^r|\right) + N(B^l) +N(B^r)
	\]
	where $|B^\#|$ and $N(B^\#)$ are the length and number of particles, respectively, associated with the boundary tile $B^\#$. The above inequality is optimized when $B=(B_d^l,B_d^r)$ which yields the upper bound in~\eqref{eq:maxfill} and hence concludes the proof.	
%
\end{proof}

As the next result shows, the cardinality of $ \mathcal{R}_\Lambda $ grows exponentially in the system size $|\Lambda|$. This produces an estimate on the number of VMD states as these are uniquely characterized by root tilings.
\begin{lem}[Exponential growth]\label{lem:rootgrowth}
	There are constants $ c, C \in (0, \infty) $ such that for all sufficiently large intervals  $ \Lambda \subset \mathbb{Z} $:
	\begin{equation}\label{eq:rootgrowth}
	c \, \mu^{|\Lambda|} \leq   \left| \mathcal{R}_\Lambda \right| \leq C\, \mu^{|\Lambda|}
	\end{equation}
	where $ \mu $ is the unique real solution of  $ \mu^2 (\mu-1)  = 1 $ which is approximately $1.47$.
\end{lem}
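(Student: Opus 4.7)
The plan is to reduce the count to a simple linear recurrence whose characteristic polynomial is exactly the defining equation of $\mu$. Let $r_n$ denote the number of tilings of an interval of length $n$ using only \emph{basic} voids (length $1$) and \emph{basic} monomers (length $3$), i.e.\ the number of root tilings with $B^l = B^r = \emptyset$. Conditioning on the first tile yields
\[
r_n = r_{n-1} + r_{n-3} \quad (n \geq 3), \qquad r_0 = r_1 = r_2 = 1.
\]
The characteristic polynomial of this recursion is $x^3 - x^2 - 1 = 0$, which is exactly $x^2(x-1) = 1$, so $\mu$ satisfies $\mu^n = \mu^{n-1} + \mu^{n-3}$.

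Next, I would decompose $|\mathcal{R}_\Lambda|$ according to the choice of boundary conditions $(B^l, B^r)$ listed in \eqref{eq:bdy_cond}. Once a boundary pair is fixed, the remaining (interior) interval must be root-tiled by basic voids and monomers alone, which gives
\[
|\mathcal{R}_\Lambda| \;=\; \sum_{(B^l, B^r)} r_{|\Lambda| - |B^l| - |B^r|},
\]
where $|\emptyset| = 0$, $|B_d^l| = 5$, $|B_{1m}^r| = 1$, $|B_{2m}^r| = 2$, $|B_d^r| = 3$, and summands with negative index are dropped (this occurs only for small $|\Lambda|$). In particular, the empty-boundary term contributes exactly $r_{|\Lambda|}$, so $|\mathcal{R}_\Lambda| \geq r_{|\Lambda|}$, and at most eight summands appear, each bounded by $r_{|\Lambda|}$.

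It then suffices to prove the two-sided bound $c_0\,\mu^n \leq r_n \leq C_0\, \mu^n$ for all $n$. Setting
\[
c_0 := \min_{j \in \{0,1,2\}} \frac{r_j}{\mu^j}, \qquad C_0 := \max_{j \in \{0,1,2\}} \frac{r_j}{\mu^j},
\]
both $c_0 > 0$ and $C_0 < \infty$. An easy induction using the identity $\mu^n = \mu^{n-1} + \mu^{n-3}$ gives the upper bound
\[
r_n = r_{n-1} + r_{n-3} \leq C_0\left(\mu^{n-1} + \mu^{n-3}\right) = C_0\, \mu^n,
\]
and the lower bound follows identically with $c_0$. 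Combined with the boundary decomposition this yields $c_0\, \mu^{|\Lambda|} \leq |\mathcal{R}_\Lambda| \leq 8\, C_0\, \mu^{|\Lambda|}$.

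There is no real obstacle here: once the correct recursion is set up, the conclusion is immediate from the algebraic identity $\mu^3 = \mu^2 + 1$. The only point requiring care is verifying that the boundary decomposition above is indeed a partition of $\mathcal{R}_\Lambda$ (so that no double-counting or omission occurs) and that the recurrence for $r_n$ uses the correct base cases; this amounts to checking that a root tiling is uniquely specified by its boundary pair together with the interior void–monomer pattern, which is built into the definition of $\cR_\Lambda$.
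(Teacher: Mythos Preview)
Your proof is correct and follows the same overall strategy as the paper: set up the recursion $r_n = r_{n-1} + r_{n-3}$ for void--monomer tilings, decompose $|\mathcal{R}_\Lambda|$ as a finite sum of shifted $r_n$'s indexed by the eight boundary pairs, and deduce the two-sided bound from the growth of $r_n$.

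The one notable difference is in how you extract the growth rate. The paper diagonalizes the $3\times 3$ transfer matrix, observes that the two complex roots of $\xi^3-\xi^2-1$ have modulus strictly less than one, and invokes Perron--Frobenius to ensure the coefficient of $\mu^n$ in the expansion of $r_n$ is strictly positive. Your argument is more elementary: you simply use that $\mu$ itself satisfies the recursion, so the inequalities $c_0\mu^k\le r_k\le C_0\mu^k$ propagate by induction from the base cases $k=0,1,2$. This avoids any spectral analysis or appeal to Perron--Frobenius and gives explicit constants $c_0=\mu^{-2}$, $C_0=1$ directly. The paper's approach, on the other hand, yields the sharper asymptotic $r_n\sim c_0\mu^n$ (since the subleading terms decay), which your induction does not see; but for the statement of the lemma your argument is entirely sufficient and cleaner.
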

\begin{proof}
	Any root tiling consists of voids, monomers and boundary dominoes. A lower bound on $  \left| \mathcal{R}_\Lambda \right|  $ is obtained by disregarding the boundary dominoes
	and counting the number $ r_{|\Lambda|} $ of root tilings obtained from just voids and monomers. 
	To do so, we use the recursion relation for  $ n = |\Lambda | \geq 4 $
	\begin{equation}\label{eq:recrelG}
	r_n = r_{n-1} + r_{n-3} , 
	\end{equation}
	whose initial values are  $ r_1 =  r_2 = 1 $ and $ r_3 = 2 $. 
	The relation results from the observation that (i)~placing a void at the first site reduces the counting problem on $ n $ sites to that on $ n-1 $ sites, 
	and (ii)~placing a monomer 
	on the first 3 sites reduces the problem to that on $ n-3 $ sites. Iterating~\eqref{eq:recrelG} with the above initial condition yields 
	\begin{equation}\label{eq:dim_recursion}
	\left(\begin{matrix} r_n \\ r_{n-1} \\ r_{n-2} \end{matrix} \right) = A^{n-3} \left(\begin{matrix} 2 \\ 1 \\ 1 \end{matrix} \right) \, , 
	\quad A := \left(\begin{matrix} 1 & 0 & 1 \\ 1 & 0 & 0  \\ 0 & 1 & 0  \end{matrix} \right) \, . 
	\end{equation}
	As can be seen from its characteristic polynomial, $ \xi^3 - \xi^2 - 1 $, the matrix $ A $ has three distinct eigenvalues, one of which is real and given by $ \mu $.
	The other two, $ \nu_{\pm} $, are complex conjugates and their modulus is strictly smaller than one. Expressing the initial vector $ (2, 1, 1 )^T = c_0 v_0 + c_+ v_+  + c_- v_- $ in terms of the three linearly independent eigenvectors $ v_0, v_\pm \in \mathbb{C}^3 $ yields 
	\begin{equation}\label{eq:exactgrowth}
	r_n = c_0 \mu^{n-3}  (v_0)_1 + c_+  \nu_+^{n-3} (v_+)_1 + c_- \nu_-^{n-3}  (v_-)_1 \, . 
	\end{equation}
	Since $ A^4 $ is a matrix with strictly positive entries, the Perron-Frobenius theorem ensures that the vector $ v_0 $ can be chosen with strictly positive entries and hence $ (v_0)_1 > 0 $. Taking $ n $ large enough, the last two terms on the right side can be made arbitrarily small.
	From this we conclude that $ c_0 > 0 $ (since otherwise $ r_n $ would be negative for large $n$) and hence the lower bound in~\eqref{lem:rootgrowth}.
	
	For the upper bound, we note that there are $ (1+1) \times (1+3) = 8 $ choices of boundary dominoes for $ |\Lambda | \geq 9 $, see \eqref{eq:bdy_cond}. 
	Once boundary tiles are set (and hence the number  $b$  of sites covered by those tiles), the problem reduces to counting the number of tilings of $n=|\Lambda|-b$ sites with monomers and voids. We thus  arrive at 
	\begin{equation}
	\left| \mathcal{R}_\Lambda \right| = r_{|\Lambda|} +   r_{|\Lambda|-1} +  r_{|\Lambda|-2} +  r_{|\Lambda|-3} + r_{|\Lambda|-5} + r_{|\Lambda|-6} + r_{|\Lambda|-7} + r_{|\Lambda|-8} 
	\end{equation} 
	for any $ |\Lambda | \geq 9 $.  The upper bound in~\eqref{eq:rootgrowth} is apparent after inserting~\eqref{eq:exactgrowth}.
\end{proof}

The recursion relation \eqref{eq:dim_recursion} can also be used to determine the number of root-configurations for intervals $|\Lambda|<9$. However, certain cases will need to be excluded. Setting $b$ to be the number of sites covered by a choice of boundary tiles $(B^l,B^r)$, and introducing the value $r_0=1$, the number of root configurations is given by
\begin{equation}\label{eq:VMD_cardinality}
|\cR_{\Lambda}| = \sum_{(B^l,B^r) \, : \, b\leq |\Lambda|} r_{|\Lambda|-b}.
\end{equation}

\subsection{Basic properties of VMD states}\label{subsec:VMD_props}

In this section we start to describe the ground state space in terms of the VMD space. We first provide a characterization of the VMD space. We then establish some useful properties of the VMD space, including maximal filling and dimension, and prove that $\cG_\Lambda\subseteq \ker(H_\Lambda)$ in Lemma~\ref{lem:gsorth}.

Every VMD state has a canonical expression in terms of the particle configurations, namely
\begin{equation}\label{eq:VMDocc}
\psi_\Lambda(R) = \sum_{\pmb{D}  \in  \mathcal{D}_\Lambda(R) } \lambda^{\#(\pmb{D}) } \,   |  \pmb{\sigma}_\Lambda(\pmb{D})  \rangle 
\end{equation}
where  $ \#(\pmb{D}) $ denotes the number of dimers in $ \pmb{D} $, cf.~\eqref{eq:VMDstate}. For the derivation of this representation, note that by~\eqref{eq:diomincrea} each dimer operator contributes a factor of $ \lambda $. The subspace $ \mathcal{G}_\Lambda $ of all VMD states introduced in~\eqref{eq:vmd_subspace} is uniquely characterized by being supported on particle configurations in $\ran \pmb{\sigma}_\Lambda $ together with a simple hierarchical structure of its weights.
\begin{lem}[Characterization of the VMD subspace]\label{lem:char}
	A vector $\psi \in  \mathcal{G}_\Lambda $ if and only if both of the following properties hold:
	\begin{enumerate}
		\itemsep0pt
		\item $ \psi $ is supported on particle configurations in $ \ran\pmb{\sigma}_\Lambda $, i.e. $ \psi(\pmb{\sigma}) = 0 $ for all $\pmb{\sigma} \not\in  \ran\pmb{\sigma}_\Lambda $,
		\item For any VMD tiling $ \pmb{D} $ and any two subsequent monomers $ D_j $, $ D_{j+1} $  (possibly a right-monomer) in $ \pmb{D} $
		\begin{equation}\label{eq:weightchar}
		\psi\left( \pmb{\sigma}_\Lambda( \mathcal{M}_j(\pmb{D})) \right) = \lambda \, \psi\left( \pmb{\sigma}_\Lambda(\pmb{D})\right)  ,
		\end{equation}
		where $ \mathcal{M}_j(\pmb{D}) \in  \mathcal{D}_\Lambda $ is the VMD tiling in which these monomers are substituted by a dimer. 
	\end{enumerate}
\end{lem}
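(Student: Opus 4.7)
The plan rests on two facts from Section~\ref{subsec:tiling_props}: every VMD tiling $\pmb{D}\in\mathcal{D}_\Lambda$ is derived from a unique root $R(\pmb{D})\in\mathcal{R}_\Lambda$ by a sequence of monomer-to-dimer substitutions, and the map $\pmb{\sigma}_\Lambda:\mathcal{D}_\Lambda\to\{0,1\}^\Lambda$ is injective by Lemma~\ref{lem:injective}. Combined with the occupation-basis expansion~\eqref{eq:VMDocc}, this means that the coefficient of $|\pmb{\sigma}_\Lambda(\pmb{D})\rangle$ in $\psi_\Lambda(R)$ equals $\lambda^{\#(\pmb{D})}$ when $R(\pmb{D})=R$ and vanishes otherwise. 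These two observations drive both directions of the proof.

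For the forward implication, I would write an arbitrary $\psi\in\mathcal{G}_\Lambda$ as $\psi=\sum_{R\in\mathcal{R}_\Lambda}c_R\,\psi_\Lambda(R)$ and read off that
\[
\psi(\pmb{\sigma}_\Lambda(\pmb{D}))=c_{R(\pmb{D})}\,\lambda^{\#(\pmb{D})}
\]
for every $\pmb{D}\in\mathcal{D}_\Lambda$, with $\psi(\pmb{\sigma})=0$ on configurations outside $\ran\pmb{\sigma}_\Lambda$. Property~(i) is then immediate, and property~(ii) follows because $\mathcal{M}_j(\pmb{D})$ shares its root tiling with $\pmb{D}$ but contains exactly one additional dimer, so that $\lambda^{\#(\mathcal{M}_j(\pmb{D}))}=\lambda\cdot\lambda^{\#(\pmb{D})}$.

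For the converse, starting from a $\psi$ satisfying (i) and (ii), injectivity lets me reorganize $\psi$ in the occupation basis as
\[
\psi=\sum_{R\in\mathcal{R}_\Lambda}\sum_{\pmb{D}\in\mathcal{D}_\Lambda(R)}\psi(\pmb{\sigma}_\Lambda(\pmb{D}))\,|\pmb{\sigma}_\Lambda(\pmb{D})\rangle,
\]
grouping each configuration by its unique root. For fixed $R$, I would set $c_R:=\psi(\pmb{\sigma}_\Lambda(\pmb{D}_R))$, where $\pmb{D}_R$ is the root domino tiling itself (no dimers), and prove by induction on $\#(\pmb{D})$ that $\psi(\pmb{\sigma}_\Lambda(\pmb{D}))=c_R\,\lambda^{\#(\pmb{D})}$ for all $\pmb{D}\in\mathcal{D}_\Lambda(R)$. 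The inductive step reverses one dimer in $\pmb{D}$ to produce some $\pmb{D}'\in\mathcal{D}_\Lambda(R)$ with $\mathcal{M}_j(\pmb{D}')=\pmb{D}$ and $\#(\pmb{D}')=\#(\pmb{D})-1$, then applies property~(ii). Summing over $R$ yields $\psi=\sum_R c_R\,\psi_\Lambda(R)\in\mathcal{G}_\Lambda$.

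The only subtle point I anticipate is bookkeeping: I need to know that the inductive chain of substitutions actually connects every $\pmb{D}\in\mathcal{D}_\Lambda(R)$ back to the root $\pmb{D}_R$, and that property~(ii) covers not only the basic monomer--monomer substitution but also those involving a basic monomer adjacent to a right $1$- or $2$-monomer, which produce the truncated dimers $B_{1d}^r$ and $B_{2d}^r$. Both are built into the definitions of Section~\ref{subsec:tilings}: the parenthetical \emph{possibly a right-monomer} in the statement accommodates all three substitution rules, and by construction every VMD tiling for a given root is reached from the root by finitely many such moves. The degenerate case $\lambda=0$ causes no problem either, since property~(ii) then forces $\psi$ to vanish on every dimer-containing configuration, which matches the fact that $\psi_\Lambda(R)=|\pmb{\sigma}_\Lambda(\pmb{D}_R)\rangle$ in that limit.
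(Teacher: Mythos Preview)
Your argument is correct and essentially identical to the paper's: both directions use the occupation-basis expansion~\eqref{eq:VMDocc} together with injectivity of $\pmb{\sigma}_\Lambda$ and the unique-root property, and the paper's converse likewise reduces to the relation $c_{\pmb{D}}=\lambda^{\#(\pmb{D})-\#(\pmb{D}_R)}c_{\pmb{D}_R}$. One small bookkeeping correction: a root tiling $\pmb{D}_R$ can carry boundary dimers $B_d^l$ or $B_d^r$, so $\#(\pmb{D}_R)$ need not be zero; your induction should therefore establish $\psi(\pmb{\sigma}_\Lambda(\pmb{D}))=c_R\,\lambda^{\#(\pmb{D})-\#(\pmb{D}_R)}$ rather than $c_R\,\lambda^{\#(\pmb{D})}$, after which the conclusion $\psi=\sum_R\bigl(c_R\lambda^{-\#(\pmb{D}_R)}\bigr)\psi_\Lambda(R)\in\mathcal{G}_\Lambda$ goes through unchanged (and your remark on the $\lambda=0$ case should be adjusted accordingly, since property~(ii) does not constrain $\psi$ on root configurations containing boundary dimers).
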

\begin{proof}
	If $ \psi \in  \mathcal{G}_\Lambda $, then the two properties are straightforward from the representation~\eqref{eq:VMDocc} and the substitution rules.
	Conversely, any $ \psi \in \mathcal{H}_\Lambda $ supported on particle configurations in $ \ran\pmb{\sigma}_\Lambda $ can be written as a linear combination
	\begin{equation}
	\psi = \sum_{\pmb{D} \in \mathcal{D}_\Lambda } c_{\pmb{D}} \, |  \pmb{\sigma}_\Lambda(\pmb{D})  \rangle 
	\end{equation}
	for complex coefficients $c_{\pmb{D}}:=\psi(\pmb{\sigma}_\Lambda(\pmb{D}))$. Recall that each $ \pmb{D}\in\cD_\Lambda$ is associated with a unique $R\in\cR_\Lambda$ and hence root tiling $  \pmb{D}_R $, cf.~\eqref{eq:VMD_tilings}. The second property ensures that 
	\begin{equation}
	c_{\pmb{D}} = \lambda^{\#(\pmb{D}) - \#(\pmb{D}_R)} \, c_{  \pmb{D}_R} 
	\end{equation}
	where $\#(\pmb{D}_R) $ is the number of boundary dimers in $\pmb{D}_R$. Comparing this with \eqref{eq:VMDocc} establishes that $ \psi $ is a linear combination of VMD states $ \psi_\Lambda(R) $. 
\end{proof}

The following lemma summarizes other fundamental aspects of the VMD subspace, including that the set of VMD states is an orthogonal basis for $\cG_\Lambda$.

\begin{lem}[Ground state, orthogonality and filling]\label{lem:gsorth}
	Let $ \Lambda \subset \mathbb{Z} $ be a finite interval and $R\in\cR_{\Lambda}$ be a root tiling.
	\begin{enumerate}
	\itemsep0pt
		\item The VMD state $ \psi_\Lambda(R) $ is a ground-state of $ H_\Lambda $, i.e.\ $ H_\Lambda  \psi_\Lambda(R) = 0 $. 
		\item VMD states with distinct root tilings $R,R'\in\cR_{\Lambda}$ are orthogonal:
		\begin{equation}\label{eq:ortho}
		\left\langle \psi_\Lambda(R')  , \ \psi_\Lambda(R) \right\rangle = \delta_{R,R'} \left\| \psi_\Lambda(R) \right\|^2 
		\end{equation}
		where the Kronecker delta yields one only in case that $B=B'$, $V=V'$ and $M=M'$ where $R=(B,V,M)$ and $R'=(B',V',M')$. Moreover, the normalization is a polynomial in $ |\lambda|^2 $,
		\begin{equation}\label{eq:norm} 
		\left\| \psi_\Lambda(R) \right\|^2 = \sum_{\pmb{D}  \in  \mathcal{D}_\Lambda(R) } |\lambda|^{2\#(\pmb{D}) } .
		\end{equation}
		\item The map $ \mathcal{R}_\Lambda \ni R \mapsto  \psi_\Lambda(R)  $ is injective. Thus, 
		$ \dim \mathcal{G}_\Lambda  = \left| \mathcal{R}_\Lambda \right| $
		is exponentially large in $ |\Lambda | $.
		\item Any VMD state $  \psi_\Lambda(R) $ is an eigenstate of the number operator $ N_\Lambda  = \sum_{x\in \Lambda} \frac{1}{2}(\sigma_x^3 + 1) $. Specifically,
		\begin{equation}
		N_\Lambda \psi_\Lambda(R) = N_\Lambda(R)  \psi_\Lambda(R) \, . 
		\end{equation}
	\end{enumerate} 
\end{lem}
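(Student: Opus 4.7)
The plan is to prove the four assertions largely independently, each relying on the occupation-basis representation~\eqref{eq:VMDocc} of $\psi_\Lambda(R)$, the bijection between VMD tilings and their particle configurations established in Lemma~\ref{lem:injective}, and the characterization in Lemma~\ref{lem:char}.

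For the ground-state property (1), I would verify that each summand of $H_\Lambda$ annihilates $\psi_\Lambda(R)$. For the electrostatic term $n_x n_{x+2}$, a direct enumeration of the basic and boundary dominoes shows that no VMD tiling ever places particles two sites apart: within any single domino the only occupied pair is adjacent (for dimers) or absent (for monomers), while consecutive tiles separate particles by at least three lattice sites. Hence $n_x n_{x+2}\ket{\pmb{\sigma}_\Lambda(\pmb{D})}=0$ for each $\pmb{D}\in\cD_\Lambda(R)$. For the hopping term $\kappa q_x^* q_x$ it suffices to show $q_x\psi_\Lambda(R)=0$. Expanding in the configuration basis, the piece $\sigma_{x+1}^-\sigma_{x+2}^-$ contributes only from tilings $\pmb{D}$ containing a (bulk, truncated $1$- or $2$-) dimer starting at $x$, while $-\lambda\,\sigma_x^-\sigma_{x+3}^-$ contributes only from tilings $\pmb{D}'$ containing two adjacent monomers at $x$ and $x+3$ (with the second possibly being a right $1$- or $2$-monomer). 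The dimer-substitution rule pairs these two tilings, sending both to the same target configuration $\ket{\tau}$; by the weight property~\eqref{eq:weightchar} of Lemma~\ref{lem:char} their weights differ exactly by a factor of $\lambda$, so the two contributions cancel.

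For orthogonality (2), the representation~\eqref{eq:VMDocc} writes $\psi_\Lambda(R)$ as a sum of orthonormal standard basis vectors indexed by $\cD_\Lambda(R)$ with coefficients $\lambda^{\#(\pmb{D})}$. Since every VMD tiling belongs to exactly one $\cD_\Lambda(R)$ and distinct tilings produce distinct configurations by Lemma~\ref{lem:injective}, the supports of $\psi_\Lambda(R)$ and $\psi_\Lambda(R')$ are disjoint subsets of the standard basis whenever $R\neq R'$. This yields both the Kronecker delta in~\eqref{eq:ortho} and the formula~\eqref{eq:norm} at once. Claim (3) follows immediately: $\pmb{D}_R\in\cD_\Lambda(R)$ ensures $\psi_\Lambda(R)\neq 0$, and orthogonality forces linear independence of $\{\psi_\Lambda(R)\}_{R\in\cR_\Lambda}$, hence $\dim\cG_\Lambda=|\cR_\Lambda|$, which is exponential in $|\Lambda|$ by Lemma~\ref{lem:rootgrowth}. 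For (4), the substitution rule is particle-preserving (two adjacent monomers and one dimer each carry two particles), so every $\pmb{D}\in\cD_\Lambda(R)$ has total particle content $N_\Lambda(R)$, making every term in~\eqref{eq:VMDocc} an eigenvector of $N_\Lambda$ with that eigenvalue.

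The main obstacle will be the case analysis underlying the cancellation in (1): one must confirm that, within the range of $q_x$, the patterns $\sigma_{x+1}=\sigma_{x+2}=1$ and $\sigma_x=\sigma_{x+3}=1$ can arise \emph{only} from the dimer or adjacent-monomer templates identified above, ruling out contaminations from the left-boundary dimer, the truncated right-boundary dimers, and juxtapositions of tiles across domino boundaries. Once this inventory is complete, the actual cancellation is a one-line consequence of~\eqref{eq:weightchar}.
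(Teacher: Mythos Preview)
Your proposal is correct and follows essentially the same route as the paper: both verify $n_x n_{x+2}\psi_\Lambda(R)=0$ by inspection of the domino particle contents, both establish $q_x\psi_\Lambda(R)=0$ by pairing the dimer-at-$x$ configurations against the two-adjacent-monomers configurations via the substitution rule and the weight relation~\eqref{eq:weightchar}, and both derive (2)--(4) from~\eqref{eq:VMDocc} together with Lemma~\ref{lem:injective} and Lemma~\ref{lem:rootgrowth}. The ``main obstacle'' you flag---excluding contributions from the boundary dimers $B_d^l$ and $B_d^r$---the paper dispatches with a single observation: their occupied sites lie at $a,a+1$ and $b-1,b$, which are outside the reach of any $q_x$ with $a\le x\le b-3$, so they never enter the cancellation argument.
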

\begin{proof}
	1.~By construction, the particle configuration $   \pmb{\sigma}_\Lambda(\pmb{D}) $ of any VMD tiling $\pmb{D}  \in  \mathcal{D}_\Lambda(R) $ does not have 
	consecutive particles that are two sites apart. Hence, if $ \Lambda = [a,b] $, we have $ n_x n_{x+2}  \psi_\Lambda(R)  = 0 $ for all $ a\leq x \leq b-2$. It remains to show that for all $a\leq x\leq b-3 $,
	\begin{equation}\label{eq:qxzero}
	q_x  \psi_\Lambda(R) =  \sum_{\pmb{D}  \in  \mathcal{D}_\Lambda(R) } \lambda^{\#(\pmb{D}) } \, q_x   |  \pmb{\sigma}(\pmb{D})  \rangle = 0 .
	\end{equation}
	Note that the left and right dimers, $B_d^l$ and $B_d^r$, play no role in the proof of~\eqref{eq:qxzero}, since their particles are located at $ x = a,a+1 $ and $ x= b-1, b $, respectively. 
	The vector $ q_x \,  |  \pmb{\sigma}(\pmb{D})  \rangle  = \left( \sigma^-_{x+1}  \sigma^-_{x+2} -  \lambda \ \sigma^-_{x}  \sigma^-_{x+3}   \right)  |  \pmb{\sigma}(\pmb{D})  \rangle $ is hence trivially zero except for the following two cases:
	\begin{enumerate}[(i)]
		\itemsep0pt
		\item $ \pmb{D} $ contains a dimer starting at $ x $, which may be one of the truncated dimers at the right boundary, cf.~Figure~\ref{fig:Leftdimers}.
		\item  $ \pmb{D} $ contains two successive monomers starting at $ x $ and $ x+3 $, where the rightmost monomer may be a boundary monomer, cf.~Figure~\ref{fig:rightdominos}. 
	\end{enumerate}
	For any tiling $\pmb{D}\in  \mathcal{D}_\Lambda(R) $ of type~(i) there is an associated tiling $\pmb{D}'\in  \mathcal{D}_\Lambda(R)$ of type~(ii) obtained via the substitution rule. By~\eqref{eq:weightchar}, these tilings satisfy $\lambda^{\#(\pmb{D}) } =\lambda^{\#(\pmb{D}')+1} $ from which~\eqref{eq:qxzero} follows.\\
	
	2.~We expand the scalar product and use the orthonormality of the states $ | \pmb{\sigma} \rangle $ to get
	\begin{equation}
	\left\langle \psi_\Lambda(R')  , \ \psi_\Lambda(R) \right\rangle = \sum_{\pmb{D}'  \in  \mathcal{D}_\Lambda(R') } \sum_{\pmb{D}  \in  \mathcal{D}_\Lambda(R) }  \overline{ \lambda}^{\#(\pmb{D}') }\lambda^{\#(\pmb{D}) }\,   \delta_{ \pmb{\sigma}_\Lambda(\pmb{D}) ,  \pmb{\sigma}_\Lambda(\pmb{D}')} . 
	\end{equation}
	The injectivity of map $  \pmb{D} \  \mapsto  \pmb{\sigma}_\Lambda(\pmb{D}) $ established in Lemma~\ref{lem:injective} reduces the double sum to the diagonal $ \pmb{D} = \pmb{D}' $. Since each VMD tiling is associated with a unique root tiling, the above sum is non-zero only if $R=R'$. In that case, the sum reduces to the normalization~\eqref{eq:norm}.  \\
	
	3.~This is an immediate consequence of the orthogonality~\eqref{eq:ortho} and the fact that $  \left\| \psi_\Lambda(R) \right\|^2 \neq 0 $.  
	The exponential growth of $ |\mathcal{R}_\Lambda| $ was established in Lemma~\ref{lem:rootgrowth}.  \\
	
	4.~All configurations $\pmb{D}\in\cD_\Lambda(R)$ are obtained from the replacement rules which do not change the particle number. As such, $  \psi_\Lambda(R) $ is a linear combination of eigenstates of the number operator with common eigenvalue given by the number of particles in the root tiling $ N_\Lambda(R) $, cf.~\eqref{eq:particlenumber}.
\end{proof} 

Among all VMD states on an interval $ \Lambda $  the filling fraction $  N_\Lambda(R) / | \Lambda | $ is bounded from above by $ N_\Lambda^\textrm{max} /\Lambda $, which by Lemma~\ref{lem:maxfill} asymptotically tends to $ 1/3 $. The spin version of the squeezed Tao-Thouless state,  $  \varphi_L $, attains this limit. Lower filling fractions are obtained by inserting a positive faction of voids in the root tiling.

\subsection{Fragmentation of VMD states} \label{subsec:fragmentation}

The VMD states constitute a class of fragmented matrix product states. Up to boundary dimers, the building blocks for this fragmentation are voids and squeezed Tao-Thouless states (as well as its truncations). This fragmentation gives a natural way to decompose any VMD state into a product of states. The focus of this section is to prove the fragmentation property and discuss other useful factorizations of the VMD states.


%

\begin{lem}[Fragmentation I]\label{thm:product}
Assume that $ \psi_\Lambda(R) $ is a VMD state for a root-configuration $R=(B,V,M)$ with two consecutive voids $ v, v' \in V $ separated by $ L \geq 0$ monomers. Define $R_l$ and $R_r$ to be the restriction of $R$ onto $ \Lambda_l = \Lambda \cap (-\infty, v] $ and $ \Lambda_r = \Lambda \cap [v', \infty)  $, respectively. Then,
\begin{equation}\label{eq:fragment}
\psi_\Lambda(R)  = \psi_{\Lambda_l}(R_l)   \otimes \varphi_L \otimes \psi_{\Lambda_r}(R_r). 
\end{equation}
Moreover, both the left and right states are themselves products
\begin{equation}\label{eq:product}
 \psi_{\Lambda_l}(R_l) =  \psi_{\Lambda_l'}(R_l') \otimes | 0 \rangle ,  \qquad 
  \psi_{\Lambda_l}(R_r) =   | 0 \rangle  \otimes \psi_{\Lambda_r'}(R_r') ,
\end{equation}
where $R_l'$ and $R_r'$ are the restrictions of $R$ onto $ \Lambda_l' = \Lambda \cap (-\infty, v) $ and $ \Lambda_r' = \Lambda \cap (v', \infty)  $, respectively.
\end{lem}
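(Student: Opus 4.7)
The plan is to show that the set $\mathcal{D}_\Lambda(R)$ factorizes as a Cartesian product of tiling sets on the three subintervals $\Lambda_l$, $[v+1,v'-1]$, and $\Lambda_r$, and then to propagate this factorization through the occupation-basis expansion \eqref{eq:VMDocc}. The structural observation that drives everything is that the substitution rules defining $\mathcal{D}_\Lambda(R)$ from $R$ only replace two adjacent monomers with a dimer (basic or truncated); they never act on voids. Consequently every $\pmb{D} \in \mathcal{D}_\Lambda(R)$ still contains the voids at $v$ and $v'$ as length-$1$ tiles, so no domino in $\pmb{D}$ crosses either position.

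First I would establish the bijection
\[
\mathcal{D}_\Lambda(R) \;\longleftrightarrow\; \mathcal{D}_{\Lambda_l}(R_l) \times \mathcal{D}_{[v+1,v'-1]}(R_M) \times \mathcal{D}_{\Lambda_r}(R_r),
\]
where $R_M$ is the pure-monomer root on the middle block of $3L$ sites. In one direction, the previous paragraph shows any $\pmb{D}$ restricts to three tilings in the factor sets (the middle restriction is a monomer-dimer tiling of $[v+1,v'-1]$ since the monomers bordering $v$ and $v'$ in $R$ cannot participate in any substitution that crosses those voids). In the other direction, concatenating three such tilings clearly produces an element of $\mathcal{D}_\Lambda(R)$. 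Under this bijection, dimer counts add, $\#(\pmb{D}) = \#(\pmb{D}_l) + \#(\pmb{D}_m) + \#(\pmb{D}_r)$, and the particle configuration splits as a tensor product across the vacant sites $v,v'$:
\[
|\pmb{\sigma}_\Lambda(\pmb{D})\rangle = |\pmb{\sigma}_{\Lambda_l}(\pmb{D}_l)\rangle \otimes |\pmb{\sigma}_{[v+1,v'-1]}(\pmb{D}_m)\rangle \otimes |\pmb{\sigma}_{\Lambda_r}(\pmb{D}_r)\rangle.
\]

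Substituting this decomposition into \eqref{eq:VMDocc} converts the single sum over $\mathcal{D}_\Lambda(R)$ into a triple sum that factorizes, and recognizing each factor as $\psi_{\Lambda_l}(R_l)$, $\varphi_L = \psi_{[v+1,v'-1]}(R_M)$, and $\psi_{\Lambda_r}(R_r)$ respectively yields \eqref{eq:fragment}. For the "moreover" part \eqref{eq:product}, I would apply the same argument again at the single void $v$ regarded as the right-boundary domino of $\Lambda_l$: every tiling in $\mathcal{D}_{\Lambda_l}(R_l)$ terminates with the void at $v$ (again because substitutions do not touch it), so $|\pmb{\sigma}_{\Lambda_l}(\pmb{D}_l)\rangle$ factors as $|\pmb{\sigma}_{\Lambda_l'}(\pmb{D}_l')\rangle \otimes |0\rangle$ and the sum factors accordingly, and symmetrically for $\Lambda_r$.

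The step that needs the most care is verifying that no domino of $\pmb{D}$ straddles $v$ or $v'$, since in principle one must rule out that a substitution rule creates a dimer covering a site originally tiled by a void. This follows directly by inspection of the replacement rules listed in Section~\ref{subsec:tilings}: every substitution consumes exactly two monomers (basic or boundary) and produces a dimer supported on their union, never altering voids. Everything else is bookkeeping on the expansion \eqref{eq:VMDocc}.
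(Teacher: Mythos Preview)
Your proposal is correct and follows essentially the same approach as the paper's proof: both hinge on the observation that the substitution rules never touch voids, so every $\pmb{D}\in\mathcal{D}_\Lambda(R)$ retains the voids at $v,v'$ as intact tiles, yielding the Cartesian-product decomposition of $\mathcal{D}_\Lambda(R)$, additivity of dimer counts, and factorization of the sum in~\eqref{eq:VMDocc}. The paper's treatment of~\eqref{eq:product} is slightly more terse (it simply invokes the disjoint union $V_l=V_l'\cup\{v\}$), but your version is equivalent.
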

\begin{proof}
By assumption, the root tiling $R$ has $L$-monomers which lay across an interval $\Lambda_m$ of length $3L$ between $ v< v' $. Let $R_m$ be the restriction of $R$ to $\Lambda_m$. Since voids are unchanged by the replacement rule, the locations of void tiles are fixed for any tiling $\pmb{D}\in\cD_\Lambda(R)$. Therefore, every tiling $\pmb{D}\in\cD_\Lambda$ can be restricted to $\Lambda_l,$ $\Lambda_m$ and $\Lambda_r$. Moreover,
\begin{equation*}
 \cD_\Lambda = \left\{ \pmb{D} =  (\pmb{D}_l ,   \pmb{D}_m ,   \pmb{D}_r ) \; \big| \; D_j \in\cD_{\Lambda_j}(R_j) \;\mbox{ for } j=l,m,r \right\}.
\end{equation*}
where $  \pmb{D}_l $ ends with a void at $v$, $  \pmb{D}_r $ begins with a void at $v'$, and $\pmb{D}_m $ is a monomer-dimer tiling. 
The sum in~\eqref{eq:VMDocc} thus expands to a sum over the three sets $\cD_{\Lambda_j}(R_j)$, $j=l,m,r$, individually. Since 
\[ \#(  \pmb{D} ) = \#(  \pmb{D}_l ) + \#(  \pmb{D}_m ) +\#(  \pmb{D}_r ) ,\] 
the three sums reduce to the factors on the right side of~\eqref{eq:fragment}. 

The second claim~\eqref{eq:product} is immediate from the disjoint unions $V_l = V_l' \, {\cup}\, \{ v \} $ and $V_r = \{ v' \} \, {\cup} \, V_r' .$	
\end{proof}

Iterating the above result and distinguishing all possible boundary cases, we thus arrive at the following general form of VMD states. 
As discussed earlier, the squeezed Tao-Thouless state has a matrix product representation~\cite{Nakamura:2012bu} and so the next theorem shows that the VMD states are an example of a fragmented MPS. 
	\begin{theorem}[Fragmentation II]\label{thm:product1}
	Let $ \Lambda $ be an interval and $R=(B,V,M)\in \mathcal{R}_\Lambda $ be a root configuration with $ K\in \mathbb{N} $ voids, i.e.\ $ V = \{ v_1,\ldots , v_K\} $. Then there are $ L_1, \ldots , L_{K-1} \in \mathbb{N}_0 $ and two boundary states $ \psi^{l} , \psi^{r} $ such that 
	\begin{equation}\label{eq:genVMDform}
	\psi_{\Lambda}(R) =  \psi^{l}  \otimes | 0 \rangle_{v_1} \otimes \varphi_{L_1} \cdots  \varphi_{L_{K-1}} \otimes | 0 \rangle_{v_K} \otimes   \psi^{r} .
	\end{equation}
	The boundary states are of the form for some $ L_0 , L_K \in \mathbb{N}_0 $ 
	\begin{align}\label{eq:bdy_states}
	\psi^{l} = \begin{cases}  \lambda | \pmb{\sigma}_{B^l_d} \rangle \otimes  \varphi_{L_0} & \mbox{if $ B^l = B^l_d $} \\
	\varphi_{L_0} & \mbox{if $ B^l = \emptyset $,}  \end{cases} \qquad 
	\psi^{r} = \begin{cases}  \lambda \varphi_{L_K}  \otimes  | \pmb{\sigma}_{B^r_d} \rangle& \mbox{if $ B^r = B^r_d $} \\
	\varphi_{L_K}^{(j)}  & \mbox{if $ B^r =B^r_{jm} $ and $ j = 1,2$} \\
	\varphi_{L_K} & \mbox{if $ B^r = \emptyset $,} \end{cases} 
	\end{align}
	with $  \pmb{\sigma}_{B^l_d}  = 11000 $ and $ \pmb{\sigma}_{B^r_d} = 011 $. In the above, we again employ the convention that $ \varphi_{L} = 1 $ (is absent) if $ L = 0 $.
\end{theorem}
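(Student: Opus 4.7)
The plan is to iterate Lemma~\ref{thm:product} at each consecutive pair of voids, and then to identify the two boundary pieces explicitly. Applying Lemma~\ref{thm:product} with $(v,v')=(v_1,v_2)$ produces a factorization $\psi_\Lambda(R) = \psi_{\Lambda\cap(-\infty,v_1]}(R^{(1)}_l) \otimes \varphi_{L_1} \otimes \psi_{\Lambda\cap[v_2,\infty)}(R^{(1)}_r)$, with $L_1$ the number of monomers between $v_1$ and $v_2$. The right factor is itself a VMD state whose root contains the voids $v_2,\dots,v_K$, so the lemma applies again to $(v_2,v_3)$, and so on. Iterating $K-1$ times, then invoking the second identity in \eqref{eq:product} to isolate each $v_i$ as a single-site factor $|0\rangle_{v_i}$, one obtains
\begin{equation*}
\psi_\Lambda(R) = \psi_{\Lambda_0}(R_0) \otimes |0\rangle_{v_1} \otimes \varphi_{L_1} \otimes \cdots \otimes \varphi_{L_{K-1}} \otimes |0\rangle_{v_K} \otimes \psi_{\Lambda_K}(R_K),
\end{equation*}
where $\Lambda_0 := \Lambda\cap(-\infty,v_1)$, $\Lambda_K := \Lambda\cap(v_K,\infty)$, and $R_0,R_K$ are the induced root restrictions. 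The degenerate case $K=1$ is handled by invoking only the second identity in \eqref{eq:product} at the single void $v_1$.

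It remains to identify the boundary pieces $\psi^l := \psi_{\Lambda_0}(R_0)$ and $\psi^r := \psi_{\Lambda_K}(R_K)$. Since $v_1$ is the leftmost void, $R_0$ consists entirely of basic monomers together with possibly the left boundary dimer $B_d^l$. If $B^l=\emptyset$, Definition~\ref{def:VMDstate} immediately identifies $\psi^l$ with $\varphi_{L_0}$, where $L_0$ is the number of basic monomers to the left of $v_1$. If $B^l=B_d^l$, then since the left boundary dimer is not subject to any substitution rule, every VMD tiling of $\Lambda_0$ has $D_1=B_d^l$ as its first tile; by \eqref{eq:diomincrea} the operator $\sigma^+_{B_d^l}$ creates $\lambda|\pmb{\sigma}_{B_d^l}\rangle$ on the first five sites of $\Lambda$, so the sum \eqref{eq:VMDocc} factors as $\lambda|\pmb{\sigma}_{B_d^l}\rangle\otimes\varphi_{L_0}$ with $L_0$ counting the remaining basic monomers. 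The cases $B^r\in\{\emptyset,B_d^r\}$ for $\psi^r$ are analogous and produce, respectively, $\varphi_{L_K}$ and $\lambda\varphi_{L_K}\otimes|\pmb{\sigma}_{B_d^r}\rangle$.

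The only case requiring extra care is $B^r = B^r_{jm}$ with $j\in\{1,2\}$, because the right $j$-monomer may combine with its preceding basic monomer via the truncated $j$-dimer substitution rule (Figure~\ref{fig:subst}). The set $\cD_{\Lambda_K}(R_K)$ thus consists of the root tilings with some basic monomers and the right $j$-monomer, together with all dimer substitutions among adjacent basic monomers and, in addition, the optional truncated $j$-dimer absorbing the right monomer. After translation, this is exactly the tiling set $\cD_{[1,3L_K+j]}(R_M^j)$ used in \eqref{eq:md_12} to define $\varphi^{(j)}$ for the appropriate index, so $\psi^r$ is identified with the required $\varphi^{(j)}_{L_K}$. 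Combining the middle factorization with these three boundary identifications yields \eqref{eq:genVMDform} with the boundary forms~\eqref{eq:bdy_states}. The main bookkeeping obstacle is precisely this last identification: one must verify that the set of tilings obtained at the right boundary by allowing all three kinds of substitutions matches, monomer for monomer and dimer for dimer, the defining tiling set of $\varphi^{(j)}$; beyond this, the argument is a direct iteration of Lemma~\ref{thm:product}.
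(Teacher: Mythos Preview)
Your proof is correct and follows essentially the same approach as the paper: iterate Lemma~\ref{thm:product} across successive pairs of voids, peel off the single-site void factors via~\eqref{eq:product}, and then identify the two void-free boundary pieces by case analysis on $B^l$ and $B^r$. The only minor slip is the interval label $[1,3L_K+j]$, which according to the paper's indexing convention for $\varphi_L^{(j)}\in\mathcal{H}_{[1,3(L-1)+j]}$ should read $[1,3(L_K-1)+j]$; this does not affect the argument.
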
 
\begin{proof}
	Assume first that $K\geq 2$ and define $\Lambda_l = \Lambda \cap (-\infty, v_1)$ and $\Lambda_r = \Lambda \cap (v_K,\infty)$. Iteratively applying Lemma~\ref{thm:product} to successive voids $v_i,\,v_{i+1}$ produces the factorization
	\[\psi_{\Lambda}(R) = \psi_{\Lambda_l}(R_l)\otimes \ket{0}_{v_1}\otimes \vp_{L_1}\ldots \vp_{L_{K-1}}\otimes \ket{0}_{v_K}\otimes \psi_{\Lambda_r}(R_r)\]
	where $R_l$ and $R_r$ are the restrictions of $R$ to $\Lambda_l$ and $\Lambda_r$, respectively. Consider the state $\psi^l = \psi_{\Lambda_l}(R_l)$. Since $R_l$ contains no voids, all tilings $\pmb{D}\in\cD_{\Lambda_l}(R_l)$ consist of monomers and dimers (possibly including the left boundary dimer). If $B^l = B_d^l$, then the first tile for any $\pmb{D} \in \cD_{\Lambda_l}(R_l)$ is always the left boundary dimer. Given \eqref{eq:VMDocc} the left boundary state can thus be factored as
	\[
	\psi_{\Lambda_l}(R_l) = \lambda \ket{\pmb{\sigma}_{B_d^l}}\otimes \vp_{L_0}
	\]
	where $L_0\in\bN_0$ is the number of monomers in the root $R_l$. If $B^l =\emptyset$, then the root $R_l$ just consists of (basic) monomers and the result once again follows.  A similar analysis holds for $\psi^r = \psi_{\Lambda_r}(R_r)$ by considering the four possible right boundary conditions.
	
	Now consider $K =1$. Since the placement of the void $v_1$ is invariant over all tilings $\pmb{D}\in\cD_{\Lambda}(R)$, arguing as in Lemma~\ref{thm:product} we can factor
	\begin{equation}\label{eq:single_void}
	\psi_{\Lambda}(R) = \psi_{\Lambda_l}(R_l)\otimes \ket{0} \otimes \psi_{\Lambda_r}(R_r)
	\end{equation}
	where $R_l$ and $R_r$ are the restrictions of $R$ to $\Lambda_l = \Lambda \cap (-\infty,v_1)$ and $\Lambda_r = \Lambda\cap (v_1,\infty)$, respectively. The analysis of the boundary states $\psi^l = \psi_{\Lambda_l}(R_l)$ and $\psi^r = \psi_{\Lambda_r}(R_r)$ follows as in the previous case.
\end{proof}

In the case that there are no voids, i.e. $K=0$, the fragmented form of the VMD state is a little different. In this case, the root tiling $R$ consists of monomers and boundary tiles, and produces a state of the form:
\begin{equation}\label{eq:no_voids}
\psi_\Lambda(R) = \psi^l \otimes \psi^r
\end{equation}
where there is some $L\in\bN_0$ for which
	\begin{align}\label{eq:bdy_states2}
\psi^{l} = \begin{cases}  \lambda | \pmb{\sigma}_{B^l_d} \rangle  & \mbox{if $ B^l = B^l_d $} \\
1 & \mbox{if $ B^l = \emptyset $,}  \end{cases} \qquad 
\psi^{r} = \begin{cases}  \lambda \varphi_{L}  \otimes  | \pmb{\sigma}_{B^r_d} \rangle& \mbox{if $ B^r = B^r_d $} \\
\varphi_{L}^{(j)}  & \mbox{if $ B^r =B^r_{jm} $ and $ j = 1,2$} \\
\varphi_{L} & \mbox{if $ B^r = \emptyset $.} \end{cases} 
\end{align}

The factorization~\eqref{eq:genVMDform} can be read off from the root tiling $\pmb{D}_R$ associated with $R$. 
Moreover, one can use the product structure~\eqref{eq:genVMDform} to write down various factorizations of $\psi_\Lambda(R)$ in terms of restricted root tilings. There are three types of tensor factors in~\eqref{eq:genVMDform}: boundary dimers, voids, and (truncated) squeezed Tao-Thouless states. 
If $\Lambda=\Lambda_1\cup\Lambda_2$ is the disjoint union of two consecutive intervals across which $\psi_{\Lambda}(R)$ transitions between factor types, then the state factorizes as
\begin{equation}\label{eq:restriction_factorization}
\psi_{\Lambda}(R) = \psi_{\Lambda_1}(R_1)\otimes \psi_{\Lambda_2}(R_2)
\end{equation}
where $R_1$ and $R_2$ are the restrictions of $R$ to $\Lambda_1$ and $\Lambda_2$, respectively. For example, if $v_1:=\max(\Lambda_1)\in V$ is the first void, then
$
\psi_{\Lambda_1}(R_1) = \psi^l \otimes \ket{0}_{v_1} $ and $ \psi_{\Lambda_2}(R_2) = \vp_{L_1}\otimes \ldots \otimes\psi^r $.


Not only do the VMD states factor across voids, but they also factorize along any site sufficiently close to a void, which will be useful below.
\begin{lem}[Factorization]\label{lem:factorization}
	Fix $ R=(B, V, M) \in \mathcal{R}_\Lambda$ where $ \Lambda = \Lambda_1 \cup \Lambda _2 $ is the disjoint union of two consecutive, finite intervals such that $|\Lambda_1|\geq 5$ and $|\Lambda_2|\geq 3$. If $ V\cap[ \max \Lambda_1, \min \Lambda_2+2] \neq \emptyset$, then the associated VMD state factorizes as
	\begin{equation}\label{eq:factorization}
	\psi_\Lambda(R)  =  \psi_{\Lambda_1}(R_1)  \otimes  \psi_{\Lambda_2}(R_2) 
	\end{equation}
	with $ R_1\in\cR_{\Lambda_1} $ and $ R_2\in\cR_{\Lambda_2}$ the induced roots. In particular, if $ |\Lambda_2| = 3 $, then there exists $\pmb{\sigma}\in\{0,1\}^{3}$ so that
	\begin{equation}\label{eq:factor000}
	\psi_\Lambda(R)  =  \psi_{\Lambda_1}(R_1)  \otimes  | \pmb{\sigma}\rangle .
	\end{equation}
\end{lem}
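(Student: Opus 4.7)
The plan is to transport the void factorization from the proof of Theorem~\ref{thm:product1} (the $K=1$ instance shown in~\eqref{eq:single_void}), applied at the void $v \in V \cap [\max\Lambda_1, \min\Lambda_2 + 2]$, across the $\Lambda_1\,|\,\Lambda_2$ cut. Setting $\Lambda_L := \Lambda \cap (-\infty, v)$ and $\Lambda_R := \Lambda \cap (v, \infty)$ and letting $R_L, R_R$ be the induced roots, I start from
\begin{equation}\label{eq:fact_prop_start}
\psi_\Lambda(R) = \psi_{\Lambda_L}(R_L) \otimes |0\rangle_v \otimes \psi_{\Lambda_R}(R_R).
\end{equation}
Without loss of generality I take $v$ to be the leftmost void in the prescribed range, which leaves four positions: $v \in \{\max\Lambda_1,\, \min\Lambda_2,\, \min\Lambda_2 + 1,\, \min\Lambda_2 + 2\}$. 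The first two are handled directly: when $v = \max\Lambda_1$ one has $\Lambda_L = \Lambda_1 \setminus \{v\}$ and $\Lambda_R = \Lambda_2$, and~\eqref{eq:product} of Lemma~\ref{thm:product} identifies $\psi_{\Lambda_1}(R_1) = \psi_{\Lambda_L}(R_L) \otimes |0\rangle_v$; the case $v = \min\Lambda_2$ is symmetric.

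The substantive work lies in the last two positions, where dominoes of $R$ straddle the cut. The key structural step is to use the leftmost-void assumption together with $|\Lambda_1| \geq 5$ (which keeps $B_d^l$ inside $\Lambda_1$) and $|\Lambda_2| \geq 3$ (which keeps $B_d^r$ inside $\Lambda_2$) to conclude through a short case elimination that the tile of $R$ containing site $\min\Lambda_2$ must be a basic monomer starting at $\min\Lambda_2 - 2$ when $v = \min\Lambda_2 + 1$, and at $\min\Lambda_2 - 1$ when $v = \min\Lambda_2 + 2$; any other candidate either forces a second void in $[\max\Lambda_1, v-1]$ or clashes with the void at $v$. The decisive consequence is that the only substitution permitted for this crossing monomer is fusion with a basic monomer to its \emph{left} in $\Lambda_1$, yielding a length-six dimer that still terminates at site $v - 1$. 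Writing $k := v - \min\Lambda_2 \in \{1,2\}$, every $\pmb{D} \in \mathcal{D}_\Lambda(R)$ therefore places $0$'s on all $k$ sites of $[\min\Lambda_2, v-1]$, and so
\[
\psi_{\Lambda_L}(R_L) = \psi' \otimes |0\rangle^{\otimes k}
\]
for some $\psi' \in \mathcal{H}_{\Lambda_1}$.

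To identify $\psi' = \psi_{\Lambda_1}(R_1)$, I note that truncating the crossing monomer to $\Lambda_1$ produces the right-boundary tile $B^r_{(3-k)m}$, while truncating the corresponding length-six dimer produces $B^r_{(3-k)d}$; since the VMD substitution rule between these two right-boundary tiles is precisely the one listed in Section~\ref{subsec:tilings}, truncation supplies a weight-preserving bijection $\mathcal{D}_{\Lambda_L}(R_L) \to \mathcal{D}_{\Lambda_1}(R_1)$ that preserves both the particle content on $\Lambda_1$ and the coefficient $\lambda^{\#(\pmb{D})}$. On the $\Lambda_2$ side, the induced root $R_2$ carries voids at all $k+1$ sites $\min\Lambda_2, \ldots, v$, so iterating~\eqref{eq:product} gives $\psi_{\Lambda_2}(R_2) = |0\rangle^{\otimes(k+1)} \otimes \psi_{\Lambda_R}(R_R)$; inserting both identifications into~\eqref{eq:fact_prop_start} yields~\eqref{eq:factorization}. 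For the addendum $|\Lambda_2| = 3$, a quick inventory of root tilings of a three-site interval shows that $R_2$ admits no pair of adjacent basic monomers and no basic monomer adjacent to a right-boundary monomer, so $\mathcal{D}_{\Lambda_2}(R_2) = \{R_2\}$ and $\psi_{\Lambda_2}(R_2)$ collapses to a single basis vector, producing~\eqref{eq:factor000}. The main obstacle I anticipate is verifying in detail that the truncation map really is a bijection preserving the $\lambda^{\#(\pmb{D})}$ weights; the rest is routine bookkeeping against the boundary tiles defined in Section~\ref{subsec:tilings}.
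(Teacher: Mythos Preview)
Your proposal is correct and follows essentially the same route as the paper: take the leftmost void $v$ in the prescribed window, factor at $v$, and in the two nontrivial positions $v\in\{\min\Lambda_2+1,\min\Lambda_2+2\}$ exploit that the tile crossing the cut is a basic monomer whose trailing $k=v-\min\Lambda_2$ sites carry zeros. The only packaging difference is that the paper invokes the full fragmentation of Theorem~\ref{thm:product1} and then peels off the trailing zeros via the identity~\eqref{eq:factorize}, $\varphi_L = \varphi_L^{(3-k)}\otimes|0\rangle^{\otimes k}$, whereas you instead argue directly via a weight-preserving bijection $\mathcal{D}_{\Lambda_L}(R_L)\to\mathcal{D}_{\Lambda_1}(R_1)$ that sends the crossing monomer/dimer to the right-boundary tile $B^r_{(3-k)m}$/$B^r_{(3-k)d}$; your bijection is effectively a local re-proof of~\eqref{eq:factorize} in this context, so the two arguments are equivalent.
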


The constraint $|\Lambda_1|\geq 5$ and $|\Lambda_2|\geq 3$ is to guarantee that we do not factor in the middle of a left or right boundary dimer. A similar result holds if $|\Lambda_1|<5$ or $|\Lambda_2|<3$, but it is not guaranteed that both of the factors are VMD-states. More generally, if $V\cap[\max{\Lambda_1},\min{\Lambda_2}+2]\neq \emptyset$, one can write
\begin{equation}\label{eq:small_factor}
\psi_{\Lambda}(R) = \begin{cases}
\xi_{1} \otimes \psi_{\Lambda_2}(R_2), & \mbox{if $|\Lambda_1|<5$ and $|\Lambda_2|\geq 3$} \\
\psi_{\Lambda_1}(R_1)  \otimes \xi_{2}, & \mbox{if $|\Lambda_1|\geq 5$ and $|\Lambda_2|< 3$} \\
\end{cases}
\end{equation}
for some $\xi_{j}\in\cH_{\Lambda_j}$.

\begin{proof} We denote by $v=\min \left(V\cap[ \max \Lambda_1, \min \Lambda_2+2]\right)$ the first void in the interval. Applying Theorem~\ref{thm:product1} the state $\psi_{\Lambda}(R)$ factorizes as
	\begin{equation}
	\psi_{\Lambda}(R) = \psi^l \otimes \ldots \otimes\vp_{L}\otimes \ket{0}_{v}\otimes \ldots \otimes \psi^r
	\end{equation}
	for appropriately defined integer $L\in \bN_0$ and boundary states $\psi^l$, $\psi^r$.
	
	If $v \in \{\max \Lambda_1,\,\min \Lambda_2\}$, then the result holds with $R_1$ and $R_2$ the restrictions of $R$ to $\Lambda_1$ and $\Lambda_2$, respectively.
	
	If  $\min\Lambda_2<v\leq \min\Lambda_2+2$, then $L\geq 1$ by the minimality of $v$. Otherwise $L= 0$ would imply that $v$ is preceded by the boundary state $\psi^l = \ket{\pmb{\sigma}_{B_d^l}}$ which contradicts $|\Lambda_1|\geq 5$.  Since $j:=v-\min\Lambda_2 \leq 2$, the claim~\eqref{eq:factorization} follows from~\eqref{eq:factorize} which allows us to define
	\begin{equation*}
	\psi_{\Lambda_1}(R_1) :=  \psi^l \otimes \ldots \otimes\vp_{L}^{(3-j)}, \quad
	\psi_{\Lambda_2}(R_2) :=  \ket{0}^{\otimes j}\otimes \ket{0}_v\otimes \ldots \otimes \psi^r.
	\end{equation*}
	
	Finally, \eqref{eq:factor000} is trivial from \eqref{eq:factorization} as every VMD state on an interval $|\Lambda_2|=3$ is necessarily a configuration.
\end{proof}

\subsection{Recursion relations} \label{subsec:recursion}
In addition to the factorization properties from the previous section, the main building block of VMD states, i.e.\ the squeezed  Tao-Thouless state $ \varphi_n $, has a recursive structure that will be key in our proofs of the spectral gap in Section~\ref{sec:MM}, and the decay of correlations in Section~\ref{sec:correlation_decay}.
\begin{lem}[Recursion relations]\label{lem:recursion}
For any $ n = l + r $ with $ l, r \in \mathbb{N} $:
\begin{equation}\label{eq:recursion}
 \varphi_n =  \varphi_l \otimes  \varphi_r + \lambda \,  \varphi_{l-1} \otimes |  \pmb{\sigma}_d \rangle \otimes   \varphi_{r-1} , 
\end{equation}
where we use the convention that $ \varphi_0 = 1 $, and $ \ket{ \pmb{\sigma}_d} := \ket{\pmb{\sigma}_d^{(3)}} :=  \ket{011000} $ is the configuration corresponding to the basic dimer tile. Moreover, for  all $ n \geq 2 $  and $ j \in \{1,2,3\} $:
\begin{equation}\label{eq:recrel2}
	\varphi_{n}^{(j)}  = \varphi_{n-1}\otimes \vp_{1}^{(j)}+\lambda \, \varphi_{n-2}\otimes\ket{  \pmb{\sigma}_d^{(j)}} \, , 
\end{equation}
where $  \ket{\pmb{\sigma}_d^{(1)}} := \ket{0110} $ and $  \ket{\pmb{\sigma}_d^{(2)}} := \ket{01100}$ correspond to the truncated dimers, cf. Figure~\ref{fig:Leftdimers}. 
\end{lem}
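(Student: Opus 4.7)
The plan is to prove both recursions by a direct combinatorial decomposition of the sum representation of $\varphi_n$ and $\varphi_n^{(j)}$ given by~\eqref{eq:VMDocc}. The crucial observation, and essentially the only place requiring any care, is that every tile appearing in a monomer--dimer tiling generating $\varphi_n$ has length divisible by three (basic monomers have length $3$, basic dimers have length $6$); since the tiling of $[1,3n]$ starts at site $1$, every tile boundary falls between sites $3k$ and $3k+1$ for some $k$. I will use this to partition tilings according to whether some tile straddles a well-chosen cut.

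For~\eqref{eq:recursion}, I place the cut between sites $3l$ and $3l+1$. Tilings respecting this cut factor into an independent monomer--dimer tiling of $[1,3l]$ and one of $[3l+1,3n]$; since the dimer count $\#(\pmb{D})$ is additive under concatenation, the corresponding contributions to the sum in \eqref{eq:VMDocc} add up to $\varphi_l \otimes \varphi_r$. A tile that crosses the cut cannot be a monomer, because monomers start at some $3k+1$ and have length $3$, so they cannot span a $3l$-to-$3l+1$ transition; hence the crossing tile must be a basic dimer. The length-divisibility constraint on its starting position forces it to begin at site $3l-2$ and span $[3l-2,3l+3]$ with configuration $|011000\rangle = |\pmb{\sigma}_d\rangle$. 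Removing this dimer leaves independent tilings of $[1,3l-3]$ and $[3l+4,3n]$, while the dimer itself contributes a factor $\lambda$, so the straddling class sums to $\lambda\,\varphi_{l-1} \otimes |\pmb{\sigma}_d\rangle \otimes \varphi_{r-1}$ (with the convention $\varphi_0=1$ handling the edge cases $l=1$ or $r=1$). Adding the two classes yields~\eqref{eq:recursion}.

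For~\eqref{eq:recrel2}, I apply the same strategy to $\varphi_n^{(j)}$ with the cut placed between sites $3(n-1)$ and $3(n-1)+1$, i.e., between the last basic monomer of the root $R_M^j$ (respectively $R_M$ if $j=3$) and the final $j$-monomer. Tilings that respect the cut factor into a monomer--dimer tiling of $[1,3(n-1)]$ and one of $[3(n-1)+1,3(n-1)+j]$, contributing $\varphi_{n-1} \otimes \varphi_1^{(j)}$. A straddling tile must again be a dimer; by the substitution rules from Section~\ref{subsec:tilings} it is the basic dimer if $j=3$ and the truncated $j$-dimer if $j\in\{1,2\}$, so in every case it has particle content $\pmb{\sigma}_d^{(j)}$, weight $\lambda$, and spans the last $3+j$ sites beginning at $3(n-2)+1$. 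The complementary tiles cover $[1,3(n-2)]$ independently, giving $\lambda\,\varphi_{n-2}\otimes|\pmb{\sigma}_d^{(j)}\rangle$, which combined with the non-straddling contribution proves~\eqref{eq:recrel2}. The argument is entirely elementary: the only substantive step is the uniqueness of the straddling tile and its starting position, which is immediate from the length-divisibility observation made at the outset.
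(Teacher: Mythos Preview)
Your proof is correct and follows essentially the same strategy as the paper: both arguments partition the monomer--dimer tilings according to whether a tile straddles the cut between $[1,3l]$ and $[3l+1,3n]$ (respectively between $[1,3(n-1)]$ and the final $j$-monomer), identify the non-straddling class with the tensor product term and the straddling class with the dimer term. Your version is slightly more explicit about the length-divisibility reasoning forcing the unique dimer position, but the underlying decomposition is identical.
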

A recursion relation like~\eqref{eq:recursion} easily follows along similar arguments for the cropped states $ j \in \{1,2\} $. Specifically,
\begin{equation}\label{eq:recrel3}
\varphi_n^{(j)} =  \varphi_l \otimes  \varphi_r^{(j)} + \lambda \,  \varphi_{l-1} \otimes |  \pmb{\sigma}_d \rangle \otimes   \varphi_{r-1}^{(j)} 
\end{equation}
holds as in Lemma~\ref{lem:recursion} given $r\geq 2$.
\begin{proof}
The set of all monomer-dimer tilings of $ \Lambda = \Lambda_l \cup \Lambda_r $ with $ \Lambda_l = [1, 3l] $ and $\Lambda_r = [3l+1, 3l+3r ] $ is the disjoint union
of (i)~the union of all monomer-dimer tilings of $  \Lambda_l  $ with the monomer-dimer tilings of $ \Lambda_r $, and (ii)~a dimer on the set $ [3l-2,3l+3] $ together with all monomer-dimer tilings of the remainder $  \Lambda \backslash [3l-2,3l+3]  $. The latter is itself a disjoint union of monomer-dimer tilings 
of the sets $ [1,3l-3] $ and $ [3l+4, 3(l+r)] $. Gathering the terms, completes the proof of \eqref{eq:recursion}. 

For $ j = 3 $ the identity~\eqref{eq:recrel2} is the special case $ r = 1 $ from~\eqref{eq:recursion}. For $ j \in \{ 1, 2\} $ the proof of~\eqref{eq:recrel2} is similar. We decompose the monomer-dimer tilings of $ [1,3(n-1)+j] $ into (i)~all monomer-dimer tilings of $ [1,3(n-1)] $ and a right $ j $-monomer, 
and (ii)~all monomer-dimer tilings of $ [1,3(n-2)] $ and a truncated $ j $-dimer. This completes the proof of \eqref{eq:recrel2}. 
\end{proof}

A simple consequence of~\eqref{eq:recrel2}, which is based on the observation that the two states on the right side
are orthogonal, is the recursion relation for the norms for $ n \geq 2 $ and $ j \in \{ 1, 2 , 3\} $:
\begin{equation}\label{eq:normrec}
  \|  \varphi_n^{(j)}  \|^2 \  =  \|  \varphi_{n-1} \|^2 + |\lambda|^2 \|  \varphi_{n-2}  \|^2.
\end{equation}
The ratio of these norms 
\begin{equation}\label{def:alpha1}
\alpha_n := \frac{\|\vp_{n-1}\|^2}{\|\vp_n\|^2}
\end{equation}
plays an important role in the proofs of both Theorem~\ref{thm:gap} and Theorem~\ref{thm:expcluster}. The closed-form solution for both $\|\vp_n\|^2$ and $\alpha_n$ are given in the next result.

\begin{lem}[Normalization]\label{lem:normalization}
For any $ n \geq 1 $ and $ j \in \{ 1,2, 3 \} $
\begin{equation}
\|  \varphi_n^{(j)}  \|^2 = \frac{\mu_+^{n+1} - \mu_-^{n+1} }{\mu_+ - \mu_- } 
\end{equation}
where  $\mu_{\pm} := (1\pm \sqrt{1+4|\lambda|^2})/2$. Consequently, in terms of the ratio  $ \mu = \frac{\mu_-}{\mu_+} \in (-1,0) $,
\begin{equation}\label{def:alpha}
	\alpha_n = \frac{1}{\mu_+} \frac{1-\mu^n}{1-\mu^{n+1}}. 
	\end{equation}
\end{lem}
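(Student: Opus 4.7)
The plan is to solve the second-order linear recursion \eqref{eq:normrec}. Since its right-hand side is independent of $j$, all three sequences $\|\varphi_n^{(j)}\|^2$, $j\in\{1,2,3\}$, satisfy the identical recursion
\[
\|\varphi_n^{(j)}\|^2 = \|\varphi_{n-1}\|^2 + |\lambda|^2 \|\varphi_{n-2}\|^2, \qquad n \geq 2.
\]
The associated characteristic polynomial $x^2 - x - |\lambda|^2$ has roots $\mu_\pm$, so the general solution is $A\mu_+^n + B\mu_-^n$. To pin down the constants, I would compute the initial data by direct inspection: the convention $\varphi_0 = 1$ gives $\|\varphi_0\|^2 = 1$, while $\varphi_1 = \sigma_1^+|\pmb{0}\rangle$ is a single basis vector, so $\|\varphi_1\|^2 = 1$. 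Using the Vieta identity $\mu_+ + \mu_- = 1$ built into the recursion, one solves $A+B = 1$ and $A\mu_+ + B\mu_- = 1$ to obtain $A = \mu_+/(\mu_+-\mu_-)$ and $B = -\mu_-/(\mu_+-\mu_-)$, yielding the claimed closed form $\|\varphi_n\|^2 = (\mu_+^{n+1}-\mu_-^{n+1})/(\mu_+-\mu_-)$.

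To transfer the formula to the cropped states for $j \in \{1,2\}$, I would observe that the $n=1$ cases $\varphi_1^{(1)} = |1\rangle$ and $\varphi_1^{(2)} = |10\rangle$ are themselves single occupation basis vectors of norm one, so they share the initial data of $\varphi_1 = \varphi_1^{(3)}$. Combined with the recursion above, this forces $\|\varphi_n^{(j)}\|^2 = \|\varphi_n\|^2$ for all $j$ and all $n \geq 1$, which is the stated formula. Implicit here is the orthogonality of the two summands on the right-hand side of \eqref{eq:recrel2}, which one would verify briefly by noting that the two families of configurations produced by these summands differ in the particle content of the rightmost tile.

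For the formula for $\alpha_n$, I would substitute into \eqref{def:alpha1} and factor $\mu_+^n$ out of the numerator and $\mu_+^{n+1}$ out of the denominator:
\[
\alpha_n = \frac{\mu_+^n - \mu_-^n}{\mu_+^{n+1} - \mu_-^{n+1}} = \frac{1}{\mu_+}\cdot\frac{1-(\mu_-/\mu_+)^n}{1-(\mu_-/\mu_+)^{n+1}},
\]
which is exactly \eqref{def:alpha}. Finally, $\mu = \mu_-/\mu_+ \in (-1,0)$ follows from $\mu_+ > 1 > 0 > \mu_-$ together with $|\mu_-| < \mu_+$, which is equivalent to the positivity of $\mu_++\mu_- = 1$. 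The argument is elementary throughout; the only point to be careful about is the verification of the base cases against the definitions \eqref{eq:md_state}--\eqref{eq:md_3}, but this is immediate, so no real obstacle arises.
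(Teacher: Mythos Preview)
Your argument is correct and is essentially the same as the paper's: both solve the second-order linear recursion \eqref{eq:normrec} via its characteristic roots $\mu_\pm$ with initial data $\|\varphi_0\|^2=\|\varphi_1\|^2=1$, and then read off $\alpha_n$ by direct substitution. The only cosmetic difference is that the paper packages the recursion as a $2\times 2$ dynamical system and diagonalizes the transfer matrix, whereas you work directly with the scalar characteristic polynomial; these are equivalent computations.
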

Analyzing the above formula shows that $\alpha_{2n}$ (respectively, $\alpha_{2n-1}$) is increasing (respectively, decreasing) in $n$, and converges to $\alpha = \mu_+^{-1}$. Moreover,  $\alpha_{2n}\leq \alpha_{2m-1}$ for any $n,m\in\bN$. 
\begin{proof}
Let $C_n:=\|\vp_n\|^2$. Using the convention $\vp_0 =1$ and the recursion relation~\eqref{eq:normrec}, we can recast the question as a dynamical system with initial conditions $C_0= C_1= 1  $:
	\[
	\begin{pmatrix}
	C_n\\
	C_{n-1}
	\end{pmatrix}
	=
	\begin{pmatrix}
	1 & |\lambda|^2 \\
	1 & 0
	\end{pmatrix}
	\begin{pmatrix}
	C_{n-1} \\
	C_{n-2}
	\end{pmatrix} =:  A \begin{pmatrix}
	C_{n-1} \\
	C_{n-2}
	\end{pmatrix} .
	\]
	Its solution 
	is expressed 
	in terms of the eigenvalues $\mu_{\pm} $ and eigenvectors $(\mu_{\pm}, 1)$ of $ A $,
	\begin{align*} \begin{pmatrix}
	C_n\\
	C_{n-1}
	\end{pmatrix} \ & =   A^{n-1}  \begin{pmatrix}
	1 \\
	1
	\end{pmatrix} \\ \ & =  A^{n-1}  \frac{\mu_+}{\mu_+-\mu_-} \begin{pmatrix}
	\mu_+ \\
	1
	\end{pmatrix} +  A^{n-1}  \frac{\mu_-}{\mu_- - \mu_+} \begin{pmatrix}
	\mu_- \\
	1
	\end{pmatrix}  \\
	& =  \frac{\mu_+^n }{\mu_+-\mu_-} \begin{pmatrix}
	\mu_+ \\
	1
	\end{pmatrix} +  \frac{\mu_-^n}{\mu_- - \mu_+} \begin{pmatrix}
	\mu_- \\
	1
	\end{pmatrix}.	
	\end{align*}
	This completes the proof.
\end{proof}

\subsection{The ground state space and proof of Theorem~\ref{thm:gsenergy}} \label{subsec:gss}

The main goal of this section is to prove the VMD space is the ground state space of $H_\Lambda$ for sufficiently large intervals. We then use this result to establish Theorem~\ref{thm:gsenergy}, which produces a threshold on the ground state energy for a fixed filling fraction. To begin, we show that the VMD subspace satisfies  two properties which they share with the ground-state space of any frustration-free system.

\begin{lem}[Nesting and intersection property]\label{lem:intersection}
	Consider a finite interval $ \Lambda \subset \mathbb{Z} $ composed of three consecutive intervals $ \Lambda_l $, $ \Lambda_m $ and $ \Lambda_r $. 
	\begin{enumerate}
	\itemsep0pt
		\item If $ |\Lambda_m | \geq 5 $, then
			$ \mathcal{G}_\Lambda \subseteq \mathcal{H}_{\Lambda_l} \otimes \mathcal{G}_{\Lambda_m}  \otimes  \mathcal{H}_{\Lambda_r} $. 
		\item If 
	$ |\Lambda_m| \geq 6 $, then:
	\begin{equation}\label{eq:intersect}
	\cG_\Lambda = \left(\mathcal{G}_{\Lambda_{1} } \otimes \mathcal{H}_{\Lambda_r} \right) \cap \left(   \mathcal{H}_{\Lambda_l} \otimes \mathcal{G}_{\Lambda_{2} }\right)
	\end{equation}
	where $ \Lambda_{1}  =   \Lambda_l \cup \Lambda_m $ and $ \Lambda_{2}  =   \Lambda_m \cup \Lambda_r $.
	\end{enumerate}
\end{lem}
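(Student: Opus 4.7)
My strategy is to verify both assertions using the hands-on characterization of the VMD subspace given by Lemma~\ref{lem:char}: membership in $\cG_{\cdot}$ is equivalent to being supported on $\ran\pmb{\sigma}_{\cdot}$ together with satisfying the monomer-to-dimer weight relation~\eqref{eq:weightchar}.

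For Part 1, since $\cG_\Lambda=\spa\{\psi_\Lambda(R): R\in\cR_\Lambda\}$, it suffices to show that for each root tiling $R$ and each pair of boundary configurations $\pmb{\sigma}^l\in\{0,1\}^{\Lambda_l}$, $\pmb{\sigma}^r\in\{0,1\}^{\Lambda_r}$, the slice
\[
\chi(\pmb{\sigma}^l,\pmb{\sigma}^r):=\sum_{\pmb{\sigma}^m\in\{0,1\}^{\Lambda_m}}\psi_\Lambda(R)(\pmb{\sigma}^l,\pmb{\sigma}^m,\pmb{\sigma}^r)\,|\pmb{\sigma}^m\rangle
\]
belongs to $\cG_{\Lambda_m}$. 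The hypothesis $|\Lambda_m|\geq 5$ guarantees through~\eqref{eq:induced_tiling} that every VMD tiling of $\Lambda$ induces a unique VMD tiling of $\Lambda_m$ with matching particle content, which yields support property (i) of Lemma~\ref{lem:char}. For the weight property (ii), given any $\pmb{D}_m\in\cD_{\Lambda_m}$ containing two consecutive monomers $D_j^m, D_{j+1}^m$, the tiles of a VMD tiling $\pmb{D}\in\cD_\Lambda(R)$ that induces $\pmb{D}_m$ at those positions can only be (possibly right-boundary) monomers --- no dimer or left-boundary tile of $\pmb{D}$ can reproduce particle content of the form $100$ at three bulk sites of $\Lambda_m$. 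The substitution $\mathcal{M}_j$ therefore extends from $\pmb{D}_m$ to $\pmb{D}$ consistently, giving a bijection between $\{\pmb{D}\in\cD_\Lambda(R):\text{induces }\pmb{D}_m\}$ and $\{\pmb{D}\in\cD_\Lambda(R):\text{induces }\mathcal{M}_j(\pmb{D}_m)\}$ under which the dimer count increases by exactly one. The representation~\eqref{eq:VMDocc} then delivers~\eqref{eq:weightchar}.

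For Part 2 the inclusion $\cG_\Lambda\subseteq(\cG_{\Lambda_1}\otimes\cH_{\Lambda_r})\cap(\cH_{\Lambda_l}\otimes\cG_{\Lambda_2})$ follows by applying Part 1 twice, with $\Lambda_1$ (respectively, $\Lambda_2$) playing the role of the middle block. For the reverse inclusion, take $\psi$ in the intersection and verify the two criteria of Lemma~\ref{lem:char}. Writing $\psi$ in each of the two tensor factorizations and considering a fixed complementary configuration as in the proof of Part 1, the weight relation and support condition descend to the $\cG_{\Lambda_i}$ factor for $i=1,2$. If $\psi(\pmb{\sigma})\neq 0$ then $\pmb{\sigma}\restriction_{\Lambda_i}\in\ran\pmb{\sigma}_{\Lambda_i}$ for $i=1,2$, so Lemma~\ref{lem:intersecD} (which requires exactly $|\Lambda_m|\geq 6$) produces a unique $\pmb{D}\in\cD_\Lambda$ with $\pmb{\sigma}=\pmb{\sigma}_\Lambda(\pmb{D})$, giving (i). For (ii), a short index count shows that any pair of consecutive monomers in a tiling of $\Lambda$ spans at most six sites and hence, since $|\Lambda_m|\geq 6$, cannot straddle from $\Lambda_l$ through $\Lambda_m$ into $\Lambda_r$; the pair therefore lies entirely in $\Lambda_1$ or entirely in $\Lambda_2$, and the weight relation then descends from whichever factor contains the swap.

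The main bookkeeping hurdle is making property (ii) rigorous when the boundary tiles come into play. The substitution rule~\eqref{eq:weightchar} admits the second monomer to be a right $1$- or $2$-monomer of $\pmb{D}_m$ (producing a truncated dimer), so in Part 1 I must confirm that the monomer-dimer swap in $\pmb{D}_m$ always lifts to a legitimate swap in $\pmb{D}$: basic~+~basic $\to$ basic dimer when $D_{j+1}^m$ is basic, and basic~+~(basic extending past $\Lambda_m$, or right-boundary monomer of $\pmb{D}$) $\to$ basic or truncated dimer in $\pmb{D}$ when $D_{j+1}^m$ is a right-monomer of $\pmb{D}_m$. This is a finite case check but must be done to ensure the dimer counts of $\pmb{D}$ and $\pmb{D}_m$ change compatibly under the swap. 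Once settled, the rest of the argument is immediate from Lemmas~\ref{lem:injective},~\ref{lem:intersecD}, and~\ref{lem:char}.
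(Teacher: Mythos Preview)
Your argument is correct and, for Part~2, essentially coincides with the paper's: both use Part~1 for the forward inclusion and then verify the two criteria of Lemma~\ref{lem:char} via Lemma~\ref{lem:intersecD}, noting that a pair of consecutive monomers spans at most six sites and thus lies in one of $\Lambda_1,\Lambda_2$.

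For Part~1, however, your route differs substantively from the paper's. The paper reduces to the cases $\Lambda_l=\emptyset$ and $\Lambda_r=\emptyset$, then invokes the fragmentation Theorem~\ref{thm:product1} together with the recursion relation~\eqref{eq:recrel2}--\eqref{eq:recrel3} to \emph{explicitly} rewrite each $\psi_\Lambda(R)$ as a linear combination of vectors of the form $\psi_{\Lambda_m}(R_m)\otimes\xi$ (respectively $\xi\otimes\psi_{\Lambda_m}(R_m)$), case-splitting on whether the cut falls between factors, inside a boundary dimer, or inside a squeezed Tao--Thouless block. You instead bypass the fragmentation machinery entirely: you fix outer configurations $(\pmb{\sigma}^l,\pmb{\sigma}^r)$, take the resulting slice in $\cH_{\Lambda_m}$, and verify the two conditions of Lemma~\ref{lem:char} directly at the tiling level. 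The support condition follows from the existence of induced tilings for $|\Lambda_m|\ge 5$, and the weight relation reduces to the observation that any pair of consecutive monomers in the induced tiling $\pmb{D}_m$ must correspond to an honest pair of monomers in $\pmb{D}$ (the particle pattern $100\,1\ldots$ on sites of $\Lambda_m$ is incompatible with any dimer or boundary tile of $\pmb{D}$), so the swap lifts to a swap in $\cD_\Lambda(R)$ that changes $\#(\pmb{D})$ by one while leaving the content on $\Lambda_l\cup\Lambda_r$ untouched. This is more economical: it needs only Lemmas~\ref{lem:injective} and~\ref{lem:char} and the definition of induced tilings, not Theorem~\ref{thm:product1} or the recursion relations. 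The paper's approach, by contrast, produces an explicit decomposition of $\psi_\Lambda(R)$ into at most two terms, which fits the fragmented-MPS narrative but is not needed for the bare inclusion statement. Your acknowledged case check (lifting the swap when $D_{j+1}^m$ is a right-boundary monomer of $\pmb{D}_m$) is indeed finite and goes through: the tile of $\pmb{D}$ at $x$ is forced to be a \emph{basic} monomer since $x\le\max\Lambda-3$, and the tile at $x+3$ is then a monomer of $\pmb{D}$ of whatever length $\Lambda$ permits, so the substitution in $\pmb{D}$ is always legal and induces precisely $\mathcal{M}_j(\pmb{D}_m)$ on $\Lambda_m$.
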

\begin{proof}
			1.~The general nesting property follows by iteration from the two special cases $ \Lambda_l = \emptyset $ and $ \Lambda_r= \emptyset $. 
	
	We first spell out the proof in case $ \Lambda_l = \emptyset $. We assume without loss of generality that $ \Lambda_r \neq \emptyset $ and consider the line that cuts $\Lambda$ into $\Lambda_m$ and $\Lambda_r$. Given the fragmented representation of $ \psi_{\Lambda}(R) \in \cG_\Lambda $ from Theorem~\ref{thm:product1}, this line can cut the state in three types of places:
	\begin{enumerate}[(i)]
	\itemsep0pt
		\item Between two consecutive tensor factors (squeezed Tao-Thouless state, void, boundary dimer) in~\eqref{eq:genVMDform}-\eqref{eq:bdy_states}.
		\item In the interior of a boundary dimer.
		\item In the interior of a squeezed Tao-Thouless state, $\vp_n^{(j_0)}$.
	\end{enumerate}
	In either case (i) or (ii), the state factorizes as $\psi_{\Lambda}(R) =\psi_{\Lambda_m}(R_m)\otimes \xi_{\Lambda_r}\in \cG_{\Lambda_m}\otimes \cH_{\Lambda_r}$ where $R_m$ is the root tiling induced by $R$ on $\Lambda_m$, cf.~\eqref{eq:bdy_states}-\eqref{eq:restriction_factorization}. To verify this when the cut runs through a right boundary dimer, note that any truncation of $B_d^r$ produces on its left a configuration consistent with a VMD root tiling. Namely, either a single void, or a void followed by a right $1$-monomer. Both possibilities produce a VMD state on $\Lambda_m$. It is not possible to cut through the interior of a left boundary dimer since $|\Lambda_m|\geq 5$. 
	
	We are left to consider case (iii). In this situation, there is a squeezed Tao-Thouless state $\vp_n^{(j_0)}$ in the fragmentation~\eqref{eq:genVMDform} which covers $x :=\max{\Lambda_m}$ and $x+1=\min{\Lambda_r}$, and we can write
	\begin{equation}\label{eq:tt_factor}
	\psi_{\Lambda}(R) = \psi_{\Lambda'}(R')\otimes \vp_{n}^{(j_0)}\otimes\psi_{\Lambda''}(R'')
	\end{equation}
	where $R'$ and $R''$ are the restrictions of $R$ to appropriately defined intervals $\Lambda'$ and $\Lambda''$. 
	
	We first consider the case that $x$ (and thus, $x+1$) is supported on the last monomer of $\vp_n^{(j_0)}$. Then the first site ($x+1$) of $\Lambda_r \neq \emptyset $ has particle content zero and we may use~\eqref{eq:factorize} to factor off the zeros from  $\vp_n^{(j_0)}$ that are supported on~$\Lambda_r$ . As a result, we once again write $\psi_{\Lambda}(R) = \psi_{\Lambda_m}(R_m)\otimes\xi_{\Lambda_r}$ with $ R_m \in \mathcal{R}_{\Lambda_m} $ the induced root tiling and $\xi_{\Lambda_r} = \ket{0}^{\otimes j} \otimes\psi_{\Lambda''}(R'')$ for some $j<j_0$.
	
	If $x$ is not supported on the last monomer, then $n\geq 2$. Let  $x$ be supported on the $l$th monomer in the root tiling of $\vp_n^{(j_0)}$ and set $r:=n-l\geq 1$. Then the recursion relation~\eqref{eq:recrel2} or~\eqref{eq:recrel3} applies, and
	\begin{equation}\label{eq:recrel4}
	\varphi_{n}^{(j_0)} = \begin{cases} \varphi_{l} \otimes \varphi_{r}^{(j_0)} + \lambda  \varphi_{l-1} \otimes  |  \pmb{\sigma}_d \rangle \otimes  \varphi_{r-1}^{(j_0)} , & \mbox{if $ r > 1 $,} \\
	\varphi_{l} \otimes \varphi_{1}^{(j_0)} + \lambda  \varphi_{l-1} \otimes  |  \pmb{\sigma}_d^{(j_0)} \rangle , & \mbox{if $ r =1 $.}
	\end{cases}  
	\end{equation} 
	If $ r > 1 $, we set $ j \in\{1,2,3\}$  the position on the $l$th monomer which supports $x $, and split  $  |  \pmb{\sigma}_d \rangle =  |  \pmb{\sigma}_{d,1}^{(j)} \rangle \otimes | \pmb{\sigma}_{d,2}^{(j)} \rangle $ 
	with 
	\begin{equation}\label{eq:splitconf} 
	\pmb{\sigma}_{d,1}^{(j)} = \begin{cases} 0 & j= 1 \\ 01 & j = 2 \\ 011 & j = 3  \end{cases} \quad \qquad  \pmb{\sigma}_{d,2}^{(j)} = \begin{cases} 11000 & j= 1 \\ 1000 & j = 2 \\ 000 & j = 3 . \end{cases} 
	\end{equation}
	Note that $  \pmb{\sigma}_{d,1}^{(j)} $ is a possible right boundary tiling configuration. It corresponds to a void if $ j = 1 $, a void followed by a right $ 1 $-monomer if $ j = 2 $, and a right dimer if $ j = 3 $. Substituting the factorization of $|  \pmb{\sigma}_d \rangle$ as well as $\vp_{n}=\vp_{n}^{(j)}\otimes \ket{0}^{\otimes 3-j}$ (see~\eqref{eq:factorize}) produces a refined factorization of the recursion relation from \eqref{eq:recrel4}. Inserting the refined form into \eqref{eq:tt_factor} produces a linear combination of the form
	\[
	\psi_{\Lambda}(R) = \psi_{\Lambda_m}(R_m^1)\otimes  \xi_{\Lambda_r}^1 + \lambda \psi_{\Lambda_m}(R_m^2)\otimes \xi_{\Lambda_r}^2\in \cG_{\Lambda_m}\otimes \cH_{\Lambda_r},
	\]
	with induced root tilings $R_m^1$, $R_m^2\in \cR_{\Lambda_m}$. Since  $  \pmb{\sigma}_{d,2}^{(j)} $ represents a valid tiling configuration at the left boundary, 
	we even have $ \xi_{\Lambda_r}^k= \psi_{\Lambda_r}(R_r^k)  $ for both $ k = 1,2 $ with for appropriately defined $R_r^1$, $R_r^2\in \cR_{\Lambda_r}$ in case $ \Lambda_r $ is sufficiently large. 
	This completes the proof if $r>1$.
	
	If $ r = 1 $, we proceed similarly, the only difference being the factorization $ |  \pmb{\sigma}_d^{(j_0)} \rangle  = |  \pmb{\sigma}_{d,1}^{(j)}\rangle \otimes |  \pmb{\sigma}_{d,2}^{(j,j_0)} \rangle $ with $  \pmb{\sigma}_{d,1}^{(j)}  $ from~\eqref{eq:splitconf}  and $  \pmb{\sigma}_{d,2}^{(j,j_0)} $ appropriately defined. This concludes the proof in case $ \Lambda_l = \emptyset $.
	
	Now suppose that $\Lambda_r = \emptyset$, and wlog $ \Lambda_l \neq \emptyset $, i.e. $\Lambda = \Lambda_l\cup \Lambda_m$. The three cases (i)-(iii) still hold for such $\Lambda$. However, the proof has the following modifications:
	\begin{itemize}
		\item In case (ii): Cutting through a left boundary dimer always produces a configuration on the right that is consistent with a VMD root tiling. Namely, one obtains either a string of voids, or a monomer followed by a void. It is not possible to cut through a right boundary dimer since $|\Lambda_m|\geq 5$.
		\item In case (iii): The proof runs the same with $x = \max{\Lambda_l}$ and $x+1=\min{\Lambda_m}$.  In case of two consecutive monomers we use that  $  \pmb{\sigma}_{d,2}^{(j)} $ from~\eqref{eq:splitconf}  represents a valid tiling configuration at the left boundary. The proof for $r=1$ is simplified since $j_0=3$ due to $|\Lambda_m|\geq 5$. 
	\end{itemize}

	2.~The nesting guarantees that $ \mathcal{G}_{\Lambda }  $ is contained in the right side. It therefore remains to show that any $ \psi $ in the right side is in the VMD subspace $ \mathcal{G}_{\Lambda }  $. The latter is shown using Lemma~\ref{lem:char}. We first show that $ \psi $ is supported on particle configurations $ \pmb{\sigma} \in \ran \pmb{\sigma}_\Lambda $. By assumption $ \psi(\pmb{\sigma}) \neq 0 $ only if 
	\begin{enumerate}[(i)]
		\itemsep0pt
		\item the restriction of $ \pmb{\sigma} $ to $ \Lambda_{1}   $ coincides with the particle configuration $  \pmb{\sigma}_{\Lambda_{1} }(\pmb{D}_{1}) $ of some VMD tiling $ \pmb{D}_{1} \in \cD_{\Lambda_1} $, and 
		\item the restriction of $ \pmb{\sigma} $ to $ \Lambda_{2}  $ coincides with the particle configuration $  \pmb{\sigma}_{\Lambda_{2} }(\pmb{D}_{2}) $ of some VMD tiling $ \pmb{D}_{2} \in \cD_{\Lambda_2} $. 
	\end{enumerate}
	Since  $ |\Lambda_m| \geq 6 $, Lemma~\ref{lem:intersecD} guarantees that there is a unique VMD tiling $ \pmb{D} $ on $ \Lambda $ for which $  \pmb{\sigma} =  \pmb{\sigma}_{\Lambda}(\pmb{D}) $. This proves the first property in  Lemma~\ref{lem:char}.
	For a proof of the second property,  we note that any two consecutive monomers $ D_j , D_{j+1} $ in  $ \pmb{D} $ will either lie entirely in $ \Lambda_{1} $ or  $ \Lambda_{2} $. In either case, \eqref{eq:weightchar} follows from the VMD-property of $ \psi $ on the respective segment.
\end{proof}

We now characterize the ground state space $H_\Lambda$ for all intervals $|\Lambda|\geq 5$.

\begin{theorem}[Ground State Space]\label{thm:gss}
	The VMD space is the ground state space of the Hamiltonian $H_\Lambda$ for all $\lambda\neq 0$ and all finite intervals $|\Lambda|\geq 8$ or $|\Lambda|=5,6$. That is,
	\begin{equation}
	\mathcal{G}_\Lambda = \ker H_\Lambda.
	\end{equation}
	For $|\Lambda|=7$, $\ker H_{\Lambda}= \cG_{\Lambda}\oplus \spa\{\ket{1100011}\}$.
\end{theorem}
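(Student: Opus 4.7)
The containment $\mathcal{G}_\Lambda\subseteq \ker H_\Lambda$ is already given by Lemma~\ref{lem:gsorth}, so the work lies in the reverse inclusion together with the identification of the one-dimensional excess at $|\Lambda|=7$. My plan is strong induction on $|\Lambda|$, with the frustration-free structure of $H_\Lambda$ as the engine. Every local term of $H_\Lambda$ has support in at most $4$ consecutive sites, so whenever $\Lambda=\Lambda_l\cup\Lambda_m\cup\Lambda_r$ consecutively with $|\Lambda_m|\geq 4$, each summand sits entirely inside $\Lambda_1:=\Lambda_l\cup\Lambda_m$ or entirely inside $\Lambda_2:=\Lambda_m\cup\Lambda_r$. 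Since all summands are positive semidefinite, this forces
\[
\ker H_\Lambda \subseteq \bigl(\ker H_{\Lambda_1}\otimes \mathcal{H}_{\Lambda_r}\bigr)\cap\bigl(\mathcal{H}_{\Lambda_l}\otimes \ker H_{\Lambda_2}\bigr).
\]
Once $|\Lambda_m|\geq 6$ and the induction hypothesis gives $\ker H_{\Lambda_i}=\mathcal{G}_{\Lambda_i}$ on both halves, Lemma~\ref{lem:intersection}(2) collapses the right-hand side to $\mathcal{G}_\Lambda$, closing the step. For every $|\Lambda|\geq 10$ one can partition with $|\Lambda_m|=6$ and $|\Lambda_1|,|\Lambda_2|\geq 8$, so this mechanism handles all sufficiently large intervals.

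For the base cases $|\Lambda|\in\{5,6,7\}$ I would decompose $\ker H_\Lambda$ by particle number (using $[N_\Lambda,H_\Lambda]=0$) and enumerate directly. Within each sector, $\ker (n_xn_{x+2})$ restricts to configurations with no two particles at distance exactly~$2$, and then $\bigcap_x\ker q_x$ cuts out the linear combinations whose coefficients obey the hierarchical weight rule of Lemma~\ref{lem:char}, since $q_x$ implements exactly the local swap of adjacent particles at $(x+1,x+2)$ with distance-$3$ particles at $(x,x+3)$. Counting the resulting solutions sector by sector against~\eqref{eq:VMD_cardinality} yields $\ker H_\Lambda=\mathcal{G}_\Lambda$ for $|\Lambda|=5,6$. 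For $|\Lambda|=7$ exactly one extra basis vector survives: the configuration $\pmb{\sigma}=1100011$ has neither pair $(x+1,x+2)$ nor $(x,x+3)$ doubly occupied for any $x\in\{1,2,3,4\}$, so it is annihilated as a single vector by every $q_x$, yet it is not a VMD configuration because $B_d^l$ fills sites~$1$--$5$ with content $11000$ and the remaining two sites $\{6,7\}$ admit no right-boundary tile carrying two particles. This accidental vector accounts for the excess at $|\Lambda|=7$.

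The delicate step is $|\Lambda|=8$ and $9$, where the admissible partitions force some $|\Lambda_i|=7$ and thus the excess vector enters the inductive inclusion. I would eliminate it by a direct check at the level of configurations: an extra left-side component must take the form $\alpha\,\ket{11000110}$ on $[1,8]$ (since $\sigma_8=1$ would violate $n_6n_8=0$), but one computes $q_5\ket{11000110}=\pm\ket{11000000}\neq 0$, and this contribution cannot be cancelled because the companion $\mathcal{G}_{[1,7]}\otimes\mathcal{H}_{\{8\}}$ component of~$\psi$ is already annihilated by~$H_\Lambda$; hence $\alpha=0$. A symmetric computation at $q_1$ kills the right-side excess $\beta\,\ket{01100011}$, and the same template settles $|\Lambda|=9$. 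The main obstacle I foresee is the combinatorial bookkeeping in the base cases $|\Lambda|=5,6,7$---verifying by exhaustive sector-by-sector analysis that exactly one accidental kernel vector survives and only at $|\Lambda|=7$---together with the compatibility checks at $|\Lambda|=8,9$; once these are in hand, the inductive mechanism is purely formal.
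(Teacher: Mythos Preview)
Your approach is the same as the paper's: induction on $|\Lambda|$ via the intersection property of Lemma~\ref{lem:intersection}. The paper, however, sidesteps your ``delicate step'' entirely by including $|\Lambda|=8$ among the base cases (verified by a dimension count, $\dim\cG_{[1,8]}=37=\dim\ker H_{[1,8]}$); the induction then runs for $n>8$ with $\Lambda_1=[1,n-1]$, $\Lambda_2=[2,n]$, so both halves already have size $\geq 8$ and only the $n=7$ anomaly is ever confronted directly.

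Your direct argument at $|\Lambda|=8$ has a flawed justification. You claim $q_5\ket{11000110}=\ket{11000000}$ ``cannot be cancelled because the companion $\cG_{[1,7]}\otimes\cH_{\{8\}}$ component of $\psi$ is already annihilated by $H_\Lambda$''. But $\psi_G\in\cG_{[1,7]}\otimes\cH_{\{8\}}$ is annihilated only by $H_{[1,7]}$, not by $H_{[1,8]}$; in particular $q_5$ is supported on $[5,8]$ and need not kill $\psi_G$. The correct reason is configuration-level orthogonality: $\psi_G$ is supported on VMD configurations of $[1,7]$ tensored with site~$8$, and neither $1100011$ nor $1100100$ is a VMD configuration on $[1,7]$ (both begin with $11$, forcing $B_d^l=11000$ on sites $1$--$5$, which fails). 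Hence $\braket{11000110}{\psi_G}=\braket{11001001}{\psi_G}=0$, and pairing $q_5\psi=0$ with $\ket{11000000}$ yields $\alpha=0$. The same orthogonality disposes of $\alpha'\ket{11000111}$ via $n_6n_8$. Once $|\Lambda|=8$ is established, $|\Lambda|=9$ needs no separate template: take $|\Lambda_m|=7$ so both halves have size~$8$, and Lemma~\ref{lem:intersection} applies directly.
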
 
\begin{proof}
	%
	
	The statement is proved by induction on the size of $ \Lambda =[1,n] $. Using \eqref{eq:VMD_cardinality} to calculate $ \dim \cG_\Lambda = |\cR_\Lambda|$ for $n=5,6,7,8$, produces
	\[
	\dim \cG_{[1,5]} = 11, \quad  \dim \cG_{[1,6]} = 17, \quad  \dim \cG_{[1,7]} = 25, \quad  \dim \cG_{[1,8]} =37. \]
	It can be verified numerically (or explicitly by a tedious calculation) that $\dim \cG_\Lambda = \dim \ker(H_\Lambda)$ for $n=5,6,8$. This implies equality since $\cG_{\Lambda}\subset \ker(H_\Lambda)$ by Lemma~\ref{lem:gsorth}(i). For $n=7$, numerical results show $\dim \ker(H_\Lambda) = \dim \cG_\Lambda+1$. One can check that $\ket{1100011}\in\cG_\Lambda^\perp$ is a ground state of $H_{\Lambda}$. Given Lemma~\ref{lem:gsorth}(i), this implies
	\[
	\ker(H_\Lambda) = \cG_\Lambda \oplus \spa \{\ket{1100011} \}
	\]
	as claimed.
	
	Now assume $n>8$. Since the Hamiltonian $H_{[1,n]}$ is frustration-free, 
	\begin{align}
	\ker H_{[1,n]}  & = (\ker H_{[1,n-1]}\otimes \bC^2) \cap (\bC^2 \otimes \ker H_{[2,n]}) \nonumber \\
	& =  (\mathcal{G}_{[1,n-1]} \otimes \bC^2) \cap (\bC^2 \otimes \cG_{[2,n]})   \notag \\
	& =  \mathcal{G}_{[1,n]} ,
	\end{align}
	where we have used the inductive hypothesis in the second equality, and applied Lemma~\ref{lem:intersection} in the last equality. This completes the proof.
\end{proof}

Recall from~\eqref{eq:gap2} that for fixed particle number $ N \in \mathbb{N}_0 $, the ground state energy
of $ H_\Lambda $ is given by 
\begin{equation}\label{eq:gsen}
E_\Lambda(N) = \inf \left\{ \langle \psi , H_\Lambda \psi \rangle \, \Big| \, \psi \in \cH_{\Lambda}\, \wedge \,   \| \psi \| = 1 \, \wedge \,  N_\Lambda \psi = N \psi  \right\} \, . 
\end{equation}
Theorem~\ref{thm:gsenergy} asserts a threshold for the filling fraction $ N/|\Lambda| $ for which $ E_\Lambda(N) $ ceases to be zero. We are now ready to spell out the proof of this result (in the spin language) using Theorem~\ref{thm:gss}.
\begin{proof}[Proof of Theorem~\ref{thm:gsenergy}]
	Since $H_\Lambda$ is non-negative, we trivially have $E_\Lambda(N)\geq 0$. Recall from~\eqref{eq:maxN} that $ N_\Lambda^\textrm{max} $ is the maximal number of particles in any root tiling $R\in\cR_\Lambda $. If $ N \leq N_\Lambda^\textrm{max} $, then according to Lemma~\ref{lem:maxfill}  there is a root tiling $R \in \mathcal{R}_\Lambda $ with $ N = N_\Lambda(R) $ and hence
	$$ E_\Lambda(N) \leq \langle \psi_\Lambda(R) , H_\Lambda \psi_\Lambda(R) \rangle / \| \psi_\Lambda(R) \|^2 = 0 .
	$$ 
	If $ N > N_\Lambda^\textrm{max} $, then any state $\psi$ as in~\eqref{eq:gsen} is orthogonal to the VMD space $\cG_\Lambda$. Thus, from Theorem~\ref{thm:gss}, which applies to intervals with $ |\Lambda|\geq 8 $, we conclude $ E_\Lambda(N) \geq \gap(H_\Lambda) >0 $. 
	Setting $ N_\Lambda^m := N_\Lambda^\textrm{max} $ we have thus established~\eqref{eq:gsenergy}. 
	The bound~\eqref{eq:maxNumber} has been established in~\eqref{eq:maxfill}. 
%
\end{proof}

\section{Proof of the spectral gap}\label{sec:MM}

In this section we prove the spectral gap results for the FQH system stated in Theorem~\ref{thm:gap} and Theorem~\ref{thm:periodic_gap}. We begin by reviewing the  martingale method~\cite{nachtergaele:1996, nachtergaele:2016b} for producing lower bounds on the spectral gap of a quantum spin Hamiltonian in Section~\ref{subsec:martingale}. In Section~\ref{subsec:obc_gap}, we show how this method may be applied to the FQH system with open boundary conditions to prove a nonzero spectral gap estimate that is uniform in the system size. The key condition for this application is a norm bound of a particular operator defined in terms of ground-state projections. This assumption is particularly non-trivial for the FQH system as the ground-state space grows exponentially in the system size. To deal with the large degeneracy, in Section~\ref{subsec:dim_reduction} we identify a subspace defined in terms of tiling configurations that maximizes the norm, and then use this to bound the norm in Section~\ref{subsec:epsilon_calc}. Using a minor generalization of the finite-size criteria proved by Knabe~\cite{knabe:1988} for arbitrary finite-range interactions, we extend our open boundary condition result to prove a uniform lower bound for the spectral gap of the FQH system with periodic boundary conditions in Section~\ref{subsec:periodic_gap} . 

\subsection{The martingale method}\label{subsec:martingale}
The martingale method can be used to estimate the spectral gap above the ground state of a frustration-free Hamiltonian on a finite-dimensional Hilbert space $\cH$. It assumes a sequence $\{h_n: 1\leq n \leq N \}$ of non-negative operators on $\cH$ with $ N \in \mathbb{N} $ fixed. The latter gives rise to
\begin{equation}\label{ham_sequence}
H_n : = \sum_{m=1}^n h_m \;\; \text{for all} \;\; 1\leq n \leq N ,
\end{equation}
an increasing sequence of self-adjoint operators $0 \leq H_1 \leq H_2 \leq \ldots \leq H_{N}$ for which it is assumed that $\ker(H_{N}) \neq \{0\},$ i.e. the ground-state energy of $H_{N}$ is zero. 
Under certain assumptions, the method produces a non-zero lower bound on the spectral gap of $H_{N}$. To state these assumptions, let $G_n$ be the orthogonal projection onto $ \ker(H_n)$. By construction $ \ker(H_{n+1}) \subseteq  \ker(H_n)$, and so $G_n \neq 0$ for all $n$. Additionally, define the decomposition of unity
\begin{equation}\label{En}
E_n := \begin{cases}
\1 - G_1 & n = 0 \\
G_n - G_{n+1} & 1\leq n \leq N-1 \\
G_{N} & n=N
\end{cases}
\end{equation}
and let $g_n$ be the orthogonal projection onto $\ker(h_n)$. 

\begin{assumption}[Conditions for the Martingale Method]\label{assump:MM} \phantom{\hfill}
	\begin{enumerate}
		\item There exists $\gamma>0$ so that $h_n\geq \gamma(\1-g_n)$ for all $1\leq n \leq N$.
		\item There exists $\ell>0$ so that for all $1\leq n \leq N$, $[g_n,E_m]\neq 0$ implies $m\in[n-\ell,n-1]$.
		\item There exists $\epsilon<1/\sqrt{\ell}$ so that $\|g_{n+1}E_n\|\leq \epsilon$ for all $1\leq n \leq N-1$.
	\end{enumerate}
\end{assumption}
Note that $ g_1 E_0 = g_1 (\1-g_1) = 0 $ so that Assumption~3 trivially holds in case $ n = 0 $. Given these assumptions, the following bound on the spectral gap of $H_{N}$ can be concluded:

\begin{theorem}[Martingale Method]\label{thm:MM} Assume Assumption~\ref{assump:MM} holds. Then, for all $\psi \in \cG_{N}^\perp$,
	\[
	\braket{\psi}{H_{N} \psi} \geq \gamma(1-\epsilon\sqrt{\ell})^2\|\psi\|^2.
	\] 
\end{theorem}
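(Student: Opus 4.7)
The plan is to execute the classical martingale-method argument due to Nachtergaele. First, since $\{E_n\}_{n=0}^N$ is a decomposition of the identity into mutually orthogonal projections, and the hypothesis $\psi \in \cG_N^\perp$ forces $E_N\psi = G_N\psi = 0$, one obtains $\psi = \sum_{n=0}^{N-1}\phi_n$ with $\phi_n := E_n\psi$ and $\|\psi\|^2 = \sum_n\|\phi_n\|^2$. Writing $Q_n := \1 - g_n$, Assumption~1 applied to each summand of $H_N = \sum_{n=1}^N h_n$ reduces the theorem to proving the inequality
\[
\sum_{n=0}^{N-1}\|Q_{n+1}\psi\|^2 \;\geq\; (1-\epsilon\sqrt{\ell})^2\,\|\psi\|^2.
\]

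Second, I would establish a diagonal lower bound. Because $\phi_n = E_n\phi_n$, Assumption~3 yields $\|g_{n+1}\phi_n\| \leq \|g_{n+1}E_n\|\,\|\phi_n\|\leq \epsilon\|\phi_n\|$, hence $\|Q_{n+1}\phi_n\|^2 \geq (1-\epsilon^2)\|\phi_n\|^2$ by Pythagoras. To control the cross-terms in $Q_{n+1}\psi = \sum_m Q_{n+1}\phi_m$, I would exploit Assumption~2: for $m > n$, $G_m \leq G_{n+1} \leq g_{n+1}$ forces $Q_{n+1}\phi_m = 0$; for $m \leq n-\ell$, the commutation $[g_{n+1},E_m]=0$ gives $Q_{n+1}\phi_m = E_m Q_{n+1}\psi$, lying in the range of the projection $P^{(n)} := \sum_{m\leq n-\ell}E_m$. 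Consequently
\[
(\1 - P^{(n)})\,Q_{n+1}\psi \;=\; \sum_{m=n-\ell+1}^n Q_{n+1}\phi_m,
\]
a sum with at most $\ell$ nonzero contributions, and by orthogonality $\|Q_{n+1}\psi\| \geq \|(\1-P^{(n)})Q_{n+1}\psi\|$.

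Finally, these ingredients are assembled via Cauchy--Schwarz. Isolating the diagonal term $Q_{n+1}\phi_n$ inside the $\ell$-term sum above, using $\|Q_{n+1}\phi_n\| \geq \sqrt{1-\epsilon^2}\|\phi_n\|$ together with the crude bound $\|Q_{n+1}\phi_m\| \leq \|\phi_m\|$ on the remaining terms, and summing over $n$ while noting that each $\phi_m$ appears in at most $\ell$ of the relevant $n$-sums, yields precisely the bound $(1-\epsilon\sqrt{\ell})^2\|\psi\|^2$. The $\sqrt{\ell}$ factor arises directly from the Cauchy--Schwarz applied to the $\ell$-term sum, and the hypothesis $\epsilon < 1/\sqrt{\ell}$ is precisely what guarantees positivity of the bound. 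I expect the main technical obstacle to be the bookkeeping in this last step: one must combine the diagonal gain $\sqrt{1-\epsilon^2}$ with the cross-term loss in just the right way, carefully accounting for the overlap between the outer sum over $n$ and the inner $\ell$-neighborhood sums, so as to recover the clean quadratic form $(1-\epsilon\sqrt{\ell})^2\|\psi\|^2$ rather than a strictly weaker estimate.
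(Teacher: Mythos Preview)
The paper does not prove this theorem; it quotes it from \cite{nachtergaele:2016b}. So there is no in-paper proof to compare against, but I can compare your outline to the standard argument.

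Your setup is exactly right: the decomposition $\psi=\sum_{n=0}^{N-1}\phi_n$, the reduction via Assumption~1 to $\sum_n\|Q_{n+1}\psi\|^2\geq(1-\epsilon\sqrt\ell)^2\|\psi\|^2$, and your use of Assumption~2 to show $Q_{n+1}\phi_m=0$ for $m>n$ and $Q_{n+1}\phi_m=E_mQ_{n+1}\psi$ for $m\leq n-\ell$ are all correct and standard.

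The gap is in the final assembly. Your plan is to lower-bound $\|Q_{n+1}\psi\|$ by isolating $Q_{n+1}\phi_n$ and using the ``crude bound $\|Q_{n+1}\phi_m\|\leq\|\phi_m\|$'' on the other $\ell-1$ terms. This cannot yield $(1-\epsilon\sqrt\ell)^2$: the off-diagonal pieces $Q_{n+1}\phi_m$ for $m<n$ enter with full weight $c_m$, not $\epsilon c_m$, because Assumption~3 bounds only $\|g_{n+1}E_n\|$, not $\|g_{n+1}E_m\|$ for $m\neq n$. With your organization you have no smallness on those terms and at best recover a constant like $1-C\ell$ or worse. The correct argument reorganizes so that $\epsilon$ enters once per $n$ through $g_{n+1}\phi_n$ alone: write
\[
c_n^2=\langle\phi_n,\psi\rangle=\langle\phi_n,Q_{n+1}\psi\rangle+\langle g_{n+1}\phi_n,\psi\rangle,
\]
use your Assumption~2 analysis to see $\langle g_{n+1}\phi_n,\psi\rangle=\big\langle g_{n+1}\phi_n,\sum_{m=n-\ell+1}^n\phi_m\big\rangle$, and apply Cauchy--Schwarz to the whole second factor (using orthogonality of the $\phi_m$) to get
\[
|\langle g_{n+1}\phi_n,\psi\rangle|\leq\epsilon\,c_n\Big(\sum_{m=n-\ell+1}^n c_m^2\Big)^{1/2}.
\]
Summing $c_n^2\leq c_n\|Q_{n+1}\psi\|+\epsilon c_n(\sum_m c_m^2)^{1/2}$ over $n$, applying Cauchy--Schwarz to each resulting sum, and using $\sum_n\sum_{m=n-\ell+1}^n c_m^2\leq\ell\|\psi\|^2$ gives $(1-\epsilon\sqrt\ell)\|\psi\|\leq(\sum_n\|Q_{n+1}\psi\|^2)^{1/2}$, which is the claim. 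The $\sqrt\ell$ comes from this double-sum counting, not from an $\ell$-term triangle inequality.
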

This result is the modified version of the martingale method proved in \cite{nachtergaele:2016b}. It is most effective for establishing the spectral gap of frustration-free quantum spin systems with open boundary conditions. The assumptions for this method can be adjusted so that the result can also be applied to systems with periodic conditions \cite{young:2016}. However, we choose to use a finite-size criterion for the periodic result and 
hence do not state the generalized form here. We now focus on applying the martingale method to the FQH spin model with open boundary conditions, and discuss the finite-size criterion and periodic boundary result in Section~\ref{subsec:periodic_gap}.

\subsection{Bounding the spectral gap for open boundary conditions}\label{subsec:obc_gap}
We produce a lower bound on the spectral gap of the FQH spin model on a finite interval $\Lambda=[1,L]$ of length $L\geq 8$ with open boundary conditions, i.e.,  $H_\Lambda$ as defined in \eqref{def:Hspin} on the tensor product $ \cH_\Lambda = \bigotimes_{x=1}^L \bC^2 $. By Theorem~\ref{thm:gss} the set of VMD states is an orthogonal basis for the ground-state space for all such $H_\Lambda$. This will play a key role in our application of the martingale method.

To obtain a bound on the spectral gap of $H_\Lambda$  for the interval $\Lambda$ at some fixed $L\geq 8$, we let $N\geq 2$ and $k\in\{2,3,4\}$  be the unique integers so that $L=3N+k$, and  define a sequence of increasing and absorbing finite intervals via
\begin{equation}\label{mm_vols}
\Lambda_n = [1,3n+k], \quad \quad 0 \leq n \leq N.
\end{equation}
We use this sequence to define the operators $h_n$ and $H_n$ in the martingale method. For labeling convenience, we define these operators for $n\geq 2$ and so the indices in~\eqref{En} and Assumption~\ref{assump:MM} need to be shifted by one.  Specifically, for $2\leq n \leq N$ we set
\begin{equation}\label{mm_seq_def}
H_n = \sum_{m=2}^n h_m, \quad \quad
h_n = \begin{cases}
H_{\Lambda_2} & n = 2 \\
H_{\Lambda_n\setminus \Lambda_{n-3}} &n \geq 3
\end{cases}
\end{equation}
Our choice of values for $k$ guarantees that $|\Lambda_n|\geq 8$ for all $n\geq 2$ so that Theorem~\ref{thm:gss} applies and that $\Lambda_n\setminus \Lambda_{n-3}$ is an interval of length nine for each $n\geq 3$.  
By considering how the intervals~\eqref{mm_vols} overlap, one sees that the range of each interaction term ($n_xn_{x+2}$ or $q_{x}^*q_x$) is contained in the support of at least one and at most three of the operators $h_n$. As a consequence 
\begin{equation}\label{eq:equivalence}
H_{\Lambda_n} \leq H_n \leq 3H_{\Lambda_n}
\end{equation}
for all $n\geq 2$. These Hamiltonians therefore have the same kernel, i.e. ground-state space. 
We denote by $G_{\Lambda'}$ the orthogonal projection onto the ground-state space $\ker( H_{\Lambda'}) \otimes \cH_{\Lambda\setminus \Lambda'}\subset \cH_\Lambda$ for any finite volume $\Lambda'\subseteq \Lambda$, 
and then define $G_n$ (resp. $g_n$) as in the last subsection, but with shifted indices:
\begin{equation}\label{gs_proj}
G_n = G_{\Lambda_n},\quad \quad
g_n = \begin{cases}
G_{\Lambda_2} & n = 2\\
G_{\Lambda_{n}\setminus \Lambda_{n-3}} & n \geq 3 .
\end{cases}
\end{equation}

The martingale method will produce a lower bound of $\gap(H_{N})$ that is independent of $N$ and $k$. Since $\Lambda = \Lambda_N$, the second inequality~\eqref{eq:equivalence} implies
\begin{equation}\label{equiv_hams_gap}
\gap(H_{\Lambda}) \geq \frac{1}{3}\gap(H_{N}),
\end{equation}
and so the spectral gap of $H_\Lambda$ will also have a nonzero lower bound independent of $|\Lambda|=L=3N+k$. The lower bound in Theorem~\ref{thm:gap} will then follow by verifying the conditions in Assumption~\ref{assump:MM}. The first two conditions are easy to check for the collection of operators defined above. Proving the third condition of Assumption~\ref{assump:MM} is the content of the following lemma, whose proof is the main focus of this section and can be found in Subsection~\ref{subsec:epsilon_calc} below.

\begin{lem}[Norm bound]\label{lem:epsilon_calc} Suppose that $\Lambda$ is an interval of length $|\Lambda|\geq 11$, and define $\Lambda_1$ to be the first $|\Lambda| -3$ sites of $\Lambda$, and $\Lambda_2$ to be the last nine sites of $\Lambda$. Then for any $\lambda\neq0$,
	\begin{equation}\label{def:f}
	\|G_{\Lambda_2}(\1-G_{\Lambda}) G_{\Lambda_1}\|^2 \leq \sup_{n \geq 4 } f_n(|\lambda|^2) =:  f(|\lambda|^2), 
	\end{equation}
	where in terms of $\alpha_k = \|\varphi_{k-1}\|^2/\|\varphi_{k}\|^2$,
	\begin{equation}\label{f_N}
	f_n(r) := r \alpha_n\alpha_{n-2}\left(\frac{ \left[1-\alpha_{n-1} (1+r)\right]^2 }{ 1+2r} + \alpha_{n-3}  \frac{ r (1-\alpha_{n-1})^2}{1+r} \right) .
	\end{equation}
	In particular, $ f(|\lambda|^2) <1/3$ for $|\lambda|<5.3$ .
\end{lem}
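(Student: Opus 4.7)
The plan is to recast the squared norm as a variational problem, classify the extremizing states according to the structure of the rightmost fragment of $\psi_{\Lambda_1}(R)$, and compute the extremum using the recursion relations for the squeezed Tao--Thouless states.

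First, since $\cG_\Lambda \subset \cG_{\Lambda_j}\otimes\cH_{\Lambda\setminus\Lambda_j}$ for $j=1,2$ by the nesting property of Lemma~\ref{lem:intersection}, we have $G_\Lambda \leq G_{\Lambda_j}$, which implies $G_\Lambda G_{\Lambda_j}=G_{\Lambda_j}G_\Lambda=G_\Lambda$. A direct computation then yields
\[G_{\Lambda_1}(\1-G_\Lambda)G_{\Lambda_2}(\1-G_\Lambda)G_{\Lambda_1}=G_{\Lambda_1}G_{\Lambda_2}G_{\Lambda_1}-G_\Lambda,\]
so $\|G_{\Lambda_2}(\1-G_\Lambda)G_{\Lambda_1}\|^2=\|G_{\Lambda_1}G_{\Lambda_2}G_{\Lambda_1}-G_\Lambda\|$. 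This non-negative self-adjoint operator vanishes on $\ran(G_\Lambda)$ and on $\ran(G_{\Lambda_1})^\perp$, so its norm equals
\[\sup\bigl\{\|G_{\Lambda_2}\psi\|^2 : \psi\in \ran(G_{\Lambda_1})\ominus\ran(G_\Lambda),\;\|\psi\|=1\bigr\}.\]

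Next, write $\Lambda=\Lambda_\ell\cup\Lambda_m\cup\Lambda_r$ with $|\Lambda_m|=6$ and $|\Lambda_r|=3$, and use the orthogonal VMD basis $\{\psi_{\Lambda_1}(R)\}_{R\in\cR_{\Lambda_1}}$ of $\cG_{\Lambda_1}$ from Lemma~\ref{lem:gsorth} together with the fragmentation of Theorem~\ref{thm:product1} to decompose $\ran(G_{\Lambda_1})$ according to the position of the rightmost void of $R$. Applying Lemma~\ref{lem:factorization}, one shows that any root tiling whose rightmost fragment terminates strictly inside $\Lambda_m$ (equivalently, that has a void in the final three sites of $\Lambda_1$) can be combined with any $\xi\in\cH_{\Lambda_r}$ to produce a vector whose $\cG_{\Lambda_2}$-component already lies in $\cG_\Lambda$; such basis elements contribute nothing to the supremum. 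The nontrivial contributions therefore come from root tilings whose rightmost fragment is the pure squeezed Tao--Thouless state $\vp_n$ for some $n\geq 4$ (the lower bound being dictated by the recursion step below and by the geometry of $\Lambda_m$), extending through the $\Lambda_\ell$--$\Lambda_m$ boundary and ending at site $L-3$, preceded by a void at $L-3n$ (or by the left boundary of $\Lambda_1$ in the extremal case $3n=|\Lambda_1|$). By orthogonality of distinct VMD basis elements, the supremum separates into independent finite-dimensional optimizations indexed by $n$.

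For each $n\geq 4$, apply the recursion
\[\vp_n=\vp_{n-1}\otimes\vp_1+\lambda\,\vp_{n-2}\otimes|\pmb{\sigma}_d\rangle\]
of Lemma~\ref{lem:recursion} (the case $l=n-1$, $r=1$), which expresses $\vp_n$ as the sum of two orthogonal terms whose rightmost tensor factor ($\vp_1$ or $|\pmb{\sigma}_d\rangle$) is entirely supported in $\Lambda_m$. Expanding the trial state $\vp_n\otimes\xi$ in the orthogonal VMD basis of $\cG_{\Lambda_2}$, subtracting the component that extends to a VMD state on $\Lambda$ (this amounts to removing the overlap with states of the form $\vp_{n+1}$ and its right-boundary analogs on $[L-3n+1,L]$), and optimizing over $\xi\in\cH_{\Lambda_r}$ yields $\|G_{\Lambda_2}\psi\|^2/\|\psi\|^2$ in closed form. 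Expressed via the normalizations $\|\vp_k\|^2$ of Lemma~\ref{lem:normalization} and the ratios $\alpha_k=\|\vp_{k-1}\|^2/\|\vp_k\|^2$, the two summands of $f_n(|\lambda|^2)$ correspond precisely to the two terms of the recursion; the denominators $1+2r=\|\vp_3\|^2$ and $1+r=\|\vp_2\|^2$ appear as norms of the short VMD fragments on $\Lambda_2$ entering each piece of the projection, and the overall prefactor $r\,\alpha_n\alpha_{n-2}$ combines the dimer weight $|\lambda|^2$ with the ratio $\|\vp_{n-2}\|^2/\|\vp_n\|^2$. Taking the supremum over $n\geq 4$ gives $f(|\lambda|^2)$.

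The main obstacle will be the careful combinatorial bookkeeping required in the reduction step: verifying that all right-boundary variants of the fragment (right-boundary dimer, right $j$-monomer for $j\in\{1,2\}$) and all small-$n$ cases either give zero contribution or reduce to the $n\geq 4$ analysis, and in the computation step, cleanly identifying and subtracting the $\cG_\Lambda$-component inside $\vp_n\otimes\xi$ so as to avoid miscounting. The concluding estimate $f(|\lambda|^2)<1/3$ for $|\lambda|<5.3$ is a separate analytic/numerical task, carried out via the truncated approximation $f^{(N)}$ developed in Appendix~\ref{app:f_estimates}.
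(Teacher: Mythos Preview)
Your overall strategy mirrors the paper's: reduce to a variational problem over $\ran(G_{\Lambda_1})\ominus\ran(G_\Lambda)$, classify by the rightmost fragment of the root tiling, and use the recursion~\eqref{eq:recrel2} to compute. However, two steps that you treat as ``combinatorial bookkeeping'' are in fact the technical heart of the argument, and your proposal does not supply the ideas needed for them.

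\textbf{First gap: the reduction to tiling-supported states.} Your supremum is over all of $(\cG_{\Lambda_1}\otimes\cH_{\Lambda_r})\cap\cG_\Lambda^\perp$, and you use the basis $\psi_{\Lambda_1}(R_1)\otimes|\pmb{\tau}\rangle$ with $\pmb{\tau}\in\{0,1\}^{\Lambda_r}$ arbitrary. Most such vectors are \emph{not} supported on VMD tiling configurations of $\Lambda$, and your claim that those with a void in the last three sites of $\Lambda_1$ ``contribute nothing'' is not correct as stated: for generic $\pmb{\tau}$ one does not get a vector whose $G_{\Lambda_2}$-image lies in $\cG_\Lambda$. What is true---and what the paper proves in Lemma~\ref{lem:reduction} via Corollary~\ref{cor:config_projections}---is that $G_{\Lambda_2}G_{\Lambda_1}=G_{\Lambda_2}C_\Lambda G_{\Lambda_1}$, so basis vectors with $C_\Lambda(\psi_{\Lambda_1}(R_1)\otimes|\pmb{\tau}\rangle)=0$ are annihilated by $G_{\Lambda_2}$. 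This uses the factorization $C_\Lambda=C_{\Lambda_1}C_{\Lambda_2}$ of the tiling projection, which in turn rests on Lemma~\ref{lem:intersecD}. Without this step you cannot restrict attention to the subspace $\cP_{\Lambda_1}$ on which the remaining analysis takes place.

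\textbf{Second gap: why the supremum separates by $n$.} You assert that ``by orthogonality of distinct VMD basis elements, the supremum separates into independent finite-dimensional optimizations indexed by $n$.'' Orthogonality of the domain basis is not enough: for a quadratic form $\psi\mapsto\|G_{\Lambda_2}\psi\|^2/\|\psi\|^2$, the supremum factors over an orthogonal basis only if the \emph{images} under $G_{\Lambda_2}$ are also mutually orthogonal. The paper constructs an explicit orthogonal basis $\{\psi_{\Lambda'}(R')\otimes\eta_n^{(j)}\}$ of $\cP_{\Lambda_1}\cap\cG_\Lambda^\perp$ (Lemma~\ref{lem:ON_mm}) and then verifies, via the explicit computation of $G_{\Lambda_2}\eta_n^{(j)}$ in Lemma~\ref{lem:Geta}, that the resulting vectors $\varrho_n^{(j)}(R')$ are supported on disjoint sets of tiling configurations and hence orthogonal. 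This is what yields the clean conclusion $\|G_{\Lambda_2}(\1-G_\Lambda)P_{\Lambda_1}\|^2=\sup_{n\geq4}f_n(|\lambda|^2)$.

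Finally, your justification of the lower bound $n\geq4$ (``dictated by the recursion step and by the geometry of $\Lambda_m$'') is not the actual reason: the cases $n=2,3$ occur and must be shown to give zero by direct computation, as in Lemma~\ref{lem:exNsmall}.
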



We now provide the conditional proof of Theorem~\ref{thm:gap} given that Lemma~\ref{lem:epsilon_calc} holds.

\begin{proof}[Proof of Theorem~\ref{thm:gap}]
We consider the FQH spin Hamiltonian $H_\Lambda$ on $\Lambda = [1,3N+k]$ at some fixed integers $N\geq 2$, $k\in\{2,3,4\}$ and fixed $\lambda\neq 0$.  
We set $h_n$ and $H_n$ as in \eqref{mm_seq_def} for  $2\leq n \leq N$ and verify the conditions in Assumption~\ref{assump:MM} for this collection of operators.
	
	For Assumption~1, we recall that  $\Lambda_n\setminus \Lambda_{n-3}$ is an interval of length 9 for any $n\geq 3$, while for $n=2$ we have $|\Lambda_2|\in \{8,9,10\}$. Since the FQH model is translation invariant,  we thus conclude for all $2\leq n \leq N$:
	\begin{equation}
	h_n \geq \gamma(\1-g_n)\;\; \text{where}\;\; \gamma = \min_{m=8,9,10}\gap(H_{[1,m]})>0.
	\end{equation} 
	
	To verify Assumption~2 with $\ell=3$,  we show that $[g_n,E_m]=0$ for $m \notin [n-3,n-1]$. To this end, we conclude  from~\eqref{gs_proj} and \eqref{En} (whose shifted index starts at $ n = 1 $) that  $\supp(E_m)\subseteq \Lambda_{m+1}$ for all $ 1 \leq m \leq N-1  $.  In case $n\geq 3$, we also have $\supp(g_n) \subseteq \Lambda_{n}\setminus \Lambda_{n-3}$. Since the supports of these intervals are disjoint for $m<n-3$, we have $[g_n,E_m]=0$.When $m\geq n$, the claim $[g_n,E_m]=0$ follows from the frustration-free property
	\begin{equation}\label{ff_prop}
	G_{\Lambda'}G_{\Lambda''} = G_{\Lambda''} G_{\Lambda'}=G_{\Lambda'} \;\; \text{for all} \;\; \Lambda''\subseteq \Lambda' , 
	\end{equation}
	combined with the definitions of $E_m$, $G_n$ and $g_n$. When $n=2$, Assumption~2 with $\ell=3$ trivially follows from~\eqref{ff_prop} since $\Lambda_2\subset \Lambda_{m+1}$ for all $m\geq 1$.
	
	To verify Assumption~3, we fix $2\leq n \leq N-1$ and use \eqref{ff_prop} to write
	\[
	\|g_{n+1}E_n\| = \|G_{\Lambda_{n+1}\setminus\Lambda_{n-2}}(G_{\Lambda_n}-G_{\Lambda_{n+1}})\| = \|G_{\Lambda_{n+1}\setminus\Lambda_{n-2}}(\1-G_{\Lambda_{n+1}})G_{\Lambda_n}\| .
	\]
	Lemma~\ref{lem:epsilon_calc} now applies since $|\Lambda_{n+1}\setminus \Lambda_n| = 3$ and $|\Lambda_{n+1}\setminus\Lambda_{n-2}| = 9$. Hence Assumption~3 holds with $\epsilon^2 = f(|\lambda|^2)$, since 
	Lemma~\ref{lem:epsilon_calc} also guarantees that $f(|\lambda|^2)<1/3$ for all $|\lambda|<5.3$. 
	
	Therefore, combining~\eqref{equiv_hams_gap} and Theorem~\ref{thm:MM} we arrive at
	\[
	\gap(H_\Lambda) \geq \min_{m=8,9,10}\gap(H_{[1,m]})\frac{\left(1-\sqrt{3f(|\lambda|^2)}\right)^2}{3}>0 .
	\]
	This completes the proof of Theorem~\ref{thm:gap}.
\end{proof}

\subsection{A dimensional reduction}\label{subsec:dim_reduction}
In the previous section, we showed how Theorem~\ref{thm:gap} can be concluded from an upper bound on
\begin{equation*}
\|G_{\Lambda_2}(\1-G_{\Lambda}) G_{\Lambda_1}\| = \sup_{\substack{\psi\in \cG_{\Lambda}^\perp\cap (\cG_{\Lambda_1}\otimes\cH_{\Lambda\setminus\Lambda_1})\\ \psi\neq 0}}\frac{\|G_{\Lambda_2}\psi\|}{\|\psi\|}
\end{equation*}
where $\Lambda_1,\Lambda_2\subseteq\Lambda$ are as in Lemma~\ref{lem:epsilon_calc} and $G_{\Lambda'}$ is the orthogonal projection onto the ground state space $\cG_{\Lambda'}\otimes\cH_{\Lambda\setminus \Lambda'}$ for any $\Lambda'\subseteq \Lambda$. Here, we use that $\ker(H_{\Lambda'}) = \cG_{\Lambda'}$ for $|\Lambda'|\geq 8$ by Theorem~\ref{thm:gss}. Since $\cG_{\Lambda_1}\otimes\cH_{\Lambda\setminus\Lambda_1}$ is highly degenerate, producing an upper bound on this norm is rather nontrivial. The main goal of this section is to reduce the complexity by identifying a subspace $\cP_{\Lambda_1}\subset \cG_{\Lambda_1}\otimes\cH_{\Lambda\setminus\Lambda_1}$ and an associated orthogonal projection $P_{\Lambda_1}$ for which
\begin{equation}\label{eq:norm_reduction}
\|G_{\Lambda_2}(\1-G_{\Lambda}) G_{\Lambda_1}\|= \|G_{\Lambda_2}(\1-G_{\Lambda}) P_{\Lambda_1}\|.
\end{equation}
After defining $ \cP_{\Lambda_1} $ we prove some basic properties in Lemma~\ref{lem:P_properties}, which will allow us to establish~\eqref{eq:norm_reduction} in Lemma~\ref{lem:reduction}. We finish the subsection by describing an orthogonal basis for $\cP_{\Lambda_1}$ in Lemma~\ref{lem:orthogonal_basis} which we then use to prove Lemma~\ref{lem:epsilon_calc} in Section~\ref{subsec:epsilon_calc}.\\

Assuming that $\Lambda_1\subset \Lambda$ is a finite interval with $|\Lambda_1|\geq 8$, the set
\begin{equation}\label{eq:orthog_basis}
\caB_{\Lambda_1}:=\left\{\psi_{\Lambda_1}(R_1)\otimes \ket{\pmb{\pmb{\tau}}} \, \big| \, R_1\in\cR_{\Lambda_1} \; \text{and}\; \pmb{\pmb{\tau}}\in\{0,1\}^{\Lambda\setminus\Lambda_1}\right\}
\end{equation}
forms an orthogonal basis of the ground state space $\cG_{\Lambda_1}\otimes\cH_{\Lambda\setminus \Lambda_1}$. Notice that these vectors are supported on tilings of $ \Lambda_1 $. Recall that $C_\Lambda$ is the orthogonal projection onto the space of VMD tilings, 
\begin{equation}\label{def:C_Lambda}
\mathcal{C}_{\Lambda} = \spa\left\{ |  \pmb{\sigma}_\Lambda(\pmb{D}) \rangle \, \big| \,  \pmb{D} \in \mathcal{D}_\Lambda \right\}\subseteq\cH_\Lambda,
\end{equation}
and that $\cG_{\Lambda}\subset\caC_{\Lambda}$ since all VMD states are supported on tiling configurations. With this notation, the subspace $\cP_{\Lambda_1}\subset \cG_{\Lambda_1}\otimes \cH_{\Lambda \setminus \Lambda_1}$ that we show satisfies \eqref{eq:norm_reduction} is given by
\begin{align}
\mathcal{P}_{\Lambda_1} :=
\spa\left\{ \psi_{\Lambda_1}(R_1) \otimes |\pmb{\tau} \rangle \in\caB_{\Lambda_1} \, \big| \mbox{ $ C_\Lambda \psi_{\Lambda_1}(R_1) \otimes |\pmb{\tau} \rangle   \neq 0 $} \right\}. \label{P_def}
\end{align}
As we will see in the following,  $\cP_{\Lambda_1}$ is also a subspace of $ \mathcal{C}_{\Lambda} $.

\begin{lem}[Properties of $\cP_{\Lambda_1}$]\label{lem:P_properties} Given a pair of finite intervals $\Lambda_1\subset \Lambda$ with $|\Lambda_1| \geq 8 $, the subspace $\cP_{\Lambda_1}\subset \cH_\Lambda$ has the following properties:
	\begin{enumerate}
		\item It is equivalently described by
		\begin{equation}\label{eq:P_VMD}
		\cP_{\Lambda_1} = \spa\left\{\psi_{\Lambda_1}(R_1)\otimes \ket{\pmb{\tau}} \in\caB_{\Lambda_1} \, \big| \,
		\left\langle \psi_\Lambda(R) | \psi_{\Lambda_1}(R_1)\otimes|\pmb{\tau}\right\rangle \neq 0 \; \text{for some} \; R\in\cR_\Lambda \right\}.
		\end{equation}
		\item Any  $ \psi_{\Lambda_1}(R_1)\otimes \ket{\pmb{\tau}} \in\caB_{\Lambda_1}  $ is an eigenstate of the projection $ C_\Lambda $ and for any $  \psi_{\Lambda_1}(R_1)\otimes \ket{\pmb{\tau}} \in \cP_{\Lambda_1} $:
		\begin{equation}\label{eq:tiling_projection_property}
			C_\Lambda \psi_{\Lambda_1}(R_1) \otimes |\pmb{\tau} \rangle  =\psi_{\Lambda_1}(R_1) \otimes |\pmb{\tau} \rangle . 
		\end{equation}
		\item The inclusion $\cG_{\Lambda}\subset \cP_{\Lambda_1}\subset \caC_\Lambda$ holds.
		\item  The projection $P_{\Lambda_1}$ onto $\cP_{\Lambda_1}$ factorizes as  $P_{\Lambda_1} = C_\Lambda G_{\Lambda_1} = G_{\Lambda_1}C_\Lambda.$
	\end{enumerate}
\end{lem}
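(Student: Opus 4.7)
The plan is to reduce all four claims to a single structural fact: for every $R_1\in\cR_{\Lambda_1}$ and every $\pmb{\tau}\in\{0,1\}^{\Lambda\setminus\Lambda_1}$, the condition $(\pmb{\sigma}_{\Lambda_1}(\pmb{D}_1),\pmb{\tau})\in\ran\pmb{\sigma}_\Lambda$ is independent of the choice of $\pmb{D}_1\in\cD_{\Lambda_1}(R_1)$. Combined with the expansion~\eqref{eq:VMDocc}, the orthogonality of $\caB_{\Lambda_1}$, and Lemma~\ref{lem:injective}, this immediately forces $C_\Lambda$ to act diagonally on $\caB_{\Lambda_1}$ with spectrum in $\{0,1\}$, from which every item of the lemma follows.

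For claim~1, I will expand $C_\Lambda\,\psi_{\Lambda_1}(R_1)\otimes|\pmb{\tau}\rangle = \sum_{\pmb{D}_1}\lambda^{\#(\pmb{D}_1)}\,C_\Lambda|\pmb{\sigma}_{\Lambda_1}(\pmb{D}_1)\rangle\otimes|\pmb{\tau}\rangle$ in the occupation basis; since $C_\Lambda$ is diagonal there, the right-hand side is nonzero precisely when $(\pmb{\sigma}_{\Lambda_1}(\pmb{D}_1),\pmb{\tau})=\pmb{\sigma}_\Lambda(\pmb{D})$ for some $\pmb{D}\in\cD_\Lambda$, and each such $\pmb{D}$ lies in a unique $\cD_\Lambda(R)$. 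The analogous expansion of $\langle\psi_\Lambda(R)\,|\,\psi_{\Lambda_1}(R_1)\otimes|\pmb{\tau}\rangle\rangle$ tracks the same configurations, giving the equivalent description~\eqref{eq:P_VMD}.

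Claim~2 is the only substantive step. If $C_\Lambda$ annihilates the basis vector there is nothing to prove; otherwise, fix some $\pmb{D}_1\in\cD_{\Lambda_1}(R_1)$ for which Lemma~\ref{lem:injective} yields a unique $\pmb{D}\in\cD_\Lambda$ with $\pmb{\sigma}_\Lambda(\pmb{D})=(\pmb{\sigma}_{\Lambda_1}(\pmb{D}_1),\pmb{\tau})$. Because $\cD_{\Lambda_1}(R_1)$ is generated from $\pmb{D}_1$ by the elementary moves $\pmb{D}_1\leftrightarrow\mathcal{M}_j(\pmb{D}_1)$ of Lemma~\ref{lem:char}, it suffices to lift each single such move to $\pmb{D}$. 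I will split into two geometric cases: (i) a basic-dimer substitution of two basic monomers at $[x,x+2]$ and $[x+3,x+5]$ with $x+5\leq|\Lambda_1|$; (ii) a truncated-dimer substitution using a basic monomer together with the right-boundary tile $B_{jm}^r$ of $\Lambda_1$. In case~(i), a brief enumeration of all possible tiles shows that the forced content $1,0,0,1,0,0$ on $[x,x+5]$ in $\pmb{D}$ can only be realized by the same two basic monomers, which can then be substituted there, leaving $\pmb{\tau}$ intact. In case~(ii) the key observation is that, because $|\Lambda\setminus\Lambda_1|=3$, a short enumeration shows that the tile of $\pmb{D}$ starting at the forced particle of $B_{jm}^r$ must be a basic monomer extending into $\Lambda\setminus\Lambda_1$: all basic and truncated dimers start with a vacancy, $B_d^l$ is only possible at the left boundary of $\Lambda$, and every right-boundary tile of $\Lambda$ starts at position $\geq|\Lambda|-2$, too far to the right. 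Substituting that monomer with the basic monomer at $[x,x+2]$ for a basic dimer inside $\pmb{D}$ then projects down to the truncated $j$-dimer substitution on $\Lambda_1$ without altering $\pmb{\tau}$. The reverse moves (dimer to two monomers) are handled by a parallel enumeration starting from the forced content $0,1,1,0,0,0$, so the lifting property is preserved along the orbit generators.

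Claims~3 and~4 then follow quickly. $\cP_{\Lambda_1}\subset\caC_\Lambda$ because its generators are $C_\Lambda$-eigenvectors of eigenvalue $1$, and $\cG_\Lambda\subset\cP_{\Lambda_1}$ is obtained by expanding any $\psi_\Lambda(R)\in\cG_\Lambda\subset\cG_{\Lambda_1}\otimes\cH_{\Lambda\setminus\Lambda_1}$, where the inclusion uses the nesting property in Lemma~\ref{lem:intersection}, in the orthogonal basis $\caB_{\Lambda_1}$ and then invoking claim~1 on each basis vector appearing with a nonzero coefficient. For claim~4, both $G_{\Lambda_1}$ and (by claim~2) $C_\Lambda$ are diagonal in $\caB_{\Lambda_1}$ and hence commute; their product is therefore an orthogonal projection with range $\caC_\Lambda\cap(\cG_{\Lambda_1}\otimes\cH_{\Lambda\setminus\Lambda_1})$, and claim~2 identifies this intersection with $\cP_{\Lambda_1}$. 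The main obstacle throughout will be the boundary enumeration in case~(ii): one must systematically exclude every non-monomer tile that could in principle cover the rightmost forced particle of $\Lambda_1$ in $\pmb{D}$, and this exclusion relies essentially on the hypothesis $|\Lambda\setminus\Lambda_1|=3$.
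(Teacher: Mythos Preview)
Your strategy of reducing everything to the structural fact that $(\pmb{\sigma}_{\Lambda_1}(\pmb{D}_1),\pmb{\tau})\in\ran\pmb{\sigma}_\Lambda$ is independent of $\pmb{D}_1\in\cD_{\Lambda_1}(R_1)$ is sound, and items~1, 3, and 4 do follow from it essentially as you outline. The gap is in your proof of the structural fact itself (claim~2): your tile-by-tile lifting of substitution moves explicitly relies on the hypothesis $|\Lambda\setminus\Lambda_1|=3$, with $\Lambda\setminus\Lambda_1$ sitting to the right of $\Lambda_1$. But the lemma carries no such hypothesis---only $|\Lambda_1|\geq8$---and it is later invoked in Lemma~\ref{lem:reduction} for \emph{both} subintervals $\Lambda_1$ and $\Lambda_2$; in the application $\Lambda\setminus\Lambda_2$ lies to the \emph{left} of $\Lambda_2$ and has $|\Lambda|-9$ sites, which can be arbitrarily large. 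So your case~(ii) enumeration does not cover the lemma as stated, and the boundary casework would have to be redone for the left side and for arbitrary $|\Lambda\setminus\Lambda_1|$.

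The paper avoids the tile-by-tile lifting entirely. It uses the nesting $\cG_\Lambda\subseteq\cG_{\Lambda_1}\otimes\cH_{\Lambda\setminus\Lambda_1}$ from Lemma~\ref{lem:intersection} to expand $\psi_\Lambda(R)=\sum_{R',\pmb{\sigma}}c^R_{R',\pmb{\sigma}}\,\psi_{\Lambda_1}(R')\otimes|\pmb{\sigma}\rangle$, and then compares this expansion, rewritten in the occupation basis, with~\eqref{eq:VMDocc}. Since the configurations $(\pmb{\sigma}_{\Lambda_1}(\pmb{D}_1'),\pmb{\sigma})$ are pairwise distinct by injectivity, matching coefficients gives $c^R_{R_1,\pmb{\tau}}\lambda^{\#(\pmb{D}_1')}=\lambda^{\#(\pmb{D})}$ whenever $(\pmb{\sigma}_{\Lambda_1}(\pmb{D}_1'),\pmb{\tau})=\pmb{\sigma}_\Lambda(\pmb{D})$ for some $\pmb{D}\in\cD_\Lambda(R)$, and $c^R_{R_1,\pmb{\tau}}\lambda^{\#(\pmb{D}_1')}=0$ otherwise. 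Hence $c^R_{R_1,\pmb{\tau}}\neq0$ forces \emph{every} such concatenation to lie in $\ran\pmb{\sigma}_\Lambda$---exactly your structural fact, obtained uniformly in the size and position of $\Lambda\setminus\Lambda_1$ and without any tile enumeration. Your combinatorial route can likely be completed with more casework, but the coefficient-matching argument is both shorter and works in the stated generality.
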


The characterization~\eqref{eq:P_VMD} is motivated by the system's frustration free property in the sense that the spanning set on the right side is the smallest subset $\cB\subset \cB_{\Lambda_1}$ for which $\cG_\Lambda \subset \spa \cB$. 
There are other equivalent expressions for $\cP_{\Lambda_1}$. E.g., the spanning set in \eqref{def:C_Lambda} is an orthonormal basis of $\caC_\Lambda$, and so for any $\psi\in\cH_{\Lambda}$: 
\[
C_{\Lambda}\psi = \sum_{\pmb{D} \in \mathcal{D}_\Lambda} \braket{ \pmb{\sigma}_\Lambda(\pmb{D})}{\psi} \ket{\pmb{\sigma}_\Lambda(\pmb{D})}.
\]
Therefore, $C_{\Lambda}\psi \neq 0$ if and only if $\braket{ \pmb{\sigma}_\Lambda(\pmb{D})}{\psi}\neq 0$ for some $\pmb{D} \in \mathcal{D}_\Lambda$ and hence
\begin{equation}\label{eq:P_config}
\cP_{\Lambda_1} = \spa\left\{\psi_{\Lambda_1}(R_1)\otimes \ket{\pmb{\tau}} \in\caB_{\Lambda_1} \, \big| \,
\left\langle \pmb{\sigma}_\Lambda(\pmb{D}) | \psi_{\Lambda_1}(R_1)\otimes|\pmb{\tau}\right\rangle \neq 0 \; \text{for some} \; \pmb{D}\in\cD_{\Lambda} \right\}.
\end{equation}
We will use \eqref{eq:P_config} rather than \eqref{P_def} to prove the first property in Lemma~\ref{lem:P_properties}. 
Moreover, the second property, which shows that any state $\psi_{\Lambda_1}(R)\otimes \ket{\pmb{\tau}}\in\cB_{\Lambda_1}$ is either in $\caC_\Lambda$ or its orthogonal complement $\caC_{\Lambda}^\perp$, allows us to substitute \eqref{eq:tiling_projection_property} for the condition in~\eqref{P_def}. 
Combining this observation and the factorization property in Lemma~\ref{lem:P_properties}, it follows that the subspace $\cP_{\Lambda_1}$ is the intersection
\begin{align}\label{eq:intersection_form}
\cP_{\Lambda_1} & = \caC_\Lambda\cap (\cG_{\Lambda_1}\otimes \cH_{\Lambda\setminus \Lambda_1})  =  C_{\Lambda}(\cG_{\Lambda_1}\otimes \cH_{\Lambda\setminus \Lambda_1}) = G_{\Lambda_1}\caC_\Lambda.
\end{align}
The nesting of the various subspaces is schematically depicted in~Figure~\ref{fig:boxes}. 

\begin{figure}
	\begin{center}
		\begin{tikzpicture}
		\draw[color=blue, fill = blue!10, thick] (0,0) rectangle (2.5,-4) node[align=left, above] {$\color{blue}{\caG_{\Lambda_1}\otimes \cH_{\Lambda\setminus\Lambda_1}\quad\quad\quad\quad\quad\quad}$};
		\draw[color=red, fill = red!10, thick] (0,0) rectangle (4,-2.5) node[align=left, above] {$\color{red}{\caC_{\Lambda}\quad\quad}$};
		\draw[color=Plum, fill = Plum!10, thick] (0,0) rectangle (2.5,-2.5) node[align=left, above] {$\color{Plum}{\cP_{\Lambda_1}\quad\quad}$};
		\draw[color=black, fill=white, thick](0,0) rectangle (1.5,-1.5) node[align=left, above] {$\caG_{\Lambda}\quad\quad$};
		\end{tikzpicture}
	\end{center}	
	\caption{Sketch of the nesting of subspaces introduced in the dimensional reduction. The inclusion $\caG_{\Lambda}\subseteq \caG_{\Lambda_1}\otimes \cH_{\Lambda\setminus\Lambda_1} $  reflects the frustration-freeness of the ground state space in case $ |\Lambda_1| \geq 8 $. The nesting $\caG_{\Lambda}\subseteq \caC_{\Lambda} $ is a consequence of the VMD-state construction, which are supported on tilings. That  $ \cP_{\Lambda_1} = \caC_{\Lambda} \cap (\caG_{\Lambda_1}\otimes \cH_{\Lambda\setminus\Lambda_1})  $ is proven in Lemma~\ref{lem:P_properties}.}\label{fig:boxes}
\end{figure}
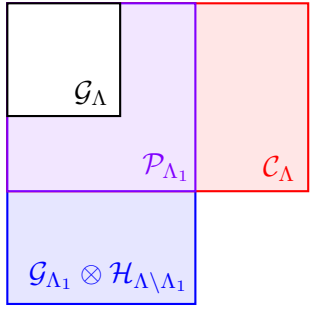

\begin{proof}[Proof of Lemma~\ref{lem:P_properties}] 1.~We show that the set conditions in \eqref{eq:P_VMD} and \eqref{eq:P_config} are equivalent. Fix $\psi_{\Lambda_1}(R_1)\otimes\ket{\pmb{\tau}}\in\cB_{\Lambda_1}$. If there exists $R\in \cR_\Lambda$ so that $\left\langle \psi_\Lambda(R) | \psi_{\Lambda_1}(R_1)\otimes|\pmb{\tau}\right\rangle \neq 0,$ then from~\eqref{eq:VMDocc} we trivially conclude that there is some $\pmb{D}\in \cD_\Lambda(R)\subseteq\cD_{\Lambda}$ such that 
	\[\left\langle \pmb{\sigma}_\Lambda(\pmb{D})  | \psi_{\Lambda_1}(R_1)\otimes|\pmb{\tau}\right\rangle \neq 0.\]
Conversely, suppose there is a tiling $\pmb{D}\in \cD_\Lambda$ such that
	\begin{equation*}
	0\neq 
	\left\langle \pmb{\sigma}_\Lambda(\pmb{D})  | \psi_{\Lambda_1}(R_1)\otimes|\pmb{\tau}\right\rangle 
	= \sum_{\pmb{D}_1\in\cD_{\Lambda_1}(R_1)} \lambda^{\#(\pmb{D}_1)} \braket{ \pmb{\sigma}_\Lambda(\pmb{D}) }{\pmb{\sigma}_{\Lambda_1}(\pmb{D}_1)\,\pmb{\tau}},
	\end{equation*}
	where we have expanded using \eqref{eq:VMDocc}. Since $\pmb{\sigma}_{\Lambda_1}$ is injective, there is a unique $\pmb{D}_1\in\cD_{\Lambda_1}(R_1)$ such that $\pmb{\sigma}_\Lambda(\pmb{D})= (\pmb{\sigma}_{\Lambda_1}(\pmb{D}_1), \, \pmb{\tau})$. Hence for any $\pmb{D}_1'\in\cD_{\Lambda_1}$ and $\pmb{\sigma}\in\{0,1\}^{\Lambda\setminus\Lambda_1}$:
	\begin{equation}\label{cond_1}
	\braket{\pmb{\sigma}_\Lambda(\pmb{D})}{\pmb{\sigma}_{\Lambda_1}(\pmb{D}_1')\,\pmb{\sigma}} = \braket{\pmb{\sigma}_{\Lambda_1}(\pmb{D}_1)\,\pmb{\tau}}{\pmb{\sigma}_{\Lambda_1}(\pmb{D}_1')\,\pmb{\sigma}} = \delta_{\pmb{D}_1,\pmb{D}_1'}\delta_{\pmb{\tau},\pmb{\sigma}}.
	\end{equation}
	Since $\cD_\Lambda$ is the disjoint union of all $\cD_\Lambda(R)$, there is a unique root tiling $R\in \cR_\Lambda$ such that
	\begin{equation}\label{cond_2}
	\left\langle \pmb{\sigma}_\Lambda(\pmb{D})  | \psi_{\Lambda}(R)\right\rangle \neq 0.
	\end{equation}
	We show that $\left\langle \psi_\Lambda(R) | \psi_{\Lambda_1}(R_1)\otimes|\pmb{\tau}\right\rangle\neq 0$ for this root tiling $R$. By the nesting $\cG_\Lambda \subseteq \cG_{\Lambda_{1}}\otimes \cH_{\Lambda\setminus\Lambda_1}$ we may expand $\psi_\Lambda(R)$ using~\eqref{eq:VMDocc} as
	\begin{align}
	\psi_\Lambda(R) 
	& = 
	\sum_{\substack{R' \in\cR_{\Lambda_1} \\ \pmb{\sigma}\in\{0,1\}^{\Lambda\setminus\Lambda_1}}}
	c_{R',\sigma}^R\psi_{\Lambda_1}(R')\otimes\ket{\pmb{\sigma}} \notag \\
	& = 
	\sum_{\substack{R' \in\cR_{\Lambda_1}\\ \pmb{\sigma}\in\{0,1\}^{\Lambda\setminus\Lambda_1}}} \sum_{\pmb{D}_1'\in\cD_{\Lambda_1}(R')}
	c_{R',\pmb{\sigma}}^R \, \lambda^{\# (\pmb{D}_1')}\ket{\pmb{\sigma}_{\Lambda_1}(\pmb{D}_1') \, \pmb{\sigma}} \label{VMD_config_decomp}
	\end{align}
	where  $c_{R',\pmb{\sigma}}^R := \left(  \langle \psi_{\Lambda_1}(R') \otimes \langle \pmb{\sigma} | \right) \big|    \psi_\Lambda(R) \rangle / \|  \psi_{\Lambda_1}(R') \|^2 $.
	In combination with~\eqref{cond_1} and \eqref{cond_2} this implies that $0\neq \left\langle \pmb{\sigma}_\Lambda(\pmb{D})  | \psi_{\Lambda}(R)\right\rangle = c_{R_1,\pmb{\tau}}^R\, \lambda^{\#(\pmb{D}_1)}$. In particular $c_{R_1,\pmb{\tau}}^R\neq 0$, and therefore 
	\[
	\left\langle \psi_\Lambda(R) | \psi_{\Lambda_1}(R_1)\otimes|\pmb{\tau}\right\rangle = \overline{c_{R_1,\pmb{\tau}}^R} \,  \|\psi_{\Lambda_1}(R_1)\|^2 \neq 0
	\]
	as desired. This establishes~\eqref{eq:P_VMD}.\\

	2.~Either $ \psi_{\Lambda_1}(R_1)\otimes \ket{\pmb{\tau}}\in\cB_{\Lambda_1} $ is in the kernel of $ C_\Lambda $, or $  C_\Lambda \psi_{\Lambda_1}(R_1)\otimes \ket{\pmb{\tau}} \neq 0 $, and the vector is in~$ \mathcal{P}_{\Lambda_1} $. In the latter case, using the previous argument we conclude that there is some root tiling $R\in \cR_{\Lambda}$ for which $c_{R_1,\pmb{\tau}}^R\neq 0$. 
	Recall that $\psi_\Lambda(R)$ is only supported on configurations corresponding to the VMD tilings $\cD_{\Lambda}(R)$, and note that \eqref{VMD_config_decomp} is a decomposition of $\psi_{\Lambda}(R)$ in terms of configurations. Since $c_{R_1,\pmb{\tau}}^R\neq 0$, \eqref{VMD_config_decomp} implies the vector $  \psi_{\Lambda_1}(R_1)\otimes \ket{\pmb{\tau}} $ is supported on $\cD_{\Lambda}(R)$. In particular,
	\[\left\{\ket{\pmb{\sigma}_{\Lambda_1}(\pmb{D}_1)\,\pmb{\tau}} \big| \,  \pmb{D}_1 \in \mathcal{D}_{\Lambda_1}(R_1)\right\} \subseteq \left\{\ket{\pmb{\sigma}_{\Lambda}(\pmb{D})} \big| \,  \pmb{D} \in \mathcal{D}_{\Lambda}\right\}.\] 
	Therefore, $C_{\Lambda}\psi_{\Lambda_1}(R_1)\otimes \ket{\pmb{\tau}} = \psi_{\Lambda_1}(R_1)\otimes \ket{\pmb{\tau}}$, which proves \eqref{eq:tiling_projection_property}.\\

	3.~The first inclusion follows from Property~1 and the nesting $\caG_\Lambda \subseteq \caG_{\Lambda_1}\otimes\cH_{\Lambda\setminus\Lambda_1}$ . The second inclusion is an immediate consequence of Property~2.\\

	4.~This is again an immediate consequence of \eqref{eq:tiling_projection_property}.
	\end{proof} 

We are now ready to prove  the reduction property~\eqref{eq:norm_reduction}.
\begin{lem}[Reduction]\label{lem:reduction}
	Let $ \Lambda $ be a finite interval with two subintervals $ \Lambda_1, \Lambda_2\subseteq \Lambda $ of length $ |\Lambda_1|,  |\Lambda_2|\geq 8 $, such that $\Lambda = \Lambda_1\cup\Lambda_2$ and $ | \Lambda_1\cap \Lambda_2 | \geq 6 $. Then
	\begin{equation}\label{eq:reduction}
	G_{\Lambda_2}(\1-G_{\Lambda})  G_{\Lambda_1} = 
	G_{\Lambda_2}(\1-G_{\Lambda}) P_{\Lambda_1}=
	P_{\Lambda_2}(\1-G_{\Lambda}) P_{\Lambda_1}. 
	\end{equation}
\end{lem}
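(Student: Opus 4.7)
The plan is to split the chain of equalities into two pieces, rewrite each as a statement that a certain product of projections vanishes, and then verify each at the level of the orthogonal basis $\caB_{\Lambda_1}$ from \eqref{eq:orthog_basis} using the intersection property Lemma~\ref{lem:intersecD}.

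For the first equality, I would rearrange it into
\[
G_{\Lambda_2}(\1-G_{\Lambda})(G_{\Lambda_1}-P_{\Lambda_1})=0.
\]
The frustration-free inclusion $\cG_\Lambda\subseteq \cG_{\Lambda_i}\otimes\cH_{\Lambda\setminus\Lambda_i}$ gives $G_\Lambda G_{\Lambda_i}=G_{\Lambda_i}G_\Lambda=G_\Lambda$, and by Lemma~\ref{lem:P_properties}(3) we also have $\cG_\Lambda\subseteq \cP_{\Lambda_1}$, so $P_{\Lambda_1}G_\Lambda=G_\Lambda P_{\Lambda_1}=G_\Lambda$. Expanding and cancelling, the expression above collapses to $G_{\Lambda_2}(G_{\Lambda_1}-P_{\Lambda_1})$, so it suffices to show the single identity $G_{\Lambda_2}(G_{\Lambda_1}-P_{\Lambda_1})=0$.

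To verify this, I would use that by Lemma~\ref{lem:P_properties}(2) every basis element $\psi_{\Lambda_1}(R_1)\otimes|\pmb{\tau}\rangle\in\caB_{\Lambda_1}$ is an eigenvector of $C_\Lambda$ with eigenvalue $0$ or $1$: those with eigenvalue $1$ span $\cP_{\Lambda_1}$, and those with eigenvalue $0$ span the orthogonal complement of $\cP_{\Lambda_1}$ inside $\cG_{\Lambda_1}\otimes\cH_{\Lambda\setminus\Lambda_1}$. Thus $G_{\Lambda_1}-P_{\Lambda_1}$ is the orthogonal projection onto the latter span, and I need only check that $G_{\Lambda_2}\psi_{\Lambda_1}(R_1)\otimes|\pmb{\tau}\rangle=0$ whenever $C_\Lambda\psi_{\Lambda_1}(R_1)\otimes|\pmb{\tau}\rangle=0$. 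Testing against the basis of $\cG_{\Lambda_2}\otimes\cH_{\Lambda\setminus\Lambda_2}$ and using the expansion \eqref{eq:VMDocc}, the inner product
$\langle\pmb{\tau}'\otimes\psi_{\Lambda_2}(R_2),\psi_{\Lambda_1}(R_1)\otimes\pmb{\tau}\rangle$
is a sum over pairs $(\pmb{D}_1,\pmb{D}_2)\in\cD_{\Lambda_1}(R_1)\times\cD_{\Lambda_2}(R_2)$ whose particle contents agree on $\Lambda_1\cap\Lambda_2$ and which produce $\pmb{\tau}'$ and $\pmb{\tau}$ on the complementary sides. If any such pair contributed, Lemma~\ref{lem:intersecD} (applicable since $|\Lambda_1\cap\Lambda_2|\geq 6$) would furnish a unique $\pmb{D}\in\cD_\Lambda$ with $\pmb{\sigma}_\Lambda(\pmb{D})=(\pmb{\sigma}_{\Lambda_1}(\pmb{D}_1),\pmb{\tau})\in\ran\pmb{\sigma}_\Lambda$, contradicting $C_\Lambda\psi_{\Lambda_1}(R_1)\otimes|\pmb{\tau}\rangle=0$. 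Hence every such inner product vanishes, and $G_{\Lambda_2}(G_{\Lambda_1}-P_{\Lambda_1})=0$ as required.

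The second equality follows by the same logic with the roles of $\Lambda_1$ and $\Lambda_2$ exchanged: the symmetric argument yields $G_{\Lambda_1}(G_{\Lambda_2}-P_{\Lambda_2})=0$, and taking adjoints gives $(G_{\Lambda_2}-P_{\Lambda_2})G_{\Lambda_1}=0$. Using the factorization $P_{\Lambda_1}=G_{\Lambda_1}C_\Lambda$ from Lemma~\ref{lem:P_properties}(4), we obtain
\[
(G_{\Lambda_2}-P_{\Lambda_2})P_{\Lambda_1}=(G_{\Lambda_2}-P_{\Lambda_2})G_{\Lambda_1}C_\Lambda=0,
\]
and combining with $(G_{\Lambda_2}-P_{\Lambda_2})G_\Lambda=0$ (immediate from $\cG_\Lambda\subseteq\cP_{\Lambda_2}$) yields $(G_{\Lambda_2}-P_{\Lambda_2})(\1-G_\Lambda)P_{\Lambda_1}=0$, which is the second claimed equality. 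The real content of the argument is essentially Lemma~\ref{lem:intersecD} plus the eigenvector dichotomy of $C_\Lambda$ on $\caB_{\Lambda_1}$; the main pitfall is bookkeeping, namely consistently translating statements about configurations in $\ran\pmb{\sigma}_\Lambda$ into statements about the projections $G_{\Lambda_i}$, $P_{\Lambda_i}$, and $C_\Lambda$.
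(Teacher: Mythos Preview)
Your proof is correct and rests on the same underlying fact as the paper's, namely the intersection property (Lemma~\ref{lem:intersecD}) guaranteeing that compatible tilings on $\Lambda_1$ and $\Lambda_2$ glue to a tiling of $\Lambda$. The organization differs: you reduce the first equality to $G_{\Lambda_2}(G_{\Lambda_1}-P_{\Lambda_1})=0$ and verify it by hand at the level of basis vectors $\caB_{\Lambda_1}$, invoking Lemma~\ref{lem:intersecD} directly; the paper instead packages that same computation into Corollary~\ref{cor:config_projections} (i.e.\ $C_\Lambda=C_{\Lambda_1}C_{\Lambda_2}$), then observes $G_{\Lambda_2}G_{\Lambda_1}=G_{\Lambda_2}C_{\Lambda_2}C_{\Lambda_1}G_{\Lambda_1}=G_{\Lambda_2}C_\Lambda G_{\Lambda_1}$ in one line, which is exactly your identity since $G_{\Lambda_1}-P_{\Lambda_1}=C_\Lambda^\perp G_{\Lambda_1}$. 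The paper's route is shorter and keeps everything at the projection-algebra level, while yours is more explicit about what fails when a basis vector is not supported on $\ran\pmb{\sigma}_\Lambda$; both establish the same key identity and the remainder of the argument is essentially identical.
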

\begin{proof}
 To ease notation, we set $P^\perp := \1-P$ for any orthogonal projection $P$. We recall that for nested subspaces $\cQ \subseteq \cP$ of a Hilbert space $\cH$, the associated orthogonal projections satisfy $Q = PQ = QP.$ Since $\cG_{\Lambda_j}\subseteq\caC_{\Lambda_j}$ for both $ j = 1,2 $, we conclude using $C_\Lambda = C_{\Lambda_2} C_{\Lambda_1}$ from Corollary~\ref{cor:config_projections} that
	\begin{align}
	G_{\Lambda_2}G_{\Lambda_1} & = G_{\Lambda_2}C_{\Lambda_2}C_{\Lambda_1}G_{\Lambda_1} = G_{\Lambda_2}C_\Lambda G_{\Lambda_1} \label{eq:g1g2equiv} \\
	G_\Lambda^\perp & = G_\Lambda^\perp(C_\Lambda^\perp + C_\Lambda) = C_\Lambda^\perp + G_\Lambda^\perp C_\Lambda . \label{eq:gperp_equiv}
	\end{align}
	The first identity implies $ G_{\Lambda_2} C_\Lambda^\perp G_{\Lambda_1}=0$, which combined with \eqref{eq:gperp_equiv} gives the first equality in \eqref{eq:reduction}:
	\begin{align*}
	G_{\Lambda_2} G_\Lambda^\perp  G_{\Lambda_1} =  G_{\Lambda_2}G_\Lambda^\perp C_\Lambda G_{\Lambda_1}
	= G_{\Lambda_2}G_\Lambda^\perp P_{\Lambda_1},
	\end{align*}
	where in the last step we use Lemma~\ref{lem:P_properties}(3). By a similar argument, the second equality in \eqref{eq:reduction} is obtained from
	\begin{align*}
	G_{\Lambda_2}G_\Lambda^\perp P_{\Lambda_1}
	& =
	G_{\Lambda_2} (C_\Lambda+C_\Lambda^\perp)G_\Lambda^\perp P_{\Lambda_1} 
	= P_{\Lambda_2}G_\Lambda^\perp P_{\Lambda_1} + G_{\Lambda_2} C_\Lambda^\perp P_{\Lambda_1}
	=  P_{\Lambda_2}G_\Lambda^\perp P_{\Lambda_1}
	\end{align*}
	where $C_\Lambda^\perp P_{\Lambda_1}=0$ holds by Lemma~\ref{lem:P_properties}(3).
\end{proof}

We close this section by determining an orthogonal basis of $\cP_{\Lambda_1}$. To described this basis, it is helpful to introduce a subset of restricted root-tilings. Suppose that $\Lambda'=[1,N] \subset \Lambda=[1,L]$ are two intervals with $L>N$, and let $n\geq 1$ and $j\in\{1,2,3\}$ be the unique integers so that $L- N =3(n-1)+j$. We are interested in root tilings $R'\in\cR_{\Lambda'}$ for which there exists a $R\in \cR_{\Lambda}$ such that
\begin{equation}\label{eq:vmd_factor}
\psi_{\Lambda}(R) = \psi_{\Lambda'}(R')\otimes \vp_n^{(j)}.
\end{equation}
Said differently, the root tilings $R'$ of $\Lambda'$ can be extended by monomers to a root-tiling $R\in\cR_{\Lambda}$ for which the resulting VMD-state $\psi_\Lambda(R)$ factors across $\Lambda'$. We denote the set of all such root tilings on $\Lambda'$ by
\begin{equation}\label{eq:factorable_tilings}
\cR_{\Lambda'}^{\mathrm{f}} := \left\{ R'\in\cR_{\Lambda'} \, \big| \, \exists \, R\in \cR_{\Lambda} \; \text{s.t.} \; \psi_{\Lambda}(R) = \psi_{\Lambda'}(R')\otimes \vp_n^{(j)}\right\}.
\end{equation}
In view of Theorem~\ref{thm:product1}, if $|\Lambda'| \neq 5$ this is simply the set of all root tilings $R'=(B',V',M')\in\cR_{\Lambda'}$ so that $\max{\Lambda'} \in V'$.\footnote{We do not consider truncated versions of the left boundary dimer for $|\Lambda'|\leq 4$; see the footnote in Section~\ref{subsec:tilings}.} When $|\Lambda'|= 5$, $\cR_{\Lambda'}^{\mathrm{f}}$ also contains the root tiling composed of just the left-boundary domino, $B_d^l$. Note that while the values of $n$ and $j$ in \eqref{eq:factorable_tilings} depend on both $\Lambda$ and $\Lambda'$, the set of tilings $\cR_{\Lambda'}^{\mathrm{f}}$ is the same regardless of the choice $\Lambda\supseteq \Lambda'$ and so there is no ambiguity in the notation.

\begin{lem}[Orthogonal basis for $\cP_{\Lambda_1}$]\label{lem:orthogonal_basis} Let $\Lambda_1 \subseteq \Lambda$ be a pair of finite intervals such that $|\Lambda_1|\geq 8$ and $\Lambda\setminus \Lambda_1$ consists of the last three sites of $\Lambda$. Then an orthogonal basis  $ \cB_{\Lambda_1}'\subseteq \cB_{\Lambda_1}  $  for $\cP_{\Lambda_1}$ consists of the following: 
	\begin{enumerate}
		\item all VMD-states indexed by $R\in \cR_\Lambda$ such that  $ \psi_\Lambda(R) \in   \cB_{\Lambda_1} $.
		\item  the set of vectors of the form
		\begin{equation}\label{eq:P_vectors}
		\psi_{\Lambda'}(R')\otimes\vp_{n-1}\otimes\vp_1^{(j)}  \quad \mbox{or} \quad 
		\psi_{\Lambda'}(R')\otimes\vp_{n-2}\otimes \ket{\pmb{\sigma}_d^{(j)}} 
		\end{equation}
		which are indexed by $n\geq 2 $, $j\in \{1,2,3\}$ and $ \Lambda' \subset \Lambda $ such that $|\Lambda\setminus \Lambda'|=3(n-1)+j$  and $ R'\in\cR_{\Lambda'}^{\mathrm{f}} $. 
Here we use the convention that $\psi_{\Lambda'}(R')=1$ if $\Lambda'=\emptyset$.
	\end{enumerate}
\end{lem}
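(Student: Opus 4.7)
The plan is to combine Lemma~\ref{lem:P_properties}(1) with a case analysis of the fragmented form from Theorem~\ref{thm:product1} applied to each $\psi_\Lambda(R)$, $R \in \cR_\Lambda$. Since $\cB_{\Lambda_1}$ is an orthogonal basis of the ambient space $\cG_{\Lambda_1}\otimes\cH_{\Lambda\setminus\Lambda_1}\supseteq \cP_{\Lambda_1}$, any subset of it is automatically orthogonal; Lemma~\ref{lem:P_properties}(1) moreover tells us that $\cP_{\Lambda_1}\cap\cB_{\Lambda_1}$ consists precisely of those $\psi_{\Lambda_1}(R_1)\otimes\ket{\pmb{\tau}}$ appearing with nonzero coefficient in the $\cB_{\Lambda_1}$-expansion of some $\psi_\Lambda(R)$. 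The task therefore reduces to identifying, for each $R \in \cR_\Lambda$, which basis vectors occur in that expansion.

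Fix $R \in \cR_\Lambda$ and use Theorem~\ref{thm:product1} to write $\psi_\Lambda(R)$ as a tensor product of single-configuration pieces (voids and boundary-dimer configurations) interleaved with squeezed Tao--Thouless fragments $\vp_n^{(j)}$. Consider how the cut at site $|\Lambda_1|$, i.e.\ three sites from the right end of $\Lambda$, meets this fragmentation. In the \emph{seam case}, the cut either coincides with a boundary between consecutive fragments, or lies inside a single-configuration piece, or lies inside a $\vp_n^{(j)}$ fragment with $n=1$ (which is itself a single configuration); after possibly absorbing trailing zeros via~\eqref{eq:factorize}, the state then factorizes cleanly as $\psi_\Lambda(R)=\psi_{\Lambda_1}(R_1)\otimes\ket{\pmb{\tau}}$ for the induced root tiling $R_1$ and some $\pmb{\tau}\in\{0,1\}^3$. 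These are exactly the vectors of Item~1.

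In the \emph{crossing case}, the cut falls strictly inside some $\vp_n^{(j)}$ with $n\geq 2$; thus $\psi_\Lambda(R)=\psi_{\Lambda'}(R')\otimes\vp_n^{(j)}$ with $R'\in\cR_{\Lambda'}^{\mathrm{f}}$ and $|\Lambda\setminus\Lambda'|=3(n-1)+j$. A direct count locates the cut at position $3(n-2)+j$ within $\vp_n^{(j)}$. Applying the recursion~\eqref{eq:recrel2}--\eqref{eq:recrel3},
$\vp_n^{(j)}=\vp_{n-1}\otimes\vp_1^{(j)}+\lambda\vp_{n-2}\otimes\ket{\pmb{\sigma}_d^{(j)}}$, then rewriting $\vp_{n-1}=\vp_{n-1}^{(j)}\otimes\ket{0}^{\otimes 3-j}$ via~\eqref{eq:factorize} and splitting the single configuration $\ket{\pmb{\sigma}_d^{(j)}}$ after its $j$-th site, one obtains
\begin{equation*}
\vp_n^{(j)}=\vp_{n-1}^{(j)}\otimes\ket{\pmb{\tau}_1}+\lambda\,\vp_{n-2}\otimes\ket{\eta_j}\otimes\ket{\pmb{\tau}_2},
\end{equation*}
where $\ket{\pmb{\tau}_1}\in\{\ket{001},\ket{010},\ket{100}\}$, $\ket{\eta_j}\in\{\ket{0},\ket{01},\ket{011}\}$, and $\ket{\pmb{\tau}_2}\in\{\ket{110},\ket{100},\ket{000}\}$ for $j=1,2,3$ respectively. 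Tensoring with $\psi_{\Lambda'}(R')$ and using the seam property $R'\in\cR_{\Lambda'}^{\mathrm{f}}$, the two summands give vectors in $\cB_{\Lambda_1}$ with $R_1$ extending $R'$ respectively by $n-1$ monomers ending in a right-$j$-monomer, or by $n-2$ basic monomers followed by a void (for $j=1$), a void and $B_{1m}^r$ (for $j=2$), or $B_d^r$ (for $j=3$). These are the two vector forms of Item~2.

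The two cases exhaust all $R\in\cR_\Lambda$, so the listed collection contains every vector of $\cP_{\Lambda_1}\cap\cB_{\Lambda_1}$ and therefore spans $\cP_{\Lambda_1}$; orthogonality is inherited from $\cB_{\Lambda_1}$. The main obstacle is the bookkeeping in the crossing case --- verifying that the cut inside $\vp_n^{(j)}$ is precisely at position $3(n-2)+j$, correctly applying~\eqref{eq:factorize} to $\vp_{n-1}$, and checking that the $\Lambda_1$-factor in each summand is a valid VMD state whose root tiling extends $R'$ as claimed.
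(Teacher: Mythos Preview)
Your proposal is correct and follows essentially the same approach as the paper's proof: invoke Lemma~\ref{lem:P_properties}(1) to reduce to finding the $\cB_{\Lambda_1}$-components of each $\psi_\Lambda(R)$, then split into the factorizing case (your ``seam case'', the paper's case where $R$ ends in $B_d^r$ or has a void among the last four sites) and the case where the root ends in at least two monomers, handling the latter via the recursion~\eqref{eq:recrel2}. Your explicit bookkeeping of the split of $\ket{\pmb{\sigma}_d^{(j)}}$ and the identification of the resulting $\Lambda_1$-root tilings matches what the paper records in~\eqref{eq:factor1}--\eqref{eq:factor_2} just before its proof.
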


To quickly check that the vectors from \eqref{eq:P_vectors}  belong to $\cB_{\Lambda_1}$, one may use~\eqref{eq:factorize} to rewrite
\begin{equation}\label{eq:factor1}
\psi_{\Lambda'}(R')\otimes\vp_{n-1}\otimes\vp_1^{(j)} = \left(\psi_{\Lambda'}(R')\otimes\vp_{n-1}^{(j)}\right)\otimes\ket{\pmb{\tau}} =: \psi_{\Lambda_1}(R_1)\otimes\ket{\pmb{\tau}}
\end{equation}
with  $\ket{\pmb{\tau}}:=\ket{0}^{\otimes 3-j}\otimes\vp_1^{(j)}$. Furthermore, for each $j$ there is a $\pmb{\tau_j}\in\{0,1\}^3$ so that $\ket{\pmb{\sigma}_d^{(j)}}=\ket{\pmb{\sigma_j}}\otimes\ket{\pmb{\tau_j}}$ where
$
\ket{\pmb{\sigma_1}} = \ket{0},$ $\ket{\pmb{\sigma_2}}=\ket{01},$ and $\ket{\pmb{\sigma_3}} = \ket{011}.
$
In terms of tilings, each $\pmb{\sigma_j}$ corresponds to placing a void possibly followed by a boundary monomer, or a right boundary dimer. In all cases, there is some root tiling $R_1\in\cR_{\Lambda}$ for which
\begin{equation}\label{eq:factor_2}
\psi_{\Lambda'}(R')\otimes\vp_{n-2}\otimes \ket{\pmb{\sigma}_d^{(j)}} = \psi_{\Lambda_1}(R_1)\otimes\ket{\pmb{\tau_j}}.
\end{equation}
We now prove Lemma~\ref{lem:orthogonal_basis}.

\begin{proof}[Proof of Lemma~\ref{lem:orthogonal_basis}] Recall from~\eqref{eq:orthog_basis} that $\cB_{\Lambda_1}$ is an orthogonal basis of $\cG_{\Lambda_1}\otimes\cH_{\Lambda\setminus\Lambda_1}$. By Lemma~\ref{lem:P_properties} the set
	\begin{equation}\label{P_cond}
	\cB_{\Lambda_1}':=\left\{\psi_{\Lambda_1}(R_1)\otimes\ket{\pmb{\tau}}\in\cB_{\Lambda_1} \, \big| \, \left\langle \psi_\Lambda(R) | \psi_{\Lambda_1}(R_1)\otimes|\pmb{\tau}\right\rangle \neq 0 \;\; \text{for some} \;\; R\in \cR_\Lambda\right\},
	\end{equation}
	then forms an orthogonal basis for $\cP_{\Lambda_1}$.  The claimed orthogonality of these states is immediate since $\cB_{\Lambda_1}'\subset \cB_{\Lambda_1}$, the latter of which is an orthogonal basis of $\cG_{\Lambda_1}\otimes\cH_{\Lambda\setminus\Lambda_1}$. 
	In order to determine an explicit form of $\cB_{\Lambda_1}'$, we write each VMD-state $\psi_\Lambda(R) $ as a linear combination of states $\psi_{\Lambda_1}(R_1)\otimes\ket{\pmb{\tau}}\in\cB_{\Lambda_1}$. Fix a root tiling $R=(B,V,M)\in\cR_{\Lambda}$. There are two possible kinds of decompositions of $\psi_{\Lambda}(R)$, cf.~Theorem~\ref{thm:product1}:\\
	
	1.~For the first type, $R$ ends in a right-boundary dimer or has a void in the last four sites, i.e. $B^r = B^r_d$ or $[\max{\Lambda_1}, \,\max\Lambda]\cap V\neq \emptyset$. In this case, there is a root-tiling $R_1\in\cR_{\Lambda_1}$ and configuration $\pmb{\tau}\in\{0,1\}^{3}$ so that 
	\begin{equation}\label{eq:case1}
	\psi_\Lambda(R) = \psi_{\Lambda_1}(R_1)\otimes \ket{\pmb{\tau}}.
	\end{equation}
	More precisely, for any root tiling $R$ for which $B^r = B_d^r$, one has $\ket{\pmb{\tau}} =\ket{011}$ and $R_1$ is the restriction of $R$ to $\Lambda_1$. In the case that $[\max{\Lambda_1}, \,\max\Lambda]\cap V\neq \emptyset$, the existence of such an $R_1$ and $\pmb{\tau}$ is guaranteed by Lemma~\ref{lem:factorization}.\\
	
	2.~For the second type, the ordered root tiling of $ R $ ends in at least two monomers (possibly including a right boundary monomer). By Theorem~\ref{thm:product1} there is some $ n \geq 2 $, $ j \in \{ 1,2,3\} $ and some $ \Lambda' \subseteq\Lambda $, $R'\in\cR_{\Lambda'}^{\mathrm{f}}$ such that 
	\[ \psi_{\Lambda}(R)   = \psi_{\Lambda'}(R')\otimes \vp_{n}^{(j)}.  \]
	The recursion relation \eqref{eq:recrel2} then yields
	\begin{equation}
	\psi_{\Lambda}(R)  = \psi_{\Lambda'}(R')\otimes \vp_{n-1}\otimes \vp_1^{(j)} + \lambda \psi_{\Lambda'}(R')\otimes \vp_{n-2}\otimes\ket{\pmb{\sigma}_d^{(j)}} .\label{eq:better_form}
	\end{equation}
	From \eqref{eq:factor1}-\eqref{eq:factor_2} we see that \eqref{eq:better_form} is a linear combination of states from $\cB_{\Lambda_1}$ as desired. \\
	
	Summarizing \eqref{eq:case1}-\eqref{eq:better_form}, it follows that $\psi_{\Lambda_1}(R)\otimes \ket{\pmb{\tau}}\in \cB_{\Lambda_1}'$ if and only if it is a VMD-state on $\Lambda$ or it is of the form displayed in~\eqref{eq:P_vectors}. This completes the proof.	
\end{proof}

\subsection{Proof of Lemma~\ref{lem:epsilon_calc} } \label{subsec:epsilon_calc}
The main goal of this section is to prove Lemma~\ref{lem:epsilon_calc}. If $\Lambda_1,\, \Lambda_2\subseteq \Lambda$ are intervals as in Lemma~\ref{lem:epsilon_calc}, then the inclusion property from Lemma~\ref{lem:P_properties} implies $(\1-G_\Lambda)P_{\Lambda_1}$ is the orthogonal projection onto $\cP_{\Lambda_1}\cap\cG_{\Lambda}^\perp$, see also \eqref{eq:intersection_form}. Therefore, using the first equality in Lemma~\ref{lem:reduction}, the objective is to produce an upper bound on
\begin{equation}\label{eq:final_goal}
\|G_{\Lambda_2}(\1-G_\Lambda)P_{\Lambda_1}\| = \sup_{0\neq \psi \in \cP_{\Lambda_1}\cap \cG_\Lambda^\perp} \frac{\|G_{\Lambda_2}\psi\|^2}{\|\psi\|^2}.
\end{equation}
To do so, we first use the basis of $\cP_{\Lambda_1}$ from Lemma~\ref{lem:orthogonal_basis} to determine an orthogonal basis for $\cP_{\Lambda_1}\cap\cG_{\Lambda}^\perp$ defined in terms of special vectors that are orthogonal to the states $\vp_{n}^{(j)}$. In Lemmas~\ref{lem:exNsmall} and~\ref{lem:Geta} we calculate the action of $G_{\Lambda_2}$ on this basis, the result of which is again an orthogonal set. The section concludes with the proof of Lemma~\ref{lem:epsilon_calc}.\\

To define the orthogonal basis for $\cP_{\Lambda_1}\cap\cG_{\Lambda}^\perp$, we introduce  for any $ n \geq 2 $ three states indexed by $ j \in \{1,2,3 \} $:
\begin{align}
\eta_n^{(j)} & := -\bar{\lambda}\alpha_{n-1} \vp_{n-1}\otimes\vp_1^{(j)} 
+ \vp_{n-2}\otimes \ket{\pmb{\sigma}_d^{(j)}}\in \cH_{[1,3n-3+j]}. \label{eq:eta_N}
\end{align}
Here we use the convention $ \varphi_0=1$, and recall that $ \alpha_n = \| \varphi_{n-1}^{(j)}  \|^2/ \| \varphi_{n}^{(j)}  \|^2 $ is independent of $ j $, cf.~Lemma~\ref{lem:normalization}. These states are not normalized, but rather satisfy
\begin{equation}\label{eq:normeta}
\| \eta_n^{(j)}  \|^2 = \| \varphi_{n-2} \|^2 \left( 1+ \alpha_{n-1} |\lambda|^2 \right) = \| \varphi_{n-3} \|^2/\left( \alpha_{n-2}\alpha_n\right) 
\end{equation}
where the identity $\alpha_n^{-1}=1+\alpha_{n-1}|\lambda|^2$ follows from \eqref{eq:normrec}.  Given the recursion relation~\eqref{eq:recrel2}, the above states are constructed so that 
\begin{equation} \label{eq:orthogonal}
\langle  \vp_n^{(j)} , \eta_n^{(j)}  \rangle = 0 ,
\end{equation}
and the subspace $\cP_{\Lambda_1}\cap \cG_\Lambda^\perp$ has the following explicit form with respect to these vectors.

\begin{lem}[Excess excitation subspace]\label{lem:ON_mm}
	Let $\Lambda_1 \subseteq \Lambda$ be two finite intervals such that $|\Lambda_1|\geq 8$ and $\Lambda\setminus \Lambda_1$ consists of the last three sites of $\Lambda$. Then,
	\begin{equation}\label{eq:mm_vspace1}
	\cP_{\Lambda_1}\cap \cG_\Lambda^\perp = 
	\spa\left\{ \psi_{\Lambda'}(R') \otimes \eta_n^{(j)} \, \big| \,  \mbox{$ n \geq 2 $, $j\in\{1,2,3\},$  $|\Lambda\setminus \Lambda'|=3(n-1)+j$ and $ R'\in \mathcal{R}_{\Lambda'}^{\mathrm{f}}$ } \right\}
	\end{equation}
	where we use the convention that $\psi_{\Lambda'}(R') = 1$ if $\Lambda'=\emptyset$. Moreover, the spanning set is an orthogonal basis for $\cP_{\Lambda_1}\cap \cG_\Lambda^\perp$. 
\end{lem}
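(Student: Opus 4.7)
My plan is to start from the orthogonal basis $\cB_{\Lambda_1}'$ of $\cP_{\Lambda_1}$ produced by Lemma~\ref{lem:orthogonal_basis} and perform, within each Type~2 pair, an orthogonal change of basis so that the reorganized basis splits into a subset lying in $\cG_\Lambda$ and a subset lying in $\cG_\Lambda^\perp$. The latter subset will be the claimed orthogonal basis of $\cP_{\Lambda_1}\cap\cG_\Lambda^\perp$.

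Fix a Type~2 tuple $(\Lambda',R',n,j)$ with $n\geq 2$, $j\in\{1,2,3\}$, $R'\in\cR_{\Lambda'}^{\mathrm{f}}$ and $|\Lambda\setminus\Lambda'|=3(n-1)+j$, and set $\xi_1:=\vp_{n-1}\otimes\vp_1^{(j)}$ and $\xi_2:=\vp_{n-2}\otimes\ket{\pmb{\sigma}_d^{(j)}}$. I would first verify $\xi_1\perp\xi_2$ by inspecting site $3n-4$: each canonical-basis configuration in $\xi_1$ has a $0$ there (this is the middle of the last three sites of $\vp_{n-1}$, which read $100$ or $000$), whereas each configuration in $\xi_2$ has a $1$ there (it is the second site of $\pmb{\sigma}_d^{(j)}=011\ldots$). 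Within $\spa\{\xi_1,\xi_2\}$, the recursion~\eqref{eq:recrel2} identifies $\vp_n^{(j)}=\xi_1+\lambda\xi_2$ while the definition~\eqref{eq:eta_N} reads $\eta_n^{(j)}=-\bar\lambda\alpha_{n-1}\xi_1+\xi_2$, and a direct computation gives
\[
\langle\vp_n^{(j)},\eta_n^{(j)}\rangle=\bar\lambda\bigl(\|\vp_{n-2}\|^2-\alpha_{n-1}\|\vp_{n-1}\|^2\bigr)=0
\]
by the very definition $\alpha_{n-1}=\|\vp_{n-2}\|^2/\|\vp_{n-1}\|^2$. Tensoring with $\psi_{\Lambda'}(R')$ preserves this orthogonal splitting:
\[
\spa\{\psi_{\Lambda'}(R')\otimes\xi_1,\,\psi_{\Lambda'}(R')\otimes\xi_2\}=\spa\{\psi_{\Lambda'}(R')\otimes\vp_n^{(j)}\}\oplus\spa\{\psi_{\Lambda'}(R')\otimes\eta_n^{(j)}\}.
\]

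Because $R'\in\cR_{\Lambda'}^{\mathrm{f}}$, the vector $\psi_{\Lambda'}(R')\otimes\vp_n^{(j)}$ is itself a VMD state on $\Lambda$ and hence lies in $\cG_\Lambda$. I would then invoke Theorem~\ref{thm:product1} to argue that every VMD state on $\Lambda$ is enumerated exactly once by the Type~1 vectors of $\cB_{\Lambda_1}'$ together with the collection $\{\psi_{\Lambda'}(R')\otimes\vp_n^{(j)}\}$ extracted above: a given $\psi_\Lambda(R)$ either factors across $\Lambda_1$ (Type~1) or has the form $\psi_{\Lambda'}(R')\otimes\vp_n^{(j)}$ with $n\geq 1$, and the $n=1$ case is already captured in Type~1 since $\vp_1^{(j)}$ occupies at most three sites. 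Hence Type~1 together with the extracted ground-direction vectors spans all of $\cG_\Lambda=\cG_\Lambda\cap\cP_{\Lambda_1}$, and the residual vectors $\{\psi_{\Lambda'}(R')\otimes\eta_n^{(j)}\}$ span $\cP_{\Lambda_1}\cap\cG_\Lambda^\perp$, proving~\eqref{eq:mm_vspace1}. Orthogonality across distinct tuples is automatic: the original Type~2 pairs lie in mutually orthogonal two-dimensional subspaces of $\cB_{\Lambda_1}'$, and the orthogonal substitution is carried out pair-by-pair. The main obstacle I anticipate is the combinatorial accounting needed to confirm that Theorem~\ref{thm:product1} partitions the VMD states into precisely these two classes without omission or double-counting, tracking in particular the four right boundary options $B^r\in\{\emptyset,B_d^r,B_{1m}^r,B_{2m}^r\}$ and the way the $n=1$ fragmentation collapses into Type~1.
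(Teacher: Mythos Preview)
Your proposal is correct and follows essentially the same strategy as the paper: both start from the orthogonal basis $\cB_{\Lambda_1}'$ of Lemma~\ref{lem:orthogonal_basis} and identify, within each Type~2 two-dimensional block, the ground-state direction $\psi_{\Lambda'}(R')\otimes\vp_n^{(j)}$ and its orthogonal complement $\psi_{\Lambda'}(R')\otimes\eta_n^{(j)}$. The paper phrases this as solving the orthogonality constraint $\langle\psi_\Lambda(R),\eta\rangle=0$ for the coefficients of a general $\eta$ and then checking, via a support argument on tiling configurations, that the resulting constraint already enforces orthogonality to all VMD states; you instead perform an explicit orthogonal change of basis within each block and invoke the dichotomy (Type~1 plus the Type~2 $\vp$-directions enumerate $\cG_\Lambda$) to conclude.

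The only point worth flagging is that the ``combinatorial accounting'' you anticipate as the main obstacle is precisely what is established in the proof of Lemma~\ref{lem:orthogonal_basis} (the two cases there partition $\cR_\Lambda$ without overlap), so you may simply cite that rather than rederive it from Theorem~\ref{thm:product1}. Your concern about disjointness of the Type~2 two-dimensional blocks is also justified but resolvable: writing each Type~2 vector in the form $\psi_{\Lambda_1}(R_1)\otimes|\pmb{\tau}\rangle$ as in~\eqref{eq:factor1}--\eqref{eq:factor_2} shows that distinct tuples $(R',n,j)$ produce distinct pairs $(R_1,\pmb{\tau})$, so the blocks are genuinely orthogonal.
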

\begin{proof} Let $\cB_{\Lambda_1}'$ be the orthogonal basis of $\cP_{\Lambda_1}$ from Lemma~\ref{lem:orthogonal_basis}. Since the VMD-states are an orthogonal basis for $\cG_\Lambda$, it follows that $\eta\in \cP_{\Lambda_1}\cap \cG_{\Lambda}^\perp$ if and only if $\eta \in \spa \cB_{\Lambda_1}'$ and
	\begin{equation}\label{orthogonality}
	\braket{\psi_{\Lambda}(R)}{\eta} = 0 \quad \text{for all} \quad R\in \cR_\Lambda. 
	\end{equation}
	As a consequence, $\eta$ must be orthogonal to any vector from the first case in Lemma~\ref{lem:orthogonal_basis}. 
	Given the form of the vectors from the second case in Lemma~\ref{lem:orthogonal_basis}, we hence conclude that $\eta\in \cP_{\Lambda_1}\cap\cG_{\Lambda}^\perp$ if and only if
	\begin{equation}\label{eq:eta}
	\eta = 
	\sum_{n\geq 2}
	\sum_{j\in\{1,2,3\}}
	\sum_{\substack{R' \in \cR_{\Lambda'}^{\mathrm{f}} : \\ |\Lambda\setminus\Lambda'|=3(n-1)+j}} 
	\psi_{\Lambda'}(R')\otimes \left[c^{R'}_{n,j}\vp_{n-1}\otimes\vp_1^{(j)} 
	+ 
	d^{R'}_{n,j} \vp_{n-2}\otimes \ket{\pmb{\sigma}_d^{(j)}}\right]
	\end{equation}
	with coefficients $c^{R'}_{n,j}, d^{R'}_{n,j}\in \bC$ such that \eqref{orthogonality} is satisfied. Using the recursion relation~\eqref{eq:recrel2} to apply the orthogonality constraint~\eqref{orthogonality} with $\psi_\Lambda(R)=\psi_{\Lambda'}(R')\otimes \vp_n^{(j)}$ implies
	\begin{align}\label{eq:dependence}
	 0 = & \left\langle  \vp_n^{(j)} \big|  \left[c^{R'}_{n,j}\vp_{n-1}\otimes\vp_1^{(j)} 
	+ 
	d^{R'}_{n,j} \vp_{n-2}\otimes \ket{\pmb{\sigma}_d^{(j)}}\right] \right\rangle = c_{n,j}^{R'}\|\vp_{n-1}\|^2 + d_{n,j}^{R'}\bar{\lambda}\|\vp_{n-2}\|^2 \notag \\
	& \implies \quad
	c_{n,j}^{R'} = -d_{n,j}^{R'}\, \bar{\lambda}\, \alpha_{n-1}.
	\end{align}
	The validity of these relations for any $ n \geq 2 $ and  $ j \in \{1,2,3 \} $, then implies the orthogonality~\eqref{orthogonality} for all VMD states. This claim  is seen by noting that the vector $ \eta $ in~\eqref{eq:eta} is supported on tiling configurations $ \pmb{\sigma}_\Lambda(\pmb{D}) $. In order to have a non-zero scalar product with a VMD state $ \psi_{\Lambda}(R) $, the tiling $ \pmb{D} $ would have to have $ R $ as its root. 
	This requires $ \psi_\Lambda(R)=\psi_{\Lambda'}(R')\otimes \vp_n^{(j)}$ with some $ n \geq 2 $, $ j \in \{1,2,3 \} $ and $ R' \in \cR_{\Lambda'}^{\mathrm{f} }$, for which the orthogonality~\eqref{orthogonality} is ensured by~\eqref{eq:dependence}.

	The orthogonality of the spanning set follows by explicit inspection since each vector is supported on a unique set of tiling configurations.
\end{proof}

As a first step towards bounding~\eqref{eq:final_goal}, we calculate the action of $G_{\Lambda_2}$ on the basis constructed in the previous lemma. In the situation of interest,  the interval $\Lambda_2$ consists of the last nine sites of an interval $\Lambda$, and $G_{\Lambda_2}$ acts as the identity on all sites $\Lambda \setminus \Lambda_2$. Given the fragmentation property of the VMD states, cf. Theorem~\ref{thm:product1},  the action of $G_{\Lambda_2}$ on the basis in Lemma~\ref{lem:ON_mm} falls into two cases depending on the support 
\[ \Lambda_{n,j}:=\supp\big(\eta_n^{(j)}\big) \]
of the vector~$ \eta_n^{(j)} $. The next lemma deals with $ \Lambda_{n,j} \subseteq \Lambda_2$ and further down we treat the case $\Lambda_{n,j}\supset \Lambda_2$.

\begin{lem}[Excited states I]\label{lem:exNsmall} Let $\Lambda_2$ be an interval with $|\Lambda_2|=9$. Fix $n\in\{ 2,3\} $ and $j\in\{1,2,3\}$, and define $  \Lambda_{n,j} \subseteq \Lambda_2 $ to be the last $3(n-1)+j$ sites of $\Lambda_2$.	Then for any $\pmb{\sigma} \in \{ 0, 1 \}^{\Lambda_{n,j}^c}$, 
	\begin{equation}\label{eq:excited_state}
	G_{\Lambda_2}\big(|\pmb{\sigma} \rangle \otimes\eta_{n}^{(j)}\big) = 0,
	\end{equation}
	where we employ the convention that $| \pmb{\sigma} \rangle = 1 $ if $\Lambda_{n,j}^c = \emptyset $ (i.e. for $ n =j =  3 $).
\end{lem}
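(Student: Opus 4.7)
The plan is to combine the nesting property of Lemma~\ref{lem:intersection}(1) with a precise identification of the smallest subspace containing $\eta_n^{(j)}$ on $\Lambda_{n,j}$.

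The central observation is that the vector $\eta_n^{(j)}$, viewed on $\Lambda_{n,j}$, is supported only on the configurations $\pmb{\sigma}_{\Lambda_{n,j}}(\pmb{D})$ whose (unique) underlying root tiling is one specific root $R^{(n,j)}\in \cR_{\Lambda_{n,j}}$: namely, $n-1$ basic monomers followed by $B_{jm}^r$ if $j\in\{1,2\}$, and $n$ basic monomers if $j=3$. This root is precisely the root of $\vp_n^{(j)}$. Indeed, the summand $\vp_{n-1}\otimes\vp_1^{(j)}$ has this root by direct inspection, while in the other summand $\vp_{n-2}\otimes\ket{\pmb{\sigma}_d^{(j)}}$ the trailing dimer $\ket{\pmb{\sigma}_d^{(j)}}$ has itself root $R^{(n,j)}$ restricted to its support (two basic monomers if $j=3$, or a basic monomer plus $B_{jm}^r$ if $j\in\{1,2\}$), so that the overall root again coincides with $R^{(n,j)}$. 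The injectivity of $\pmb{\sigma}_{\Lambda_{n,j}}$ (Lemma~\ref{lem:injective}) and the orthogonality of VMD states attached to distinct roots (Lemma~\ref{lem:gsorth}) then imply $\eta_n^{(j)}$ is orthogonal to every $\psi_{\Lambda_{n,j}}(R)$ with $R\neq R^{(n,j)}$ because their supports are disjoint. For $R=R^{(n,j)}$ the VMD state equals $\vp_n^{(j)}$, and the desired orthogonality is built into the definition $\eta_n^{(j)}$ via the recursion \eqref{eq:recrel2}. Hence $\eta_n^{(j)}\perp \cG_{\Lambda_{n,j}}$.

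When $|\Lambda_{n,j}|\geq 5$, i.e.\ $(n,j)\neq(2,1)$, Lemma~\ref{lem:intersection}(1) yields $\cG_{\Lambda_2}\subseteq \cH_{\Lambda_{n,j}^c}\otimes\cG_{\Lambda_{n,j}}$, and \eqref{eq:excited_state} follows immediately from the tensor-product structure of $\ket{\pmb{\sigma}}\otimes \eta_n^{(j)}$.

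The main obstacle is the exceptional case $(n,j)=(2,1)$, where $|\Lambda_{n,j}|=4$ is too small for the nesting hypothesis. To handle it, adjoin to $\Lambda_{n,j}$ the single site $v$ immediately to its left, obtaining a $5$-site interval $\Lambda_m$, and split $\ket{\pmb{\sigma}}=\ket{\pmb{\sigma}_l}\otimes\ket{\sigma_v}$. Nesting with this enlarged $\Lambda_m$ reduces the problem to proving $\ket{\sigma_v}\otimes\eta_2^{(1)}\perp\cG_{\Lambda_m}$ for each $\sigma_v\in\{0,1\}$, where $\eta_2^{(1)}=-\bar\lambda\ket{1001}+\ket{0110}$. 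For $\sigma_v=0$, both configurations $\ket{01001}$ and $\ket{00110}$ share the single root tiling $(V,M,B_{1m}^r)$ of $\Lambda_m$ whose VMD state is $\ket{0}\otimes \vp_2^{(1)}$; orthogonality to this VMD state reduces via the tensor product to $\langle\vp_2^{(1)},\eta_2^{(1)}\rangle=0$, and orthogonality to every other VMD state follows by disjoint support. For $\sigma_v=1$, neither $\ket{11001}$ nor $\ket{10110}$ is the configuration of any VMD tiling on $\Lambda_m$: the former would require a left boundary dimer, which however has configuration $\ket{11000}$, and the latter has particles at distance two. Therefore $\ket{1}\otimes\eta_2^{(1)}$ is orthogonal to the larger space $\caC_{\Lambda_m}\supseteq\cG_{\Lambda_m}$, completing the proof.
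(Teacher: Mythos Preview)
Your proof is correct and follows a genuinely different route from the paper.  The paper argues directly on $\Lambda_2$: it observes that $\langle\pmb{\sigma}_{\Lambda_2}(\pmb{D})\mid\pmb{\sigma}\rangle\otimes\eta_n^{(j)}$ can be nonzero only when the root tiling of $\pmb{D}$ covers $\Lambda_{n,j}$ by monomers, forcing $\psi_{\Lambda_2}(R)=\psi_{\Lambda'}(R')\otimes\vp_m^{(j)}$ with $n\le m\le 3$; for $m=n$ the orthogonality \eqref{eq:orthogonal} applies, and for $n=2$, $m=3$ one expands $\vp_3^{(j)}$ via \eqref{eq:recrel3} and checks both summands against $\eta_2^{(j)}$.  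You instead localise the argument to $\Lambda_{n,j}$: you show $\eta_n^{(j)}\perp\cG_{\Lambda_{n,j}}$ by identifying the unique root supporting $\eta_n^{(j)}$, and then invoke the nesting property Lemma~\ref{lem:intersection}(1) to pass to $\Lambda_2$.  This makes the structure cleaner in the five non-exceptional cases and isolates the essential fact $\eta_n^{(j)}\perp\vp_n^{(j)}$ very transparently; the price is the separate treatment of $(n,j)=(2,1)$, where $|\Lambda_{n,j}|=4$ falls below the hypothesis of the nesting lemma and you must enlarge by one site and inspect the two values of $\sigma_v$ by hand.  The paper's direct argument trades this asymmetry for a case split over $m\in\{n,\dots,3\}$ on the full $9$-site interval; neither approach dominates the other, but yours arguably exposes more clearly why the result holds.
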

\begin{proof}
	
Fix $n\in\{2,3\}$. We prove that every VMD state $ \psi_{\Lambda_2}(R) $ is orthogonal to $ |\pmb{\sigma} \rangle \otimes\eta_{n}^{(j)} $. Expanding $ \psi_{\Lambda_2}(R) $ into tiling configurations as in~\eqref{eq:VMDocc},  we consider the scalar product
\[ \big\langle \pmb{\sigma}_{\Lambda_2}( \pmb{D}) \big|  \pmb{\sigma} \rangle \otimes\eta_{n}^{(j)}  \big\rangle 
\]
for any $\pmb{D}\in\cD_{\Lambda_2}(R)$. Clearly, $ \psi_{\Lambda_2}(R) $ is orthogonal to $ |\pmb{\sigma} \rangle \otimes\eta_{n}^{(j)} $ if this scalar product is zero for all $\pmb{D}\in \cD_{\Lambda_2}(R)$. For the scalar product to be non-zero for some $\pmb{D}$, the root-tiling $  \pmb{D}_R $ must cover $  \Lambda_{n,j} $ with $ n $ monomers the last of which being a right $ j $-monomer if $ j \in \{ 1,2 \} $. This requires $  \psi_{\Lambda_2}(R) = \psi_{\Lambda'}(R') \otimes \vp_m^{(j)} $ for some $n\leq m \leq 3$ and $R' \in  \cR_{\Lambda'}^{\mathrm{f}}  $ with $ \Lambda' $ appropriately defined (possibly empty). In the case that $n=m$, the orthogonality relation in~\eqref{eq:orthogonal} implies
\[\big\langle \psi_\Lambda(R) \big|  \pmb{\sigma} \rangle \otimes\eta_{n}^{(j)}  \big\rangle = \big\langle \ \psi_{\Lambda'}(R') \big|  \pmb{\sigma} \rangle \cdot  \big\langle \vp_n^{(j)} \big| \eta_{n}^{(j)}  \big\rangle = 0.\] 
If $n=2$ and $m=3$, using~\eqref{eq:recrel3} to write
\[\psi_{\Lambda'}(R') \otimes \vp_3^{(j)} =   \psi_{\Lambda'}(R') \otimes \left(  \vp_1 \otimes \vp_2^{(j)} + \lambda \ket{ \pmb{\sigma}_d } \otimes   \vp_1^{(j)} \right), \]  
a short calculation involving~\eqref{eq:orthogonal} and the definition of $\eta_2^{(j)}$, see~\eqref{eq:eta_N}, again verifies $\big\langle \psi_\Lambda(R) \big|  \pmb{\sigma} \rangle \otimes\eta_{2}^{(j)}  \big\rangle = 0$ .

%
%
%
\end{proof}

Lemma~\ref{lem:exNsmall} implies that if $\Lambda_2$ is the last nine sites of a finite interval $\Lambda$, then for $n=2,3$ and $j\in\{1,2,3\}$
\begin{equation}
G_{\Lambda_2}\psi_{\Lambda'}(R')\otimes \eta_n^{(j)} = 0
\end{equation}
where $R'\in \cR_{\Lambda'}^{\mathrm{f}}$ is arbitrary and $|\Lambda\setminus\Lambda' |= 3(n-1)+j$.

We now turn to the case that $\Lambda_2\subset \Lambda_{n,j}\subseteq \Lambda$. In this situation, given any $\psi_{\Lambda'}(R')\otimes\eta_{n}^{(j)}\in \cH_{\Lambda}$
\[
G_{\Lambda_2}(\psi_{\Lambda'}(R')\otimes\eta_{n}^{(j)}) = \psi_{\Lambda'}(R')\otimes G_{\Lambda_2}\eta_{n}^{(j)}.
\]
In the next lemma, we compute $G_{\Lambda_2}\eta_{n}^{(j)}$ for $n\geq 4$.

\begin{lem}[Excited states II] \label{lem:Geta}
	Fix $n\geq 4$ and $j\in\{1,2,3\}$, and let $\Lambda_2$ be the last nine sites of $[1,3(n-1)+j]$. Then
	\begin{equation}\label{eq:excitedG2}
	G_{\Lambda_2}\eta_n^{(j)}  = \frac{\bar{\lambda}\left[1-\alpha_{n-1}(1+|\lambda|^2)\right]}{\|\vp_3\|^2} \vp_{n-3}\otimes\vp_3^{(j)} + \frac{|\lambda|^2(1-\alpha_{n-1})}{\|\vp_2\|^2}\vp_{n-4}\otimes\ket{\pmb{\sigma}_d}\otimes \vp_2^{(j)}
	\end{equation}
	where $\ket{\pmb{\sigma}_d}=\ket{\pmb{\sigma}_d^{(3)}}=\ket{011000}$.
\end{lem}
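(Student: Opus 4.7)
The idea is to reorganize $\eta_n^{(j)}$ so that it becomes a sum of two tensor products whose left factors live on $\Lambda_2^c$ and whose right factors live on $\Lambda_2$, then apply $G_{\Lambda_2}$ to the right factors and identify the resulting ground states by inspection of their tiling supports.

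First, I would apply the recursion \eqref{eq:recursion} with $l=n-3,\ r=2$ to $\varphi_{n-1}$ and with $l=n-3,\ r=1$ to $\varphi_{n-2}$, then substitute into $\eta_n^{(j)}$ and regroup by the $\Lambda_2^c$-prefix. This produces
\begin{equation*}
\eta_n^{(j)} \;=\; \varphi_{n-3}\otimes A_j \;+\; \lambda\,\varphi_{n-4}\otimes\ket{\pmb{\sigma}_d}\otimes B_j,
\end{equation*}
with $A_j := -\bar\lambda\alpha_{n-1}\varphi_2\otimes\varphi_1^{(j)}+\varphi_1\otimes\ket{\pmb{\sigma}_d^{(j)}}$ (length $6+j$) and $B_j:=-\bar\lambda\alpha_{n-1}\varphi_1\otimes\varphi_1^{(j)}+\ket{\pmb{\sigma}_d^{(j)}}$ (length $3+j$). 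Since $|\Lambda_2^c| = 3(n-4)+j$ does not match the lengths of $\varphi_{n-3}$ or $\varphi_{n-4}\otimes \ket{\pmb{\sigma}_d}$, I use the factorization~\eqref{eq:factorize} to write $\varphi_{n-3}=\varphi_{n-3}^{(j)}\otimes\ket{0}^{\otimes 3-j}$, and split $\ket{\pmb{\sigma}_d}=\ket{\pmb{\sigma}_{d,1}^{(j)}}\otimes\ket{\pmb{\sigma}_{d,2}^{(j)}}$ with $\pmb{\sigma}_{d,1}^{(j)}$ of length $j$ (as in~\eqref{eq:splitconf} of Lemma~\ref{lem:intersection}). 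The resulting decomposition
\begin{equation*}
\eta_n^{(j)} \;=\; \varphi_{n-3}^{(j)}\otimes\bigl[\ket{0}^{\otimes 3-j}\otimes A_j\bigr] \;+\; \lambda\,\varphi_{n-4}\otimes\ket{\pmb{\sigma}_{d,1}^{(j)}}\otimes\bigl[\ket{\pmb{\sigma}_{d,2}^{(j)}}\otimes B_j\bigr]
\end{equation*}
is now perfectly aligned with the tensor split $\cH_\Lambda = \cH_{\Lambda_2^c}\otimes\cH_{\Lambda_2}$, and $G_{\Lambda_2}$ passes the $\Lambda_2^c$-factors unchanged.

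It then remains to evaluate $G_{\Lambda_2}$ on each length-$9$ bracket. For the first bracket, I would verify that $\ket{0}^{\otimes 3-j}\otimes A_j$ is supported on exactly three tiling configurations (two from $\varphi_2$ and one from $\ket{\pmb{\sigma}_d^{(j)}}$), all of which by Lemma~\ref{lem:injective} share the same root on $\Lambda_2$ - the root of the VMD state $\ket{0}^{\otimes 3-j}\otimes\varphi_3^{(j)}$. Since the VMD states form an orthogonal basis of $\cG_{\Lambda_2}$ (Lemma~\ref{lem:gsorth}), the projection collapses to a single inner product,
\begin{equation*}
G_{\Lambda_2}\bigl[\ket{0}^{\otimes 3-j}\otimes A_j\bigr] \;=\; \frac{\langle \varphi_3^{(j)}|A_j\rangle}{\|\varphi_3^{(j)}\|^2}\,\ket{0}^{\otimes 3-j}\otimes\varphi_3^{(j)},
\end{equation*}
and expanding $\varphi_3^{(j)}$ via~\eqref{eq:recrel2} gives $\langle \varphi_3^{(j)}|A_j\rangle=\bar\lambda[1-\alpha_{n-1}(1+|\lambda|^2)]$ together with $\|\varphi_3^{(j)}\|^2=\|\varphi_3\|^2$.

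The same strategy applies to the second bracket. I identify the VMD state on $\Lambda_2$ whose support contains the two configurations in $\ket{\pmb{\sigma}_{d,2}^{(j)}}\otimes B_j$; for $j=2,3$ this is simply $\ket{\pmb{\sigma}_{d,2}^{(j)}}\otimes\varphi_2^{(j)}$, while for $j=1$ the configurations both begin with $11000$, forcing the root on $\Lambda_2$ to start with the left-boundary dimer $B_d^l$, so the VMD state carries an additional factor of $\lambda$. In all three cases a direct computation collapses to the uniform formula
\begin{equation*}
G_{\Lambda_2}\bigl[\ket{\pmb{\sigma}_{d,2}^{(j)}}\otimes B_j\bigr] \;=\; \frac{\bar\lambda(1-\alpha_{n-1})}{\|\varphi_2\|^2}\,\ket{\pmb{\sigma}_{d,2}^{(j)}}\otimes\varphi_2^{(j)},
\end{equation*}
the extra factor of $\lambda$ in the $j=1$ case being cancelled by the $|\lambda|^2$ denominator coming from the induced norm $\|\lambda\ket{\pmb{\sigma}_{d,2}^{(1)}}\otimes\varphi_2^{(1)}\|^2=|\lambda|^2\|\varphi_2\|^2$. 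Combining the two contributions and using $\varphi_{n-3}^{(j)}\otimes\ket{0}^{\otimes 3-j}=\varphi_{n-3}$ and $\ket{\pmb{\sigma}_{d,1}^{(j)}}\otimes\ket{\pmb{\sigma}_{d,2}^{(j)}}=\ket{\pmb{\sigma}_d}$ yields~\eqref{eq:excitedG2}. The main obstacle is the bookkeeping in the $j=1$ case: one must carefully track the $\lambda$ weight attached to the $B_d^l$ tile and confirm that the resulting projection still matches the $j=2,3$ formula; everything else is algebraic manipulation of the recursion relations.
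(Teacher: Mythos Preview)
Your proposal is correct and follows essentially the same route as the paper: both decompose $\eta_n^{(j)}$ via the recursion~\eqref{eq:recursion} into $\varphi_{n-3}$ and $\varphi_{n-4}$ prefixes, realign the tensor split using~\eqref{eq:factorize} and~\eqref{eq:splitconf}, and then identify for each $\Lambda_2$-factor the unique VMD state with overlapping support. Your $A_j,B_j$ are exactly the paper's $\xi_1^{(j)},\xi_2^{(j)}$ stripped of their left padding, and you are in fact slightly more careful than the paper about the $j=1$ case, where the relevant VMD state on $\Lambda_2$ carries the left-boundary dimer and hence an extra $\lambda$ that cancels against the induced norm---the paper glosses over this cancellation.
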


\begin{proof}
We use the recursion relation \eqref{eq:recursion} to write
\begin{align}
\eta_n^{(j)} & = \vp_{n-3} \otimes  \left( -\overline{\lambda} \alpha_{n-1}   \varphi_{2}\otimes  \varphi_{1}^{(j)} +   \varphi_{1} \otimes | \pmb{\sigma}_d^{(j)}\rangle \right) 
+ \lambda  \vp_{n-4} \otimes | \pmb{\sigma}_d \rangle   \otimes \left(   -\overline{\lambda} \alpha_{n-1}   \varphi_{1}\otimes  \varphi_{1}^{(j)} +   | \pmb{\sigma}_d^{(j)} \rangle \right) \notag \\
& =   \vp_{n-3}^{(j)}  \otimes \xi_1^{(j)} + \lambda \vp_{n-4} \otimes  | \pmb{\sigma}_{d,1}^{(j)} \rangle  \otimes \xi_2^{(j)} 
\end{align} 
where in the last step we factored out vectors on the last nine sites~using:
\begin{enumerate}
	\itemsep0pt
\item[(i)] the factorization~\eqref{eq:factorize} to extract terminating zeros and write  $  \vp_{n-3} = \vp_{n-3}^{(j)}\otimes\ket{ \pmb{0}_j} $ where $\ket{ \pmb{0}_j}=\ket{0}^{\otimes 3-j}$. As a result,
\begin{equation}
\xi_1^{(j)} := \ket{ \pmb{0}_j } \otimes  \left( -\overline{\lambda} \alpha_{n-1}   \varphi_{2}\otimes  \varphi_{1}^{(j)} +   \varphi_{1} \otimes | \pmb{\sigma}_d^{(j)}\rangle \right) \in \mathcal{H}_{\Lambda_2} ;
\end{equation}
\item[(ii)]  the factorization $  \ket{\pmb{\sigma}_d} =  \ket{\pmb{\sigma}_{d,1}^{(j)}}\otimes \ket{ \pmb{\sigma}_{d,2}^{(j)}} $ from~\eqref{eq:splitconf}  
for which we define 
\begin{equation}
\xi_2^{(j)} :=   | \pmb{\sigma}_{d,2}^{(j)} \rangle  \otimes  \left(   -\overline{\lambda} \alpha_{n-1}   \varphi_{1}\otimes  \varphi_{1}^{(j)} +   | \pmb{\sigma}_d^{(j)} \rangle \right)  \in \mathcal{H}_{\Lambda_2}  .
\end{equation}
\end{enumerate}
The projection $ G_{\Lambda_2} $ only acts non-trivially on the vectors $  \xi_1^{(j)} $ and  $\xi_2^{(j)} $. For each vector $\xi_k^{(j)}$ there is a unique root $R_k^{(j)}\in\cR_\Lambda$ for which it has a non-zero scalar product. Specifically, $ \psi_{\Lambda_2}(R_1^{(j)}) =  \ket{ \pmb{0}_j }  \otimes \vp_3^{(j)} $ is the unique VMD state with non-zero scalar product with $   \xi_1^{(j)} $ given by
\begin{equation}
\left\langle  \psi_{\Lambda_2}(R_1^{(j)})  \big|   \xi_1^{(j)}  \right\rangle = \left\langle   \vp_3^{(j)}  \big|  \left( -\overline{\lambda} \alpha_{n-1}   \varphi_{2}\otimes  \varphi_{1}^{(j)} +   \varphi_{1} \otimes | \pmb{\sigma}_d^{(j)}\rangle \right)  \right\rangle =  \overline{\lambda} \left( 1 - \alpha_{n-1}  \| \vp_2 \|^2  \right) . 
\end{equation}
Note that $\|\vp_2\|^2=1+|\lambda|^2$. For  $   \xi_2^{(j)} $, the vector $ \psi_{\Lambda_2}(R_2^{(j)}) =  \ket{  \pmb{\sigma}_{d,2}^{(j)}}  \otimes \vp_2^{(j)} $ has a non-zero scalar product:
\begin{equation}
\left\langle  \psi_{\Lambda_2}(R_2^{(j)})  \big|   \xi_2^{(j)}  \right\rangle = \left\langle   \vp_2^{(j)}  \big|   \left(   -\overline{\lambda} \alpha_{n-1}   \varphi_{1}\otimes  \varphi_{1}^{(j)} +   | \pmb{\sigma}_d^{(j)} \rangle \right)\right\rangle =  \overline{\lambda} \left( 1 - \alpha_{n-1}   \right) . 
\end{equation}
Since $ \|  \psi_{\Lambda_2}(R_1^{(j)})  \|^2 =  \|  \vp_3 \|^2 $ and $ \|  \psi_{\Lambda_2}(R_2^{(j)})  \|^2 =  \|  \vp_2 \|^2$, we then arrive at
\[
G_{\Lambda_2}\eta_n^{(j)}  =  \frac{ \overline{\lambda} }{ \|  \vp_3 \|^2 }  \left( 1 - \alpha_{n-1}  \| \vp_2 \|^2  \right) \vp_{n-3}^{(j)}  \otimes   \ket{ \pmb{0}_j }  \otimes \vp_3^{(j)}  +  \frac{|\lambda|^2}{  \|  \vp_2 \|^2 }  \left( 1 - \alpha_{n-1}   \right) \vp_{n-4} \otimes  | \pmb{\sigma}_{d,1}^{(j)} \rangle  \otimes   \ket{  \pmb{\sigma}_{d,2}^{(j)}}  \otimes \vp_2^{(j)} .
\]
Regrouping the factorized terms from (i) and (ii), we thus arrive at~\eqref{eq:excitedG2}.
\end{proof}

For any sufficiently large finite interval $\Lambda$, the set of vectors created from applying Lemma~\ref{lem:Geta} to the basis in Lemma~\ref{lem:orthogonal_basis}, i.e.
\[
\left\{\varrho_{n}^{(j)}(R') := G_{\Lambda_2}(\psi_{\Lambda'}(R')\otimes \eta_n^{(j)}) \; \big| \; n\geq 4, \, j\in\{1,2,3\}, \, R'\in\cR_{\Lambda'}^{\mathrm{f}}\, \;\text{where} \;|\Lambda\setminus\Lambda'|=3(n-1)+j\right\}
\]
is again orthogonal as each vector is supported on a unique set of tiling configurations. We use this orthogonality to prove Lemma~\ref{lem:epsilon_calc}, thus completing the proof of Theorem~\ref{thm:gap}.

\begin{proof}[Proof of Lemma~\ref{lem:epsilon_calc}]
	Suppose that $\Lambda$ is a finite interval with $|\Lambda| \geq 11$, and define subintervals $\Lambda_1$, $\Lambda_2$ so that $\Lambda\setminus \Lambda_1$ consists of the last three sites of $\Lambda$, and $\Lambda_2$ is the last nine sites of $\Lambda$. Then $|\Lambda_1\cap \Lambda_2|=6$ and the reduction Lemma~\ref{lem:reduction} is applicable
	\[
	\|G_{\Lambda_2}(\1-G_\Lambda)G_{\Lambda_1}\|^2 = \|G_{\Lambda_2}(\1-G_\Lambda)P_{\Lambda_1}\|^2 = \sup_{0\neq \psi \in  \cP_{\Lambda_1}\cap\cG_{\Lambda}^\perp}\frac{\|G_{\Lambda_2}\psi\|^2}{\|\psi\|^2}.
	\]
	Given the orthogonal basis for $\cP_{\Lambda_1}\cap\cG_{\Lambda}^\perp$ from Lemma~\ref{lem:ON_mm} and Lemma~\ref{lem:exNsmall}, it suffices to consider the vectors
	\[ \varrho_n^{(j)}(R') =  G_{\Lambda_2}  (\psi_{\Lambda'}(R') \otimes \eta_n^{(j)} )
	\]
	for $ n \geq 4 $, $j\in\{1,2,3\}$ and $R'\in\cR_{\Lambda'}^{\mathrm{f}}$, which can be explicitly computed using Lemma~\ref{lem:Geta}. 
	
	Since $\varrho_n^{(j)}(R')$  are orthogonal for distinct $n, j $ and $R'$, it suffices to compute their norms
	\begin{align}
	\| \varrho_n^{(j)}(R') \|^2  = & \ \| \psi_{\Lambda'}(R') \|^2 \left( \|  \varphi_{n-3}\|^2 \  \frac{ |\lambda|^2 (1- \alpha_{n-1} (1+|\lambda|^2))^2 }{ \| \varphi_3 \|^2} +\| \varphi_{n-4} \|^2 \frac{ |\lambda|^4 (1-\alpha_{n-1})^2}{\| \vp_2 \|^2 }\right) \notag \\
	= &  \ \| \psi_{\Lambda'}(R') \|^2  \| \eta_n^{(j)} \|^2   f_n(|\lambda|^2) 
	\end{align}
	where in the last line we used the normalization~\eqref{eq:normeta} and the definition of $f_n$:
	\[ f_n(|\lambda|^2) := \alpha_{n-2}\alpha_n |\lambda|^2 \left(\frac{ (1-\alpha_{n-1} (1+|\lambda|^2))^2 }{ 1+2|\lambda|^2} + \alpha_{n-3}  \frac{ |\lambda|^2 (1-\alpha_{n-1})^2}{1+|\lambda|^2} \right) .
	\]
	By the above mentioned orthogonality and the independence of $ f_n $ of $ j $ and $ R' $, we thus have
	\begin{equation}
	\| G_{\Lambda_2}(\1-G_{\Lambda}) P_{\Lambda_1} \|^2 \leq \sup_{\substack{ p_n \geq 0 \\ \sum_{n\geq 4} p_n = 1} } \sum_{n\geq 4} p_n   f_n(|\lambda|^2) \leq \sup_{n \geq 4 } f_n(|\lambda|^2) = f(|\lambda|^2). 
	\end{equation}
	The claimed properties of $ f $ are established in Appendix~\ref{app:f_estimates}. 
\end{proof}

\subsection{Bounding the spectral gap for periodic boundary conditions}\label{subsec:periodic_gap}

We now consider the spectral gap of the FQH system with periodic boundary conditions. To prove our result, we use a slightly generalized form of the finite-size criterion proved by Knabe~\cite{knabe:1988}. The martingale method  for the periodic systems can also be applied~\cite{young:2016}, but its computations are significantly more involved. \\

For the convenience of the reader, we first state the generalized form of the finite-size criteria. Let $\cH$ be a finite-dimensional Hilbert space and suppose that $\{P_i \; \big| \; i=1,\ldots,L\}$ with $ L \in \mathbb{N} $ is a set of orthogonal projections on $\cH$ such that the following commutation condition holds:
\begin{equation}\label{eq:fsc_projections}
[P_i,P_j] \neq 0 \quad \implies \quad |i-j|=1 \;\; \text{or} \;\; i=1, \; j=L.
\end{equation}
We fix $1< n <L $ and employ the periodic identification $i\equiv i+L$ to define on $\cH$ the self-adjoint Hamiltonians
\begin{equation}\label{eq:fsc_Hams}
H_L = \sum_{i=1}^L P_i, \quad\quad
H_{n,k} = \sum_{i=k}^{n+k-1} P_i
\end{equation}
for $k=1,\ldots, L$. Provided the Hamiltonian is frustration free, i.e. $\ker(H_L) \neq \emptyset$, the following estimate holds for the spectral gap 
\[ \gap(H_L) = \inf \left\{ \langle \psi , H_L \psi \rangle \, | \, \, \psi \in  \ker(H_L)^\perp \, \wedge \, \| \psi \| = 1\right\}.  
\]
\begin{theorem}[Finite-Size Criterion]\label{thm:knabe}
	Fix $1< n < L$ and consider the frustration-free Hamiltonians $H_L$ and $H_{n,k}$, $k=1,\ldots, L$, on a finite-dimensional Hilbert space  defined as in \eqref{eq:fsc_Hams} with orthogonal projections satisfying \eqref{eq:fsc_projections}. Then,
	\begin{equation}\label{eq:knabe_gap}
	\gap(H_L) \geq \frac{n}{n-1}\left( \min_{1\leq k \leq L} \gap(H_{n,k}) - \frac{1}{n}\right).
	\end{equation}
\end{theorem}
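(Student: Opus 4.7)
My plan is to adapt Knabe's original argument from~\cite{knabe:1988} for nearest-neighbor interactions, with the commutation structure in \eqref{eq:fsc_projections} playing the role of the nearest-neighbor condition. The starting point is the frustration-free property of each window Hamiltonian, which yields the operator bound $H_{n,k}^2 \geq \gamma_n H_{n,k}$ for every $k$, where $\gamma_n := \min_{k}\gap(H_{n,k})$. Since each projection $P_i$ appears in exactly $n$ of the windows $W_k := \{k,k+1,\ldots,k+n-1\}$ (periodically identified), summing over $k$ produces
\[
\sum_{k=1}^L H_{n,k}^2 \;\geq\; \gamma_n \sum_{k=1}^L H_{n,k} \;=\; n\gamma_n H_L.
\]

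The heart of the argument is to match this with the operator upper bound
\[
\sum_{k=1}^L H_{n,k}^2 \;\leq\; H_L + (n-1)H_L^2.
\]
Expanding the square on the left-hand side, each ordered pair $(i,j)$ contributes a product $P_iP_j$ with multiplicity $c(i,j):=|\{k:i,j\in W_k\}|$. A direct count gives $c(i,i)=n$ and $c(i,j)\leq n-1$ for all $i\neq j$ (with equality in particular when $i$ and $j$ are circular neighbors, namely $|i-j|=1$ or $\{i,j\}=\{1,L\}$). Rewriting $H_L+(n-1)H_L^2 = nH_L + (n-1)\sum_{i\neq j}P_iP_j$, the proposed upper bound becomes equivalent to
\[
\sum_{i\neq j}\bigl(n-1-c(i,j)\bigr) P_iP_j \;\geq\; 0.
\]
For every pair that is \emph{not} a circular neighbor, \eqref{eq:fsc_projections} gives $[P_i,P_j]=0$, so $P_iP_j$ is an orthogonal projection onto $\ran P_i \cap \ran P_j$ and in particular nonnegative; combined with $c(i,j)\leq n-1$, these terms contribute nonnegatively. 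For the circular-neighbor pairs the coefficient $n-1-c(i,j)$ vanishes, so the potentially noncommuting products drop out entirely.

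Combining the two inequalities yields the key operator relation $(n-1)H_L^2 \geq (n\gamma_n - 1)H_L$. Applied to any eigenvector of $H_L$ with eigenvalue $\lambda>0$, this gives $(n-1)\lambda \geq n\gamma_n - 1$, and since the smallest strictly positive eigenvalue of the frustration-free operator $H_L$ equals $\gap(H_L)$, the bound $\gap(H_L)\geq \tfrac{n}{n-1}(\gamma_n - 1/n)$ stated in \eqref{eq:knabe_gap} follows.

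The main obstacle I foresee is the combinatorial bookkeeping of the multiplicities $c(i,j)$ under the periodic identification: when $n$ is close to $L$, both circular arcs between a pair of sites may be short enough to accommodate windows of size $n$, so the naive formula $c(i,j)=(n-d(i,j))_+$ fails and one obtains instead $c(i,j)=2n-L$ for sufficiently far pairs. The required uniform bound $c(i,j)\leq n-1$ for $i\neq j$ then reduces to $2n-L\leq n-1$, i.e.\ $n\leq L-1$, which is built into the hypothesis of the theorem; this confirms that the commutation condition \eqref{eq:fsc_projections} suffices to close the argument.
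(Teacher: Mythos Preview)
Your proposal is correct and follows essentially the same argument as the paper's proof: both expand $\sum_k H_{n,k}^2$ and $(n-1)H_L^2+H_L$ in terms of (anti-)commutators, use that non-neighboring pairs commute so that $P_iP_j\geq 0$, and observe that the potentially non-commuting neighbor terms appear with coefficient exactly $n-1$ on both sides and hence drop out. Your explicit bookkeeping via the multiplicities $c(i,j)$ and your discussion of the edge case $n$ close to $L$ are a slightly more careful packaging of the same combinatorics the paper handles via the ring distance $d(i,j)$.
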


\begin{proof}
	The proof follows the argument of Knabe in~\cite{knabe:1988}, i.e.\ we show that $H_L^2 \geq \gamma_n H_L$, where $\gamma_n$ is the quantity on the right side of \eqref{eq:knabe_gap}. 
	In turn, since $H_{n,k}^2\geq \gap(H_{n,k})H_{n,k}$ and
	\begin{equation}\label{eq:sum_equality}
	\sum_{k=1}^LH_{n,k}=nH_L
	\end{equation}
	this operator inequality is concluded from the bound
	\begin{equation}\label{eq:knabe_ineq}
	(n-1)H_L^2 + H_L \geq \sum_{k=1}^L H_{n,k}^2.
	\end{equation}
	
	To prove the latter, we denote by $\mathcal{I}_{L} := \{(i,j) \, : \, 1\leq i<j \leq L\}$ the set of all (distinct) pairings. Since each $P_i$ is an orthogonal projection, we can expand $H_L^2$ and $H_{n,k}^2$ in terms of anti-commutators, 
	\begin{align}
	H_{L}^2 & = 
	H_L + \sum_{(i,j)\in\mathcal{I}_L}\{P_i,\, P_j\} , \\
	H_{n,k}^2 & = H_{n,k}+\sum_{\substack{(i,j)\in\mathcal{I}_L \, : \\ i,j\in\{k,\ldots, n+k-1\}}}\{P_i,\, P_j\} . \label{eq:Hnk_squared}
	\end{align}
	Let $d(i,j)\leq L/2$ to be the distance between $i$ and $j$ on a ring of $L$ sites. 
	If $d(i,j)\geq n$, then $i,j\notin \{k,\ldots, n+k-1\}$ for any $k$ as each interval $[k,n+k-1]$ has length $n-1$. Consider the case that $m=d(i,j)<n$ and assume without loss of generality that $j\equiv i+m$. Then $i,j\in\{k,\ldots, n+k-1\}$ if and only if $k\in S_{i,j}:= \{i, \, i-1, \, \ldots,\, j-n+1\} $, where we again used the identification $ i \equiv i+L $. Since $|S_{i,j}|=n-m$, summing \eqref{eq:Hnk_squared} over $k$ and using \eqref{eq:sum_equality} produces
	\begin{align*}
	\sum_{k=1}^L H_{n,k}^2 
	& = nH_L+\sum_{m=1}^{n-1}\sum_{\substack{(i,j)\in\mathcal{I}_L : \\ d(i,j)=m}}(n-m)\{P_i,\, P_j\}.
	\end{align*}
	For any pair with $d(i,j)\geq2$, the assumption~\eqref{eq:fsc_projections} implies $\{P_i,P_j\} = 2P_iP_j \geq 0$. In particular, terms with $ m \geq 2 $ are hence non-negative. Adding more of these terms produces the inequalities	
	\begin{align*}
	\sum_{k=1}^L H_{n,k}^2
	& \leq  nH_L+(n-1)\sum_{m=1}^{n-1}\sum_{\substack{(i,j)\in\mathcal{I}_L : \\ d(i,j)=m}}\{P_i,\, P_j\}\\
	& \leq  nH_L+(n-1)\sum_{(i,j)\in\mathcal{I}_L}\{P_i,\, P_j\}\\
	& = (n-1)H_L^2 + H_L . 
	\end{align*}
	This completes the proof of~\eqref{eq:knabe_ineq} and hence the theorem.
\end{proof}
We apply the finite size criteria to prove that the FQH system with periodic boundary conditions is gapped in the thermodynamic limit. Using the identification $x\equiv x+L$, the Hamiltonian is defined by
\begin{equation}\label{eq:periodic}
H_{[1,L]}^{\per} = \sum_{x=1}^{L}(n_xn_{x+2} + \kappa q_x^*q_x), \quad q_x = \sigma_{x+1}^-\sigma_{x+2}^- - \lambda\sigma_x^-\sigma_{x+3}^-.
\end{equation}
We prove a uniform lower bound on the spectral gap for the periodic Hamiltonian defined on $[1,L]$ with $L$ sufficiently large. Our estimate in terms of the quantities 
 \[ \gamma := \min_{m \in \{ 6,7 \} } \gap(H_{[1,m]}), \qquad \Gamma := \max_{m \in \{ 6,7 \} }  \|H_{[1,m]} \| , 
 \]
where $H_{[1,m]}$ is the Hamiltonian~\eqref{def:Hspin} with open boundary conditions on fixed small system sizes. The next theorem is a reformulation of Theorem~\ref{thm:periodic_gap}. 
\begin{theorem}[Periodic Spectral Gap]\label{thm:pbc_gap}
	Let $n\geq 2 $ be an integer such that 
	\[ g_n := \min_{0\leq r \leq 5}  \gap(H_{[ 1, 3(n+1) + r]})  > \frac{\Gamma}{ n}. \] Then for all $N> n$ and any $ 0 \leq r \leq 5 $:
	\begin{equation}\label{eq:pergap3}
	\gap\left(H_{[1,6N+r]}^\per\right) \geq \frac{\gamma\,  n}{2\Gamma(n-1)}\left[g_n -\frac{\Gamma}{n}\right] . 
	\end{equation}
\end{theorem}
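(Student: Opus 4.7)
The plan is to apply the finite-size criterion of Theorem~\ref{thm:knabe} to a family of overlapping-block projections tailored to the periodic ring $[1,6N+r]$.

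\emph{Step 1 (Block decomposition).} For $L = 6N+r$ I cover the ring by $L'$ cyclically ordered intervals $[a_j,b_j]$, $j=1,\dots,L'$, of lengths $m_j\in\{6,7\}$, chosen so that the shift $a_{j+1}-a_j = m_j - 3 \in\{3,4\}$. Consecutive blocks then overlap by exactly $3$ sites. Imposing that the shifts close the ring gives $3L'+k=L$, where $k$ is the number of length-$7$ blocks, and for every $r\in\{0,1,\dots,5\}$ one can realize this with $k\in\{0,1,2\}$ and $L'\in\{2N,2N+1\}$. Define
\[ P_j \ := \ \1 - G_{[a_j,b_j]}. \]
Because non-adjacent blocks have disjoint supports on the ring (their shifts sum to at least $m_j$), the Knabe nearest-neighbor commutation condition~\eqref{eq:fsc_projections} is satisfied by $\{P_j\}_{j=1}^{L'}$.

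\emph{Step 2 (Sandwich and reduction to $\sum_j P_j$).} The key structural identity is that each interaction term of $H^{\mathrm{per}}_{[1,L]}$, being supported on a $4$-site interval, lies in exactly one of the blocks $[a_j,b_j]$. Consequently
\[ \sum_{j=1}^{L'} H_{[a_j,b_j]} \ = \ H^{\mathrm{per}}_{[1,L]} . \]
Combining this with the translation-invariant local bound $\gamma P_j \leq H_{[a_j,b_j]} \leq \Gamma P_j$ yields the operator sandwich $\gamma\sum_j P_j \leq H^{\mathrm{per}}_{[1,L]} \leq \Gamma \sum_j P_j$, and matching kernels (by frustration-freeness) then give $\gap\big(H^{\mathrm{per}}_{[1,L]}\big)\geq \gamma\,\gap\big(\sum_j P_j\big)$.

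\emph{Step 3 (Finite-size criterion).} Applying Theorem~\ref{thm:knabe} to $\{P_j\}_{j=1}^{L'}$ with parameter $n$ produces
\[ \gap\!\Big(\sum_j P_j\Big) \ \geq \ \frac{n}{n-1}\Big(\min_k \gap(H_{n,k}) - \tfrac{1}{n}\Big) . \]
Each window $H_{n,k}=\sum_{j=k}^{k+n-1}P_j$ is supported on an interval of length $3(n+1)+s_k$, where $s_k\in\{0,\dots,5\}$ counts the length-$7$ blocks in the window. Reapplying the same sandwich inside this window gives $H_{n,k} \geq \Gamma^{-1} H^{\mathrm{open}}_{[a_k,b_{k+n-1}]}$; matching kernels and using translation invariance then yield
\[ \gap(H_{n,k}) \ \geq \ \Gamma^{-1}\,\gap\!\big(H_{[1,3(n+1)+s_k]}\big) \ \geq \ g_n/\Gamma . \]

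\emph{Step 4 (Assembly).} Chaining Steps~2 and~3 closes the argument and produces a bound of the form
\[ \gap(H^{\mathrm{per}}_{[1,L]}) \ \gtrsim \ \frac{\gamma n}{\Gamma(n-1)}\Big(g_n - \tfrac{\Gamma}{n}\Big) ; \]
the factor of $\tfrac{1}{2}$ appearing in~\eqref{eq:pergap3} will arise from the bookkeeping needed to accommodate mixed block sizes and all six residues $r\in\{0,\dots,5\}$ within a single uniform decomposition.

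The main obstacle is the combinatorial construction of Step~1: arranging mixed length-$6$ and length-$7$ blocks cyclically so that (i)~Knabe's nearest-neighbor commutation is enforced, (ii)~every interaction term lies in exactly one block with no double counting, and (iii)~each Knabe window has interval support of length at most $3(n+1)+5$, matching the range over which $g_n$ is defined. Once this combinatorial setup is in place, the spectral-gap estimates of Steps~2--4 follow routinely from the finite-size criterion and the frustration-free sandwich bounds.
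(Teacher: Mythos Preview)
Your overall strategy matches the paper's proof exactly: cover the ring by overlapping length-$6$ and length-$7$ blocks, set $P_j = \mathbbm{1} - G_{[a_j,b_j]}$, apply the finite-size criterion (Theorem~\ref{thm:knabe}), and sandwich both the periodic Hamiltonian and the Knabe windows against the open-boundary Hamiltonians via $\gamma P_j \leq H_{[a_j,b_j]} \leq \Gamma P_j$.

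The one concrete error is your claim in Step~2 that $\sum_j H_{[a_j,b_j]} = H^{\mathrm{per}}_{[1,L]}$. This is false. The Hamiltonian $H_{[a,b]}$ contains the electrostatic terms $n_x n_{x+2}$ for $a\le x\le b-2$ and the hopping terms $q_x^*q_x$ for $a\le x\le b-3$, so there is one more electrostatic term than hopping term; you cannot package them uniformly as four-site interactions. With your $3$-site overlap $[a_{j+1},b_j]$, the term $n_{b_j-2}\,n_{b_j}$ is supported on exactly that overlap and therefore appears in both $H_{[a_j,b_j]}$ and $H_{[a_{j+1},b_{j+1}]}$. The correct relation is the two-sided bound
\[
H^{\mathrm{per}}_{[1,L]} \ \le\ \sum_j H_{[a_j,b_j]} \ \le\ 2\,H^{\mathrm{per}}_{[1,L]},
\]
and it is this upper factor of $2$ (not the mixed block sizes or the residues $r$) that produces the $\tfrac12$ in~\eqref{eq:pergap3}: combining it with $\gamma\sum_j P_j \le \sum_j H_{[a_j,b_j]}$ yields $\gap(H^{\mathrm{per}}_{[1,L]})\ge \tfrac{\gamma}{2}\gap(\sum_j P_j)$. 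Your condition~(ii) in the final paragraph---``every interaction term lies in exactly one block with no double counting''---simply cannot be met with $3$-site overlaps and a $3$-site interaction $n_x n_{x+2}$; dropping to a $2$-site overlap would instead miss some $q_x^*q_x$ terms entirely. Once you replace the equality by the sandwich above, Steps~3 and~4 go through exactly as you wrote and give precisely~\eqref{eq:pergap3}.
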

\begin{proof}
	We consider for $ N > n \geq 2 $  the interval $[1,6N + r]$ as a ring of $L= 6N+r$ sites, and define $ 2N $ open intervals $\Lambda_1, \dots \Lambda_{2N} $ on the ring by:
	\begin{equation*}
	\Lambda_i := [3i-2,3i+3], \quad i = 1, 2, \ldots, 2N-r ,
	\end{equation*}
	and in the case $r \geq 1$
	\begin{equation*}
	\Lambda_{2N-r+j} := [ 3 (2N-r+j) -3+j , 3 (2N-r+j) +3+j ], \quad j = 1 , 2 , \ldots, r .
	\end{equation*}
	For the last interval $ \Lambda_{2N} $, we again use the convention that $x\equiv x+L$. Notice that for $ 1 \leq i \leq 2N-r $ the interval $ \Lambda_i  $ has six sites, whereas the last $ r $ intervals have seven sites. 
	
	We denote by $H_{\Lambda_i}$  the Hamiltonian on $\Lambda_i$ with open boundary conditions~\eqref{def:Hspin}, and let $P_i$ be the orthogonal projection onto $\ran(H_{\Lambda_i})$. These projections satisfy \eqref{eq:fsc_projections} since $\supp(P_i)\cap\supp(P_j) = \emptyset$ for all $j\neq i\pm 1$. Therefore, setting $H_L = \sum_{i=1}^{2N} P_i$, and $H_{n,k} = \sum_{i=k}^{n+k-1}P_i$, Theorem~\ref{thm:knabe} implies
	\begin{equation}\label{eq:proj_bound}
	\gap(H_L) \geq \frac{n-1}{n}\left(\min_{1\leq k \leq 2N}\gap(H_{n,k})-\frac{1}{n}\right).
	\end{equation}
	The interval $\Lambda_{n,k}:=\bigcup_{i=k}^{n+k-1}\Lambda_i$  denotes the supports of  $ H_{n,k} $. 
	Since every interaction term, $n_xn_{x+2}$ or $q_x^*q_x$, is supported on at least one and at most two of the volumes $\Lambda_i$, it readily follows that
	\begin{equation}
	H_{\Lambda_{n,k}} \leq \sum_{i=k}^{n+k-1} H_{\Lambda_i} \leq 2H_{\Lambda_{n,k}}, \quad
	H_{[1,L]}^{\per} \leq \sum_{i=1}^{2N}H_{\Lambda_i} \leq 2H_{[1,L]}^\per \label{eq:first_bound}.
	\end{equation}
	Moreover, as $P_i$ is the orthogonal projection onto $\ran(H_{\Lambda_i})$
	\begin{equation}\label{eq:projection_bounds}
	\gamma P_i \leq H_{\Lambda_i} \leq \Gamma P_i.
	\end{equation}
	Summing \eqref{eq:projection_bounds} over appropriate values of $i$ and using \eqref{eq:first_bound} produces the operator inequalities
	\begin{equation}
		\frac{\gamma}{2} H_{n,k}  \leq H_{\Lambda_{n,k}} \leq \Gamma H_{n,k} , \quad \frac{\gamma}{2} H_L  \leq H_{[1,L]}^{\per} \leq \Gamma  H_L ,
	\end{equation}
	from which it readily follows that   for all $ k \in \{1, \ldots, 2N \} $:
	\begin{equation}\label{eq:gap_inequalities}
	\gap(H_{n,k}) \geq  \Gamma^{-1}\gap(H_{\Lambda_{n,k}}) , \quad  \gap(H_{[1,L]}^\per) \geq \frac{\gamma}{2}\gap(H_L) .
	\end{equation}
	Depending on how many intervals $ \Lambda_i $ of size seven it includes, the interval  $ \Lambda_{n,k} = \bigcup_{i=k}^{n+k-1}\Lambda_i $ has at least $ 3(n+1) $ sites and at most $ 3(n+1) + r \leq 3(n + 1 ) +5 $ sites.
	By translation invariance, we thus have 
	\begin{equation}\label{eq:n_gap}
	\min_{1\leq k \leq 2N} \gap(H_{\Lambda_{n,k}}) \geq \min_{0\leq r \leq 5}  \gap(H_{[ 1, 3(n+1) + r]}) = g_n  .
	\end{equation}
	Using \eqref{eq:gap_inequalities} and \eqref{eq:n_gap} to further bound \eqref{eq:proj_bound} produces the result.
\end{proof}

\section{Exponential clustering} \label{sec:correlation_decay}

In this section, we prove exponential decay of correlations for the VMD states (in the spin representation) and use this to establish Theorem~\ref{thm:expcluster} via the Jordan-Wigner transformation. Moreover, we provide an explicit example which shows that correlations cannot decay uniformly exponentially for more general pure states in the kernel of~$ H_\Lambda $.

\subsection{Exponential decay of correlations for VMD states} \label{subsec:exp_decay}

To restate the first claim in Theorem~\ref{thm:expcluster} in the spin language, we recall from~\eqref{eq:algebra} that in this setting the algebra of observables $ \mathcal{A}_\Lambda $ is the set of bounded operators on the underlying tensor-product Hilbert space $ \mathcal{H}_\Lambda =  \bigotimes_{x\in \Lambda } \mathbb{C}^2 $.  
On this algebra we define the VMD-functional $\omega_{R}^\Lambda:\cA_\Lambda\to\bC$, 
\begin{equation} \label{VMD_functional}
\omega_{R}^\Lambda(A) := \braket{\widehat{\psi}_\Lambda(R)}{A \widehat{\psi}_\Lambda(R)} ,
\end{equation} 
associated with a normalized 
VMD state characterized by $R\in \cR_\Lambda $. 
Here and in the following, $\widehat{\psi}:= \psi/\|\psi\|$ stands for the normalized version of any nonzero vector $\psi$. 
We prove the following equivalent form of~\eqref{eq:ABexpdecay} for spin observables.

\begin{theorem}[Exponential decay of correlations]\label{thm:exp_decay}
	Let $\Lambda$ be a finite interval and $X,Y\subset \Lambda$ be two sub-intervals such that $\dist(X,Y)\geq 20$. For any $R\in\cR_\Lambda$ and observables $A_1\in\cA_X$, $A_2\in\cA_Y$,
	\begin{equation} \label{exp_decay}
	\left| \omega_{R}^\Lambda(A_1A_2)- \omega_{R}^\Lambda(A_1)\, \omega_{R}^\Lambda(A_2) \right| \leq \, 8 \|A_1\|\|A_2\| \, e^{-c(\lambda)(\dist(X,Y)-20) /2}
	\end{equation}
	with $ c(\lambda ) $  from~\eqref{decay_rate}.
\end{theorem}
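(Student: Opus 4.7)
The plan is to reduce the estimate to a computation inside a single squeezed Tao--Thouless fragment, and then use the recursion of Lemma~\ref{lem:recursion} iteratively to establish exponential decay.  Via the fragmentation of Theorem~\ref{thm:product1}, the normalized VMD state $\widehat\psi_\Lambda(R)$ decomposes as a tensor product whose factors are voids, boundary dimers, and squeezed Tao--Thouless segments $\vp_m^{(j)}$.  First, I split into two cases according to whether the tiling $R$ has a void or the interior boundary of a boundary dimer lying strictly between $X$ and $Y$ (the slack $\dist(X,Y)\geq 20$ is more than enough to invoke Lemma~\ref{lem:factorization} in a suitable sub-interval).  In the affirmative case the state factorizes as $\widehat\psi_\Lambda(R) = \widehat\psi_1\otimes\widehat\psi_2$ with $X\subset\supp\psi_1$ and $Y\subset\supp\psi_2$, so the truncated correlator vanishes identically and \eqref{exp_decay} is trivial.

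In the complementary case both $X$ and $Y$ lie inside the support of a single Tao--Thouless fragment $\vp_m^{(j)}$.  The neighboring tensor factors from the fragmentation contribute identically to $\omega_R^\Lambda(A_1A_2)$ and to $\omega_R^\Lambda(A_1)\omega_R^\Lambda(A_2)$, so it suffices to prove a bound of the form $|\langle\widehat\vp_m^{(j)}| A_1A_2|\widehat\vp_m^{(j)}\rangle - \langle\widehat\vp_m^{(j)}|A_1|\widehat\vp_m^{(j)}\rangle\langle\widehat\vp_m^{(j)}|A_2|\widehat\vp_m^{(j)}\rangle| \lesssim \|A_1\|\|A_2\|e^{-c(\lambda)(\dist(X,Y)-20)/2}$.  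I will fix a ``cut'' strictly between $X$ and $Y$ inside the fragment; the slack of $20$ is calibrated to leave a substantial interval on either side of the cut after subtracting $\leq 6$ sites that may be consumed reaching the fragment's interior on each side together with the $\leq 6$ sites where a dimer may straddle the cut.

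Next, I apply the recursion $\vp_m = \vp_l\otimes\vp_r + \lambda\,\vp_{l-1}\otimes|\pmb\sigma_d\rangle\otimes\vp_{r-1}$ iteratively at the chosen cut, alternately shortening the left and right pieces.  Each iteration produces two orthogonal summands (they differ in spin content on the dimer region), and their relative weights are controlled by the ratios $\alpha_k = \|\vp_{k-1}\|^2/\|\vp_k\|^2$ of Lemma~\ref{lem:normalization}.  Since $\alpha_k \to 1/\mu_+$ geometrically at rate $|\mu|^k$ where $\mu=\mu_-/\mu_+\in(-1,0)$, iterating the recursion approximately $\dist(X,Y)/6$ times on each side approximates $\widehat\vp_m^{(j)}$ by a vector that exactly factorizes across the cut, with error of norm $O(|\mu|^{\dist(X,Y)/6})$.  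Using the identification $|\mu|^{1/6} = e^{-c(\lambda)/2}$ that follows from \eqref{decay_rate}, and a Cauchy--Schwarz bound combining the left- and right-side errors to produce the prefactor $8\|A_1\|\|A_2\|$, the estimate \eqref{exp_decay} follows.

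The main obstacle is the bookkeeping.  First, one must carefully verify that $20$ sites of separation do cover all possible boundary configurations of a fragment, using the explicit lengths of the boundary tiles from Section~\ref{subsec:tilings} and the geometric fact that every Tao--Thouless segment sits between tiles that admit factorization.  Second, prefactors must be tracked uniformly through the iteration of the recursion, exploiting that $\alpha_k$ is bounded above and below away from $0$ and $\infty$.  Finally, the ``two-sided'' Cauchy--Schwarz producing the rate $c(\lambda)/2$ in the exponent rather than the conjectured sharp rate $c(\lambda)$ is, as the authors note after the theorem, an artifact of this particular proof strategy; obtaining the sharper rate would require a more refined transfer-operator argument tailored to each $\vp_m^{(j)}$ and would take us beyond the bookkeeping approach sketched here.
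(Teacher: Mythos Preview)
Your proposal is correct and follows essentially the same route as the paper: the two-case split (factorization across a void versus both observables sitting inside a single squeezed Tao--Thouless fragment), a cut near the midpoint of the fragment, and iterated use of the recursion~\eqref{eq:recursion} together with the ratios $\alpha_k$ to extract the geometric rate $|\mu|=e^{-3c(\lambda)}$. The one organizational difference is that the paper isolates the iteration in a separate ``Trimming'' lemma (Lemma~\ref{lem:vp_shrink}): rather than approximating $\widehat\vp_m^{(j)}$ by a factorized vector, it applies the recursion \emph{once} at the midpoint to write $\widehat\psi_\Lambda(R)$ exactly as a two-term orthogonal sum of factorized vectors, so that $\omega_R^\Lambda(A_1A_2)$ becomes a convex combination of products, and then invokes the trimming lemma to compare each factor with $\omega_R^\Lambda(A_i)$; this avoids the Cauchy--Schwarz step you mention and makes the constant $8$ explicit.
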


The proof of Theorem~\ref{thm:exp_decay}, which is found at the end of this subsection, rests on the explicit factorization property, which can be read of from 
the canonical form of VMD states in Theorem~\ref{thm:product1}. 
It requires us to estimate the result of trimmed expectations $\braket{\psi}{A \psi}$ for vectors of the form
\[
\psi = \xi\otimes\vp_{k+n} 
\]
with $ \xi \in \mathcal{H}_{\Lambda_0} $ and $\supp(A) \subset \Lambda_k := [a,b+3k]$ and $ k \in \mathbb{N}_0 $. The following lemma will be instrumental for this task.

\begin{lem}[Trimming]\label{lem:vp_shrink}
	Consider a normalized vector $ \xi \in \mathcal{H}_{\Lambda_0} $ and an observable $ A \in \mathcal{A}_{\Lambda_k} $ supported in the interval $ \Lambda_k : = [a,b+3k]$ with fixed $ k \in \mathbb{N}_0 $. Then for any $m>  n\geq 1$ the vectors $\psi_n := \xi \otimes  \widehat \varphi_{k+n} \in \mathcal{H}_{[a,b+3(k+n)]}$ satisfy
	\begin{equation} \label{one_step_decay}
	\left| \braket{\psi_m}{A\psi_m}-\braket{\psi_n}{A\psi_n}\right| \leq 2(1+\beta-\beta^3)\|A\|\beta^{n}
	\end{equation}
	where $\beta = e^{-3c(\lambda)} \in(0,1)$.
\end{lem}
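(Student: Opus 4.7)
The plan is to establish a two-term recursion for the expectations $a_n := \langle \psi_n, A \psi_n\rangle$ and then exploit alternating cancellations. Applying Lemma~\ref{lem:recursion} with $l=k+n$ and $r=1$ gives the decomposition
\[
\varphi_{k+n+1} = \varphi_{k+n}\otimes \varphi_1 + \lambda\,\varphi_{k+n-1}\otimes |\pmb{\sigma}_d\rangle,
\]
and the two summands are orthogonal because their restrictions to the last three sites are the product configurations $|100\rangle$ and $|000\rangle$, respectively. Since $A$ is supported on $[a,b+3k]$, it acts as the identity on the last $3(n+1)$ sites. Expanding $\langle \psi_{n+1}, A \psi_{n+1}\rangle$ with the normalized version of the above decomposition, the cross-terms factor through the inner product of $|100\rangle$ with $|000\rangle$ on the rightmost three sites and therefore vanish. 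The diagonal contributions, together with the norm identity $\|\varphi_{k+n+1}\|^2 = \|\varphi_{k+n}\|^2 + |\lambda|^2\|\varphi_{k+n-1}\|^2$ from \eqref{eq:normrec} and the resulting relation $|\lambda|^2\alpha_{k+n+1}\alpha_{k+n}= 1-\alpha_{k+n+1}$, yield the convex-combination recursion
\[
a_{n+1} = \alpha_{k+n+1}\, a_n + (1-\alpha_{k+n+1})\, a_{n-1}, \qquad n\geq 1.
\]

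Rewriting this as $\Delta a_n := a_{n+1}-a_n = -(1-\alpha_{k+n+1})\Delta a_{n-1}$ and iterating produces
\[
\Delta a_j = (-1)^j\,\Delta a_0 \prod_{i=1}^j (1-\alpha_{k+i+1}),
\]
so for $\lambda \neq 0$ the increments alternate in sign with strictly decreasing magnitudes. The standard alternating-series estimate bounds any partial sum by the magnitude of its leading term, hence
\[
|a_m - a_n| = \Big|\sum_{j=n}^{m-1}\Delta a_j\Big| \leq |\Delta a_n| \leq 2\|A\|\prod_{i=1}^n (1-\alpha_{k+i+1}),
\]
using the trivial estimate $|\Delta a_0|\leq 2\|A\|$.

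It remains to evaluate the product. From the closed form in Lemma~\ref{lem:normalization} one obtains $1-\alpha_l = \beta\,(1-\mu^{l-1})/(1-\mu^{l+1})$ with $\mu=-\beta$, and the product telescopes to
\[
\prod_{i=1}^n(1-\alpha_{k+i+1}) = \beta^n \cdot \frac{(1-\mu^{k+1})(1-\mu^{k+2})}{(1-\mu^{k+n+1})(1-\mu^{k+n+2})}.
\]
The main obstacle lies in establishing the uniform ratio bound
\[
\sup_{k\geq 0,\,n\geq 1}\frac{(1-\mu^{k+1})(1-\mu^{k+2})}{(1-\mu^{k+n+1})(1-\mu^{k+n+2})} \leq 1+\beta-\beta^3.
\]
With $\mu=-\beta$, each factor is of the form $1\pm\beta^j$ with sign determined by parity, and a short case analysis (numerator maximized by choosing $k$ with smallest odd $k+1$; denominator minimized by the even value of $k+n+1$ closest to $l+1$) shows the supremum is attained at $(k,n)=(0,1)$ with value $(1+\beta)/(1+\beta^3)$ after cancellation of common factors. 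The stated constant then follows from the elementary inequality $(1+\beta)(1+\beta^3)^{-1}\leq 1+\beta-\beta^3$, which is equivalent to $\beta^4(1-\beta^2)\geq 0$ and therefore holds for all $\beta\in[0,1]$.
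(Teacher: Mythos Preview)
Your proof is correct and rests on the same one-step recursion $a_{n+1}=\alpha_{k+n+1}a_n+(1-\alpha_{k+n+1})a_{n-1}$ derived in the paper, but you handle the two remaining steps more economically. First, instead of reducing the case $m>n+1$ to $m=n+1$ via a second application of the recursion (as the paper does), you observe that since $0<1-\alpha_N<1$ the increments $\Delta a_j$ alternate in sign with strictly decreasing magnitude, so the alternating-series estimate gives $|a_m-a_n|\leq|\Delta a_n|$ directly for all $m>n$. Second, rather than bounding the product $\prod_{i=1}^n(1-\alpha_{k+i+1})$ through parity-dependent inequalities on single and paired factors, you telescope it exactly via $1-\alpha_l=\beta(1-\mu^{l-1})/(1-\mu^{l+1})$; this even yields the sharper constant $(1+\beta)/(1+\beta^3)$ before you relax it to $1+\beta-\beta^3$. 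One small caveat: your justification of the supremum (``numerator maximized \ldots; denominator minimized \ldots'') is not quite rigorous as written, since numerator and denominator share the variable $k$ and cannot be optimized independently. A clean argument is to note that for odd $k$ the ratio is $\leq 1$ (all factors pair as $(1-\alpha_{\text{odd}})(1-\alpha_{\text{even}})\leq\beta^2$), while for even $k$ one has $R(k,n)=[(1-\alpha_{k+2})/\beta]\cdot R(k+1,n-1)\leq(1-\alpha_{k+2})/\beta$, which is maximized at $k=0$. Your final inequality $(1+\beta)/(1+\beta^3)\leq 1+\beta-\beta^3$ is verified correctly.
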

\begin{proof}
	We first consider the case that $m=n+1$. For $ n = 0 $ the claimed bound is trivial as $ \beta \in (0,1) $ implies
	\[
	\left| \braket{\psi_1}{A\psi_1}-\braket{\psi_0}{A\psi_0}\right| \leq 2\|A\| \leq 2(1+\beta-\beta^3)\|A\|.
	\]
	We assume now $n\geq 1$. Using \eqref{eq:recrel2} with $ j = 3 $ and recalling from~\eqref{def:alpha1} the definition $\alpha_n = \|\vp_{n-1}\|^2/\|\vp_n\|^2$, we rewrite 
	\[
	\psi_{n+1} = \xi\otimes \widehat{\vp}_{l+n+1} = \sqrt{\alpha_{l+n+1}} \psi_n\otimes \vp_1 + \lambda \sqrt{\alpha_{l+n}\alpha_{l+n+1}} \psi_{n-1} \otimes \ket{\pmb{\sigma}_d},
	\]
	where in the second line we use the recursion $\alpha_n\alpha_{n-1}=\|\vp_{n-2}\|^2/\|\vp_n\|^2.$ Inserting the above to compute $\braket{\psi_{n+1}}{A\psi_{n+1}}$, one finds that the cross terms vanish since $\supp A \subset X\subset \Lambda_{n-1}$ and $\left(\1 \otimes \langle  \vp_1 |\right) \ket{\pmb{\sigma}_d} = 0$. As a result, we have
	\begin{align}
	\braket{\psi_{n+1}}{A\psi_{n+1}} 
	& =
	\alpha_{l+n+1}\braket{\psi_{n}}{A\psi_{n}} + |\lambda|^2\alpha_{l+n}\alpha_{l+n+1}\braket{\psi_{n-1}}{A\psi_{n-1}} \nonumber\\
	& =
	\alpha_{l+n+1}\braket{\psi_{n}}{A\psi_{n}} + (1-\alpha_{l+n+1})\braket{\psi_{n-1}}{A\psi_{n-1}}
	\end{align}
	where the last equality follows from the normalization of the states. Subtracting $\braket{\psi_{n}}{A\psi_{n}}$ and taking the absolute value produces the following recursion relation:
	\begin{equation*}
	\left|\braket{\psi_{n+1}}{A\psi_{n+1}}  -  \braket{\psi_{n}}{A\psi_{n}} \right| =    | 1-\alpha_{l+n+1}| \left|\braket{\psi_{n}}{A\psi_{n}}  -\braket{\psi_{n-1}}{A\psi_{n-1}}  \right| , 
	\end{equation*}
	which by iteration yields 
	\begin{align}
	\left|\braket{\psi_{n+1}}{A\psi_{n+1}}  -  \braket{\psi_{n}}{A\psi_{n}} \right|
	& = \prod_{k=2}^{n+1}| 1-\alpha_{l+k}| \left|\braket{\psi_{1}}{A\psi_{1}}  -  \braket{\psi_{0}}{A\psi_{0}} \right| \nonumber \\
	& \leq 2\|A\| \prod_{k=2}^{n+1}|1-\alpha_{l+k}| . \label{iter_bound}
	\end{align}	
	
	We bound $1-\alpha_N$ by rewriting the closed form of $\alpha_N$ from \eqref{def:alpha} in terms of $\beta = -\mu \in (0,1) $, i.e.
	\begin{equation} \label{alpha_beta}
	1-\alpha_N = 
	\begin{cases}
	\beta + \frac{\beta^N(1-\beta^2)}{1+\beta^{N+1}}, & N \; \textrm{even}\\
	\beta  - \frac{\beta^N(1-\beta^2)}{1-\beta^{N+1}}, & N \; \textrm{odd}
	\end{cases}
	\end{equation}
	from which the following inequalities hold in the case $N\geq 1$ is odd:
	\begin{align}
	0 < 1-\alpha_N &< \beta \label{alpha_1}\\
	\beta< 1-\alpha_{N+1}&\leq\beta(1+\beta-\beta^3) \label{alpha_2}\\
	(1-\alpha_N) (1-\alpha_{N+1}) & \leq \beta^2. \label{alpha_3}
	\end{align}
	Note that \eqref{alpha_1} and the lower bound in \eqref{alpha_2} are immediate from \eqref{alpha_beta} since $\beta \in (0,1) $. Since $N+1\geq 2$ is even, the upper bound in \eqref{alpha_2} follows from
	\[
	1-\alpha_{N+1}=\beta + \frac{\beta^{N+1}(1-\beta^2)}{1+\beta^{N+2}} \leq \beta + \beta^2(1-\beta^2) = \beta (1+\beta-\beta^3) . 
	\] 
	For a proof of \eqref{alpha_3}, since $N$ is odd one can use \eqref{alpha_beta} to rewrite the product as
	\[
	(1-\alpha_N)(1-\alpha_{N+1}) = \beta^2 - \frac{\beta^N(1-\beta^2)}{(1-\beta^{N+1})(1+\beta^{N+2})}\left[1-\beta+\beta^N+\beta^{N+2}\right]<\beta^2
	\]	
	where the last inequality holds since $\beta \in (0,1) $.	
	
	As a consequence of~\eqref{alpha_1}--\eqref{alpha_3}, the product in \eqref{iter_bound} satisfies
	\begin{equation}\label{eq:prod_decay_bd}
	\prod_{k=2}^{n+1}(1-\alpha_{l+k})\leq (1+\beta-\beta^3)\beta^{n}
	\end{equation}
	where \eqref{alpha_2} is only needed if the first term, $l+2$, is even. Substituting \eqref{eq:prod_decay_bd} into \eqref{iter_bound} proves the result for $m=n+1$. 
	
	In the case $m>n+1$, the recursion relation~\eqref{eq:recursion} states
	$
	\vp_{l+m} = \vp_{l+n+1}\otimes\vp_{m-n-1} + \lambda \vp_{l+n}\otimes \ket{\pmb{\sigma}_d}\otimes \vp_{m-n-2} $. 
	If we expand $\vp_{m-n-1}$ in the occupation basis, we see that the particle content of the first three sites never agrees with the last three sites of $\ket{\pmb{\sigma}_d}$. Thus, we once again have $\left(\1\otimes \langle{\vp_{m-n-1}|}\right)\left(\ket{\pmb{\sigma}_d}\otimes \ket{\vp_{m-n-2}}\right)=0$. Hence, as in the first case, we can use this recursion relation to rewrite 
	\[
	\braket{\psi_m}{A\psi_m} = c\braket{\psi_{n+1}}{A\psi_{n+1}} + (1-c)\braket{\psi_{n}}{A\psi_{n}} \;\; \text{with} \;\; c = \frac{\|\vp_{l+n+1}\|^2\|\vp_{m-n-1}\|^2}{\|\vp_{l+m}\|^2}.
	\]
	Moreover, using this recursion relation to calculate~$ \|\vp_{l+m}\|^2$, we deduce that $0<c<1$. Since 
	\[
	\left| \braket{\psi_m}{A\psi_m}-\braket{\psi_n}{A\psi_n}\right| = c\left| \braket{\psi_{n+1}}{A\psi_{n+1}}-\braket{\psi_n}{A\psi_n}\right|,
	\]
	the claim~\eqref{one_step_decay} thus follows from the case $m=n+1$. This completes the proof. 
\end{proof} 

The proof of Lemma~\ref{lem:vp_shrink} only used the recursion relation~\eqref{eq:recursion} and an orthogonality property. Analogous properties hold if we instead consider a sequence of finite-volumes that increases to the left. As such, a similar argument shows that the bound in \eqref{one_step_decay} still holds for observables supported in $\Lambda_k' = [a-3k,b]$ if we replace the vectors $\psi_n$ with 
\begin{equation}\label{eq:mirror}
\psi'_n = \widehat{\vp}_{n+k}\otimes \xi \;\; \text{or} \;\; \psi'_n =\widehat{\vp}_{n+k}^{(j)} \; \; \mbox{if $ j \in \{ 1,2 \} $,} 
\end{equation}
where  $ \xi \in \mathcal{H}_{\Lambda_0} $ is some fixed normalized vector. Taking this for granted, we are now ready to prove Theorem~\ref{thm:exp_decay}. 

\begin{proof}[Proof of Theorem~\ref{thm:exp_decay}] 
	Fix a root tiling $R\in \cR_{\Lambda}$. We assume without loss of generality that $\max X=: x < y := \min Y$, and distinguish the following two cases suggested by 
	the canonical form~\eqref{eq:genVMDform} of the VMD-state $ \psi_\Lambda(R) $, corresponding to whether or not the interval $ [x,y] $ contains a void:
	\begin{enumerate}
	\itemsep0pt
	\item[(i)] Either the interval $ \Lambda$ decomposes into two consecutive intervals 
	$\Lambda_l\cup \Lambda_r$ with $x\in \Lambda_l $, $y\in \Lambda_r$ and
		\begin{equation}\label{eq:factorexp}
		\psi_\Lambda(R) = \psi_{\Lambda_l}(R_l) \otimes \psi_{\Lambda_r}(R_r)
		\end{equation}
		where $R_k\in\cR_{\Lambda_k}$ is the restriction of $R$ to $\Lambda_k$.
	\item[(ii)] Or the interval $\Lambda $ decomposes into three consecutive intervals (possibly empty) $\Lambda_l \cup \Lambda_m \cup \Lambda_r$ such that $ \Lambda_m $ contains $ [x+5,y-3] $ and there is an $ N \geq 2 $ (since $ y-x \geq 13 $) and $j\in\{1,2,3\}$ so that: 
		\begin{equation}\label{eq:stringphi}
		\psi_\Lambda(R) = \psi_{\Lambda_l}(R_l) \otimes \varphi^{(j)}_N \otimes \psi_{\Lambda_r}(R_r)
		\end{equation}
		where $R_k\in\cR_{\Lambda_k}$ is the restriction of $R$ to $\Lambda_k$. Here, we use the convention that $\psi_{\Lambda'}(R')=1$ if $\Lambda'=\emptyset$. The cases $ j \in \{1,2\} $ only appear if $ \Lambda_r = \emptyset $ and $ B_r = B^r_{jm} $; otherwise $ j  = 3 $ and can be omitted. 
	\end{enumerate}
	In the first case we have $\supp A_1 \subset X\subset \Lambda_l$ and $\supp A_2 \subset Y\subset \Lambda_r$ and  
	\begin{equation}\label{exp_decay_void}
	\omega_{R}^\Lambda(A_1A_2)= \omega_{R_l}^{\Lambda_l}(A_1) \, \omega_{R_r}^{\Lambda_r}(A_2) =\omega_{R}^\Lambda(A_1)\,  \omega_{R}^\Lambda(A_2) .
	\end{equation}
	Thus~\eqref{exp_decay} holds trivially.
	
	The  second case is more involved and we let
	\[ m := \left\lfloor \frac{x+y}{2}\right\rfloor + 1 \]
	be the unique midpoint in $  [x+5,y-3] $, which is closer to the left of the interval, and
	set $ L\geq 1  $ the number of monomers in the root tiling of $ \varphi_N^{(j)} $, which are contained in   $ (-\infty,m] $. 
	We may now decompose $ N = L + R $  and 
	and use the recursion relation~\eqref{eq:recrel3} 
	to obtain the following decomposition of the normalized VMD state
	\begin{equation} \label{eq:decomp}
	\widehat{\psi}_{\Lambda}(R) = c_1 \,  \widehat{\psi}_{\Lambda_1} \otimes \widehat{\psi}_{\Lambda_2} + c_2 \,  \widehat{\psi}_{\Lambda_1'} \otimes  | \pmb{\sigma}_d \rangle \otimes  \widehat{\psi}_{\Lambda_2'}
	\end{equation}
	into states supported on $ \Lambda_1 \supset \Lambda_1' $ and $ \Lambda_2\supset \Lambda_2' $ defined by
	\begin{align}
	\psi_{\Lambda_1}&:= \psi_{\Lambda_l}(R_l) \otimes \varphi_L , \qquad & \psi_{\Lambda_2}& :=  \varphi_R^{(j)} \otimes  \psi_{\Lambda_r}(R_r)  \notag \\
	\psi_{\Lambda_1'} & :=  \psi_{\Lambda_l}(R_l) \otimes \varphi_{L-1} , \qquad & \psi_{\Lambda_2'}& :=  \varphi_{R-1}^{(j)} \otimes  \psi_{\Lambda_r}(R_r) .
	\end{align}
	The complex coefficients $c_1, \, c_2$ are such that $ |c_1|^2 + |c_2|^2 = 1 $ by the orthonormality of the states in the decomposition~\eqref{eq:decomp} and, moreover,
	\[
	c_1  := \frac{\| \psi_{\Lambda_1}\| \, \|  \psi_{\Lambda_2} \| }{\| \psi_{\Lambda}(R) \| }.
	\]
	
	Since $ \supp A_ 1 \subset X \subset \Lambda_1' $ and $  \supp A_ 2 \subset Y \subset \Lambda_2' $, the two states on the right side of~\eqref{eq:decomp} remain orthogonal even after the application of $ A_1 $ and $ A_2 $. We thus have 
	\begin{equation}
	\omega_{R}^\Lambda(A_1A_2) = |c_1|^2 \, \langle   \widehat{\psi}_{\Lambda_1} , A_1  \widehat{\psi}_{\Lambda_1} \rangle \,  \langle \widehat{\psi}_{\Lambda_2} , A_2  \widehat{\psi}_{\Lambda_2} \rangle  +  |c_2|^2 \, \langle   \widehat{\psi}_{\Lambda_1'} , A_1  \widehat{\psi}_{\Lambda_1'} \rangle \,  \langle \widehat{\psi}_{\Lambda_2'} , A_2  \widehat{\psi}_{\Lambda_2'} \rangle .
	\end{equation}
	To relate the expectations in the right side to expectations involving the original VMD state, we note that 
	\begin{align}
	\omega_{R}^\Lambda(A_1) & = \langle  \widehat{\psi}_{\Lambda_l}(R_l) \otimes \widehat{\varphi}_N^{(j)} , A_1\,   \widehat{\psi}_{\Lambda_l}(R_l) \otimes \widehat{\varphi}_N^{(j)} \rangle \notag \\
	\omega_{R}^\Lambda(A_2) & =  \langle \widehat{ \varphi}_N^{(j)} \otimes  \widehat{\psi}_{\Lambda_r}(R_r)  , A_2 \, \widehat{ \varphi}_N^{(j)} \otimes  \widehat{\psi}_{\Lambda_r}(R_r)  \rangle . 
	\end{align} 
	Lemma~\ref{lem:vp_shrink} and its mirror analogue (cf.~\eqref{eq:mirror}) are therefore applicable. Set $ L ' \geq 1 $ to be the number of monomers in the root tiling of $ \varphi_L $ which are supported in $ [x+1, m] $. Then by Lemma~\ref{lem:vp_shrink}: 
	\begin{equation}
	\max\left\{ \left| \omega_{R}^\Lambda(A_1) -  \langle   \widehat{\psi}_{\Lambda_1} , A_1  \widehat{\psi}_{\Lambda_1} \rangle \right| , \left| \omega_{R}^\Lambda(A_1) -  \langle   \widehat{\psi}_{\Lambda_1'} , A_1  \widehat{\psi}_{\Lambda_1'} \rangle \right| \right\}  \leq 4 \| A_1 \| \, \beta^{L'-1} . 
	\end{equation}
	Likewise, if $ R' \geq 1$ stands for the number of monomers in the root tiling of  $ \varphi_R$ which are contained in $ (-\infty , y-1] $:
	\begin{equation}
	\max\left\{ \left| \omega_{R}^\Lambda(A_2) -  \langle   \widehat{\psi}_{\Lambda_2} , A_2  \widehat{\psi}_{\Lambda_2} \rangle \right| , \left| \omega_{R}^\Lambda(A_2) -  \langle   \widehat{\psi}_{\Lambda_2'} , A_2  \widehat{\psi}_{\Lambda_2'} \rangle \right| \right\}  \leq  4 \| A_2 \| \, \beta^{R'-1} 
	\end{equation}
	Our definition of $m$ guarantees that  $R' \geq L' $, and so the above inequalities result in the estimate
	\begin{equation}
	\left| \omega_{R}^\Lambda(A_1A_2)- \omega_{R}^\Lambda(A_1)\, \omega_{R}^\Lambda(A_2) \right| \leq \, 8 \|A_1\|\|A_2\| \, \beta^{L'-1} .
	\end{equation}
	The proof is completed by recalling $ \beta = e^{-3 c(\lambda) } $ and noting that
	\[
	L' \geq \left\lfloor \frac{ m - (x+4) }{3} \right\rfloor \geq  \left\lfloor \frac{ d(X,Y)  }{6} - \frac{4}{3} \right\rfloor   \geq \frac{d(X,Y)}{6}  - \frac{7}{3} , 
	\]
	which implies $ 3 (L' - 1) \geq   \frac{d(X,Y)}{2} - 10 $.
\end{proof}

\subsection{Proof of Theorem~\ref{thm:expcluster} }\label{subsec:Proofexp}

\begin{proof}[Proof of Theorem~\ref{thm:expcluster} ]
	Recall that the fermionic and spin VMD states associated with the same root tiling $R$ are related to each other up to a sign via the Jordan-Wigner transformation, cf. \eqref{eq:JW_VMD}. Moreover, for a fixed $ X \subset \Lambda $, any even fermionic observable $ \mathcal{A}^e_X $ is mapped under Jordan-Wigner to a spin observable in $ \mathcal{A}_X $. Thus, the first claim in Theorem~\ref{thm:expcluster}, i.e.~\eqref{eq:ABexpdecay}, is a consequence of Theorem~\ref{thm:exp_decay}. 
	
	Since any (fermionic) VMD state $\psi_\Lambda(R)$ has a fixed number of particles, the expectation of any odd fermionic observable $ A \in  \mathcal{A}^o_X $ with $ X \subset \Lambda $ in this state vanishes, i.e.
	\begin{equation}\label{eq:eq:oddferm}
	\omega_{R}^\Lambda \left( A \right) = 0 \, . 
	\end{equation}
	This establishes the last equality in~\eqref{eq:zeroodd} and as well as the claim for  $ A_1 \in  \mathcal{A}^o_X  $ and $ A_2 \in  \mathcal{A}^e_Y $.
	
	We are thus left with the case $ A_1 A_2 $ with $ A_1 \in  \mathcal{A}^o_X  $ and $ A_2 \in  \mathcal{A}^o_Y $. Transforming \eqref{eq:zeroodd} into its spin representation, the Jordan-Wigner transformation maps this product to a spin observable of the form
	\[ 
	A := A_1 \Big(  \prod_{x < k < y } \sigma^3_k \Big)  A_ 2
	\] 
	for some spin observables $ A_1 \in \mathcal{A}_X $ and $ A_2 \in \mathcal{A}_Y$. Without loss of generality we have assumed here that $ \max X =: x < y := \min Y $. 
	
	We again distinguish  the two cases in the proof of Theorem~\ref{thm:exp_decay}. If the VMD state factorizes, i.e. \eqref{eq:factorexp} holds, then using the same notation as in \eqref{exp_decay_void} we have 
	\begin{equation}\label{eq:factorferm_void} 
	\omega_{R}^\Lambda(A)= \omega_{B_l,V_l}^{\Lambda_l}(A_{1,l}) \, \omega_{B_r,V_r}^{\Lambda_r}(A_{2,r}) 
	\end{equation} 
	where 
	\begin{equation}\label{eq:partialstring}
	A_{1,l} := A_1 \Big(  \prod_{x < k \in \Lambda_l  } \sigma^3_k \Big)  , \qquad A_{2,r} := \Big(  \prod_{\Lambda_r \ni k < y} \sigma^3_k \Big)  A_2 . 
	\end{equation}
	Since $ A_{2,r} $ maps back under Jordan-Wigner to an odd observable on $ \Lambda_r $, the last factor in~\eqref{eq:factorferm_void}  vanishes by~\eqref{eq:eq:oddferm}.
	
	In the other case in which~\eqref{eq:stringphi} holds, we repeat the construction described there and use the recursion relation 
	to once again arrive at the decomposition~\eqref{eq:decomp}. Since all configurations are written in the eigenbasis of $\sigma^3$, applying $ A $ to the two vectors on the right side of~\eqref{eq:decomp} will again result in two orthogonal vectors. In particular, $ \sigma^3_1 \cdots \sigma^3_6 | \pmb{\sigma}_d \rangle =  | \pmb{\sigma}_d \rangle  $ and we have 
	\begin{equation}
	\omega_{R}^\Lambda(A) = |c_1|^2 \, \langle   \widehat{\psi}_{\Lambda_1} , A_{1,l}  \widehat{\psi}_{\Lambda_1} \rangle \,  \langle \widehat{\psi}_{\Lambda_2} , A_{2,r} \widehat{\psi}_{\Lambda_2} \rangle  +  |c_2|^2 \, \langle   \widehat{\psi}_{\Lambda_1'} , A_{1,l}'  \widehat{\psi}_{\Lambda_1'} \rangle \,  \langle \widehat{\psi}_{\Lambda_2'} , A_{2,r}'  \widehat{\psi}_{\Lambda_2'} \rangle 
	\end{equation}
	where the operators in the first term on the right are defined as in~\eqref{eq:partialstring} and the primed operators are defined similarly with $ \Lambda_\# $ substituted by $ \Lambda_\#' $ for both $ \# \in \{ l, r \} $. Since both $ A_{2,r} $ and $ A_{2,r}' $ corresponds to an odd observable on $ \Lambda_r $ and $ \Lambda_r' $, respectively, the terms on the right hand side are again zero.
	This completes the proof. 
\end{proof}

\subsection{A pure ground state without exponential decay of correlations}\label{subsec:noexpdecay}

In the previous section we showed that the finite-volume VMD states have exponential clustering. This result extends to the infinite-volume VMD states since Theorem~\ref{thm:exp_decay} is uniform in both volume and tiling. Infinite-volume states $ \omega:  \cA_{\bZ}  \to \bC $ are defined on the $C^*$-algebra of quasi-local observables  $\cA_{\bZ} $, which coincides with the norm-closure of the union of all bounded operators supported in finite volume. The question remains, though, if all pure infinite-volume ground states exhibit exponential clustering. In this subsection, we answer this question negatively by providing examples of pure infinite-volume ground states whose correlations do not decay exponentially -- in fact, the decay can be made arbitrarily slow.\\

For the construction, we start from the infinite-volume VMD state $\omega_{3k} :\cA_{\bZ}\to\bC$ defined by a single void at $3k\in \bZ$:
\begin{equation}\label{eq:omegak}
\omega_{3k}(A) := \lim_{n\to\infty}\braket{\widehat{\psi}_{k}^{(n)}} {A \widehat{\psi}_{k}^{(n)}} \quad \text{where} \quad  \psi_{k}^{(n)} = \vp_{n+k}\otimes \ket{0}_{3k}\otimes \vp_{n-k} \in \cH_{[-3n,3n]}.
\end{equation}
The subscript is used to record the location of the void.   

To illustrate the idea, we first turn to the special case $\lambda=0$, for which it is clear that
$  \psi_{k+1}^{(n)} = S_k  \psi_{k}^{(n)} $
with $S_k := \ket{0100}\langle{1000} |\in\cA_{[3k,3(k+1)]}$. Since $S_k$ has finite support, the set of states $\{\omega_{3k}\}_{k\in\bZ}$  belong to the same GNS representation. 
We may therefore start from the GNS representation  $(\pi_0,\cH_0,\Omega_0)$ of the state $\omega_0$, and denote by $\Omega_{3k}\in \cH_0$ a state such that $\omega_{3k}(A) = \braket{\Omega_{3k}}{\pi_0(A) \Omega_{3k}}$ for any $ A \in \cA_{\bZ} $; since $ \lambda = 0 $ this is formally
\[
\Omega_{3k} =\bigotimes_{n_1\in\bN}\ket{100}\otimes \ket{0}_{3k} \otimes\bigotimes_{n_2\in\bN}\ket{100} . 
\]
For a normalized sequence $(c_k)\in\ell^2(\bZ)$ the state
$ \omega(A) := \braket{\Omega}{A\Omega} $ where $ \Omega := \sum_{k\in \bZ} c_k \Omega_{3k}\in \cH_0 $
then defines a pure state on $\cA_{\bZ}$. The state $\Omega_{3k}$ represents a dislocation in the (squeezed) Tao-Thouless state at $x=3k$ and the linear combination $ \Omega $ smears this dislocation over the volume.
To compute its two-point correlation function for the on-site number operator $n_x$, we note that for $ x \neq y $:
\begin{equation}\label{eq:productrule}
\omega(n_x n_{y}) = \braket{\Omega}{n_x n_y \Omega}  = \sum_{k\in \mathbb{Z} } |c_k|^2  \omega_{3k}(n_x n_y) = \sum_{k\in \mathbb{Z} } |c_k|^2  \omega_{3k}(n_x) \,  \omega_{3k}(n_y)  
\end{equation}
where the last equality holds because $\lambda=0$. Since $ \omega_{3k}(n_{3j}) = 1 $ for $ k > j $ and $  \omega_{3k}(n_{3j})  = 0 $ otherwise, we conclude  that for  the sites $x=0<3j=y$,
\begin{align}\label{eq:noexpdecay0} 
\omega(n_0n_{3j})-\omega(n_0)\omega(n_{3j}) & = \sum_{k\in \mathbb{Z} } \sum_{l\in \mathbb{Z}} |c_k|^2 |c_l|^2  \omega_{3k}(n_{3j}) \left(  \omega_{3k}(n_0)  -  \omega_{3l}(n_0) \right) \notag \\
& =  \sum_{k> j  } |c_k|^2 \sum_{l\leq 0 }  |c_l|^2  \notag \\
& =  p(0)(1-p(j)) \;\; \text{where} \;\; p(j):= \sum_{k\leq j}|c_k|^2.
\end{align}
As $(c_k)$ is normalized in $\ell^2(\bZ)$, we have $\lim_{j\to\infty} p(j) =1$. The rate of convergence, and hence the decay of correlations, can be arbitrarily slow based on the choice of sequence.\\

To extend the above construction to $ \lambda \neq 0 $, we again start from the family of infinite-volume VMD states $\omega_{3k} :\cA_{\bZ}\to\bC$ defined 
through~\eqref{eq:omegak}, which have a void at $ 3k $. The state $\omega_0$ defines a GNS representation, which we again denote by  $(\pi_0,\cH_0,\Omega_0)$.
To proceed, we note that states for distinct $ k \in \mathbb{Z} $ can again be represented on the same Hilbert space.
\begin{proposition}
	The set of states $\{\omega_{3k}\}_{k\in\bZ}$ belong to the GNS representation $(\pi_0,\cH_0,\Omega_0)$.
\end{proposition}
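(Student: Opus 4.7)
The goal is to exhibit, for each $k \in \bZ$, a unit vector $\Omega_{3k} \in \cH_0$ with $\omega_{3k}(A) = \braket{\Omega_{3k}}{\pi_0(A)\Omega_{3k}}$ for every $A \in \cA_\bZ$. Following the logic of the preceding $\lambda=0$ discussion, the plan is to produce a nonzero bounded local element $W_k$ in the quasi-local algebra such that
\begin{equation*}
\omega_{3k}(A) \,=\, \frac{\omega_0(W_k^* A W_k)}{\omega_0(W_k^* W_k)} \qquad \text{for all } A \in \cA_\bZ,
\end{equation*}
and then set $\Omega_{3k} := \pi_0(W_k)\Omega_0/\|\pi_0(W_k)\Omega_0\|$. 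By the shift covariance of the construction together with the composition rule $W_{k+1} = \tau_{3k}(W_1)\,W_k$, where $\tau_{3k}$ is the shift by $3k$ acting on the algebra, the construction reduces to building $W_1$ (and its mirror $W_{-1}$) for a single step of the void.

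To construct $W_1$, one starts from the factorized form of the finite-volume approximants provided by Theorem~\ref{thm:product1}:
\[
\psi_0^{(n)} = \vp_n \otimes \ket{0}_0 \otimes \vp_n, \qquad \psi_1^{(n)} = \vp_{n+1} \otimes \ket{0}_3 \otimes \vp_{n-1},
\]
both on $[-3n,3n]$. Applying Lemma~\ref{lem:recursion} in the form $\vp_{n+1} = \vp_n \otimes \vp_1 + \lambda\, \vp_{n-1}\otimes\ket{\pmb{\sigma}_d}$ to the left factor of $\psi_1^{(n)}$ and, in parallel, peeling off the first monomer of the right $\vp_n$ in $\psi_0^{(n)}$ via $\vp_n = \vp_1 \otimes \vp_{n-1} + \lambda\, \ket{\pmb{\sigma}_d}\otimes \vp_{n-2}$, both vectors decompose into a leading ``bulk'' piece of the form $\vp_n \otimes (\cdot) \otimes \vp_{n-1}$ plus $\lambda$-subleading corrections. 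The bulk pieces agree up to a local four-site operator on $[0,3]$ that exchanges $\ket{0100}$ and $\ket{1000}$, in direct analogy with the operator $S_0$ from the $\lambda=0$ case. The $\lambda$-subleading pieces involve mismatched $\vp$-factors of differing lengths on the two sides, which one absorbs into a wider local $W_1$ through further iterations of the recursion.

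The principal obstacle is ensuring that this iterative absorption produces a single, $n$-independent, finitely supported operator $W_1$ and that the desired identity for $\omega_3(A)$ survives the infinite-volume limit, since a priori each recursion step enlarges the support and generates additional error terms. The required quantitative control is supplied by the trimming lemma (Lemma~\ref{lem:vp_shrink}): the difference between expectations of any bounded local observable in $\vp_m$ and in $\vp_{m'}$ is bounded by a geometric factor at the rate $\beta = e^{-3c(\lambda)} \in (0,1)$. This exponential decay forces the $n$-dependent correction terms to vanish in the weak-$*$ limit $n\to\infty$, leaving a bounded local $W_1$ satisfying the required normal-state relation. From this the vector $\Omega_3 \in \cH_0$ is extracted, and induction over $k$ yields the full family $\{\Omega_{3k}\}_{k\in\bZ}$.
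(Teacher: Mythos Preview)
Your overall strategy---produce a finitely supported $W_k$ with $\omega_{3k}(A)=\omega_0(W_k^*AW_k)/\omega_0(W_k^*W_k)$ and set $\Omega_{3k}=\pi_0(W_k)\Omega_0/\|\pi_0(W_k)\Omega_0\|$---matches the paper. The gap is in the construction of $W_1$.

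You apply the recursion on one side only (to the left factor of $\psi_1^{(n)}$ and to the right factor of $\psi_0^{(n)}$). This aligns the leading bulk pieces $\vp_n\otimes(\cdot)\otimes\vp_{n-1}$, but the $\lambda$-corrections carry outer factors $(\vp_n,\vp_{n-2})$ for $\psi_0^{(n)}$ and $(\vp_{n-1},\vp_{n-1})$ for $\psi_1^{(n)}$, and no local operator can intertwine those. Iterating the recursion does not terminate: each step reproduces the same kind of mismatch one step further out. Appealing to Lemma~\ref{lem:vp_shrink} then only shows that a \emph{truncated} $W_1^{(m)}$ satisfies the state identity up to an $O(\beta^m)$ error; eliminating the error forces $m\to\infty$, at which point $W_1$ is no longer finitely supported. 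So the argument does not deliver the promised ``$n$-independent, finitely supported'' $W_1$.

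The paper avoids this by expanding \emph{both} vectors via the recursion~\eqref{eq:recursion} at \emph{two} truncation points simultaneously: the edge just to the left of $3(k-1)$ and the edge just to the right of $3(k+2)$, i.e.\ one monomer outside the block $[3k,3k+3]$ on each side. Each of $\psi_k^{(n)}$ and $\psi_{k+1}^{(n)}$ then becomes a sum of four terms whose outer $\vp$-factors match \emph{term by term}: both states carry exactly the same four combinations $(\vp_{n-1},\vp_{n-2})$, $(\vp_{n-2},\vp_{n-2})$, $(\vp_{n-1},\vp_{n-3})$, $(\vp_{n-2},\vp_{n-3})$ outside the window $[3(k-1),3(k+2)]$, and the portions of the $\ket{\pmb{\sigma}_d}$ configurations extending beyond that window also agree. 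One then reads off an exact, $n$-independent operator $S_k$ supported on $[3(k-1),3(k+2)]$ mapping the four inner pieces of $\psi_k^{(n)}$ to those of $\psi_{k+1}^{(n)}$. No trimming or limiting procedure is needed.
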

As in the case $ \lambda = 0 $, the proof  proceeds by connecting $  \psi_{k}^{(n)} $ and $  \psi_{k+1}^{(n)} $ by a local operator $ S_k $ which, for $\lambda\neq 0$, is supported on the larger set  $[3(k-1),3(k+2)] $.
The action of this operator can be read by using the recursion relation~\eqref{eq:recursion} to expand both vectors to the right and left of the block $ [3k,3k+3] $. More precisely, we use as truncation points in the recursion relation the edge to the left of $ 3(k-1) $ (i.e. one monomer to the left of $3k$) as well as the edge to the right of $ 3(k+2) $ (i.e. one monomer to the right of $3k+3$). \\

Denoting by $\Omega_{3k}\in \cH_0$ a state such that $\omega_{3k}(A) = \braket{\Omega_{3k}}{\pi_0(A) \Omega_{3k}}$ the state 
\begin{equation}\label{decay_state}
\omega(A) = \braket{\Omega}{A\Omega} \;\; \text{where}\;\; \Omega = \sum_{k\in \bZ} c_k \Omega_{3k}\in \cH_0,
\end{equation}  
again defines a pure state on $\cA_{\bZ}$ for any normalized  $(c_k)\in\ell^2(\bZ)$. The first two equalities in~\eqref{eq:productrule} remain valid for any $ x\neq y $. However, unless $ x \leq 3k \leq y $ the expectation value $ \omega_{3k}(n_x n_y) $ does not factorize. 

To simplify the presentation and computations, we again concentrate on the case $ x = 0 $ and $ y = 3j $ with some $ j > 1 $. 
The following proposition lists all relevant expectations in terms of  $\mu_{\pm} = (1\pm \sqrt{1+4|\lambda|^2})/2$, the ratio $ \mu = \mu_-/\mu_+ $ and the difference $ \Delta\mu := \mu_+ - \mu_- = \sqrt{4|\lambda|^2+1} $, cf.~Lemma~\ref{lem:normalization}. 
\begin{proposition}\label{prop:computations}
	For all $ k \in \mathbb{Z} $ and 
	all $ j \in \mathbb{Z}  $:
	\begin{equation}
	\omega_{3k}(n_{3j}) = \frac{1- \mu^{|j-k|} }{\Delta\mu} \times \begin{cases} 1 & \mbox{if} \quad j < k, \\
	0 & \mbox{if} \quad   j= k, \\
	\frac{|\lambda|^2}{\mu_+} & \mbox{if} \quad   j >  k .
	\end{cases} 
	\end{equation}
	Moreover, if  $ j > 1$, we also have
	\begin{equation}\label{eq:productexp}
	\omega_{3k}(n_0 n_{3j}) = \begin{cases}  \frac{1}{(\Delta\mu)^2}   
	\left( 1- \mu^j \right)   \left( 1- \mu^{k-j} \right) & \mbox{if} \quad j < k, \\  
	\omega_{3k}(n_0) \,  \omega_{3k}(n_{3j})  &  \mbox{if} \quad 0 \leq k\leq  j, \\
	\frac{|\lambda|^4}{\mu_+^2 (\Delta\mu)^2 }  \left( 1- \mu^{j-1} \right)   \left( 1- \mu^{-k} \right)  & \mbox{if} \quad  k < 0 . \end{cases} 
	\end{equation}
\end{proposition}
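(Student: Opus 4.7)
The plan is to compute all expectations from the definition~\eqref{eq:omegak} as the $n\to\infty$ limits of finite-volume expectations in the product vector $\widehat{\psi}_k^{(n)} = \widehat{\vp}_{n+k}\otimes\ket{0}_{3k}\otimes\widehat{\vp}_{n-k}$. Since the void at $3k$ splits $\psi_k^{(n)}$ into a tensor product, every correlator factorizes according to which observables lie strictly left of $3k$, at $3k$, or strictly right of $3k$. The task therefore reduces to evaluating one- and two-point density correlators of the squeezed Tao-Thouless state $\vp_L$ and then passing to $n\to\infty$ with the help of the closed form
\[
z_L := \|\vp_L\|^2 = \frac{\mu_+^{L+1}-\mu_-^{L+1}}{\Delta\mu}
\]
provided by Lemma~\ref{lem:normalization}. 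The trivial case $j=k$ in the one-point formula is already immediate from the $\ket{0}_{3k}$ factor.

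To treat $\vp_L$ I would use its expansion~\eqref{eq:VMDocc} as a $\lambda^{\#\text{dimers}}$-weighted sum over monomer-dimer tilings of $[1,3L]$. Grouping the $3L$ sites into $L$ consecutive blocks of three, a particle sits at $3m+1$ iff block $m+1$ is a monomer, while particles sit at $3m+2$ and $3m+3$ iff block $m+1$ is the first half of a dimer. Enumerating admissible tilings with such a fixed monomer or fixed dimer yields
\[
\langle\widehat{\vp}_L, n_{3m+1}\widehat{\vp}_L\rangle = \frac{z_m\, z_{L-m-1}}{z_L},\qquad \langle\widehat{\vp}_L, n_{3m+3}\widehat{\vp}_L\rangle = \frac{|\lambda|^2\, z_m\, z_{L-m-2}}{z_L}.
\]
The physical site $3j$ lies inside $\vp_{n+k}$ when $j<k$ and matches the first site of a block, while it lies inside $\vp_{n-k}$ when $j>k$ and matches the third site of a block. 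Using the limit $z_{L-r}/z_L\to \mu_+^{-r}$ that follows directly from the closed form of $z_L$ (since $|\mu_-|<|\mu_+|$), this produces in both regimes the one-point formula in the proposition with prefactor $1$ or $|\lambda|^2/\mu_+$ depending on which side of the void the site lies on.

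For the two-point expectation $\omega_{3k}(n_0\,n_{3j})$ with $j>1$ I split into three cases. When $0\le k\le j$ the void lies between the two observables, and factorization of $\psi_k^{(n)}$ across the void gives $\omega_{3k}(n_0\,n_{3j}) = \omega_{3k}(n_0)\,\omega_{3k}(n_{3j})$ at once. When $j<k$ both sites lie in $\vp_{n+k}$ and their constraint becomes \textit{monomers at blocks $n+1$ and $n+j+1$}; since $j\ge 2$ these blocks are non-adjacent so the count of admissible tilings is $z_n\, z_{j-1}\, z_{k-j-1}$. For $k<0$ both sites lie in $\vp_{n-k}$ and the constraint is \textit{dimers starting at blocks $|k|$ and $|k|+j$}; once more $j\ge 2$ prevents these dimers from overlapping, yielding the count $z_{|k|-1}\, z_{j-2}\, z_{n-j-1}\,|\lambda|^4$. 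Dividing by $z_{n+k}$ or $z_{n-k}$, sending $n\to\infty$, and applying the elementary identity
\[
z_{l-1}\,z_{m-1} = \frac{\mu_+^{l+m}\,(1-\mu^{l})(1-\mu^{m})}{(\Delta\mu)^2},
\]
which follows by direct expansion of $z_l = (\mu_+^{l+1}-\mu_-^{l+1})/\Delta\mu$, reproduces both nontrivial expressions in \eqref{eq:productexp}. The only genuine source of friction is the careful bookkeeping of the translation between physical coordinates and block-coordinates inside the relevant squeezed Tao-Thouless factor, and verifying that the "two disjoint monomers" or "two disjoint dimers" constraints produce the simple factorized counts above; once this is settled the rest is routine manipulation of geometric sums in $\mu\in(-1,0)$.
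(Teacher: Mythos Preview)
Your proposal is correct and follows essentially the same approach as the paper, which merely states that ``the proof proceeds by elementary, explicit computations starting from the finite-volume states in~\eqref{eq:omegak} and using the recursion relation~\eqref{eq:recursion} as well as the explicit normalization of VMD states from Lemma~\ref{lem:normalization}.'' Your combinatorial counting of tilings with prescribed monomer/dimer blocks is exactly what the recursion relation encodes, so the two routes coincide; your write-up is in fact more explicit than the paper's one-sentence sketch.
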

The proof proceeds by elementary, explicit computations starting from the finite-volume states in~\eqref{eq:omegak} and using the recursion relation~\eqref{eq:recursion} as well as the explicit normalization of VMD states from Lemma~\ref{lem:normalization}. \\

From Proposition~\ref{prop:computations} we conclude that  the expectation of the product~\eqref{eq:productexp} approximately factorizes with an error that is exponential in $ j  $:
\begin{equation}
\left| \omega_{3k}(n_0 n_{3j})  -  \omega_{3k}(n_0) \,  \omega_{3k}(n_{3j}) \right| \leq \frac{ 4 \max\{ 1, |\lambda|^2 \} } { (\Delta\mu)^2 }  |\mu|^j  . 
\end{equation}
Up to this error, the truncated correlation function $ \omega(n_0 n_{3j})  - \omega(n_0) \omega(n_{3j}) $ is thus given by the absolutely convergent double series
\begin{align}
\sum_{k\in \mathbb{Z} } \sum_{l\in \mathbb{Z}} & |c_k|^2 |c_l|^2  \omega_{3k}(n_{3j}) \left(  \omega_{3k}(n_0)  -  \omega_{3l}(n_0) \right) \label{eq:noexpdecay1} .
\end{align}
This series still retains the main features as in the case of $ \lambda = 0 $, for which the explicit formula~\eqref{eq:noexpdecay0} holds. As a specific example, one may choose $ c_k = 0 $ if $ k < 0 $ and $ c_0 \neq 0 $ for which, setting $S_\mu := \sum_{l \geq 0}|c_l|^2\mu^l \leq 1$,
\begin{align}
\sum_{k\geq 0 } \sum_{l\geq 0} |c_k|^2 |c_l|^2 & \omega_{3k}(n_{3j}) \left(  \omega_{3k}(n_0)  -  \omega_{3l}(n_0) \right) \notag\\
& = \frac{1}{\Delta\mu} \sum_{k\geq 0} |c_k|^2\omega_{3k}(n_{3j})(S_\mu-\mu^k) \notag\\
& =  \frac{1}{(\Delta\mu)^2} \sum_{k\geq 0}|c_k|^2(1-\mu^{|j-k|})(S_\mu-\mu^k)\left(\delta_{k>j} +\frac{|\lambda|^2}{\mu_+}\delta_{k<j}\right) .\label{eq:noexpdecay2}
\end{align}
For small $ \lambda $, we have $|\mu| <<1$ and $S_\mu \approx |c_0|^2$. Hence, the leading term in~\eqref{eq:noexpdecay2} is given by (up to corrections which are exponential in $ j $):
\[
\frac{S_\mu}{(\Delta\mu)^2} \sum_{k > j } |c_k|^2.
\]
This sum tends to zero as $ j \to \infty $, but the rate can be tuned to an arbitrarily slow algebraic decay.

\appendix

\section{Estimates for $f(|\lambda|^2)$}\label{app:f_estimates}

The estimate for the gap given in Theorem \ref{thm:gap} involves a function $f(r)$ ($r=|\lambda|^2$), which is defined in Lemma \ref{lem:epsilon_calc} as the supremum
over $n$ of the family of functions $f_n$ given by
$$
f_n(r) =r\alpha_{n-2}(r)\alpha_n(r)\left[\frac{(1-\alpha_{n-1}(r)(1+r))^2}{1+2r} +\frac{r\alpha_{n-3}(r) (1-\alpha_{n-1}(r))^2}{1+r}\right] , \quad r\geq 0, n\geq 4,
$$
where $\alpha_n(r)$ is defined in \eq{def:alpha}. We are interested in finding the range of $r$ where $f(r) < 1/3$. We do this by considering a controlled approximation of $f$, $f^{(n)}, n\geq 4$, that is easy to calculate numerically with machine precision:
\be
f(r) = \lim_{n\to\infty}f^{(n)}(r), \mbox{ with } f^{(n)}(r) = \max_{4\leq m \leq n} f_m(r).
\label{f70}\ee
\begin{proposition}
For all $r\in[0,35]$, and $n\geq 73$, one has $|f(r) - f^{(n)}(r)|\leq .0052$.
\end{proposition}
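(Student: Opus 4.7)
The strategy is to control the tail $f(r)-f^{(n)}(r)$ by showing that $f_m(r)$ converges to an explicit limit $f_\infty(r)$ at a geometric rate governed by $|\mu(r)|^m$, where $\mu(r)=\mu_-(r)/\mu_+(r)\in(-1,0)$ as in Lemma~\ref{lem:normalization}. Since $f^{(n)}(r)\leq f(r)=\sup_m f_m(r)$, one has
\[
0\leq f(r)-f^{(n)}(r)\leq \sup_{m>n}\bigl(f_m(r)-f_n(r)\bigr)_+\leq 2\sup_{m\geq n}|f_m(r)-f_\infty(r)|,
\]
so it suffices to estimate $|f_m(r)-f_\infty(r)|$ uniformly for $m\geq 73$ and $r\in[0,35]$.

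First I would quantify the convergence of $\alpha_n$. From the closed form of Lemma~\ref{lem:normalization} one computes
\[
\alpha_n(r)-\alpha(r)=\frac{1}{\mu_+}\frac{\mu^n(\mu-1)}{1-\mu^{n+1}},\qquad \alpha(r):=\mu_+^{-1},
\]
so that
\[
|\alpha_n(r)-\alpha(r)|\leq \frac{2\,|\mu(r)|^{n}}{\mu_+(r)\,(1-|\mu(r)|)}.
\]
Moreover, each $\alpha_n(r)$ lies in the interval $[\mu_+^{-1}(1-|\mu|),\mu_+^{-1}(1+|\mu|)/(1-|\mu|^2)]$, so the four values $\alpha_{m-3},\alpha_{m-2},\alpha_{m-1},\alpha_m$ entering $f_m$ stay in a compact set $K(r)$ shrinking to $\{\alpha(r)\}$.

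Next, I would view $f_m(r)$ as the evaluation $F(r;\alpha_{m-3},\alpha_{m-2},\alpha_{m-1},\alpha_m)$ of the explicit polynomial
\[
F(r;a_1,a_2,a_3,a_4)=r\,a_2a_4\left[\frac{(1-a_3(1+r))^2}{1+2r}+\frac{r\,a_1(1-a_3)^2}{1+r}\right],
\]
and bound its partial derivatives on $K(r)\times[0,35]$. A direct computation, using that each $|a_i|\leq 2$ and each factor $(1-a_3(1+r))$, $(1-a_3)$ is bounded by $1+2\cdot 36=73$, gives an explicit Lipschitz constant $L(r)$ in the $a_i$'s; combined with Step~1 this yields
\[
|f_m(r)-f_\infty(r)|\leq \frac{8\,L(r)}{\mu_+(r)(1-|\mu(r)|)}\,|\mu(r)|^{m-3}
\]
for all $m\geq 4$, and hence $|f(r)-f^{(n)}(r)|\leq C(r)\,|\mu(r)|^{n-3}$ with the explicit constant $C(r):=16L(r)/(\mu_+(r)(1-|\mu(r)|))$.

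Finally, I would verify the numerical claim. On $[0,35]$ the function $|\mu(r)|$ is increasing in $r$ with $|\mu(35)|=(\sqrt{141}-1)/(\sqrt{141}+1)\approx 0.845$, so $|\mu(r)|^{70}\leq |\mu(35)|^{70}\approx 7.5\cdot 10^{-6}$, while the constant $C(r)$ remains moderate (of order a few hundred) on the same range because $\mu_+(r)\geq 1$, $1-|\mu(r)|\geq 1-|\mu(35)|\approx 0.155$, and the polynomial $L(r)$ grows only polynomially in $r$. A monotonicity check in $r$ then reduces the estimate to finitely many evaluations. The main obstacle is purely quantitative: the naive Lipschitz bound may be too loose to reach the target $0.0052$, so I expect one has to exploit the special structure at $r=35$ (either by a sharper first-order Taylor expansion of $F$ around $a_i=\alpha(r)$, using that the linear term factors through $\alpha_n-\alpha$ with oscillating sign in $n$, or by splitting $[0,35]$ into a few subintervals and optimizing $C(r)|\mu(r)|^{70}$ on each); this is what forces the large exponent $n=73$ rather than something like $n=50$.
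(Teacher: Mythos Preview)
Your approach is essentially the same as the paper's: treat $f_m(r)$ as a smooth function $g$ of the four variables $\alpha_{m-3},\alpha_{m-2},\alpha_{m-1},\alpha_m$, control $|\alpha_n-\alpha|$ by $|\mu(r)|^n$, and then bound the variation of $g$ via its partial derivatives. The structure is correct and nothing conceptual is missing.

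The gap is exactly where you flag it: the quantitative step. Your crude a~priori bounds ($|a_i|\leq 2$, $|1-a_3(1+r)|\leq 73$) give a Lipschitz constant of order $10^3$--$10^4$ at $r=35$, not ``a few hundred,'' and this overwhelms the factor $|\mu(35)|^{70}\approx 7.5\cdot 10^{-6}$. The paper closes this by doing precisely what you suggest at the end (``sharper Taylor expansion around $a_i=\alpha(r)$''), but actually carrying it out: one first shows $|\alpha_n-\alpha_{n_0}|\leq 8|\mu(r)|^{n_0}\leq 10^{-4}$ for $n_0\geq 70$, and then bounds the partial derivatives of $g$ \emph{on the narrow box} $\alpha,\beta,\gamma,\delta\in[\mu_+^{-1}-10^{-4},\,\mu_+^{-1}+10^{-4}]$, $r\in[0,35]$. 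On this box the factors behave well---for instance $|1-\beta(1+r)|$ is of order $5$, not $73$, and $r\alpha\gamma\approx r/\mu_+^2<1$---so each partial derivative is bounded by $13$, and the total variation is $4\cdot 13\cdot 10^{-4}=0.0052$. No interval splitting or sign cancellation is needed; you just have to evaluate the derivatives near the actual limit point rather than on the crude a~priori range.
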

\begin{proof}
The convergence $f^{(n)} \to f$ is not uniform but it is uniform on compact intervals. To simplify the estimates, we will assume $r\in [0,35]$ throughout the proof, 
and give an explicit estimate for $|f(r) - f^{(n)}(r)|$, $n\geq 73$, on that interval.

The only subtlety to consider is that the limit
\begin{equation}\label{limit_N_alpha}
\lim_{n\to \infty} \alpha_n(r) = \mu_+^{-1} = \frac{2}{1+\sqrt{1+4r}}
\end{equation}
is neither monotone in $n$ nor uniform in $r\in [0,\infty)$. An elementary calculation starting from the expressions for $\alpha_n(r)$, $\mu_\pm(r)$ and 
$$
\mu(r) = \frac{1-\sqrt{1+4r}}{1+\sqrt{1+4r}}
$$
shows that for all $n\geq n_0$ and $0\leq r\leq 35$ we have
$$
|\alpha_n(r) - \alpha_{n_0} (r)|\leq 8 |\mu(r)|^{n_0}.
$$
which implies $|\alpha_n(r) - \alpha_{n_0} (r)|\leq 10^{-4}$ for $n_0\geq 70$.

To bound the dependence of $f_n(r)$ on $n$, it is convenient to introduce the variables $\beta_n=\alpha_{n-1},\gamma_n = \alpha_{n-2}$, and
$\delta_n = \alpha_{n-3}$, and note that $f_n(r) = g(\alpha_n,\beta_n,\gamma_n,\delta_n, r)$ where
$$
g(\alpha,\beta,\gamma,\delta, r) = \alpha\gamma r \left[ \frac{(1-\beta(1+r))^2}{1+2r} + \frac{\delta r (1-\beta)^2}{1+r}\right].
$$
The variation of $f_n(r)$ for $n\geq 73$ can then be bounded by the variation of 
$g(\alpha,\beta,\gamma,\delta,r)$ for $\alpha,\beta,\gamma,\delta \in [ \mu_+^{-1} -10^{-4}, \mu_+^{-1} + 10^{-4}]$ and $r\in [0,35]$. It is elementary to 
obtain estimates on the derivatives of $g$ with respect to $\alpha,\beta,\gamma,\delta$, in the same range of variables. One finds $|\partial_\tau 
g(\alpha,\beta,\gamma,\delta,r)| \leq 13$, for $\tau \in \{\alpha,\beta,\gamma,\delta\}$. These estimates imply that the variation of $g$ as it depends 
on $n\geq 73$, is bounded by $4 \cdot 13 \cdot 10^{-4}$. This proves the proposition.
\end{proof}

Figure \ref{fig:fplot} shows the result of a numerical calculation of $f^{(73)}(r)$ for $r\in [0,35]$. From that calculation and the above proposition, it is clear that 
$f(|\lambda|^2) < 1/3$ for $|\lambda| \in [0,5.3]$, and that this interval is close to optimal.

\bigskip
\minisec{Acknowledgements }

{\small BN and SW thank Duncan Haldane for stimulating our interest in the spectral gap problem for the FQHE pseudo-potential models. SW acknowledges useful discussions and pointers to the literature by Frank Pollmann and Emil Bergholtz. Our work was facilitated by
opportunities to meet and discuss, notably the Oberwolfach Workshop on Many-Body Quantum Systems, the Quantum Information Theory program at the Instituto de Ciencias Matem\'aticas, Madrid, and the workshop on Random Schr\"odinger Operators and Related Topics in Florence. BN would like to specially thank the Munich Center for Quantum Science and Technology for its warm hospitality during an extended visit and the Simons Foundation for support through the Simons Visiting Professorship program administered by the Mathematisches Forschungsinstitut Oberwolfach. This work was supported by the DFG under EXC-2111--390814868 and by the National Science Foundation (USA) under Grant DMS-1813149.}

\bibliographystyle{abbrv}  
\bibliography{FQHE}

\bigskip
\bigskip
\noindent Bruno Nachtergaele\\
Department of Mathematics\\
and Center for Quantum Mathematics and Physics\\
University of California \\
Davis, CA  95616-8633, USA\\
\verb+bxn@math.ucdavis.edu+\\

\noindent Simone Warzel\\
Munich Center for Quantum Science and Technology, and\\
Zentrum Mathematik, TU M\"{u}nchen\\
85747 Garching, Germany\\
\verb+warzel@ma.tum.de+\\

\noindent Amanda Young\\
Munich Center for Quantum Science and Technology, and\\
Zentrum Mathematik, TU M\"{u}nchen\\
85747 Garching, Germany\\
\verb+young@ma.tum.de+\\

\end{document}